      \pgfpointadd{\pgfpointdecoratedinputsegmentlast}{\pgfpoint{1pt}{1pt}}
\newcommand{\decProblem}[3][]{%
  \par\vspace{0.125cm plus 0.1cm minus 0.05cm}\adjustbox{valign=t}{\begin{tabularx}{\linewidth-2\parindent}{@{}lX}%
      \if\relax\detokenize{#1}\relax%
      \else%
      \textnormal{\textbf{Constant:}}&#1\\%
      \fi%
      \textnormal{\textbf{Input:}}&#2\\%
      \textnormal{\textbf{Question:}}&#3\\%
  \end{tabularx}}\vspace{0.125cm plus 0.1cm minus 0.05cm}\par%
}
\newcommand{\compProblem}[3][]{%
  \par\vspace{0.125cm plus 0.1cm minus 0.05cm}\adjustbox{valign=t}{\begin{tabularx}{\linewidth-2\parindent}{@{}lX}%
      \if\relax\detokenize{#1}\relax%
      \else%
      \textnormal{\textbf{Constant:}}&#1\\%
      \fi%
      \textnormal{\textbf{Input:}}&#2\\%
      \textnormal{\textbf{Output:}}&#3\\%
  \end{tabularx}}\vspace{0.125cm plus 0.1cm minus 0.05cm}\par%
}
\declaretheorem[numberwithin=section, style=plain]{theorem}
\declaretheorem[style=plain, sibling=theorem]{proposition}
\declaretheorem[style=plain, sibling=theorem]{lemma}
\declaretheorem[style=plain, sibling=theorem]{corollary}
\declaretheorem[style=plain, sibling=theorem]{fact}
\declaretheorem[style=definition, sibling=theorem]{definition}
\declaretheorem[style=remark, sibling=theorem]{example}
\declaretheorem[style=remark, numbered=no, name=Remark]{remark*}
\declaretheorem[style=remark, sibling=theorem]{remark}
\newcommand*{\Np}{\mathbb{N}^+} % { 1, 2, ... }
\newcommand*{\V}[1][V]{\boldsymbol{\mathrm{#1}}}
\newcommand*{\DAb}{\V[DA\!b]}
\newcommand*{\R}{\mathrel{\mathcal{R}}}
\renewcommand*{\L}{\mathrel{\mathcal{L}}}
\newcommand*{\J}{\mathrel{\mathcal{J}}}
\renewcommand*{\H}{\mathrel{\mathcal{H}}}
\DeclareMathOperator{\dom}{dom}
\DeclareMathOperator{\alphabet}{alph}
\newcommand*{\Xinfty}{X_{+\infty}}
\newcommand*{\Yinfty}{Y_{-\infty}}
\newcommand*{\freeProf}[1][\Sigma]{\ensuremath{\hat{#1}^*}}
\title{The Word Problem for $(\omega - 1)$-Terms\\over $\DAb$}
\author{Jorge Almeida}
\affil{%CMUP
  Centro de Matemática da Universidade do Porto (CMUP)\\
  Departamento de Matemática, Faculdade de Ciências, Universidade do Porto\\
  Rua do Campo Alegre s/n\\
  4169-007 Porto, Portugal
}
\author{Manfred Kuf\-leitner}
\affil{%FMI
  Institut für Formale Methoden der Informatik (FMI)\\
  Universität Stuttgart\\
  Universitätsstraße 38\\
  70569 Stuttgart, Germany
}
\author{Jan~Philipp~Wächter}
\affil{Department of Mathematics\\
  University of Manchester\\
  Oxford Road\\
  Manchester M13 9PL, UK}
\begin{document}

  \maketitle

  \begin{abstract}
    \noindent\textbf{Abstract.}
    We give a ranker-based description using finite-index congruences for the variety $\DAb$ of finite monoids whose regular $\mathcal{D}$-classes form Abelian groups. This combinatorial description yields a normal form for general pseudowords over $\DAb$. For $(\omega - 1)$-terms, this normal form is computable, which yields an algorithm for the word problem for $(\omega - 1)$-terms of $\DAb$.
    \\\textit{Keywords.} Profinite semigroup, word problem for $(\omega - 1)$-terms, ranker
    \\\textit{MSC.} 20M07, 20M35, 68Q45
    % ACM: F.4.3
  \end{abstract}

  \begin{section}{Introduction}
    Eilenberg's famous Correspondence Theorem \cite[Theorems~VII.3.4 and VII.3.4s]{eilenberg1976automata} is the cornerstone of modern aglebraic formal language theory. It connects \emph{varieties of finite monoids} (also referred to as \emph{pseudovarieties}) and \emph{varieties of formal languages} in a one-to-one way. The former are classes of finite monoids (a corresponding notion for semigroups also exists) closed under direct products, taking submonoids and homomorphic images. The latter are -- roughly speaking -- classes of regular languages closed under Boolean operations, inverse homomorphisms and language quotients. The connection between the two sides is given by the notion of monoids recognizing formal languages: A monoid $M$ \emph{recognizes} a language $L$ over some alphabet $\Sigma$ if there is a homomorphism $\varphi: \Sigma^* \to M$ such that the preimage of $\varphi(L)$ is exactly $L$.
    The smallest such $M$ is called the \emph{syntactic monoid} of~$L$
    (see e.\,g.\ \cite{pin86:short}).
    Eilenberg's theorem now states the following correspondence:
    to a variety $\V$ of finite monoids, we attach the variety $\mathscr{V}$ of those languages which are recognized by a monoid in $\V$. The inverse of this correspondence is given by mapping a variety $\mathscr{V}$ of formal languages to the variety $\V$ of finite monoids generated by the syntactic monoids of the languages from $\mathscr{V}$.
    
    Eilenberg's Correspondence Theorem shows a general principle underlying some historically significant previously known instances the correspondence: the (not less famous) Schü\-tzen\-ber\-ger Theorem states that the class of \emph{star-free} languages corresponds to the class of \emph{aperiodic} (or \emph{group-free}) monoids \cite{schutzenberger1965finite}. Similarly, the class of \emph{piecewise-testable} languages belongs to the class of \emph{$\mathcal{J}$-trivial} monoids \cite{sim75:short} (where $\mathcal{J}$ is one of \emph{Green's relations}, see e.\,g.\ \cite{Howie95}). Since both properties, being aperiodic and being $\mathcal{J}$-trivial, can be tested for any given finite monoid, these results, in particular, yield membership algorithms for the class of star-free and the class of piecewise-testable languages: compute the syntactic monoid of a given language and test whether it is aperiodic or $\mathcal{J}$-trivial, respectively.
    
    Testing whether a given finite monoid $M$ is aperiodic can here be done by testing whether it satisfies the equation $x^\omega = x^\omega x$. The idea is that we may replace the variable $x$ by any monoid element $s \in M$ and obtain $s^\omega$, the unique idempotent in the subsemigroup generated by $s$, for the left-hand side and the product of $s^\omega$ and $s$ on the right-hand side. The monoid \emph{satisfies} the equation if the respective monoid elements obtained for the left-hand and the right-hand sides are the same for all $s \in M$ (and, since $M$ is finite, this can be answered simply by an exhaustive search).
    
    This approach applies to a more general form of equations, for which we need to introduce the free profinite monoid over a finite alphabet $\Sigma$.\footnote{%
      The research on profinite monoids (and semigroups) is too vast
      to give an appropriate overview here; we refer the reader to
      books on the topic (such as \cite{alm94:short, alm20:short, rhodes2009qTheory}) instead.}
    One way to define it, is to consider the (ultra)metric which assigns to a pair of finite words $u, v \in \Sigma^*$ the distance $2^{-|M|}$ where $M$ is a smallest monoid in which $u$ and $v$ can be mapped to different images under a homomorphism $\Sigma^* \to M$ (see e.\,g.\ \cite{alm20:short} for more details and equivalent definitions). Now, the \emph{free profinite monoid} $\freeProf$ is the completion of $\Sigma^*$ as a
% topological (even metric)
metric semigroup with respect to this metric (i.\,e.\ it contains the limit points of all Cauchy sequences of finite words). This is free in the sense that every map $\Sigma \to M$ (for a finite monoid $M$) 
%does not only extend uniquely into a homomorphism $\Sigma^* \to M$ but even
extends uniquely into a continuous homomorphism $\freeProf \to M$. The elements of $\freeProf$ are called \emph{pseudowords}.
    
    An \emph{equation} (of pseudowords) is now simply a pair $\alpha = \beta$ with $\alpha, \beta \in \freeProf$. It is \emph{satisfied} by a finite monoid $M$ if $\alpha$ and $\beta$ are mapped to the same monoid element under any continuous homomorphism $\freeProf \to M$ (which are all uniquely defined by their restriction into maps $\Sigma \to M$).
    It is not difficult to check that the class of finite monoids satisfying some fixed (system of) equation(s) forms a variety (of finite monoids).
    % Indeed,
    For instance, the variety of aperiodic finite monoids may be defined using the equation $x^\omega = x^\omega x$ (where $x^\omega$ is the limit point of the Cauchy sequence $(x^{k!})_k$). In fact, Reiterman's theorem \cite{reiterman1982birkhoff} states that any variety is defined by a (not necessarily finite) system of equations. A variety (of finite monoids) \emph{satisfies} an equation/a system of equations if all its members do.
    
    Algorithmically, it is important to point out that there are uncountably many pseudowords (over a fixed finite alphabet) and that they, thus, cannot be used as inputs to algorithms. However, in most natural cases, it suffices to consider a countable subset. Typically, this subset is given by \emph{$\omega$-terms} (when considering aperiodic monoids) and \emph{$(\omega - 1)$-terms} (in monoids that contain non-trivial groups). These arise as the submonoid of $\freeProf$ generated by the letters $\Sigma$ and the operation $\alpha^\omega$ (which maps a Cauchy sequence $(u_k)_k$ to the Cauchy sequence $(u_k^{k!})_k$) or $\alpha^{\omega -1}$ (mapping $(u_k)_k$ to $(u_k^{k! - 1})_k$), respectively.
    The \emph{word problem for $\omega$-terms ($(\omega - 1)$-terms)} of a variety $\V$ asks whether a given equation of $\omega$-terms ($(\omega - 1)$-terms) is satisfied by all monoids in $\V$.
    
    Notable positive results on the decidability of the word problem for $\omega$-terms include the variety $\V[J]$ of $\mathcal{J}$-trivial finite monoids \cite{Almeida1991Implicit:short, almeida2002finite}, the variety $\V[R]$ of $\mathcal{R}$-trivial finite monoids \cite{almeida2007automata} and the variety $\V[A]$ of aperiodic finite monoids \cite{mccammond2001normal} as well as the variety $\V[DA]$ of finite monoids whose regular $\mathcal{D}$-classes form aperiodic semigroups \cite{moura2011word}. Although its definition might seem esoteric at first glance, the variety $\V[DA]$ appears in many scenarios and has a lot of beautiful descriptions \cite{tesson2002diamonds}. Noteworthy in the context of the current paper is a description based on \emph{rankers}, which are sequences of instructions of the form \enquote{go to the next $a$ on the left/right} (and are based on work by Weis and Immerman \cite{weis2009structure}, turtle programs \cite{schwentick2002partially} and unambiguous interval temporal logic \cite{lodaya2008marking}). Based on these rankers, one may define a family $\{ \mathcal{C}_k \}$ of finite-index congruences over $\Sigma^*$ such that a monoid is in $\V[DA]$ if and only if it is a homomorphic image of a submonoid of a monoid $\Sigma^* / \mathcal{C}_k$ given by a congruence in the family (for some $k$). This combinatorial approach can be used to re-formulate the proofs for the decidability of the word problem for $\omega$-terms for $\V[J]$ \cite{klima2011piecewise} and $\V[DA]$ \cite{kufleitner2018wordProblem}. In fact, it also naturally matches a hierarchy of varieties covering $\V[DA]$ (also naturally related to the quantifier alternation hierarchy of two-variable first-order logic), the \emph{Trotter-Weil hierarchy}, for which the word problem for $\omega$-terms is also known to be decidable \cite{kufleitner2018wordProblem}.
    
    Despite the success of the ranker-based approach inside $\V[DA]$, it has so far not been applied to varieties containing non-trivial groups (i.\,e.\ non-aperiodic varieties). In this paper, we will extend the approach to the variety $\DAb$ of finite monoids whose regular $\mathcal{D}$-classes form Abelian groups. This variety may also be defined using the equation $(xy)^{\omega - 1} = (yx)^{\omega - 1}$ (see \autoref{lem:DAbEquation}) and we will present a family of ranker-based combinatorial congruences that describe $\DAb$ (in a similar way as sketched above for $\V[DA]$; see \autoref{def:singleRankerPair} and \autoref{def:congruence}).
    This approach allows us to define a normal form for pseudowords over $\DAb$ (in the sense that, for every $\alpha$, there is a unique $\alpha'$ in normal form such that $\alpha = \alpha'$ is satisfied by $\DAb$; see \autoref{def:normalForm}). For $(\omega - 1)$-terms, this normal form will turn out to be computable, which immediately yields a decision algorithm for the word problem for $(\omega - 1)$-terms (\autoref{cor:WPdecidable}). The existence of the normal form also shows that the ranker-based approach allows for a deeper understanding of the variety $\DAb$ in general.
  \end{section}

  \begin{section}{Preliminaries}
    \paragraph*{Fundamentals and Notation.}
    We let the set of natural numbers $\mathbb{N}$ include $0$ and, if we explicitly want to exclude $0$, we write $\Np = \{ 1, 2, \dots \}$. Two integers $x, y \in \mathbb{Z}$ are \emph{congruent modulo $m$} for some number $m \in \Np$, written as $x \equiv y \bmod m$, if there is some $z \in \mathbb{Z}$ with $x = z \cdot m + y$. The relation $\cdot \equiv \cdot \bmod m$ is a congruence and we use $x \bmod m$ to denote the smallest non-negative representative of the congruence class of $x$.
    
    \paragraph*{Words.}
    Let $\Sigma$ be an arbitrary non-empty set, i.\,e.\ an \emph{alphabet}. The set of all finite words over $\Sigma$ including the empty word $\varepsilon$ is denoted by $\Sigma^*$. If we want to exclude the empty word, we write $\Sigma^+$ for $\Sigma^+ = \Sigma \setminus \{ \varepsilon \}$. Equipped with the operation of concatenation, $\Sigma^*$ has a natural monoid structure and $\Sigma^+$ is a semigroup. For any finite word $w = a_1 \dots a_n$ with letters $a_1, \dots, a_n \in \Sigma$, let $|w| = n$ be the \emph{length} of $w$. To denote sets of finite words of certain lengths, we use natural variations of the notation $\Sigma^*$. For example, we use $\Sigma^{\leq n}$ to denote the set of finite words of length at most $n$ and $\Sigma^n$ to denote the set of finite words of length exactly $n$.
    
    For a finite word $w$, let $\dom{w} = \{ -\infty, 1, \dots, n, +\infty \}$ be the set of positions in $w$ with two additional special \enquote{positions}: $-\infty$ denotes an imaginary position \emph{before the first letter} of $w$ and $+\infty$ denotes an imaginary position \emph{after the last letter} of $w$. For a word $w = a_1 \dots a_n$ with $a_1, \dots, a_n \in \Sigma$, we let $w(p) = a_p$ for all positions $p \in \{ 1, \dots, n \}$. By $|w|_a$ for $a \in \Sigma$ and $w \in \Sigma^*$, we denote the number counting how often the letter $a$ occurs in $w$ and by $\alphabet{w}$, we denote the set of letters appearing in $w$, i.\,e.\ the set $\alphabet w = \{ a \in \Sigma \mid |w|_a > 0 \}$. 
    
    A word $u$ is a \emph{prefix} of a word $w$ if there is some $y$ such that $w = uy$. It is a \emph{proper} prefix if, additionally, $y$ is non-empty. The word $u$ is a \emph{infix} of $w$ if there are $x$ and $y$ such that $w = xuy$ and it is a suffix of $w$ if there is some $x$ with $w = xu$. A suffix is \emph{proper} if $x$ is non-empty.
    
    \paragraph*{Rankers.}
    Let $X_{\Sigma} = \{X_a \mid a \in \Sigma \}$ and $Y_{\Sigma} = \{Y_a \mid a \in \Sigma \}$ and, additionally, define the special symbols $\Xinfty \not\in X_\Sigma$ and $\Yinfty \not\in Y_\Sigma$. A finite word $\rho \in (X_{\Sigma} \cup \{ \Xinfty \})^*$ is called an \emph{$X$-ranker} (over $\Sigma$) and a finite word $\lambda \in (Y_{\Sigma} \cup \{ \Yinfty \})^*$ is called a \emph{$Y$-ranker}. An $X$-ranker can be seen as a sequence of instructions of the form \enquote{go to the next $a$ on the right} and a $Y$-ranker as a sequence of instructions of the form \enquote{go to the next $a$ on the left}. Both kinds of sequences are evaluated left to right.
    
    More formally, for a word $w$, a position $p \in \dom{w}$ in $w$ and for all $a \in \Sigma$, define
    \begin{align*}
      X_a(w; p) &= \inf \{ q \in \dom{w} \mid p < q, w(q) = a \} \text{ and}\\
      Y_a(w; p) &= \sup \{ q \in \dom{w} \mid q < p, w(q) = a \}
      \intertext{and, for the special symbols $\Xinfty$ and $\Yinfty$, let}
      \Xinfty(w; p) &= +\infty \text{ and}\\
      \Yinfty(w; p) &= -\infty \text{.}
    \end{align*}
    If there is no $a$ to the right of $p$ in $w$, we have $X_a(w; p) = +\infty$ and, symmetrically, $Y_a(w; p) = -\infty$ if there is no $a$ to the left of $p$ in $w$. The special symbols $\Xinfty$ and $\Yinfty$ are used to jump to the end or to the beginning of the word, respectively.
    
    If $\rho$ is the empty $X$-ranker, we let $\rho(w; p) = p$ and, similarly, we let $\lambda(w; p) = p$ if $\lambda$ is the empty $Y$-ranker. For longer $X$- and $Y$-rankers, we let
    \begin{align*}
      \rho X(w; p) &= X(w; \rho(w; p)) \text{ and} \\
      \lambda Y(w; p) &= Y(w; \lambda(w; p))
    \end{align*}
    recursively for $X \in X_\Sigma \cup \{ \Xinfty \}$ and $Y \in Y_\Sigma \cup \{ \Yinfty \}$ where $\rho$ is any $X$-ranker and $\lambda$ is any $Y$-ranker.
    
    For an $X$-ranker $\rho$, let $\rho(w) = \rho(w; -\infty)$ and, for a $Y$-ranker $\lambda$, let $\lambda(w) = \lambda(w; +\infty)$; i.\,e.\ $X$-rankers start on the left of a word and $Y$-rankers on the right.
    
    A position $p \in \dom{w}$ is \emph{visited} by an $X$- or $Y$-ranker $\delta$ if there is a non-empty prefix $\delta'$ of $\delta$ with $\delta'(w) = p$ and the positions visited by $\delta$ are called the \emph{$\delta$-positions} in $w$. For a set $\Delta$ of rankers, the $\Delta$-positions in $w$ is the union of all $\delta$-positions with $\delta \in \Delta$. The \emph{$\Delta$-factorization} of a word $w$ is $w = z_0 c_1 z_1 \dots c_d z_d$ if the $c_i$ are exactly at the $\Delta$-positions in $w$ (with the possible exception of $-\infty$ and $+\infty$). For a singleton set $\Delta = \{ \delta \}$, we also speak of the $\delta$-factorization (which is the same as the $\{ \delta \}$-factorization).
    
    \begin{example}
      The $\rho$-factorization of $w$ with $\rho = X_{a_1} \dots X_{a_r}$ (if $\rho(w) < +\infty$) is the unique factorization $w = w_0 a_1 w_1 \dots a_r w_r$ with $a_i \not\in \alphabet{w_{i - 1}}$ (for $i = 1, \dots, r$) and the definition for $Y_{b_1} \dots Y_{b_l}$ is symmetric.
    \end{example}
    
    If the positions visited by an $X$-ranker $\rho$ and the positions visited by a $Y$-ranker $\lambda$ have the same ordering in some word $u$ as they have in some other word $v$, we say that $u$ and $v$ are $\rho$-$\lambda$-compatible. More formally, we define:
    \begin{definition}
      Let $\rho$ be an $X$-ranker and $\lambda$ a $Y$-ranker. We define $\operatorname{ord}[\rho, \lambda; w]$ for a finite word $w$ as one of the elements in $\{ <, =, > \}$ depending on whether we have $\rho(w) < \lambda(w)$, $\rho(w) = \lambda(w)$ or $\rho(w) > \lambda(w)$.
      
      Then, two finite words $u$ and $v$ are \emph{$\rho$-$\lambda$-compatible} if, for all prefixes $\rho'$ of $\rho\Xinfty$ and $\lambda'$ of $\lambda\Yinfty$, we have $\operatorname{ord}[\rho', \lambda'; v] = \operatorname{ord}[\rho', \lambda'; v]$.
    \end{definition}
    \noindent{}Note that we can, in particular, use empty rankers and the entire ranker $\rho\Xinfty$ or $\lambda\Yinfty$ to obtain the positions $-\infty$ and $+\infty$ in the comparison. Therefore, we have $\rho'(u) < +\infty$ if and only if we have $\rho'(v) < +\infty$ (and the same holds for $\lambda'$ with respect to $-\infty$). It is also easy to see that the relation of being $\rho$-$\lambda$-compatible is an equivalence.
    
    The idea is that the $\{ \rho, \lambda \}$-factorizations of two $\rho$-$\lambda$-compatible words have the same form: if the $\{ \rho, \lambda \}$-factorization of $u$ is $u = x_0 c_1 x_1 \dots c_d x_d$ and $v$ is $\rho$-$\lambda$-compatible to $u$, then the $\{ \rho, \lambda \}$-factorization of $v$ is also of the form $v = y_0 c_1 y_1 \dots c_d y_d$.
    
    Moreover, we can manipulate the factors $x_i$ of $u$ arbitrarily to obtain a word which is $\rho$-$\lambda$-compatible to $u$ as long as we do not increase the alphabet of the $x_i$ (see \autoref{fig:rhoLambdaCompatibleIllustration} or an illustration):
    \begin{fact}\label{fct:compatibleIfSmallerAlphabet}
      Let $\rho$ be an $X$-ranker, $\lambda$ be a $Y$-ranker and let $u = x_0 c_1 x_1 \dots c_d x_d \in \Sigma^*$ be given in its $\{ \rho, \lambda \}$-factorization. Furthermore, let $\tilde{x}_1, \dots, \tilde{x}_d \in \Sigma^*$ with $\alphabet \tilde{x}_i \subseteq \alphabet x_i$ for all $0 \leq i \leq d$. Then, $\tilde{u} = \tilde{x}_0 c_1 \tilde{x}_1 \dots c_d \tilde{x}_d$ is the $\{ \rho, \lambda \}$-factorization of $\tilde{u}$ and $\tilde{u}$ is $\rho$-$\lambda$-compatible to $u$.
    \end{fact}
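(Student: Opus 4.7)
My plan is to reduce both assertions---that $\tilde{x}_0 c_1 \tilde{x}_1 \cdots c_d \tilde{x}_d$ is the $\{\rho, \lambda\}$-factorization of $\tilde{u}$ and that $\tilde{u}$ is $\rho$-$\lambda$-compatible with $u$---to a single inductive statement: every prefix of $\rho \Xinfty$ (respectively of $\lambda \Yinfty$) should visit the ``same abstract marker'' in $\tilde{u}$ as it does in $u$. To make this precise, I would write $p_i$ for the position of $c_i$ in $u$ and $\tilde{p}_i$ for its position in $\tilde{u}$, adopting the conventions $p_0 = \tilde{p}_0 = -\infty$ and $p_{d+1} = \tilde{p}_{d+1} = +\infty$. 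Since the $c_i$ are, by the definition of the $\{\rho, \lambda\}$-factorization, exactly the $\{\rho, \lambda\}$-positions of $u$ (apart from $\pm \infty$), every prefix of $\rho \Xinfty$ or of $\lambda \Yinfty$ evaluates on $u$ to some $p_i$ with $0 \le i \le d+1$, so the induction has a finite target space.

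Then I would carry out the induction on ranker-prefix length. The crucial step is the following: assume $\rho'(u) = p_j$ and, by induction, $\rho'(\tilde{u}) = \tilde{p}_j$, and let $\rho' X_a(u) = p_i$. By the definition of $X_a$, the letter $a$ does not occur in $u$ strictly between positions $p_j$ and $p_i$, and $u(p_i) = a$. But the letters occurring strictly between these positions are exactly the letters of $x_j c_{j+1} x_{j+1} \cdots c_{i-1} x_{i-1}$. Since $\alphabet \tilde{x}_k \subseteq \alphabet x_k$ for every $k$ and the $c_k$ are unchanged, $a$ does not occur in $\tilde{u}$ strictly between $\tilde{p}_j$ and $\tilde{p}_i$ either, while $\tilde{u}(\tilde{p}_i) = a$. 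Hence $\rho' X_a(\tilde{u}) = \tilde{p}_i$. The $\Xinfty$ case is immediate from the definition, and the $Y$-ranker case is entirely symmetric.

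Once the inductive claim is in hand, both assertions follow almost for free. It implies that the set of $\rho$-positions of $\tilde{u}$ is $\{\tilde{p}_i : p_i \text{ is a } \rho\text{-position of } u\}$ and analogously for $\lambda$, whose union (omitting $\pm \infty$) is $\{\tilde{p}_1, \dots, \tilde{p}_d\}$, confirming that $\tilde{x}_0 c_1 \tilde{x}_1 \cdots c_d \tilde{x}_d$ is the $\{\rho, \lambda\}$-factorization of $\tilde{u}$. For compatibility, prefixes $\rho'$ of $\rho \Xinfty$ and $\lambda'$ of $\lambda \Yinfty$ that evaluate on $u$ to $p_i$ and $p_j$ evaluate on $\tilde{u}$ to $\tilde{p}_i$ and $\tilde{p}_j$; the equivalence $p_i < p_j \iff i < j \iff \tilde{p}_i < \tilde{p}_j$ (with the analogous equivalences for $=$ and $>$) then yields $\operatorname{ord}[\rho', \lambda'; u] = \operatorname{ord}[\rho', \lambda'; \tilde{u}]$.

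The hard part will not be any deep combinatorics but rather getting the bookkeeping right: the induction has to be phrased in terms of abstract marker indices rather than raw positions, and the $\pm \infty$ cases must be folded uniformly into the argument (which is why I extend the indexing to $i = 0$ and $i = d+1$). Once this setup is in place, the argument reduces to the simple observation that shrinking the alphabet inside a gap cannot create a previously absent occurrence of a letter there.
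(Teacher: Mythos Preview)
Your proposal is correct. The paper itself gives no proof of this fact at all---it is stated and illustrated by a figure, but left to the reader---so your inductive argument on the length of ranker prefixes, tracking abstract marker indices rather than concrete positions, is exactly the natural way to supply the omitted details, and there is nothing to compare it against.
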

  
    \begin{figure}\centering
      \begin{tikzpicture}
        \node[inner sep=0pt] (wLabel) {$u ={}$};
        \matrix [base right=0cm of wLabel, rectangle, draw, matrix of math nodes, ampersand replacement=\&, inner sep=2pt, text height=.8em, text depth=0pt] (w) {
          \makebox[1.5cm]{\textcolor{gray}{$x_0$}} \&
          a \&
          \makebox[2cm]{\textcolor{gray}{$x_1$}} \&
          b \&
          \makebox[0.75cm]{\textcolor{gray}{$x_2$}} \&
          c \&
          \makebox[1.0cm]{\textcolor{gray}{$x_3$}} \&
          b \&
          \makebox[0.75cm]{\textcolor{gray}{$x_4$}} \&
          a \&
          \makebox[1cm]{\textcolor{gray}{$x_5$}} \&
          a \&
          \makebox[1.75cm]{\textcolor{gray}{$x_6$}} \&
          \\
        };
        \foreach \x in {1,...,12}
          \draw ($(w-1-\x.east |- w.north)$) -- ($(w-1-\x.east |- w.south)$);
        
        \node[draw, below left=1.5cm and 0cm of w-1-5] (z2') {\makebox[1.5cm]{\textcolor{gray}{$\tilde{x}_2$}}};
        \draw[<->, very thick, color=gray, decorate, decoration={penciline}] (z2'.north) -- (w-1-5.south);
        
        \draw[->, shorten <= 1ex] ($(w.north west)+(0ex, 0.75)$) .. controls +(0.25cm, 0cm) and +(0cm,1cm) .. node[midway, above] {$X_a$} ($(w-1-2.north)+(0ex, 1ex)$);
        \draw[->, shorten <= 1ex] ($(w-1-2.north)+(0ex, 1ex)$) .. controls +(0.25cm, 0.75cm) and +(0cm,1cm) .. node[midway, above] {$X_b$} ($(w-1-4.north)+(0ex, 1ex)$);
        \draw[->, shorten <= 1ex] ($(w-1-4.north)+(0ex, 1ex)$) .. controls +(0.25cm, 0.75cm) and +(0cm,1cm) .. node[midway, above] {$X_a$} ($(w-1-10.north)+(0ex, 1ex)$);
        
        \draw[->, shorten <= 1ex] ($(w.south east)+(0ex, -0.75)$) .. controls +(-0.25cm, 0cm) and +(0cm,-1cm) .. node[midway, below] {$Y_a$} ($(w-1-12.south)+(0ex, -1ex)$);
        \draw[->, shorten <= 1ex] ($(w-1-12.south)+(0ex, -1ex)$) .. controls +(-0.25cm, -0.75cm) and +(0cm,-1cm) .. node[midway, below] {$Y_b$} ($(w-1-8.south)+(0ex, -1ex)$);
        \draw[->, shorten <= 1ex] ($(w-1-8.south)+(0ex, -1ex)$) .. controls +(-0.25cm, -0.75cm) and +(0cm,-1cm) .. node[midway, below] {$Y_c$} ($(w-1-6.south)+(0ex, -1ex)$);
        \draw[->, shorten <= 1ex] ($(w-1-6.south)+(0ex, -1ex)$) .. controls +(-0.25cm, -0.75cm) and +(0cm,-1cm) .. node[midway, below] {$Y_b$} ($(w-1-4.south)+(0ex, -1ex)$);
      \end{tikzpicture}
      \caption{Illustration of \autoref{fct:compatibleIfSmallerAlphabet}: If we replace $x_2$ by $\tilde{x}_2$ (for $\alphabet \tilde{x}_2 \subseteq \alphabet x_2$), the result will be $\rho$-$\lambda$-compatible to $u$ (for $\rho = X_a X_b X_a$ and $\lambda = Y_a Y_b Y_c Y_b$ in out example).}\label{fig:rhoLambdaCompatibleIllustration}
    \end{figure}
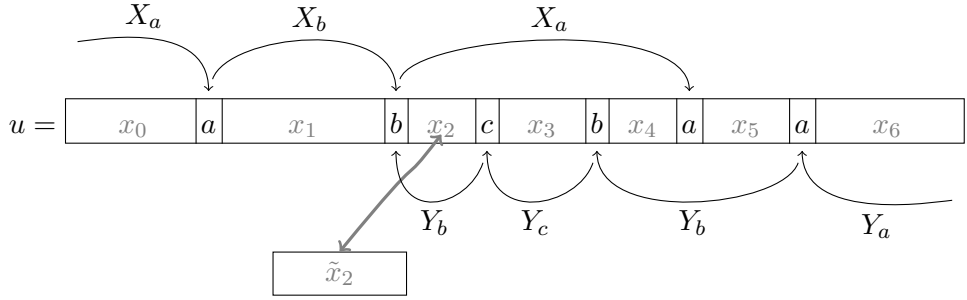
    
    \paragraph*{Semigroups, Monoids and Congruences.}
    Semigroups and monoids are the main object of this work, so we assume the reader to be familiar with this area. Details may be found in any introductory text book on the topic (such as \cite{Howie95}). Most of the time, the theory of semigroups is very similar to the one of monoids. Out of convenience, will usually work with monoids. Here, we denote the neutral element of a monoid by $1$.
    
    Congruences play an important role in (semigroup and) monoid theory. If $\equiv$ is a congruence over some monoid $M$, then the set $M / {\equiv}$ of congruences classes has a natural monoid structure itself (by letting $[s]_\equiv \cdot [t]_\equiv = [st]_\equiv$), whose neutral element is the congruence class of the neutral element of $M$. Often, we will consider monoids of the form $\Sigma^* / {\equiv}$ for some alphabet $\Sigma$ and some congruence $\equiv$ over $\Sigma^*$. Sometimes we will write $u = v$ \emph{in $M$} for $u \equiv v$ in this case for $u, v \in \Sigma^*$. Another common notation here is to write $\langle a_1, \dots, a_m \mid \ell_1 = r_1, \dots, \ell_n = r_n \rangle_{M}$ for the monoid $\Sigma^* / {\equiv}$ where $\Sigma = \{ a_1, \dots, a_m \}$ and $\equiv$ is the smallest congruence with $\ell_i \equiv r_i$ for all $1 \leq i \leq n$.
    
    We are mostly concerned with finite monoids. In such a monoid $M$, every element $s \in M$ has a proper idempotent power $s^E$ (i.\,e.\ $E > 0$ and $s^E \cdot s^E = s^E$). The smallest such power $E$ cannot be larger than $|M|$ (for reasons of cardinality) and we therefore have that $E$ divides $|M|!$. In particular, we have that $s^{|M|!}$ is idempotent for every $s \in M$. The number $|M|!$ is called the \emph{exponent} of $M$ and we will simply denote it by $M!$ in order to lighten our notation a bit.
    
    A \emph{monogenic} monoid is a monoid generated by a single element. All finite monogenic monoids are of the form $\langle a \mid a^I = a^{I + P} \rangle_M = \{ a^0, \dots, a^I, \dots a^{I + P - 1} \}$ for some $I \geq 0$ and $P > 0$. Graphically, it has a \emph{$\rho$-shape} where the elements $a^0, \dots, a^{I - 1}$ are on the tail and the elements $a^{I}, \dots, a^{I + P - 1}$ are on a cycle. In such a monoid, the smallest exponent $E > 0$ such that $a^E$ is idempotent is the unique $I \leq E < P + I$ which is a multiple of $P$. In particular, $s^{M!}$ is always on the cycle.
    
    The idempotent power $s^{M!}$ plays a central role in the theory of (semigroups and) monoids. In the context of this work, however, there is an almost equally important (and closely related) power of a monoid element $s$: it is $s^{M! - 1}$. We clearly have $s^{M! - 1} s = s^{M!}$ and it turns out that $s^{M! - 1}$ is the inverse element of $s^{M!} s$ in the group formed by the elements on the cycle of the monogenic submonoid generated by $s$. This follows from the following statement, that we will often make implicitly in the following.
    \begin{fact}
      For every element $s \in M$ of a finite monoid $M$, we have
      \[
        s^{M!} s^{M! - 1} = s^{M! - 1} = s^{M! - 1} s^{M!}
      \]
    \end{fact}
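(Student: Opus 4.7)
The plan is to exploit the $\rho$-shape of the monogenic submonoid $\langle s\rangle \subseteq M$ that was recalled just before the statement. Writing $\langle s\rangle = \{s^0,\dots,s^{I-1},s^I,\dots,s^{I+P-1}\}$ with tail length $I\ge 0$ and period $P\ge 1$, I would reduce everything to two purely numerical claims: (i) the period $P$ divides $M!$, and (ii) $M!-1 \ge I$, so that $s^{M!-1}$ itself already lies on the cycle.

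For (i), I would argue that $s^{M!}$ is idempotent in $\langle s\rangle$ and hence equals $s^E$ for the unique multiple $E$ of $P$ lying in $[I, I+P-1]$; since both $M!$ and $E$ are at least $I$, the equality $s^{M!} = s^E$ combined with the cycle law $s^k = s^{k+P}$ (valid for $k \ge I$) forces $M! \equiv E \equiv 0 \pmod{P}$. For (ii), I would use the coarse bound $|M|\ge |\langle s\rangle| = I+P \ge I+1$, whence $M! \ge |M| \ge I+1$; the degenerate case $|M|=1$ is trivially fine because then $s=1$.

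Once (i) and (ii) are in place, the conclusion is a one-line cycle argument: adding $M!$ (a multiple of $P$) to the exponent $M!-1$ (which is at least $I$) does not change the element, so
\[
  s^{M!}\cdot s^{M!-1} \;=\; s^{2M!-1} \;=\; s^{M!-1}
  \qquad\text{and symmetrically}\qquad
  s^{M!-1}\cdot s^{M!} \;=\; s^{M!-1}.
\]

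I do not expect any real obstacle here; the statement is essentially a bookkeeping consequence of the structure of monogenic submonoids of finite monoids. The only point requiring a moment's thought is the off-by-one bound $M!-1 \ge I$, and even this is immediate from the fact that $\langle s\rangle$ as a whole must fit inside $M$.
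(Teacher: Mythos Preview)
Your proposal is correct and follows essentially the same route as the paper: work in the monogenic (sub)monoid generated by $s$, observe that $P$ divides $M!$, argue that $s^{M!-1}$ already lies on the cycle, and conclude that multiplication by $s^{M!}$ is the identity there. The only noteworthy difference is that your bound $M!\ge |M|\ge I+P\ge I+1$ is slightly sharper than the paper's $|M|!-1>|M|$, so you avoid the separate case check for $|M|=2$ that the paper carries out.
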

    \begin{proof}
      Since the statement must only hold in the monogenic subsemigroup of $M$ generated by $s$, it suffices to consider finite monogenic monoids, which are of the form $M = \langle s \mid s^I = s^{I + P} \rangle_M$. There is no monoid of size $0$ and the statement clearly holds in the monoid of size $1$. There are only two monoids of size $2$: the cyclic group of order two $\langle s \mid 1 = s^2 \rangle_M$ and the monoid $\langle s \mid s = s^2 \rangle_M$ (sometimes denote by $U_1$). In both of them, we have $s^2 s^1 = s^1$ and the statement holds.
      
      Thus, the only interesting monoids are the monogenic ones with $|M| \geq 3$. Here, we have $|M|! - 1 > |M|$, which shows that $s^{M! - 1}$ belongs to the elements on the cycle. Note that the cycle has length $P$ and that we have $s^n s^P = s^n$ for all elements $s^n$ on the cycle (i.\,e.\ those with $n \geq I$). Since $|M|!$ is a multiple of $P$, we obtain that multiplying by $s^{M!}$ does not change the elements on the cycle either and the statement follows.
    \end{proof}
    
    \paragraph*{Green's Relations.}
    Every monoid admits five relations which are very important for its structure. These are \emph{Green's Relations} and are defined as follows:\footnote{For details and proofs, we again refer to standard text books in the subject such as \cite{Howie95}.} we have
    \begin{itemize}[noitemsep]
      \item $s \L t \iff \exists x, x' \in M: xs = t \text{ and } s = x't$,
      \item $s \R t \iff \exists y, y' \in M: sy = t \text{ and } s = ty'$,
      \item $s \J t \iff \exists x, x', y, y' \in M: xsy = t \text{ and } s = x'ty'$,
      \item $s \H t \iff s \L t \text{ and } s \R t$ and
      \item $s \mathrel{\mathcal{D}} t \iff \exists r_1, \dots, r_n \in M: s \L r_1 \R r_2 \dots \L r_{n - 1} \R r_n \L t$
    \end{itemize}
    for two elements $s, t$ of a monoid $M$. All these relations are equivalences and, in finite monoids (and also most other monoids we deal with), the relations $\mathcal{D}$ and $\mathcal{J}$ coincide and can, thus, be used interchangeably. If a monoid is even a group, then all of the above relations have only one class containing the entire group. On the other hand, the elements of an $\mathcal{H}$-class containing an idempotent form a group and, thus, every $\mathcal{H}$-class contains at most one idempotent. A $\mathcal{D}$-class is called \emph{regular} if it contains an idempotent. A monoid element $s$ is called \emph{regular} if its $\mathcal{D}$-class is regular.
    
    \paragraph*{$\mathcal{R}$- and $\mathcal{L}$-Factorizations.}
    The \emph{$\mathcal{R}$-factorization} of a word $w \in \Sigma^*$ with respect to a homomorphism $\varphi: \Sigma^* \to M$ for a monoid $M$ is the (unique) factorization $w = w_0 a_1 w_1 \dots a_r w_r$ with
    \begin{itemize}[noitemsep, topsep=0pt]
      \item $1 \R \varphi(w_0)$, 
      \item $\varphi( w_0 a_1 w_1 \dots a_i ) \R \varphi( w_0 a_1 w_1 \dots a_i w_i )$ for $0 < i \leq r$ and
      \item $\varphi( w_0 a_1 w_1 \dots a_i w_i ) \not\R \varphi( w_0 a_1 w_1 \dots a_i w_i a_{i + 1} )$ for $0 \leq i < r$.
    \end{itemize}
    The \emph{positions} of such an $\mathcal{R}$-factorization are the positions of the $a_i $ in $w$ and we say that $w$ has an \emph{$\mathcal{R}$-decent} at these positions.
    
    Symmetrically, the \emph{$\mathcal{L}$-factorization} of $w$ with respect to $\varphi$ is the (unique) factorization $w = w_0 a_1 w_1 \dots a_\ell w_\ell$ with
    \begin{itemize}[noitemsep, topsep=0pt]
      \item $\varphi(w_\ell) \L 1$, 
      \item $\varphi( w_{i - 1} a_i w_i \dots a_\ell w_\ell ) \L \varphi( a_i w_i \dots a_\ell w_\ell )$ for $0 < i \leq \ell$ and
      \item $\varphi( a_i w_i \dots a_\ell w_\ell ) \not\L \varphi( w_i \dots a_\ell w_\ell )$ for $0 < i \leq \ell$.
    \end{itemize}
    Again, the \emph{positions} of such an $\mathcal{L}$-factorization are the positions of the $a_i $ in $w$ and we say that $w$ has an \emph{$\mathcal{L}$-decent} at these positions.
    
    \paragraph*{The Free Profinite Monoid, Pseudowords and Equations.}
    For two words $u, v \in \Sigma^*$, let $d(u, v) = 0$ if $u = v$ and, for $u \neq v$, let $d(u, v) = 2^{-|M|}$ where $M$ is the smallest monoid such that there is a homomorphism $\varphi: \Sigma^* \to M$ with $\varphi(u) \neq \varphi(v)$. It is easy to see that this defines an ultrametric on $\Sigma^*$, called the \emph{profinite metric}. It allows us to consider the topological completion $\freeProf$ of $\Sigma^*$ (i.\,e.\ the topological space in which every Cauchy sequence over $\Sigma^*$ converges). In this space, every element $\alpha \in \freeProf$ is a limit of a Cauchy sequence $(u_k)_{k \in \mathbb{N}}$, which we simply write as $\lim (u_k)_k$. We may endow $\freeProf$ with a continuous binary operation by letting $\lim (u_k)_k \cdot \lim (v_k)_k = \lim (u_k v_k)_k$. This results in the \emph{free profinite monoid} over $\Sigma$, whose elements are called \emph{pseudowords}. For details on (free) profinite monoids and pseudowords, we refer the reader to books on the topic (e.\,g.\ \cite{alm94:short, alm20:short}).
    
    Every homomorphism $\varphi: \Sigma^* \to M$ to a finite monoid $M$ uniquely extends to a continuous homomorphism\footnote{In fact, a homomorphism $\freeProf \to M$ is already uniquely defined by the images of the letters from $\Sigma$.} $\varphi: \freeProf \to M$ (where we endow $M$ with the discrete topology). In fact, for $\freeProf \ni \alpha = \lim (u_k)_k$, there is some $k_0$ such that $\varphi(u_{k_0}) = \varphi(u_{k_0 + 1}) = \dots$ (since $(u_k)_k$ is a Cauchy sequence with respect to the above metric) and we let $\varphi(\alpha) = \varphi(u_{k_0})$. In particular, we may extend the notion of the alphabet of a (finite) word to pseudowords by considering $\alphabet$ as a homomorphism $\Sigma^* \to 2^\Sigma$ to the monoid of subsets of $\Sigma$ with union as operation.
    
    Two pseudowords $\alpha, \beta$ form an \emph{equation} $\alpha = \beta$. A monoid $M$ \emph{satisfies} the equation $\alpha = \beta$ with $\alpha, \beta \in \freeProf$, which we write as $M \models \alpha = \beta$, if we have $\varphi(\alpha) = \varphi(\beta)$ for all homomorphisms $\varphi: \Sigma^* \to M$ (and, thus, all continuous homomorphisms $\varphi: \freeProf \to M$). A class $\mathcal{C}$ of monoids \emph{satisfies} an equation, which we write as $\mathcal{C} \models \alpha = \beta$, if every one of its elements satisfies it.
    
    \paragraph*{$\omega$- and $(\omega - 1)$-Powers, $(\omega - 1)$-Terms.}
    There are two very important operations on pseudwords, which we will need extensively in the following. The first one is the \emph{$\omega$-power}: for a pseudoword $\alpha = \lim (u_k)_k$, let $\alpha^\omega = \lim (u_k^{k!})_k$ (which is well-defined since $(u_k^{k!})_k$ can easily be shown to be a Cauchy sequence). Since, in a finite monoid $M$, $s^{M!}$ is idempotent for every $s \in M$, we have $\varphi\left( \alpha^\omega \right) = \varphi(\alpha)^{M!}$ for all continuous homomorphisms $\varphi: \freeProf \to M$.
    
    Closely related is the $(\omega - 1)$-power, where we let $\alpha^{\omega - 1} = \lim (u_k^{k! - 1})_k$ for a pseudoword $\alpha = \lim (u_k)_k$. Note that we have $\varphi(\alpha^{\omega - 1}) = \varphi(\alpha)^{M! - 1}$ for all continuous homomorphisms $\varphi: \freeProf \to M$ to a finite monoid $M$.
    
    An \emph{$(\omega - 1)$-term} is a pseudoword which arises from the letters in $\Sigma$ by only using concatenation and the $(\omega - 1)$-power.
    
    \paragraph*{Varieties and $\DAb$.}
    A \emph{variety of finite monoids} is a class of finite monoids closed under direct products, taking submonoids and homomorphic images. We will mostly work with varieties of finite monoids out of convenience and will simply call them \emph{varieties}. However, our results also work for the corresponding \emph{varieties of finite semigroups} (which are also closed under direct products and homomorphic images but we need to allow taking subsemigroups this time).
    
    The most important variety for this work is the variety $\DAb$, which consists of all finite monoids whose regular $\mathcal{D}$-classes form Abelian groups. One may verify directly that this is indeed a variety but this also follows by using the following defining relation.
    \begin{proposition}\label{lem:DAbEquation}
      A finite monoid $M$ is in $\DAb$ if and only if it satisfies $(xy)^{\omega - 1} = (yx)^{\omega - 1}$.
      Therefore, any monoid $M \in \DAb$ also satisfies\footnote{On a side note, we point out that $(xy)^{\omega} = (yx)^{\omega}$ is the defining relation for $\V[DG]$, the variety of all monoids whose regular $\mathcal{D}$-classes are monoids.} $(xy)^{\omega} = (yx)^{\omega}$.
    \end{proposition}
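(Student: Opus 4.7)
The plan is to prove both implications. For ($\Rightarrow$), fix $M \in \DAb$ and $s, t \in M$, and set $e := (st)^{M!}$, $f := (ts)^{M!}$. The first step is to show $e = f$: from the familiar identities $s(ts)^k = (st)^k s$ and $t(st)^k = (ts)^k t$ at $k = M!$, the elements $sft = (st)^{M!+1}$ and $tes = (ts)^{M!+1}$ are $\H$-equivalent to $e$ and $f$ respectively, and since $sft \in MfM$ and $tes \in MeM$ this yields $e \J f$; as the common regular $\mathcal{D}$-class of $e$ and $f$ is a group and therefore has a unique idempotent, $e = f$. Combined with $e = f$, the same identities further give the commutation relations $se = es$ and $te = et$.

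Next I would show that $es$ and $et$ both lie in the Abelian group $G := H_e$. Since $D_e$ is a group, $D_e = H_e$, so $R_e = H_e$, and it suffices to check $es \R e$. The inequality $es \leq_{\mathcal{R}} e$ is immediate, while the witness $(es) \cdot t \cdot (st)^{M!-1} = e \cdot (st)^{M!} = e$ gives $e \leq_{\mathcal{R}} es$; the analogous argument places $et$ in $G$. Using $se = es$, $te = et$ and $e^2 = e$, a short computation inside $G$ yields
\[
  (es)(et) \;=\; s\,e^2\,t \;=\; s\,(te) \;=\; (st)\,e \;=\; (st)^{M!+1},
\]
and symmetrically $(et)(es) = (ts)^{M!+1}$. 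Abelianness of $G$ therefore gives $(st)^{M!+1} = (ts)^{M!+1}$, and taking inverses inside $G$ produces $(st)^{M!-1} = (ts)^{M!-1}$.

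For the converse, let $M$ satisfy the equation and let $D$ be a regular $\mathcal{D}$-class. To prove $D$ has a unique idempotent, pick any two idempotents $e, f \in D$ and choose $g \in D$ with $e \R g \L f$ and $g' \in M$ with $gg' = e$; then $g'g$ is idempotent, and the equation applied to $(g, g')$ gives $e = (gg')^{\omega-1} = (g'g)^{\omega-1} = g'g$, whence $g \L e$ (via $g'g = e$) and so $e \L f$. A second application of the equation to $(e, f)$, together with $ef = e$ and $fe = f$ valid for $\L$-equivalent idempotents, reduces to $e = f$. For any regular $g \in D$ with $g = gug$, both $gu$ and $ug$ are idempotents of $D$ (hence equal to $e$), giving $g = eg = ge$ as well as $e \R g$ and $e \L g$, so $g \in H_e$ and $D = H_e$. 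Finally, for $u, v \in H_e$ the elements $(uv)^{M!-1}$ and $(vu)^{M!-1}$ are the group-theoretic inverses of $uv$ and $vu$ inside $H_e$, so the equation forces $uv = vu$.

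The main obstacle is the inclusion $es \in G$ in the forward direction: without it, one only knows that $(st)^{M!+1}$ itself lies in $G$ but has no way to factor it as a product of two elements of $G$ to which the Abelianness of $G$ can be applied. This is precisely where the full strength of $D_e = H_e$ is needed, rather than merely $D_e$ being a group. The parenthetical $(xy)^\omega = (yx)^\omega$ then follows by applying the $\omega$-power to both sides of the main equation and using the identity $(\alpha^{\omega-1})^\omega = \alpha^\omega$, valid for every pseudoword $\alpha$.
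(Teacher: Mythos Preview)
Your proof is correct and follows essentially the same strategy as the paper's: for ($\Rightarrow$) both arguments first establish $e=f$, place $es$ and $et$ in the Abelian group $H_e$, and use commutativity (you then invert $(st)^{M!+1}=(ts)^{M!+1}$ inside $G$, whereas the paper iterates the relation $(st)^{M!}st = ts(st)^{M!}$). For ($\Leftarrow$) both show a regular $\mathcal{D}$-class has a unique idempotent and then derive commutativity from the equation applied to group elements; the only difference is that you apply the identity twice (first to get $e \L f$, then to $(e,f)$), while the paper uses Green's Lemma to produce $s,t$ with $st=e$, $ts=f$ and applies it once.
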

    \begin{proof}
      We first show that all monoids in $\DAb$ satisfy the equation. For this, let $M \in \DAb$ and $s, t \in M$ be arbitrary. We have to show
      \[
        (st)^{M! - 1} = (ts)^{M! - 1} \textbf{.}
      \]
      First, however, observe that we have
      \[
        (st)^{M!} \L t (st)^{M!} = (ts)^{M!} t \R (ts)^{M!}
      \]
      and that, thus, $(st)^{M!}$ and $(ts)^{M!}$ belong to the same $\mathcal{D}$-class, which must be regular (as both of the mentioned elements are idempotent). Since $M$ is in $\DAb$ this regular $\mathcal{D}$-class must be a group, which may only contain one idempotent, and we obtain
      \[
        (st)^{M!} = (ts)^{M!} \text{.}
      \]
      We use this to show $(st)^{M!} st = ts (st)^{M!}$. We have
      \[
        (st)^{M!} s \R (st)^{M!} = (ts)^{M!} \R (ts)^{M!} t = (st)^{M!} t
      \]
      and, thus, that $(st)^{M!} s$, $(st)^{M!}$ and $(st)^{M!} t$ all belong to the same regular $\mathcal{D}$-class and, therefore, commute. This yields
      \begin{align*}
        (st)^{M!} st &= (st)^{M!} \left( (st)^{M!} s \right) t = \left( (st)^{M!} s \right) \left( (st)^{M!} t \right) = \left( (st)^{M!} t \right) \left( (st)^{M!} s \right) \\
        &= (ts)^{M!} t (st)^{M!} s = ts (ts)^{2 \cdot M!} = ts (ts)^{M!} = ts (st)^{M!}
      \end{align*}
      and, thus,
      \begin{align*}
        (st)^{M! - 1} = (st)^{M! - 1} (st)^{M!} &= (ts) (st)^{M! - 2} (st)^{M!} = \dots = (ts)^{M! - 1} (st)^{M!} \\
        &= (ts)^{M! - 1} (ts)^{M!} = (ts)^{M! - 1} \text{.}
      \end{align*}
    
      For the converse, let $M$ be a finite monoid satisfying $(xy)^{\omega - 1} = (yx)^{\omega - 1}$ and, thus, also
      \[
        (xy)^\omega = \left( (xy)^{\omega - 1} \right)^{\omega - 1} = \left( (yx)^{\omega - 1} \right)^{\omega - 1} = (yx)^\omega \textbf{.}
      \]
      We first show that every regular $\mathcal{D}$-class is a group by showing that any $\mathcal{D}$-class contains at most one idempotent (which implies that, if it is regular, it consists of only one $\mathcal{R}$-class and only one $\mathcal{L}$-class \cite[Proposition~2.3.2]{Howie95} and, therefore, consists of a single $\mathcal{H}$-class, which forms a group as it contains an idempotent \cite[Corollary~2.2.6]{Howie95}). Let $e$ and $f$ be idempotents in the same $\mathcal{D}$-class. There must be some $s$ with $e \R s \L f$. Since $e$ and $s$ belong to the same $\mathcal{R}$-class and $e$ is idempotent, we get $es = s$ (as there must be some $x$ with $ex = s$ and, thus, $es = e(ex) = ex = s$). By Green's Lemma (see e.\,g.\ \cite[Lemma~2.2.1]{Howie95}), we obtain that right multiplication with $s$ is an $\mathcal{R}$-class preserving bijection form the $\mathcal{L}$-class of $e$ onto the $\mathcal{L}$-class of $f$. Thus, there is some $t$ with $ts = f$ and $t \R f$ (we can choose the pre-image of $f$ under this bijection since $s \L f$). Since $f$ is idempotent, the latter implies $ft = t$ and, again by Green's Lemma, that right multiplication with $t$ is an $\mathcal{R}$-class preserving bijection form the $\mathcal{L}$-class of $f$ to the $\mathcal{L}$-class of $e$. Thus, we have $st \R e$ and $st \L e$ or, in other words, $st \H e$. Additionally, $st$ is also idempotent (since we have $stst = sft = st$) and, thus, must be equal to $e$ (because there can only be one idempotent in every $\mathcal{H}$-class). This means that we have
      \[
        e = st = (st)^{M!} = (ts)^{M!} = ts = f
      \]
      where we used the identity $(xy)^\omega = (yx)^\omega$ in the middle.
      
      We have shown that every regular $\mathcal{D}$-class is a group and it remains to show that it is commutative. Let $g$ be an element of a regular $\mathcal{D}$-class $D$. Let $p$ be the order of $g$ as a group element. Clearly, $p$ divides $|D| \leq |M|$ and, thus, it divides $M!$. Therefore, $g^{M! - 1}$ is the inverse $g^{-1}$ of $g$ in the group $D$. For two group elements $g, h \in D$, we obtain
      \[
        (gh)^{-1} = (gh)^{M! - 1} = (hg)^{M! - 1} = (hg)^{-1}
      \]
      (where we have used the identity $(xy)^{\omega - 1} = (yx)^{\omega - 1}$ in the middle) and, thus, also $gh = hg$.
%      \todo[inline]{https://www.irif.fr/~jep//PDF/MPRI/MPRI.pdf}
    \end{proof}
  
    We conclude this section by proving some results for $\DAb$, which we will need later on. First, we will show that the only possibility to leave an $\mathcal{H}$-class is by using a new letter.
    \begin{lemma}\label{lem:groupByAlphabet}
      Let $u, v \in \Sigma^*$ with $\alphabet v \subseteq \alphabet u$, $M \in \DAb$ and $\varphi: \freeProf \to M$ a continuous homomorphism. Then, we have $\varphi(u^{\omega}) \H \varphi(u^{\omega} v) = \varphi(u^{\omega} v u^{\omega}) = \varphi(v u^{\omega})$.
    \end{lemma}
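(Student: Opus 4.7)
The plan is to reduce the statement to a single-letter lemma and then prove that lemma using the defining identity of $\DAb$ from \autoref{lem:DAbEquation}. Since $M \in \DAb$, the regular $\mathcal{D}$-class $G$ containing the idempotent $e := \varphi(u^{\omega})$ is an abelian group with identity $e$; being a group, it is a single $\mathcal{H}$-class and so coincides with the $\mathcal{R}$-, $\mathcal{L}$-, and $\mathcal{H}$-classes of $e$ in $M$. It therefore suffices to show that $\varphi(u^{\omega} v)$ and $\varphi(v u^{\omega})$ both lie in $G$: the $\mathcal{H}$-equivalence $\varphi(u^{\omega}) \H \varphi(u^{\omega} v)$ is then immediate, and because $e$ acts as the identity on $G$,
\[
\varphi(u^{\omega} v u^{\omega}) = \varphi(u^{\omega} v) \cdot e = \varphi(u^{\omega} v) \quad \text{and} \quad \varphi(u^{\omega} v u^{\omega}) = e \cdot \varphi(v u^{\omega}) = \varphi(v u^{\omega}).
\]

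For general $v$ I would induct on $|v|$, with the inductive step resting on a single-letter base case. The case $v = \varepsilon$ is trivial. For $v = v' b$ with $b \in \alphabet(u)$, the inductive hypothesis gives $\varphi(u^{\omega} v') \in G$, and, granting the single-letter case, also $\varphi(u^{\omega} b) \in G$. Using that $\varphi(u^{\omega} v') \cdot e = \varphi(u^{\omega} v')$ in the group $G$, one computes
\[
\varphi(u^{\omega} v) = \varphi(u^{\omega} v') \cdot e \cdot \varphi(b) = \varphi(u^{\omega} v') \cdot \varphi(u^{\omega} b) \in G \cdot G = G,
\]
and the symmetric computation yields $\varphi(v u^{\omega}) \in G$.

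The real work is the single-letter case: given $b \in \alphabet(u)$, I would pick a factorization $u = x b y$ and apply the identity $(\alpha\beta)^{\omega} = (\beta\alpha)^{\omega}$ from \autoref{lem:DAbEquation} twice to cyclically rotate $u$, obtaining $\varphi(u^{\omega}) = \varphi((byx)^{\omega}) = \varphi((yxb)^{\omega}) = e$. The key computation is then
\[
\varphi(u^{\omega} b) \cdot \varphi(yx) = \varphi\bigl((byx)^{\omega} \cdot byx\bigr) = \varphi(byx)^{M! + 1},
\]
which belongs to the cyclic group sitting on the cycle-end of the monogenic submonoid $\langle \varphi(byx) \rangle$ and hence to the $\mathcal{H}$-class of the idempotent $\varphi(byx)^{M!} = e$, which is $G$. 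Multiplying on the right by the $G$-inverse of $\varphi(byx)^{M! + 1}$ produces $e$, so $\varphi(u^{\omega} b) \R e$; in $\DAb$ this collapses to $\varphi(u^{\omega} b) \in G$. The analogous left-sided argument, starting from $\varphi(yx) \cdot \varphi(b u^{\omega}) = \varphi((yxb)^{\omega + 1}) \in G$, gives $\varphi(b u^{\omega}) \in G$.

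The main obstacle is isolating the right rotation of $u$ in the single-letter case. The $\DAb$-identity is what allows moving $b$ to the front of the repeated block $byx$; without it, right-multiplying $\varphi(u^{\omega}) \varphi(b)$ by a word in $\alphabet(u)$ would not naturally close up into another $\omega$-power. Once this rotation is recognised, the remaining ingredients---the inductive reduction to a single letter, the collapse of Green's relations on regular $\mathcal{D}$-classes inside $\DAb$, and the group structure of $G$---fit together routinely.
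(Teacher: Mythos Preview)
Your proposal is correct and follows essentially the same approach as the paper: both reduce to the single-letter case by induction, and both prove that case by factoring $u$ around the letter, applying the $\DAb$ identity $(xy)^{\omega} = (yx)^{\omega}$ to cyclically rotate $u$, and deducing $\varphi(u^{\omega}b) \R e$ (which in $\DAb$ collapses to $\mathcal{H}$-equivalence). The paper's version of the single-letter $\R$-step is slightly terser---it writes $u = u_0 a u_1$, sets $u^{\omega} = (a u_1 u_0)^{\omega}$, and observes $u^{\omega} a = (a u_1 u_0)^{\omega} a \R (a u_1 u_0)^{\omega}$ directly---while you spell out the right multiplier via $\varphi(byx)^{M!+1}$ and its group inverse, but the content is the same.
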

    \begin{proof}
      For simplicity, we identify words from $\freeProf$ with their image under $\varphi$ in $M$.
      
      First, we handle the case $v = a \in \Sigma$. As we have $a \in \alphabet u$, we can write $u = u_0 a u_1$ for some $u_1, u_2 \in \Sigma^*$. Then, we have $u^{\omega} = (a u_1 u_0)^{\omega}$ by \autoref{lem:DAbEquation} and, therefore, trivially
      \[
        u^{\omega} a = (a u_1 u_0)^{\omega} a \R (a u_1 u_0)^{\omega} = u^{\omega} \text{.}
      \]
      As the $\mathcal{D}$-class of $u^{\omega}$ is trivially regular (since $u^{\omega}$ itself is an idempotent) and, therefore, a (commutative) group, this implies $u^{\omega} a \H u^{\omega}$ and, symmetrically, $u^{\omega} \H a u^{\omega}$. Since $u^{\omega}$ is idempotent and, thus, the neutral element of the group, we obtain
      \[
        u^{\omega} a = u^{\omega} a u^{\omega} = a u^{\omega}
      \]
      (where the equation on the right is symmetrical to the one on the left).
      
      For $v = a_1 \dots a_n$ with $a_1, \dots, a_n \in \Sigma$, we have $u^\omega \H u^\omega a_1$ by the above and, in particular, also $u^\omega \L u^\omega  a_1$. This yields $u^\omega a_2 \L u^\omega  a_1 a_2$, which, in the same way as above, implies $u^\omega a_2 \H u^\omega a_1 a_2$ and, thus, $u^\omega \H u^\omega a_1 a_2$. Iterating this, we end up with $u^\omega \H u^\omega v$. Since $u^\omega$ still is the neutral element of the group forming this $\mathcal{H}$-class, we also have $u^{\omega} v = u^{\omega} v u^{\omega} = v u^{\omega}$.
    \end{proof}
  
    This allows us to connect the $\mathcal{R}$-descents in a word to an $X$-ranker. Note that, due to symmetry, we immediately also get a symmetric dual connecting $\mathcal{L}$-descents to a $Y$-ranker.
    \begin{lemma}\label{lem:rankerVisitingFactorizationPositions}
      Let $M \in \DAb$ and let $w = w_0 a_1 w_1 a_2 w_2 \dots a_r w_r$ be the $\mathcal{R}$-factorization of $w \in \Sigma^*$ with respect to a homomorphism $\varphi: \Sigma^* \to M$. Then we have $a_i \not\in \alphabet w_{i - 1}$ for $i = 1, \dots, r$. Thus, the $X$-ranker $\rho = X_{a_1} X_{a_2} \dots X_{a_r}$ visits exactly  the positions of the $\mathcal{R}$-factorization and the $\mathcal{R}$-factorization coincides with the $\rho$-factorization. 
    \end{lemma}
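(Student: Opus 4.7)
The plan is to prove $a_i \notin \alphabet{w_{i-1}}$ by contradiction. Assuming $a_i \in \alphabet{w_{i-1}}$, I will derive $\varphi(P w_{i-1}) \R \varphi(P w_{i-1} a_i)$ where $P = w_0 a_1 \cdots a_{i-1}$, contradicting the $\mathcal{R}$-factorization condition. With $p = \varphi(P)$ (or $p = 1$ if $i = 1$), $q = \varphi(w_{i-1})$, and $h = \varphi(a_i)$, the hypothesis reads $p \R pq$ and $pq \not\R pqh$; my goal becomes $pq \R pqh$.

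The first step is to upgrade $p \R pq$ to $pq \R pq^{M!}$. From $p \R pq$ I pick $y_0$ with $p = pqy_0$, and iterating gives $p = p(qy_0)^{M!} = pe$ for the idempotent $e := (qy_0)^{M!}$. The $\DAb$ equation $(xy)^{\omega} = (yx)^{\omega}$ from \autoref{lem:DAbEquation} lets me rewrite $e = (y_0q)^{M!}$, after which a short cyclic-shift computation yields $qe = eq$. Consequently $pq^k \cdot e = pq^k$ for every $k \ge 0$, and since $e \in qM$, right multiplication gives $pq^k \le_{\mathcal{R}} pq^{k+1}$; with the trivial reverse inequality, iteration produces $pq \R pq^{M!}$.

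The second step applies \autoref{lem:groupByAlphabet} with $u = w_{i-1}$ and $v = a_i$ (its hypothesis $\alphabet{a_i} \subseteq \alphabet{w_{i-1}}$ is precisely our contradiction assumption) to obtain $q^{M!} \H q^{M!} h$ together with the commutation $q^{M!} h = h q^{M!}$. Since $q^{M!}$ is idempotent, its $\mathcal{H}$-class is a group containing $g := q^{M!} h = h q^{M!}$; inverting $g$ inside this group and using $g g^{-1} = q^{M!}$ rewrites as $q^{M!} = h \cdot (q^{M!} g^{-1})$. Combining this with the identity $q^{M!} q^{M!-1} = q^{M!-1}$ from the preliminaries yields $q^{M!-1} = h \cdot (q^{M!} g^{-1} q^{M!-1}) \in hM$. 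Multiplying on the left by $pq$ then gives $pq^{M!} = pq \cdot q^{M!-1} = pqh \cdot (q^{M!} g^{-1} q^{M!-1})$, so $pq^{M!} \le_{\mathcal{R}} pqh$. Chaining with the first step and the trivial $pqh \le_{\mathcal{R}} pq$ produces $pq \R pq^{M!} \le_{\mathcal{R}} pqh \le_{\mathcal{R}} pq$, hence $pq \R pqh$, the desired contradiction.

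I expect the first step to be the main obstacle: the claim $m \R m q^{M!}$ is false in a general finite monoid, and it is exactly the $\DAb$-specific commutation $eq = qe$, forced by $(xy)^{\omega} = (yx)^{\omega}$, that makes the iteration go through. The second step, by contrast, is essentially a one-line group-inversion trick once the $\mathcal{H}$-class structure from \autoref{lem:groupByAlphabet} is in hand. For the concluding ``thus'' clause, once the alphabet conditions hold, the ranker $\rho = X_{a_1} \cdots X_{a_r}$ started at $-\infty$ successively jumps to the positions of $a_1, \dots, a_r$ in $w$, since each $X_{a_i}$ finds its target immediately after $w_{i-1}$; so $\rho$ visits exactly the factorization positions and, by the uniqueness noted in the earlier example, the $\rho$-factorization coincides with the $\mathcal{R}$-factorization.
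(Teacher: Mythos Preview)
Your argument is correct. It follows the same overall contradiction strategy as the paper, but the paper's execution is considerably more direct: with $u = w_0 a_1 \dots a_{i-1}$, $v = w_{i-1}$, and $x$ chosen so that $u = uvx$, the paper writes $uv = u(vx)^\omega v = uv(xv)^\omega$ in one line and then applies \autoref{lem:groupByAlphabet} once to $(xv)^\omega$ (noting $a \in \alphabet v \subseteq \alphabet(xv)$) to get $uv(xv)^\omega \H uv(xv)^\omega a = uva$, finishing immediately. Your two-step route---first proving $pq \R pq^{M!}$ via the commutation $qe = eq$, then applying \autoref{lem:groupByAlphabet} to $q^{M!}$ and inverting inside the $\mathcal{H}$-class group---reaches the same conclusion but with more machinery; in particular, your Step~1 effectively reproves, at element level, the identity $uv = uv(xv)^\omega$ that the paper gets for free from $u = u(vx)^\omega$ and a cyclic shift. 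The paper's choice to apply \autoref{lem:groupByAlphabet} to $(xv)^\omega$ rather than to $v^{M!}$ is what collapses your two steps into one.
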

    \begin{proof}
      Suppose we have $a_i \in \alphabet w_{i - 1}$ for some $1 \leq i \leq r$. We may uniquely extend $\varphi$ to a homomorphism $\freeProf \to M$ and identify pseudowords with their image under this extension in $M$. To lighten the notation even further, let $a_i = a$, $u = w_0 a_1 \dots w_{i - 2} a_{i - 1}$ and $v = w_{i - 1}$. In other words, we have $a \in \alphabet v$ and $u \R uv$ but $u v \not\R u v a$. From the former follows that there is some $x$ with $u = uvx$ and, by iterating this equation, $u = u(vx)^\omega$. Thus, we have
      \[
      u v = u (vx)^\omega v = u v (xv)^\omega \H u v (xv)^\omega a = u (vx)^\omega v a = u v a
      \]
      (where we have used \autoref{lem:groupByAlphabet} for the $\mathcal{H}$-equivalence). This constitutes a contradiction as we must have $uv \not\R uva$.
    \end{proof}
  
    The same connection between the alphabet of a word and $\mathcal{H}$-classes also allows us to make the following statements about permuting parts of a word without changing the attached monoid element.
    \begin{fact}\label{fct:permuteInHClass}
      Consider a monoid $M \in \DAb$ and a continuous homomorphism $\varphi: \freeProf \to M$ and let $u, v, \tilde{v} \in \Sigma^*$ with $\alphabet v \subseteq \alphabet u$ such that $\tilde{v}$ is a permutation of $v$. Then we have:
      \[
        \varphi(u^\omega v) = \varphi(u^\omega \tilde{v})
      \]
    \end{fact}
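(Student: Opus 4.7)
The plan is to reduce the statement to a computation inside the Abelian group $H = H_{\varphi(u^\omega)}$, using \autoref{lem:groupByAlphabet} as the main tool. Since $\varphi(u^\omega)$ is idempotent, its $\mathcal{D}$-class is regular; since $M \in \DAb$, this $\mathcal{D}$-class is an Abelian group, and in particular it coincides with the $\mathcal{H}$-class $H$ of $\varphi(u^\omega)$ (with $\varphi(u^\omega)$ as its neutral element).

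Write $v = a_1 a_2 \cdots a_n$ with $a_1, \dots, a_n \in \alphabet u$. First I would establish the decomposition
\[
  \varphi(u^\omega v) \;=\; \varphi(u^\omega a_1) \cdot \varphi(u^\omega a_2) \cdots \varphi(u^\omega a_n)
\]
by induction on $n$. The induction step uses \autoref{lem:groupByAlphabet} applied to the prefix $a_1 \cdots a_{i-1}$ (whose alphabet sits inside $\alphabet u$) to insert a factor $u^\omega$ between $a_{i-1}$ and $a_i$, i.e.\ $\varphi(u^\omega a_1 \cdots a_{i-1}) = \varphi(u^\omega a_1 \cdots a_{i-1} u^\omega)$. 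Each factor $\varphi(u^\omega a_i)$ lies in $H$ by another direct application of \autoref{lem:groupByAlphabet}.

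Since $H$ is Abelian, the factors $\varphi(u^\omega a_1), \dots, \varphi(u^\omega a_n)$ may be multiplied in any order without changing the result. Hence, for any permutation $\tilde v = a_{\sigma(1)} \cdots a_{\sigma(n)}$ of $v$,
\[
  \varphi(u^\omega a_1) \cdots \varphi(u^\omega a_n) \;=\; \varphi(u^\omega a_{\sigma(1)}) \cdots \varphi(u^\omega a_{\sigma(n)}) \;=\; \varphi(u^\omega \tilde v)\text{,}
\]
where the last equality is the same decomposition read backwards (applied to $\tilde v$, whose alphabet is still contained in $\alphabet u$). Combining the two equalities gives $\varphi(u^\omega v) = \varphi(u^\omega \tilde v)$.

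There is no real obstacle once \autoref{lem:groupByAlphabet} is in hand; the only point to handle carefully is the repeated insertion of $u^\omega$ factors in the induction step, which is justified by the fact that every prefix $\varphi(u^\omega a_1 \cdots a_{i-1})$ is $\mathcal{H}$-equivalent to $\varphi(u^\omega)$ and thus absorbs $\varphi(u^\omega)$ on the right (and on the left).
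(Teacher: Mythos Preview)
Your proof is correct and follows essentially the same approach as the paper: both use \autoref{lem:groupByAlphabet} to place each $\varphi(u^\omega a_i)$ in the Abelian group $H_{\varphi(u^\omega)}$, decompose $\varphi(u^\omega v)$ as the product $\varphi(u^\omega a_1)\cdots\varphi(u^\omega a_n)$, permute freely, and recompose. The only difference is cosmetic: you spell out the decomposition step as an induction (inserting $u^\omega$ between consecutive letters via \autoref{lem:groupByAlphabet}), whereas the paper states the equality $u^\omega(a_1\cdots a_n) = (u^\omega a_1)\cdots(u^\omega a_n)$ in one line.
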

    \begin{proof}
      Let $v = a_1 \dots a_n$ for $a_1, \dots, a_n \in \Sigma$ and let $\tilde{v} = a_{\sigma(1)} \dots a_{\sigma(n)}$ for some permutation $\sigma$ of $\{ 1, \dots, n \}$ and, for all $\alpha \in \freeProf$, identity $\alpha$ with its image $\varphi(\alpha)$ in $M$.
      
      By \autoref{lem:groupByAlphabet}, we have $u^\omega \H u^\omega a_i$ for all $i = 1, \dots, n$. This $\mathcal{H}$-class contains an idempotent and, therefore, lies in a regular $\mathcal{D}$-class. Since the regular $\mathcal{D}$-class forms an Abelian group, it only consists of this single $\mathcal{H}$-class, which, thus, is an Abelian group whose neutral element is $u^\omega$. Therefore, we have
      \[
        u^\omega v = u^\omega (a_1 \dots a_n) = (u^\omega a_1) \dots (u^\omega a_n) = (u^\omega a_{\sigma(1)}) \dots (u^\omega a_{\sigma(n)}) = u^\omega \tilde{v} \textbf{.}\qedhere
      \]
    \end{proof}
  
    \begin{fact}\label{fct:permuteR}
      Consider a monoid $M \in \DAb$ and a homomorphism $\varphi: \Sigma^* \to M$ and let $u, v, \tilde{v} \in \Sigma^*$ such that $\tilde{v}$ is a permutation of $v$. Then we have:
      \[
        \varphi(u) \R \varphi(uv) \implies \varphi(uv) = \varphi(u \tilde{v})
      \]
    \end{fact}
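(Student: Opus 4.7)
The plan is to reduce this to \autoref{fct:permuteInHClass} by turning the $\R$-hypothesis into an $\omega$-power of a word whose alphabet contains $\alphabet v$.

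First, I would unpack the hypothesis $\varphi(u) \R \varphi(uv)$: it gives some $m \in M$ with $\varphi(uv) \cdot m = \varphi(u)$. Choosing any $x \in \Sigma^*$ with $\varphi(x) = m$, this becomes $\varphi(uvx) = \varphi(u)$. Iterating this identity yields $\varphi(u) = \varphi(u (vx)^n)$ for every $n \geq 1$, and in particular
\[
  \varphi(u) = \varphi\bigl(u (vx)^{M!}\bigr).
\]
Extending $\varphi$ continuously to $\freeProf \to M$, this says $\varphi(u) = \varphi(u (vx)^\omega)$.

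Next, I would observe that $\alphabet \tilde{v} = \alphabet v \subseteq \alphabet(vx)$ since $\tilde{v}$ is a permutation of $v$. Applying \autoref{fct:permuteInHClass} with the role of $u$ played by $vx$ gives
\[
  \varphi\bigl((vx)^\omega \, v\bigr) = \varphi\bigl((vx)^\omega \, \tilde{v}\bigr).
\]

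Finally, I would chain the two identities:
\[
  \varphi(uv) = \varphi\bigl(u (vx)^\omega v\bigr) = \varphi\bigl(u (vx)^\omega \tilde{v}\bigr) = \varphi(u \tilde{v}),
\]
where the outer equalities use $\varphi(u) = \varphi(u(vx)^\omega)$ (multiplied on the right by $v$ and $\tilde{v}$, respectively) and the middle equality uses the application of \autoref{fct:permuteInHClass} above. There is no real obstacle here; the only conceptual point is spotting that the $\R$-relation is exactly what produces a word $vx$ whose $\omega$-power can be absorbed into $u$ while still dominating the alphabet of $v$, so that \autoref{fct:permuteInHClass} becomes applicable.
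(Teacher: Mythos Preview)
Your proposal is correct and follows essentially the same route as the paper: from $\varphi(u)\R\varphi(uv)$ obtain $x$ with $\varphi(uvx)=\varphi(u)$, iterate to $\varphi(u)=\varphi(u(vx)^\omega)$, then apply \autoref{fct:permuteInHClass} to swap $v$ for $\tilde{v}$ next to $(vx)^\omega$. The only small point to make explicit is that ``choosing any $x\in\Sigma^*$ with $\varphi(x)=m$'' requires $\varphi$ to be surjective; the paper handles this by noting one may extend $\Sigma$ appropriately, and you should add the same remark.
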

    \begin{proof}
      Let $v = a_1 \dots a_n$ for $a_1, \dots, a_n \in \Sigma$ and let $\tilde{v} = a_{\sigma(1)} \dots a_{\sigma(n)}$ for some permutation $\sigma$ of $\{ 1, \dots, n \}$.
      
      For simplicity, we identity any pseudoword with its image in $M$ given by $\varphi$. Since we have $u \R uv$ (and since we can extend $\Sigma$ appropriately), there is some $x \in \Sigma^*$ with $u = uvx$. Iterating this yields $u = u (vx)^\omega$ and, by \autoref{fct:permuteInHClass}, we obtain
      $
        uv = u (vx)^\omega v = u (vx)^\omega \tilde{v} = u \tilde{v}\textbf{.}%\qedhere
      $
    \end{proof}
  
    In addition to permuting parts of a word, we may also add or remove blocks of the form $a^{M!}$ in certain positions without changing the monoid element. This is formalized in the following fact.
    \begin{fact}\label{fct:insertAM}
      Let $M \in \DAb$ be a monoid, $\varphi: \Sigma^* \to M$ be a homomorphism and $u, v \in \Sigma^*$ with $\varphi(u) \R \varphi(uv)$. Furthermore, let $m$ be a multiple of the orders of all groups in $M$. Then, we have
      \begin{enumerate}
        \item $\varphi(uv) = \varphi(u \, v_0 a^m v_1)$ if $v = v_0 v_1$ and
        \item $\varphi(uv) = \varphi(u \, v_0 v_1)$ if $v = v_0 a^m v_1$
      \end{enumerate}
      for $a \in \alphabet v$ and $v_0, v_1 \in \Sigma^*$.
    \end{fact}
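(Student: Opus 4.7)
The overall strategy is to reduce the problem to a calculation inside the Abelian group attached to the regular $\mathcal{D}$-class of $\varphi((vx)^\omega)$, following the same pattern as \autoref{fct:permuteR} and \autoref{fct:permuteInHClass}. First, from $\varphi(u) \R \varphi(uv)$ I would pick some $x \in \Sigma^*$ with $\varphi(u) = \varphi(uvx)$ and iterate to obtain $\varphi(u) = \varphi(u(vx)^\omega)$; henceforth I identify pseudowords with their images under the continuous extension of $\varphi$ to $\freeProf$. Since $a \in \alphabet v \subseteq \alphabet(vx)$, \autoref{lem:groupByAlphabet} places $\varphi((vx)^\omega a)$ in the $\mathcal{H}$-class of the idempotent $\varphi((vx)^\omega)$, which is an Abelian group $G$ with identity $\varphi((vx)^\omega)$ (because $M \in \DAb$).

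The technical core is to show
\[
\varphi((vx)^\omega a^m) = \varphi((vx)^\omega).
\]
For this I would prove by induction on $k$ that $\varphi((vx)^\omega a^k) = \varphi((vx)^\omega a)^k$ as elements of $M$; the inductive step writes $\varphi((vx)^\omega a)^{k+1} = \varphi((vx)^\omega a \cdot (vx)^\omega a \cdots (vx)^\omega a)$ and repeatedly applies \autoref{lem:groupByAlphabet} to commute copies of the idempotent $\varphi((vx)^\omega)$ past the single $a$'s and absorb all but one of them, collapsing the product to $\varphi((vx)^\omega a^{k+1})$. Setting $k = m$ and using that $|G|$ divides $m$, the element $\varphi((vx)^\omega a) \in G$ raised to the power $m$ equals the identity $\varphi((vx)^\omega)$ of $G$.

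Given this, both parts of the fact follow by a single chain of rewritings. For case (1), I would expand
\[
\varphi(u\, v_0 a^m v_1) = \varphi(u (vx)^\omega v_0 a^m v_1),
\]
permute the suffix $v_0 a^m v_1$ with \autoref{fct:permuteInHClass} -- applicable since $\alphabet(v_0 a^m v_1) \subseteq \alphabet(vx)$ -- to move the block $a^m$ to the front, and use the intermediate claim to delete it:
\[
\varphi(u (vx)^\omega a^m v_0 v_1) = \varphi(u (vx)^\omega v_0 v_1) = \varphi(u v_0 v_1) = \varphi(uv).
\]
Case (2) is obtained by reading the same chain of equalities in reverse. The main obstacle is bookkeeping the alphabet inclusions throughout; the hypothesis $a \in \alphabet v$ is precisely what guarantees $a \in \alphabet(vx)$, which in turn unlocks both \autoref{lem:groupByAlphabet} and \autoref{fct:permuteInHClass}, so once this is noted the remaining steps are routine.
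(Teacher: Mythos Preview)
Your proposal is correct and follows essentially the same approach as the paper's proof: both obtain $u = u(vx)^\omega$ from $\varphi(u) \R \varphi(uv)$, use \autoref{lem:groupByAlphabet} and the Abelian group structure on the $\mathcal{H}$-class to establish $(vx)^\omega a^m = (vx)^\omega$, and then permute the $a^m$ block into or out of position. The only cosmetic difference is that you invoke \autoref{fct:permuteInHClass} for the permutation step while the paper cites \autoref{fct:permuteR}; either reference works here.
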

    \begin{proof}
      We again identify pesudowords with their image in $M$ given by $\varphi$. As in the proof of \autoref{fct:permuteR}, there is some $x \in \Sigma^*$ with $u = u (vx)^\omega$ and, by \autoref{lem:groupByAlphabet}, we have $(vx)^\omega \H (vx)^\omega a$. Thus (since $(vx)^\omega$ is the neutral element of the group formed by this $\mathcal{H}$-class) we have $(vx)^\omega a = (vx)^\omega a (vx)^\omega$, which yields
      \[
        (vx)^\omega = \big( (vx)^\omega a \big)^m = (vx)^\omega a^m
      \]
      since the order of $(vx)^\omega a$ as a group element divides $m$. Therefore, we obtain
      \begin{enumerate}
        \item $uv = u (vx)^\omega v = u (vx)^\omega a^m v = u (vx)^\omega v_0 a^m v_1 = u v_0 a^m v_1$ if $v = v_0 v_1$ (where we have used \autoref{fct:permuteR} in the second last step) and
        \item $uv = u \, v_0 a^m v_1 = u \, a^m v_0 v_1 = u (vx)^\omega a^m v_0 v_1 = u (vx)^\omega v_0 v_1 = u v_0 v_1$ if $v = v_0 a^m v_1$ (where we have used \autoref{fct:permuteR} in the second step). \qedhere
      \end{enumerate}
    \end{proof}
  
    \autoref{fct:permuteR} and \autoref{fct:insertAM} will turn out to be central for the proof in the next section. It is important to point out that for both statements we also have left-right symmetric versions.
  \end{section}

  \begin{section}{A Congruence for $\DAb$}
    In this section, we will introduce a congruence of a combinatorial nature and show that it can be used to describe the finite monoids (and semigroups) in $\DAb$.
    
    First, we introduce some terminology on the factorization of a word given by an $X$- and a $Y$-ranker.
    \begin{definition}\label{def:rhoLambdaSplit}
      Let $w \in \Sigma^*$ be a finite word, $\rho$ an $X$-ranker and $\lambda$ a $Y$-ranker with $\rho(w) \leq \lambda(w)$. The \emph{$\rho$-$\lambda$-split} of $w$ is $(w_0, w_1, w_2)$ where
      \begin{itemize}[noitemsep]
        \item $w_0$ is the prefix of $w$ containing the positions smaller than $\rho(w)$,
        \item $w_1$ is the infix of $w$ containing the positions greater than $\rho(w)$ but smaller than $\lambda(w)$ and
        \item $w_2$ is the suffix of $w$ containing the positions greater than $\lambda(w)$.
      \end{itemize}
      We say that $w_0$ is the \emph{left} part, $w_1$ is the \emph{middle} part and $w_2$ is the \emph{right} part of the split.
      
      Furthermore, if $(w_0, w_1, \varepsilon)$ is the $\rho$-$\varepsilon$-split of $w$, we say that $(w_0, w_1)$ is the \emph{$\rho$-split} of $w$, whose \emph{left} part is $w_0$ and whose \emph{right} part is $w_1$.
      Symmetrically, if $(\varepsilon, w_1, w_2)$ is the $\varepsilon$-$\lambda$-split of $w$, we call $(w_1, w_2)$ the \emph{$\lambda$-split} of $w$ and refer to $w_1$ as its \emph{left} part and to $w_2$ as its \emph{right} part.
    \end{definition}
    \begin{remark*}
      \autoref{def:rhoLambdaSplit} covers quite a few cases (which are depicted in \autoref{fig:rhoLambdaSplit}):
      \begin{enumerate}[label=(\alph*)]
        \item $-\infty = \rho(w) = \lambda(w) < +\infty$:\\
          The left part and the middle part are both empty ($w_0 = w_1 = \varepsilon$) and the right part spans the whole word ($w_2 = w$).
        \item $-\infty = \rho(w) < \lambda(w) < +\infty$:\\
          The left part is empty ($w_0 = \varepsilon$) and we have $w = w_1 b w_2$ where $b$ is the label of the position $\lambda(w)$ in $w$.
        \item $-\infty = \rho(w) < \lambda(w) = +\infty$:\\
          The left part and the right part are both empty ($w_0 = w_2 = \varepsilon$) and the middle part spans the whole word ($w_1 = w$).
        \item $-\infty < \rho(w) = \lambda(w) < +\infty$:\\
          The middle part is empty ($w_1 = \varepsilon$) and we have $w = w_0 b w_2$ where $b$ is the label of the position $\rho(w) = \lambda(w)$ in $w$.
        \item $-\infty < \rho(w) < \lambda(w) < +\infty$:\\
          We have $w = w_0 b w_1 c w_2$ where $b$ is the label of the position $\rho(w)$ and $c$ is the label of the position $\lambda(w)$ in $w$.
        \item $-\infty < \rho(w) < \lambda(w) = +\infty$:\\
          The right part is empty ($w_2 = \varepsilon$) and we have $w = w_0 b w_1$ where $b$ is the label of the position $\rho(w)$ in $w$.
        \item $-\infty < \rho(w) = \lambda(w) = +\infty$:\\
          The middle part and the right part are both empty ($w_1 = w_2 = \varepsilon$) and the left part spans the whole word ($w_0 = w$).
      \end{enumerate}
    \end{remark*}
  
    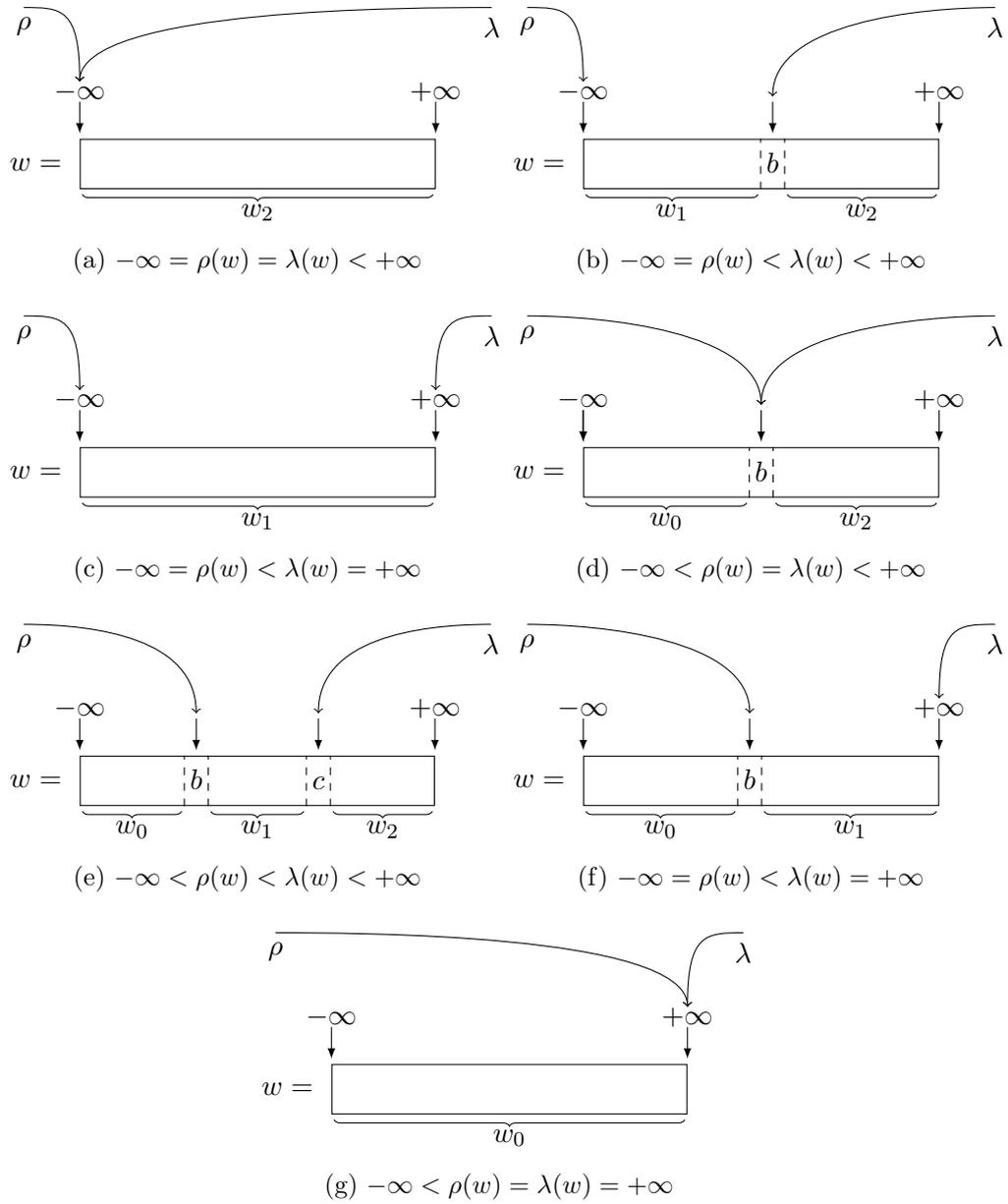
\begin{figure}\centering
      \newlength{\bwidth}
      \settowidth{\bwidth}{$b$}
      \addtolength{\bwidth}{4pt}
      \newlength{\cwidth}
      \settowidth{\cwidth}{$c$}
      \addtolength{\cwidth}{4pt}
      \begin{subfigure}{.45\linewidth}\centering
        \begin{tikzpicture}[baseline=(wLabel.base)]
          \node (wLabel) {$w = {}$};
          \matrix [right=0cm of wLabel, rectangle, draw, matrix of math nodes, ampersand replacement=\&, inner sep=2pt, text height=.8em, text depth=.2em] (w) {
            |[minimum width=\dimexpr4cm+\bwidth+\cwidth]| {} \\
          };
          \node[above=0.5cm of w.north west, inner sep=0pt] (-infty) {$-\infty$};
          \draw (-infty) edge[-latex, shorten >= 2pt] (w.north west);
          
          \node[above=0.5cm of w.north east, inner sep=0pt] (+infty) {$+\infty$};
          \draw (+infty) edge[-latex, shorten >= 2pt] (w.north east);
          
          \draw[decorate, decoration=brace] ([yshift=-2pt, xshift=-1pt]w.south east) -- node[midway, below] {$w_2$} ([yshift=-2pt, xshift=1pt]w.south west);
          
          %Rankers
          \draw[->] ($(-infty.north -| w.west) + (-0.75cm, 1cm)$) .. controls +(0.5cm, 0cm) and +(0cm,1cm) .. node[below, pos=0] {$\rho$} (-infty.north);
          \draw[->] ($(+infty.north -| w.east) + (0.75cm, 1cm)$) .. controls +(-1cm, 0cm) and +(0cm,1cm) .. node[below, pos=0] {$\lambda$} (-infty.north);
        \end{tikzpicture}
        \caption{$-\infty = \rho(w) = \lambda(w) < +\infty$}
      \end{subfigure}
      \begin{subfigure}{.45\linewidth}\centering
        \begin{tikzpicture}[baseline=(wLabel.base)]
          \node (wLabel) {$w = {}$};
          \matrix [right=0cm of wLabel, rectangle, draw, matrix of math nodes, ampersand replacement=\&, inner sep=2pt, text height=.8em, text depth=.2em] (w) {
            \&[2cm+\cwidth] b \&[2cm] \\
          };
          \node[above=0.5cm of w.north west, inner sep=0pt] (-infty) {$-\infty$};
          \draw (-infty) edge[-latex, shorten >= 2pt] (w.north west);
          \path (-infty.south) -| coordinate (barrown) (w-1-2.north);
          \path (w.north) -| coordinate (barrows) (w-1-2.north);
          \draw (barrown) edge[-latex, shorten >= 2pt] (barrows);
          
          \node[above=0.5cm of w.north east, inner sep=0pt] (+infty) {$+\infty$};
          \draw (+infty) edge[-latex, shorten >= 2pt] (w.north east);
          
          \path (w.north) -| coordinate (bnw) (w-1-2.west);
          \path (w.south) -| coordinate (bsw) (w-1-2.west);
          \draw[dashed, dash phase=1.5pt] (bnw) -- (bsw);
          \path (w.north) -| coordinate (bne) (w-1-2.east);
          \path (w.south) -| coordinate (bse) (w-1-2.east);
          \draw[dashed, dash phase=1.5pt] (bne) -- (bse);
          
          \draw[decorate, decoration=brace] ([yshift=-2pt, xshift=-1pt]bsw) -- node[midway, below] {$w_1$} ([yshift=-2pt, xshift=1pt]w.south west);
          \draw[decorate, decoration=brace] ([yshift=-2pt, xshift=-1pt]w.south east) -- node[midway, below] {$w_2$} ([yshift=-2pt, xshift=1pt]bse);
          
          %Rankers
          \draw[->] ($(-infty.north -| w.west) + (-0.75cm, 1cm)$) .. controls +(0.5cm, 0cm) and +(0cm,1cm) .. node[below, pos=0] {$\rho$} (-infty.north);
          \draw[->] ($(+infty.north -| w.east) + (0.75cm, 1cm)$) .. controls +(-1cm, 0cm) and +(0cm,1cm) .. node[below, pos=0] {$\lambda$} ($(barrown)+(0pt,2pt)$);
        \end{tikzpicture}
        \caption{$-\infty = \rho(w) < \lambda(w) < +\infty$}
      \end{subfigure}
      \\[1\baselineskip]
      \begin{subfigure}{.45\linewidth}\centering
        \begin{tikzpicture}[baseline=(wLabel.base)]
          \node (wLabel) {$w = {}$};
          \matrix [right=0cm of wLabel, rectangle, draw, matrix of math nodes, ampersand replacement=\&, inner sep=2pt, text height=.8em, text depth=.2em] (w) {
            |[minimum width=\dimexpr4cm+\bwidth+\cwidth]| {} \\
          };
          \node[above=0.5cm of w.north west, inner sep=0pt] (-infty) {$-\infty$};
          \draw (-infty) edge[-latex, shorten >= 2pt] (w.north west);
          
          \node[above=0.5cm of w.north east, inner sep=0pt] (+infty) {$+\infty$};
          \draw (+infty) edge[-latex, shorten >= 2pt] (w.north east);
          
          \draw[decorate, decoration=brace] ([yshift=-2pt, xshift=-1pt]w.south east) -- node[midway, below] {$w_1$} ([yshift=-2pt, xshift=1pt]w.south west);
          
          %Rankers
          \draw[->] ($(-infty.north -| w.west) + (-0.75cm, 1cm)$) .. controls +(0.5cm, 0cm) and +(0cm,1cm) .. node[below, pos=0] {$\rho$} (-infty.north);
          \draw[->] ($(+infty.north -| w.east) + (0.75cm, 1cm)$) .. controls +(-0.5cm, 0cm) and +(0cm,1cm) .. node[below, pos=0] {$\lambda$} (+infty.north);
        \end{tikzpicture}
        \caption{$-\infty = \rho(w) < \lambda(w) = +\infty$}
      \end{subfigure}
      \begin{subfigure}{.45\textwidth}
        \begin{tikzpicture}[baseline=(wLabel.base)]
          \node (wLabel) {$w = {}$};
          \matrix [right=0cm of wLabel, rectangle, draw, matrix of math nodes, ampersand replacement=\&, inner sep=2pt, text height=.8em, text depth=.2em] (w) {
            \&[2cm+\cwidth/2] b \&[2cm+\cwidth/2] \\
          };
          \node[above=0.5cm of w.north west, inner sep=0pt] (-infty) {$-\infty$};
          \draw (-infty) edge[-latex, shorten >= 2pt] (w.north west);
          \node[above=0.5cm of w.north east, inner sep=0pt] (+infty) {$+\infty$};
          \draw (+infty) edge[-latex, shorten >= 2pt] (w.north east);
          
          \draw (-infty) edge[-latex, shorten >= 2pt] (w.north west);
          \path (-infty.south) -| coordinate (barrown) (w-1-2.north);
          \path (w.north) -| coordinate (barrows) (w-1-2.north);
          \draw (barrown) edge[-latex, shorten >= 2pt] (barrows);
          
          \path (w.north) -| coordinate (bnw) (w-1-2.west);
          \path (w.south) -| coordinate (bsw) (w-1-2.west);
          \draw[dashed, dash phase=1.5pt] (bnw) -- (bsw);
          \path (w.north) -| coordinate (bne) (w-1-2.east);
          \path (w.south) -| coordinate (bse) (w-1-2.east);
          \draw[dashed, dash phase=1.5pt] (bne) -- (bse);
          
          \draw[decorate, decoration=brace] ([yshift=-2pt, xshift=-1pt]bsw) -- node[midway, below, align=center] {$w_0$} ([yshift=-2pt, xshift=1pt]w.south west);
          \draw[decorate, decoration=brace] ([yshift=-2pt, xshift=-1pt]w.south east) -- node[midway, below] {$w_2$} ([yshift=-2pt, xshift=1pt]bse);
          
          %Rankers
          \draw[->] ($(-infty.north -| w.west) + (-0.75cm, 1cm)$) .. controls +(1cm, 0cm) and +(0cm,1cm) .. node[below, pos=0] {$\rho$} ($(barrown)+(0pt,2pt)$);
          \draw[->] ($(+infty.north -| w.east) + (0.75cm, 1cm)$) .. controls +(-1cm, 0cm) and +(0cm,1cm) .. node[below, pos=0] {$\lambda$} ($(barrown)+(0pt,2pt)$);
        \end{tikzpicture}
        \caption{$-\infty < \rho(w) = \lambda(w) < +\infty$}
      \end{subfigure}
      \\[1\baselineskip]
      \begin{subfigure}{.45\linewidth}\centering
        \begin{tikzpicture}[baseline=(wLabel.base)]
          \node (wLabel) {$w = {}$};
          \matrix [right=0cm of wLabel, rectangle, draw, matrix of math nodes, ampersand replacement=\&, inner sep=2pt, text height=.8em, text depth=.2em] (w) {
            \&[1.33cm] b \&[1.33cm] c \&[1.33cm] \\
          };
          \node[above=0.5cm of w.north west, inner sep=0pt] (-infty) {$-\infty$};
          \draw (-infty) edge[-latex, shorten >= 2pt] (w.north west);
          \node[above=0.5cm of w.north east, inner sep=0pt] (+infty) {$+\infty$};
          \draw (+infty) edge[-latex, shorten >= 2pt] (w.north east);
          
          \path (-infty.south) -| coordinate (barrown) (w-1-2.north);
          \path (w.north) -| coordinate (barrows) (w-1-2.north);
          \draw (barrown) edge[-latex, shorten >= 2pt] (barrows);
          
          \path (+infty.south) -| coordinate (carrown) (w-1-3.north);
          \path (w.north) -| coordinate (carrows) (w-1-3.north);
          \draw (carrown) edge[-latex, shorten >= 2pt] (carrows);

          \path (w.north) -| coordinate (bnw) (w-1-2.west);
          \path (w.south) -| coordinate (bsw) (w-1-2.west);
          \draw[dashed, dash phase=1.5pt] (bnw) -- (bsw);
          \path (w.north) -| coordinate (bne) (w-1-2.east);
          \path (w.south) -| coordinate (bse) (w-1-2.east);
          \draw[dashed, dash phase=1.5pt] (bne) -- (bse);
          
          \path (w.north) -| coordinate (cnw) (w-1-3.west);
          \path (w.south) -| coordinate (csw) (w-1-3.west);
          \draw[dashed, dash phase=1.5pt] (cnw) -- (csw);
          \path (w.north) -| coordinate (cne) (w-1-3.east);
          \path (w.south) -| coordinate (cse) (w-1-3.east);
          \draw[dashed, dash phase=1.5pt] (cne) -- (cse);

          \draw[decorate, decoration=brace] ([yshift=-2pt, xshift=-1pt]bsw) -- node[midway, below] {$w_0$} ([yshift=-2pt, xshift=1pt]w.south west);
          \draw[decorate, decoration=brace] ([yshift=-2pt, xshift=-1pt]csw) -- node[midway, below] {$w_1$} ([yshift=-2pt, xshift=1pt]bse);
          \draw[decorate, decoration=brace] ([yshift=-2pt, xshift=-1pt]w.south east) -- node[midway, below] {$w_2$} ([yshift=-2pt, xshift=1pt]cse);

          %Rankers
          \draw[->] ($(-infty.north -| w.west) + (-0.75cm, 1cm)$) .. controls +(1cm, 0cm) and +(0cm,1cm) .. node[below, pos=0] {$\rho$} ($(barrown)+(0pt,2pt)$);
          \draw[->] ($(+infty.north -| w.east) + (0.75cm, 1cm)$) .. controls +(-1cm, 0cm) and +(0cm,1cm) .. node[below, pos=0] {$\lambda$} ($(carrown)+(0pt,2pt)$);
        \end{tikzpicture}
        \caption{$-\infty < \rho(w) < \lambda(w) < +\infty$}
      \end{subfigure}
      \begin{subfigure}{.45\linewidth}\centering
        \begin{tikzpicture}[baseline=(wLabel.base)]
          \node (wLabel) {$w = {}$};
          \matrix [right=0cm of wLabel, rectangle, draw, matrix of math nodes, ampersand replacement=\&, inner sep=2pt, text height=.8em, text depth=.2em] (w) {
            \&[2cm] b \&[2cm+\cwidth] \\
          };
          \node[above=0.5cm of w.north west, inner sep=0pt] (-infty) {$-\infty$};
          \draw (-infty) edge[-latex, shorten >= 2pt] (w.north west);
          \path (-infty.south) -| coordinate (barrown) (w-1-2.north);
          \path (w.north) -| coordinate (barrows) (w-1-2.north);
          \draw (barrown) edge[-latex, shorten >= 2pt] (barrows);
          
          \node[above=0.5cm of w.north east, inner sep=0pt] (+infty) {$+\infty$};
          \draw (+infty) edge[-latex, shorten >= 2pt] (w.north east);
          
          \path (w.north) -| coordinate (bnw) (w-1-2.west);
          \path (w.south) -| coordinate (bsw) (w-1-2.west);
          \draw[dashed, dash phase=1.5pt] (bnw) -- (bsw);
          \path (w.north) -| coordinate (bne) (w-1-2.east);
          \path (w.south) -| coordinate (bse) (w-1-2.east);
          \draw[dashed, dash phase=1.5pt] (bne) -- (bse);
          
          \draw[decorate, decoration=brace] ([yshift=-2pt, xshift=-1pt]bsw) -- node[midway, below] {$w_0$} ([yshift=-2pt, xshift=1pt]w.south west);
          \draw[decorate, decoration=brace] ([yshift=-2pt, xshift=-1pt]w.south east) -- node[midway, below] {$w_1$} ([yshift=-2pt, xshift=1pt]bse);
          
          %Rankers
          \draw[->] ($(-infty.north -| w.west) + (-0.75cm, 1cm)$) .. controls +(1cm, 0cm) and +(0cm,1cm) .. node[below, pos=0] {$\rho$} ($(barrown)+(0pt,2pt)$);
          \draw[->] ($(+infty.north -| w.east) + (0.75cm, 1cm)$) .. controls +(-0.5cm, 0cm) and +(0cm,1cm) .. node[below, pos=0] {$\lambda$} (+infty.north);
        \end{tikzpicture}
        \caption{$-\infty = \rho(w) < \lambda(w) = +\infty$}
      \end{subfigure}
      \\[1\baselineskip]
      \begin{subfigure}{.45\linewidth}\centering
        \begin{tikzpicture}[baseline=(wLabel.base)]
          \node (wLabel) {$w = {}$};
          \matrix [right=0cm of wLabel, rectangle, draw, matrix of math nodes, ampersand replacement=\&, inner sep=2pt, text height=.8em, text depth=.2em] (w) {
            |[minimum width=\dimexpr4cm+\bwidth+\cwidth]| {} \\
          };
          \node[above=0.5cm of w.north west, inner sep=0pt] (-infty) {$-\infty$};
          \draw (-infty) edge[-latex, shorten >= 2pt] (w.north west);
          
          \node[above=0.5cm of w.north east, inner sep=0pt] (+infty) {$+\infty$};
          \draw (+infty) edge[-latex, shorten >= 2pt] (w.north east);
          
          \draw[decorate, decoration=brace] ([yshift=-2pt, xshift=-1pt]w.south east) -- node[midway, below] {$w_0$} ([yshift=-2pt, xshift=1pt]w.south west);
          
          %Rankers
          \draw[->] ($(-infty.north -| w.west) + (-0.75cm, 1cm)$) .. controls +(1cm, 0cm) and +(0cm,1cm) .. node[below, pos=0] {$\rho$} (+infty.north);
          \draw[->] ($(+infty.north -| w.east) + (0.75cm, 1cm)$) .. controls +(-0.5cm, 0cm) and +(0cm,1cm) .. node[below, pos=0] {$\lambda$} (+infty.north);
        \end{tikzpicture}
        \caption{$-\infty < \rho(w) = \lambda(w) = +\infty$}
      \end{subfigure}
      \caption{The $\rho$-$\lambda$-split $(w_0, w_1, w_2)$ of $w$ in various cases}\label{fig:rhoLambdaSplit}
    \end{figure}

    The most important part for our congruence is the following relation, which describes a connection between the $\rho$-$\lambda$-splits of two words.
    \begin{definition}\label{def:singleRankerPair}
      Let $u, v \in \Sigma^*$, $m \in \Np$, $\rho$ an $X$-ranker and $\lambda$ a $Y$-ranker. We let $u \sim_{m, \rho, \lambda} v$ if and only if
      \begin{enumerate}[noitemsep]
        \item $u$ and $v$ are $\rho$-$\lambda$-compatible and,
        \item for all prefixes $\rho'$ of $\rho\Xinfty$ and $\lambda'$ of $\lambda\Yinfty$ with $\rho'(u) \leq \lambda'(u)$ and $\rho'(v) \leq \lambda'(v)$, we have that, if $(u_0, u_1, u_2)$ is the $\rho'$-$\lambda'$-split of $u$ and $(v_0, v_1, v_2)$ is the $\rho'$-$\lambda'$-split of $v$, then $\alphabet u_1 = \alphabet v_1$ and
        \[
          \forall a \not\in \alphabet u_1 = \alphabet v_1: |u_0|_a \equiv |v_0|_a \bmod m \text{ and } |u_2|_a \equiv |v_2|_a \bmod m
        \]
        hold.
      \end{enumerate}
    \end{definition}
    \begin{figure}\centering
      \begin{subfigure}{0.5\linewidth}\centering
        \begin{tikzpicture}
          % u
          \node (uLabel) {$u = {}$};
          \matrix [right=0cm of uLabel, rectangle, draw, matrix of math nodes, ampersand replacement=\&, inner sep=2pt, text height=.8em, text depth=.2em] (u) {
            \&[1.5cm] b \&[1.5cm] c \&[1.5cm] \\
          };
          % -infty and arrows
          \node[above=0.5cm of u.north west, inner sep=0pt] (-infty) {$-\infty$};
          \draw (-infty) edge[-latex, shorten >= 2pt] (u.north west);
          \path (-infty.south) -| coordinate (aarrown) (u-1-2.north);
          \path (u.north) -| coordinate (aarrows) (u-1-2.north);
          \draw (aarrown) edge[-latex, shorten >= 2pt] (aarrows);
          
          % +infty and arrows
          \node[above=0.5cm of u.north east, inner sep=0pt] (+infty) {$+\infty$};
          \draw (+infty) edge[-latex, shorten >= 2pt] (u.north east);
          \path (+infty.south) -| coordinate (barrown) (u-1-3.north);
          \path (u.north) -| coordinate (barrows) (u-1-3.north);
          \draw (barrown) edge[-latex, shorten >= 2pt] (barrows);
          
          % Dashed separation lines
          \path (u.north) -| coordinate (ubnw) (u-1-2.west);
          \path (u.south) -| coordinate (ubsw) (u-1-2.west);
          \draw[dashed, dash phase=1.5pt] (ubnw) -- (ubsw);
          \path (u.north) -| coordinate (ubne) (u-1-2.east);
          \path (u.south) -| coordinate (ubse) (u-1-2.east);
          \draw[dashed, dash phase=1.5pt] (ubne) -- (ubse);
          
          \path (u.north) -| coordinate (ucnw) (u-1-3.west);
          \path (u.south) -| coordinate (ucsw) (u-1-3.west);
          \draw[dashed, dash phase=1.5pt] (ucnw) -- (ucsw);
          \path (u.north) -| coordinate (ucne) (u-1-3.east);
          \path (u.south) -| coordinate (ucse) (u-1-3.east);
          \draw[dashed, dash phase=1.5pt] (ucne) -- (ucse);
          
          % Braces
          \draw[decorate, decoration=brace] ([yshift=-2pt, xshift=-1pt]ubsw) -- node[midway, below] (u0) {$u_0$} ([yshift=-2pt, xshift=1pt]u.south west);
          \draw[decorate, decoration=brace] ([yshift=-2pt, xshift=-1pt]ucsw) -- node[midway, below] {$u_1$} ([yshift=-2pt, xshift=1pt]ubse);
          \draw[decorate, decoration=brace] ([yshift=-2pt, xshift=-1pt]u.south east) -- node[midway, below] {$u_2$} ([yshift=-2pt, xshift=1pt]ucse);
          
          % Rankers
          \draw[->] ($(-infty.north -| u.west) + (0cm, 1cm)$) .. controls +(1cm, 0cm) and +(0cm,1cm) .. node[below, pos=0] {$\rho'$} ($(aarrown)+(0pt,2pt)$);
          \draw[->] ($(+infty.north -| u.east) + (0cm, 1cm)$) .. controls +(-1cm, 0cm) and +(0cm,1cm) .. node[below, pos=0] {$\lambda'$} ($(barrown)+(0pt,2pt)$);
          
          % v
          \matrix [below=1cm of u, rectangle, draw, matrix of math nodes, ampersand replacement=\&, inner sep=2pt, text height=.8em, text depth=.2em] (v) {
            \&[1.75cm] b \&[2cm] c \&[0.75cm] \\
          };
          \node[left=0cm of v] (vLabel) {$v = {}$};
          
          % -infty and arrows
          \node[below=0.5cm of v.south west, inner sep=0pt] (-inftyV) {$-\infty$};
          \draw (-inftyV) edge[-latex, shorten >= 2pt] (v.south west);
          \path (-inftyV.north) -| coordinate (aarrowVs) (v-1-2.south);
          \path (v.south) -| coordinate (aarrowVn) (v-1-2.south);
          \draw (aarrowVs) edge[-latex, shorten >= 2pt] (aarrowVn);
          
          % +infty and arrows
          \node[below=0.5cm of v.south east, inner sep=0pt] (+inftyV) {$+\infty$};
          \draw (+inftyV) edge[-latex, shorten >= 2pt] (v.south east);
          \path (+inftyV.north) -| coordinate (barrowVs) (v-1-3.south);
          \path (v.south) -| coordinate (barrowVn) (v-1-3.south);
          \draw (barrowVs) edge[-latex, shorten >= 2pt] (barrowVn);
          
          % Dashed separation lines
          \path (v.north) -| coordinate (vbnw) (v-1-2.west);
          \path (v.south) -| coordinate (vbsw) (v-1-2.west);
          \draw[dashed, dash phase=1.5pt] (vbnw) -- (vbsw);
          \path (v.north) -| coordinate (vbne) (v-1-2.east);
          \path (v.south) -| coordinate (vbse) (v-1-2.east);
          \draw[dashed, dash phase=1.5pt] (vbne) -- (vbse);
          
          \path (v.north) -| coordinate (vcnw) (v-1-3.west);
          \path (v.south) -| coordinate (vcsw) (v-1-3.west);
          \draw[dashed, dash phase=1.5pt] (vcnw) -- (vcsw);
          \path (v.north) -| coordinate (vcne) (v-1-3.east);
          \path (v.south) -| coordinate (vcse) (v-1-3.east);
          \draw[dashed, dash phase=1.5pt] (vcne) -- (vcse);
          
          % Braces
          \draw[decorate, decoration=brace] ([yshift=2pt, xshift=1pt]v.north west) -- node[midway, above] (v0) {$v_0$} ([yshift=2pt, xshift=-1pt]vbnw);
          \draw[decorate, decoration=brace] ([yshift=2pt, xshift=1pt]vbne) -- node[midway, above] {$v_1$} ([yshift=2pt, xshift=-1pt]vcnw);
          \draw[decorate, decoration=brace] ([yshift=2pt, xshift=1pt]vcne) -- node[midway, above] {$v_2$} ([yshift=2pt, xshift=-1pt]v.north east);
          
          %Rankers
          \draw[->] ($(-inftyV.south -| v.west) - (0cm, 1cm)$) .. controls +(1cm, 0cm) and +(0cm,-1cm) .. node[above, pos=0] {$\rho'$} ($(aarrowVs)-(0pt,2pt)$);
          \draw[->] ($(+inftyV.south -| v.east) - (0cm, 1cm)$) .. controls +(-1cm, 0cm) and +(0cm,-1cm) .. node[above, pos=0] {$\lambda'$} ($(barrowVs)-(0pt,2pt)$);
          
          \path[pattern=north east lines] (ubne) rectangle (ucsw);
          \path[pattern=north east lines] (vbne) rectangle (vcsw);
        \end{tikzpicture}
      \end{subfigure}%
      \caption{Illustration of the relation $u \sim_{m, \rho, \lambda} v$. We need to have the same alphabet $\alphabet u_1 = \alphabet v_1$ in the middle and the respective left parts and the respective right parts need to have the same number of $a$s modulo $m$ for all $a \not\in \alphabet(u_1) = \alphabet(v_1)$.}
    \end{figure}
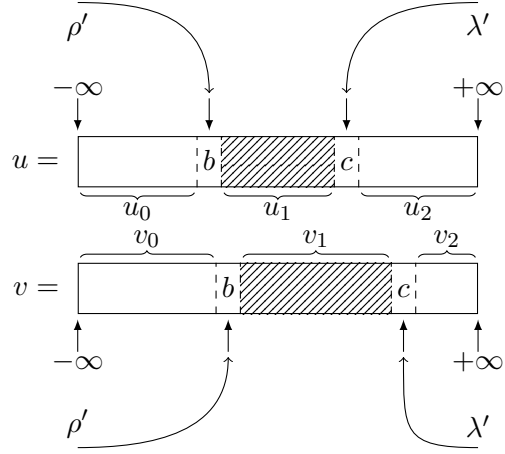
    \begin{remark}\label{rmk:simImpliesCongruentNumberOfA}
      By using the prefixes $\rho' = \rho \Xinfty$ and $\lambda' = \varepsilon$, we obtain that $u \sim_{m, \rho, \lambda} v$ implies $|u|_a \equiv |v|_a \bmod m$ for all letters $a$.
    \end{remark}
    \begin{fact}\label{fct:simRhoLambdaIsEquivalence}
      For all $X$-rankers $\rho$, $Y$-rankers $\lambda$ and $m \in \Np$, the relation $\sim_{m, \rho, \lambda}$ is an equivalence of finite index.
    \end{fact}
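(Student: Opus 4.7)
The plan is to verify the equivalence axioms directly from Definition~\ref{def:singleRankerPair}, then to bound the index by attaching finitely many combinatorial parameters to each class.

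Reflexivity and symmetry fall out immediately: $\rho$-$\lambda$-compatibility is already known to be an equivalence, and both equality of alphabets and congruence modulo $m$ are reflexive and symmetric. For transitivity, assume $u \sim_{m,\rho,\lambda} v$ and $v \sim_{m,\rho,\lambda} w$. Transitivity of compatibility makes $u$ and $w$ $\rho$-$\lambda$-compatible. Given prefixes $\rho'$ of $\rho\Xinfty$ and $\lambda'$ of $\lambda\Yinfty$ with $\rho'(u) \leq \lambda'(u)$ and $\rho'(w) \leq \lambda'(w)$, applying the compatibility of $u$ and $v$ at this very prefix pair yields $\rho'(v) \leq \lambda'(v)$, so the hypothesis of the second clause holds for both pairs $(u,v)$ and $(v,w)$. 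The alphabet equalities $\alphabet u_1 = \alphabet v_1 = \alphabet w_1$ are then obvious, and the mod-$m$ congruences on the left and right parts chain through $v$.

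For finite index, the plan is to exhibit $\sim_{m,\rho,\lambda}$ as a common refinement of finitely many finite-index equivalences. First, $\rho$-$\lambda$-compatibility itself has finite index, because a compatibility class is determined by the pairwise ordering of the at most $(|\rho|+2)(|\lambda|+2)$ values of the form $\rho'(w)$ and $\lambda''(w)$, with $\rho'$ and $\lambda''$ ranging over the prefixes of $\rho\Xinfty$ and $\lambda\Yinfty$. Within a fixed compatibility class, the set of prefix pairs $(\rho',\lambda')$ for which both splits are defined is constant across the class; for each such pair, the further data pinning down a $\sim_{m,\rho,\lambda}$-class consists of the middle alphabet (an element of $2^\Sigma$) together with the mod-$m$ letter counts of the left and right parts (tuples in $(\mathbb{Z}/m\mathbb{Z})^\Sigma$). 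Since $\Sigma$ and $m$ are finite and there are only finitely many prefix pairs, the total number of parameter tuples, and hence the index of $\sim_{m,\rho,\lambda}$, is finite.

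The only mildly delicate point is the transitivity step, whose conditional second clause forces one to first verify $\rho'(v) \leq \lambda'(v)$ for the intermediate word; once compatibility is invoked to supply this inequality, the rest is routine bookkeeping, and I expect no significant obstacle.
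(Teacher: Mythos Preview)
Your proof is correct. The equivalence part is handled more carefully than in the paper (which simply says ``clearly''), and you rightly flag the one non-automatic step in transitivity, namely deducing $\rho'(v) \leq \lambda'(v)$ from compatibility before chaining the second clause.

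For the finite-index part your route differs slightly from the paper's. You record, for every prefix pair $(\rho',\lambda')$, the middle alphabet and the mod-$m$ count vectors of the two outer parts; since there are only finitely many prefix pairs and finitely many possible values for each datum, this immediately bounds the index. The paper instead passes through the $\{\rho,\lambda\}$-factorization $w = z_0 c_1 z_1 \cdots c_d z_d$ and records the single tuple $(c_1\cdots c_d,\ (\alphabet z_k)_k,\ (|z_k|_a \bmod m)_{k,a})$, then argues that this coarser invariant already separates $\sim_{m,\rho,\lambda}$-classes because every $\rho'$-$\lambda'$-split is obtained by grouping consecutive blocks of the factorization. Your approach is more direct and avoids that reconstruction argument; the paper's gives a smaller, more structured set of invariants (the factorization data) at the cost of a short extra verification. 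Either is fine for the stated fact.
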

    \begin{proof}
      Clearly, $\sim_{m, \rho, \lambda}$ is an equivalence. The only interesting part is to show that it is of finite index. For this, we define a function $h$ mapping a word $w \in \Sigma^*$ to a triple whose first component is from $\Sigma^{\leq |\rho| + |\lambda|}$, whose second component consists of up to $|\rho| + |\lambda| + 1$ many subsets of $\Sigma$ and whose third component consists of up to $|\rho| + |\lambda| + 1$ many functions $\Sigma \to \{ 0, \dots, m - 1 \}$. Note that, in this way, we only have finitely many possible values for $h(w)$.
      
      To describe the values of the three components explicitly, let $w = z_0 c_1 z_1 \dots c_d z_d$ be the $\{ \rho, \lambda \}$-factorization of $w$. Then, the first component of $h(w)$ is given by $c_1 \dots c_d$. Note here that we have $d \leq |\rho| + |\lambda|$ as the position of every $c_k$ belongs to at least one (non-empty) prefix of $\rho$ or $\lambda$. The second component is $(\alphabet z_0, \dots, \alphabet z_d)$. Finally, the third component is $(f_0, \dots, f_d)$ for the functions $f_k: \Sigma \to \{ 0, \dots, m - 1 \}$, $a \mapsto |z_k|_a \bmod m$ (where $|z_k|_a \bmod m$ is the smallest non-negative representative of the residue class of $|z_k|_a$ modulo $m$).
      
      By choosing an arbitrary system of representatives of the classes of $\sim_{m, \rho, \lambda}$, we obtain a function from the set of these classes into a finite set by choosing $h(w)$ as the image of the class with the representative $w$. We will next prove that $u \not\sim_{m, \rho, \lambda} v$ implies $h(u) \neq h(v)$, which shows that this function is an injection and that, thus, $\sim_{m, \rho, \lambda}$ is of finite index.
      
      For this, let $u \not\sim_{m, \rho, \lambda} v$ and consider the $\{ \rho, \lambda \}$-factorizations $u = x_0 a_1 x_1 \dots a_d x_d$ and $v = y_0 b_1 y_1 \dots b_e y_e$ of $u$ and $v$ (where $a_1, \dots, a_d \in \Sigma$ and $b_1, \dots, b_e \in \Sigma$ are at the $\{ \rho, \lambda \}$-positions). We are done, if the first component $a_1 \dots a_d$ of $h(u)$ is different to the first component $b_1 \dots b_e$ of $h(v)$. So, assume that we have $a_1 \dots a_d = b_1 \dots b_e$. By \autoref{fct:compatibleIfSmallerAlphabet}, $u$ is $\rho$-$\lambda$-compatible to $a_1 \dots a_d$ and $v$ is $\rho$-$\lambda$-compatible to $b_1 \dots b_e = a_1 \dots a_d$. Thus, we have that $u$ and $v$ are $\rho$-$\lambda$-compatible (by transitivity) and that the $\{ \rho, \lambda \}$-factorization of $v$ is $v = y_0 a_1 y_1 \dots a_d y_d$.
      
      So, there must be a prefix $\rho'$ of $\rho\Xinfty$ and a prefix $\lambda'$ of $\lambda\Yinfty$ with $\rho'(u) \leq \lambda'(u)$ and $\rho'(v) \leq \lambda'(v)$ such that, for the $\rho'$-$\lambda'$-split $(u_0, u_1, u_2)$ of $u$ and the $\rho'$-$\lambda'$-split $(v_0, v_1, v_2)$ of $v$, we have $\alphabet u_1 \neq \alphabet v_1$ or we have $\alphabet u_1 = \alphabet v_1$ but there is some $a \not\in \alphabet u_1 = \alphabet v_1$ with $|u_0|_a \not\equiv |v_0|_a \bmod m$ or $|u_2|_a \not\equiv |v_2|_a \bmod m$. See \autoref{fig:rhoLambdaSplits}.
      
      By the definition of a $\{ \rho, \lambda \}$-factorization, $\rho'(u)$ must be the position of $a_i$ in $u$ for some $i$ (or $\pm\infty$). It is not difficult to see that $\rho'(v)$ is the position of the same $a_i$ in $v$ (or also $\pm\infty$). Together with a similar argumentation for $\lambda'$, we obtain, for the middle parts of the $\rho'$-$\lambda'$-splits, that there are $i$ and $j$ with $u_1 = x_i a_{i + 1} x_{i + 1} \dots a_j x_j$ and $v_1 = y_i a_{i + 1} y_{i + 1} \dots a_j y_j$.
      
      We are done if there is a difference in the second components of $h(u)$ and $h(v)$. Thus, we may assume $\alphabet x_k = \alphabet y_k$ for all $k$ and obtain
      \begin{align*}
        \alphabet u_1 &= \alphabet x_i a_{i + 1} x_{i + 1} \dots a_j x_j = \{ a_{i + 1}, \dots, a_j \} \cup \bigcup_{k = i}^j \alphabet x_k \\
        &= \{ a_{i + 1}, \dots, a_j \} \cup \bigcup_{k = i}^j \alphabet y_k = \alphabet y_i a_{i + 1} y_{i + 1} \dots a_j y_j = \alphabet v_1 \text{.}
      \end{align*}
      As stated above, there must be some $a \not\in \alphabet u_1 = \alphabet v_1$ with $|u_0|_a \not\equiv |v_0|_a \bmod m$ or $|u_2|_a \not\equiv |v_2|_a \bmod m$. We only show the former of the two cases as the latter is left-right symmetric. Observe that we have $u_0 = x_0 a_1 x_1 \dots a_{i - 1} x_{i - 1}$ and $v_0 = y_0 a_1 y_1 \dots a_{i - 1} y_{i - 1}$ (see \autoref{fig:rhoLambdaSplits} again). Let $(f_0, \dots, f_d)$ be the third component of $h(u)$ and $(g_0, \dots, g_d)$ be the third component of $h(v)$. Suppose that the two components are equal. Then, we have, in particular, $f_k(a) = g_k(a)$ for all $k$ and, therefore,
      \begin{align*}
        |u_0|_a &= |a_1 \dots a_{i - 1}|_a + \sum_{k = 0}^{i - 1} |x_k|_a = |a_1 \dots a_{i - 1}|_a + \sum_{k = 0}^{i - 1} f_k(a) \\
        &= |a_1 \dots a_{i - 1}|_a + \sum_{k = 0}^{i - 1} g_k(a) = |a_1 \dots a_{i - 1}|_a + \sum_{k = 0}^{i - 1} |y_k|_a = |v_0|_a \text{,}
      \end{align*}
      which constitutes a contradiction.
    \end{proof}
    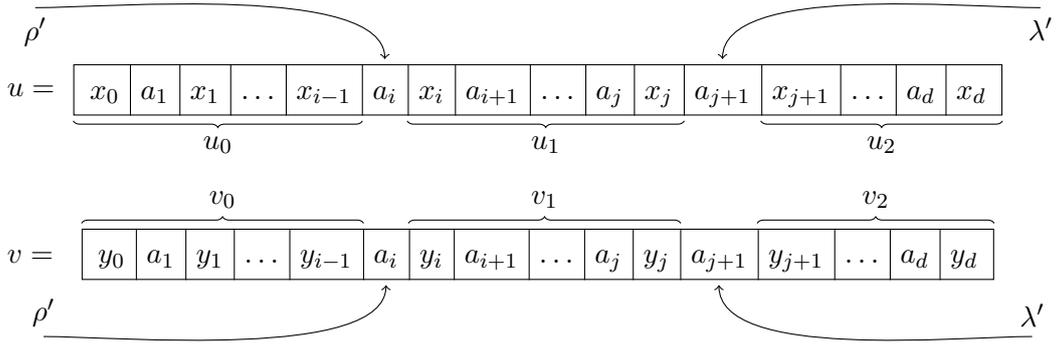
\begin{figure}\centering
      \begin{tikzpicture}
        % u
        \node (uLabel) {$u = {}$};
        \matrix [right=0cm of uLabel, rectangle, draw, matrix of math nodes, ampersand replacement=\&, inner ysep=2pt, inner xsep=4pt, text height=.8em, text depth=.2em] (u) {
          \hspace*{-2pt}x_0 \& a_1 \& x_1 \& \dots \& x_{i - 1} \& a_i \& x_i \& a_{i + 1} \& \dots \& a_j \& x_j \& a_{j + 1} \& x_{j + 1} \& \dots \& a_d \& x_d\hspace*{-2pt} \\
        };
        \foreach \i in {1,...,15} {
          \draw ($(u.north -| u-1-\i.north east)$) -- ($(u.south -| u-1-\i.south east)$);
        };
        \draw[decorate, decoration={brace}] ($(u.south -| u-1-5.south east)-(0pt,2pt)$) -- node[below, yshift=-2pt] {$u_0$} ($(u.south west)-(0pt,2pt)$);
        \draw[decorate, decoration={brace}] ($(u.south -| u-1-11.south east)-(0pt,2pt)$) -- node[below, yshift=-2pt] {$u_1$} ($(u.south -| u-1-7.south west)-(0pt,2pt)$);
        \draw[decorate, decoration={brace}] ($(u.south east)-(0pt,2pt)$) -- node[below, yshift=-2pt] {$u_2$} ($(u.south -| u-1-13.south west)-(0pt,2pt)$);
        % Rankers
        \draw[->] ($(u.north west) + (-0.5cm, 0.75cm)$) .. controls +(1cm, 0cm) and +(0cm,1cm) .. node[below, pos=0] {$\rho'$} ($(u.north -| u-1-6.north)+(0pt,2pt)$);
        \draw[->] ($(u.north east) + (0.5cm, 0.75cm)$) .. controls +(-1cm, 0cm) and +(0cm,1cm) .. node[below, pos=0] {$\lambda'$} ($(u.north -| u-1-12.north)+(0pt,2pt)$);
      
        % v
        \matrix [below=1.5cm of u.south, rectangle, draw, matrix of math nodes, ampersand replacement=\&, inner ysep=2pt, inner xsep=4pt, text height=.8em, text depth=.2em] (v) {
          \hspace*{-2pt}y_0 \& a_1 \& y_1 \& \dots \& y_{i - 1} \& a_i \& y_i \& a_{i + 1} \& \dots \& a_j \& y_j \& a_{j + 1} \& y_{j + 1} \& \dots \& a_d \& y_d\hspace*{-2pt} \\
        };
        \node[anchor=base west] at ($(uLabel.base west |- v.base)$) {$v ={}$};
        \foreach \i in {1,...,15} {
          \draw ($(v.north -| v-1-\i.north east)$) -- ($(v.south -| v-1-\i.south east)$);
        };
        \draw[decorate, decoration={brace}] ($(v.north west)+(0pt,2pt)$) -- node[above, yshift=2pt] {$v_0$} ($(v.north -| v-1-5.north east)+(0pt,2pt)$);
        \draw[decorate, decoration={brace}] ($(v.north -| v-1-7.north west)+(0pt,2pt)$) -- node[above, yshift=2pt] {$v_1$} ($(v.north -| v-1-11.north east)+(0pt,2pt)$);
        \draw[decorate, decoration={brace}] ($(v.north -| v-1-13.north west)+(0pt,2pt)$) -- node[above, yshift=2pt] {$v_2$} ($(v.north east)+(0pt,2pt)$);
        
        % Rankers
        \draw[->] ($(v.south west) + (-0.5cm, -0.75cm)$) .. controls +(1cm, 0cm) and +(0cm,-1cm) .. node[above, pos=0] {$\rho'$} ($(v.south -| v-1-6.south)-(0pt,2pt)$);
        \draw[->] ($(v.south east) + (0.5cm, -0.75cm)$) .. controls +(-1cm, 0cm) and +(0cm,-1cm) .. node[above, pos=0] {$\lambda'$} ($(v.south -| v-1-12.south)-(0pt,2pt)$);
      \end{tikzpicture}
      \caption{The $\rho'$-$\lambda'$-splits of $u$ and $v$.}\label{fig:rhoLambdaSplits}
    \end{figure}
    \begin{remark*}
      Despite \autoref{fct:simRhoLambdaIsEquivalence}, $\sim_{m, \rho, \lambda}$ is not a congruence in general. For example, consider the words $u = bc$ and $v = cb$ and the rankers $\rho = X_a X_b$ and $\lambda = \varepsilon$. The words are $\{ \rho, \lambda \}$-compatible and, for any prefix $\rho'$ of $\rho\Xinfty$ and $\lambda'$ of $\lambda\Yinfty$, the $\rho'$-$\lambda'$-split can only be $(u, \varepsilon, \varepsilon)$ for $u$ and $(v, \varepsilon, \varepsilon)$ for $v$, $(\varepsilon, u, \varepsilon)$ for $u$ and $(\varepsilon, v, \varepsilon)$ for $v$ or $(\varepsilon, \varepsilon, u)$ for $u$ and $(\varepsilon, \varepsilon, v)$ for $v$. Thus, we have $u \sim_{m, \rho, \lambda} v$. Clearly, we also have $a \sim_{m, \rho, \lambda} a$ but we do not have $au \sim_{m, \rho, \lambda} av$ as $\rho' = \rho = X_a X_b$ and $\lambda' = \lambda = \varepsilon$ split $u$ into $(a, c, \varepsilon)$ and $v$ into $(ac, \varepsilon, \varepsilon)$, so the alphabets of the middle parts ($c$ and $\varepsilon$) do not coincide.
    \end{remark*}
  
    However, we obtain a congruence -- and this is the congruence we will use to describe $\DAb$ -- by taking a finite intersection of some relations $\sim_{m, \rho, \lambda}$.
    \begin{definition}\label{def:congruence}
      Let $u, v \in \Sigma^*$ and $m \in \Np$, $n \in \mathbb{N}$. We let $u \approx_{m, n} v$ if $u \sim_{m, \rho, \lambda} v$ holds for all pairs of an $X$-ranker $\rho$ and a $Y$-ranker $\lambda$ with $|\rho| + |\lambda| \leq n$.
    \end{definition}

    \begin{fact}
      For all $m \in \Np$, $n \in \mathbb{N}$, the relation $\approx_{m, n}$ is a congruence of finite index.
    \end{fact}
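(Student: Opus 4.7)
The relation $\approx_{m,n}$ is, by construction, the intersection of the equivalences $\sim_{m,\rho,\lambda}$ over all pairs $(\rho,\lambda)$ with $|\rho|+|\lambda|\leq n$. Since $\Sigma$ is finite, only finitely many such pairs exist, and \autoref{fct:simRhoLambdaIsEquivalence} says that each $\sim_{m,\rho,\lambda}$ is an equivalence of finite index. Hence $\approx_{m,n}$ is an equivalence of finite index, and the only genuine task is to establish that it is compatible with concatenation.

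By induction on word length, this reduces to showing that $u\approx_{m,n}v$ implies $au\approx_{m,n}av$ and, by left-right symmetry, $ua\approx_{m,n}va$ for every letter $a\in\Sigma$. So fix $u\approx_{m,n}v$, a letter $a$, and an arbitrary pair $(\rho,\lambda)$ with $|\rho|+|\lambda|\leq n$; the plan is to verify $au\sim_{m,\rho,\lambda}av$ by tracing, for every prefix $\rho'$ of $\rho\Xinfty$ and $\lambda'$ of $\lambda\Yinfty$, how the $\rho'$-$\lambda'$-split of $au$ can be read off from a $\rho''$-$\lambda''$-split of $u$ with the single letter $a$ re-attached on one side. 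The rankers $\rho''$ and $\lambda''$ will satisfy $|\rho''|+|\lambda''|\leq|\rho'|+|\lambda'|\leq n$, so the hypothesis $u\approx_{m,n}v$ supplies $u\sim_{m,\rho'',\lambda''}v$ and in particular the analogous split for~$v$, from which the $\rho'$-$\lambda'$-split of $av$ is obtained by re-attaching the same letter~$a$.

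The translation rests on a short case analysis of how the first instruction of $\rho'$ (and, symmetrically, the terminal behaviour of $\lambda'$) interacts with position~$1$ of $au$. If $\rho'$ does not begin with $X_a$, then $\rho'$ skips position~$1$ and acts on $au$ exactly as it acts on $u$, up to the shift by one; we take $\rho''=\rho'$. If $\rho'=X_a\rho''$, then $\rho'$ visits position~$1$ first and afterwards traces $\rho''$ starting from $u$. For $\lambda'$, which starts at $+\infty$, either $\lambda'$ never reaches position~$1$ (and behaves on $au$ like $\lambda'$ on $u$) or some $Y_a$-instruction exhausts the $a$'s in the $u$-part and lands on the prepended $a$, in which case an appropriate prefix $\lambda''$ of $\lambda'$ captures its behaviour on $u$. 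Because $u\approx_{m,n}v$ already forces $\alphabet u=\alphabet v$ through \autoref{rmk:simImpliesCongruentNumberOfA}, the same case splits for $au$ and $av$ trigger consistently, which also yields the $\rho$-$\lambda$-compatibility of $au$ and $av$ from the $\rho''$-$\lambda''$-compatibility of $u$ and~$v$.

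The main obstacle is the bookkeeping needed to see that re-attaching the extra letter $a$ does not damage either defining clause of \autoref{def:singleRankerPair}. The attached $a$ can only end up as a new separator at position~$1$ (contributing to no part), or inside the middle part, or inside the left part of the resulting split; the symmetric situation for appending on the right is analogous. If the $a$ lands in the middle, the alphabets of both middle parts gain the letter $a$ in tandem, so their equality is preserved; if it lands in the left, both $|u_0|_a$ and $|v_0|_a$ grow by exactly~$1$, so the congruence $|u_0|_a\equiv|v_0|_a\bmod m$ is preserved; the counts of other letters and of the right part are untouched. Assembling these observations across all prefixes $\rho'$ and $\lambda'$ yields $au\sim_{m,\rho,\lambda}av$ for every admissible $(\rho,\lambda)$, hence $au\approx_{m,n}av$, and therefore $\approx_{m,n}$ is a congruence.
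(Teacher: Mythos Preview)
Your reduction to single-letter concatenation is sound and yields a genuinely different route from the paper. The paper proves compatibility in one step for arbitrary products $u_\ell u_r$ versus $v_\ell v_r$: it factors $\rho'=\rho'_0\rho'_1$ and $\lambda'=\lambda'_0\lambda'_1$ according to where each ranker crosses the $u_\ell/u_r$ boundary and then runs through four cases depending on which side of the boundary $\rho'$ and $\lambda'$ terminate (Figure~\ref{fig:congruenceCases}). Your argument specialises one factor to a single letter, which lightens the boundary-crossing bookkeeping at the price of an outer induction on word length. Both proofs rest on the same mechanism: when a ranker crosses the boundary, at least one instruction is consumed, so the residual ranker acting on the remaining factor fits under the length budget~$n$.

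One point needs repair. You write $|\rho''|+|\lambda''|\le|\rho'|+|\lambda'|\le n$, but the second inequality fails in general: $\rho'$ ranges over prefixes of $\rho\Xinfty$ (length up to $|\rho|+1$) and likewise for~$\lambda'$. What you actually need, and what does hold, is that $\rho''$ is a prefix of $\hat\rho\,\Xinfty$ and $\lambda''$ a prefix of $\hat\lambda\,\Yinfty$ for some pair with $|\hat\rho|+|\hat\lambda|\le|\rho|+|\lambda|\le n$; then $u\sim_{m,\hat\rho,\hat\lambda}v$ (rather than $u\sim_{m,\rho'',\lambda''}v$) supplies the required split comparison. Concretely: when $\rho''=\rho'$ and $\lambda''=\lambda'$ one takes $(\hat\rho,\hat\lambda)=(\rho,\lambda)$; when $\rho'=X_a\rho''$ one takes $\hat\rho$ to be the tail of $\rho$ after~$X_a$ (here $\rho$ must start with $X_a$, so $|\hat\rho|=|\rho|-1$). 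Note also that $\lambda'(u)>-\infty$ forces $\lambda'$ to avoid $\Yinfty$, so $|\lambda'|\le|\lambda|$ in that branch. A minor side remark: $\alphabet u=\alphabet v$ comes from the middle-alphabet clause of $u\sim_{m,\varepsilon,\varepsilon}v$ (the $\varepsilon$-$\varepsilon$-split has middle part the whole word), not from \autoref{rmk:simImpliesCongruentNumberOfA}, which only gives counts modulo~$m$. With these small adjustments your argument goes through.
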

    \begin{proof}
      By definition, $\approx_{m, n}$ is the intersection of all $\sim_{m, \rho, \lambda}$ for the pairs of an $X$-ranker $\rho$ and a $Y$-ranker $\lambda$ with $|\rho| + |\lambda| \leq n$. Since these finitely many $\sim_{m, \rho, \lambda}$ are equivalences of finite index (by \autoref{fct:simRhoLambdaIsEquivalence}), this also holds for $\approx_{m, n}$.
      
      Thus, the only interesting part of the proof is to show that $\approx_{m, n}$ is compatible. For this, let $u_\ell, v_\ell, u_r, v_r \in \Sigma^*$ with $u_\ell \approx_{m, n} v_\ell$ and $u_r \approx_{m, n} v_r$ and consider an $X$-ranker $\rho$ and a $Y$-ranker $\lambda$ with $|\rho| + |\lambda| \leq n$. Furthermore, let $\rho'$ be a prefix of $\rho\Xinfty$ and $\lambda'$ a prefix of $\lambda\Yinfty$. We have to show $\operatorname{ord}[\rho', \lambda'; u_\ell u_r] = \operatorname{ord}[\rho', \lambda'; v_\ell v_r]$
      and, if $\rho'(u_\ell u_r) \leq \lambda'(u_\ell u_r)$ (and, thus, also $\rho'(v_\ell v_r) \leq \lambda'(v_\ell, v_r)$), that, for the $\rho'$-$\lambda'$-split $(u_0, u_1, u_2)$ of $u_\ell u_r$ and the $\rho'$-$\lambda'$-split $(v_0, v_1, v_2)$ of $v_\ell v_r$, we have $\alphabet u_1 = \alphabet v_1$ and
      \[
        \forall a \not\in \alphabet u_1 = \alphabet v_1: |u_0|_a \equiv |v_0|_a \bmod m \text{ and } |u_2|_a \equiv |v_2|_a \bmod m \text{.}
      \]
      
      Let $\rho' = \rho'_0 \rho'_1$ where $\rho'_0$ is the longest prefix of $\rho'$ such that $\rho'_0(u_\ell) < +\infty$ (possibly empty). We first show
      \[
        \rho'(u_\ell u_r) = \begin{cases}
          \rho'(u_\ell) & \text{if } \rho'(u_\ell) < +\infty \\
          |u_\ell| + \rho'_1(u_r) & \text{otherwise}
        \end{cases}
      \]
      where we let $|u_\ell| + \infty = +\infty$. Note that $\rho'_1(u_r) = -\infty$ implies that $\rho'_1$ must be empty and that $\rho'_1$ being empty implies $\rho'(u_\ell) = \rho'_1(u_\ell) < +\infty$ (i.\,e.\ we are in the second case).
      
      The first case obviously holds and, in the second case, $\rho'_1$ cannot be empty. Thus, we may write $\rho' = \rho'_0 \Xinfty \rho''_1$ or $\rho' = \rho'_0 X_a \rho''_1$ for some $a \in \Sigma$ where $\rho''_1$ is some suffix of $\rho'$. In the former case, we $\rho'(u_\ell u_r) = +\infty = \Xinfty \rho''_1(u_r) = \rho'_1(u_r)$ and, in the latter case, we obtain the following situation:
      \begin{center}
        \begin{tikzpicture}[baseline=(ul.base)]
          \node[text height=0.8em, text depth=0.2em, inner sep=2pt, minimum width=3cm] (ul) {$u_\ell$};
          \node[text height=0.8em, text depth=0.2em, inner sep=2pt, minimum width=6cm, right=0pt of ul] (ur) {$u_r$};
          
          \draw (ul.north west) rectangle (ur.south east);
          \draw (ul.north east) -- (ul.south east);
          
          \node[below right=2pt and 0pt of ul.south west, inner sep=0pt] (1) {$\scriptstyle 1$};
          \node[below left=2pt and 2pt of ul.south east, inner sep=0pt] (ulpos) {$\scriptstyle |u_\ell|$};
          \node[below right=2pt and 2pt of ur.south west, inner sep=0pt] (ul+1pos) {$\scriptstyle |u_\ell| + 1$};
          \node[below left=2pt and 2pt of ur.south east, inner sep=0pt] (ul+urpos) {$\scriptstyle |u_\ell| + |u_r|$};
          
          \draw[->] ($(ul.north west) + (-0.75cm, 1cm)$) .. controls +(1cm, 0cm) and +(0cm,1cm) .. node[below, pos=0] {$\rho'_0$} ($(ul.north east)+(-1cm,2pt)$);
          \draw[->, shorten <= 1ex] ($(ul.north east)+(-1cm,2pt)$) .. controls +(0.25cm, 0.75cm) and +(0cm,1cm) .. node[midway, above] {$X_a$} ($(ur.north west)+(1.5cm,2pt)$);
          \draw[->, shorten <= 1ex] ($(ur.north west)+(1.5cm,2pt)$) .. controls +(0.25cm, 0.75cm) and +(0cm,1cm) .. node[midway, above] {$\rho''_1$} ($(ur.north east)+(-2cm,2pt)$);
          
          \draw[decorate, decoration={brace}] ($(ulpos.south -| ur.south west)+(1.5cm,-2pt)$) -- node[below] {no $a$} ($(ulpos.south -| ul.south east)+(-1cm,-2pt)$);
        \end{tikzpicture},
      \end{center}
      which shows that we have $\rho'(u_\ell u_r) = |u_\ell| + X_a\rho''_1(u_r) = |u_\ell| + \rho'_1(u_r)$.
      
      We also get an analogous statement for $v_\ell v_r$ and, by left-right symmetry, a statement for $\lambda$: Let $\lambda' = \lambda'_0 \lambda'_1$ where $\lambda'_0$ is the longest prefix of $\lambda'$ such that $-\infty < \lambda'_0(u_r)$. Then, we have
      \[
        \lambda'(u_\ell u_r) = \begin{cases}
          |u_\ell| + \lambda'(u_r) & \text{if } -\infty < \lambda'(u_r) \\
          \lambda'_1(u_\ell) & \text{otherwise}
        \end{cases}
      \]
      (where we again let $|u_\ell| + \infty = +\infty$) and an analogous statement for $v$.
      
      One consequence of these statements is that we have
      \[
        \rho'(u_\ell u_r) \leq |u_\ell| \iff \rho'(u_\ell) < +\infty
        \iff \rho'(v_\ell) < +\infty \iff \rho'(v_\ell v_r) \leq |v_\ell|
      \]
      where we have used that $u_\ell$ and $v_\ell$ are $\rho$-$\lambda$-compatible for the second equivalence (using $\rho'$ and the empty $Y$-ranker as the prefixes). The negation of this statement also yields $|u_\ell| < \rho'(u_\ell u_r) \iff |v_\ell| < \rho'(v_\ell v_r)$.
      
      In the same way, we obtain
      \[
        |u_\ell| < \lambda'(u_\ell u_r) \iff -\infty < \lambda'(u_r) \iff -\infty < \lambda'(v_r) \iff |v_\ell| < \lambda'(v_\ell v_r)
      \]
      and, by negation, $\lambda'(u_\ell u_r) \leq |u_\ell| \iff \lambda'(v_\ell v_r) \leq |v_\ell|$.
      
      \begin{figure}\centering
        \begin{subfigure}{0.5\textwidth}\centering
          \begin{tikzpicture}[baseline=(uLabel.base)]
            \matrix [rectangle, draw, matrix of math nodes, ampersand replacement=\&, inner sep=2pt, text height=.8em, text depth=.2em] (u) {
              |[minimum width=2cm, align=center]| $u_\ell / v_\ell$ \& |[minimum width=1.5cm, align=center]| $u_r / v_r$ \\
            };
            \draw ($(u.north -| u-1-1.north east)$) -- ($(u.south -| u-1-1.south east)$);
            
            \path[pattern=north west lines] ($(u.north west)+(0pt,1mm+2pt)$) rectangle ($(u.north -| u-1-1.north east)+(0pt,2pt)$);
            \path[pattern=north west lines] ($(u.south west)+(0pt,-2pt)$) rectangle ($(u.south -| u-1-1.north east)+(0pt,-1mm-2pt)$);
            
            %rankers
            \draw[->] ($(u.north west) + (-0.5cm, 0.85cm+2pt)$) .. controls +(0.5cm, 0cm) and +(0cm,0.75cm) .. node[below, pos=0] {$\rho'$} ($(u.north -| u-1-1.north) + (0pt, 1mm+4pt)$);
            \draw[->] ($(u.south east) + (0.5cm, -0.85cm-2pt)$) .. controls +(-0.5cm, 0cm) and +(0cm,-0.75cm) .. node[above, pos=0] {$\lambda'$} ($(u.south -| u-1-1.north) + (0pt, -1mm-4pt)$);
          \end{tikzpicture}
          \caption{$\rho'(u_\ell u_r) \leq |u_\ell|$, $\lambda'(u_\ell u_r) \leq |u_\ell|$}\label{sfig:congruenceCases:bothLeft}
        \end{subfigure}%
        \begin{subfigure}{0.5\textwidth}\centering
          \begin{tikzpicture}[baseline=(uLabel.base)]
            \matrix [rectangle, draw, matrix of math nodes, ampersand replacement=\&, inner sep=2pt, text height=.8em, text depth=.2em] (u) {
              |[minimum width=2cm, align=center]| $u_\ell / v_\ell$ \& |[minimum width=1.5cm, align=center]| $u_r / v_r$ \\
            };
            \draw ($(u.north -| u-1-1.north east)$) -- ($(u.south -| u-1-1.south east)$);
            
            \path[pattern=north west lines] ($(u.north -| u-1-2.north west)+(0pt,1mm+2pt)$) rectangle ($(u.north east)+(0pt,2pt)$);
            \path[pattern=north west lines] ($(u.south -| u-1-2.north west)+(0pt,-2pt)$) rectangle ($(u.south east)+(0pt,-1mm-2pt)$);
            
            %rankers
            \draw[->] ($(u.north west) + (-0.5cm, 0.85cm+2pt)$) .. controls +(0.5cm, 0cm) and +(0cm,0.75cm) .. node[below, pos=0] {$\rho'$} ($(u.north -| u-1-2.north) + (0pt, 1mm+4pt)$);
            \draw[->] ($(u.south east) + (0.5cm, -0.85cm-2pt)$) .. controls +(-0.5cm, 0cm) and +(0cm,-0.75cm) .. node[above, pos=0] {$\lambda'$} ($(u.south -| u-1-2.north) + (0pt, -1mm-4pt)$);
          \end{tikzpicture}
          \caption{$|u_\ell| < \rho'(u_\ell u_r)$, $|u_\ell| < \lambda'(u_\ell u_r)$}\label{sfig:congruenceCases:bothRight}
        \end{subfigure}\\
        \begin{subfigure}{0.5\textwidth}\centering
          \begin{tikzpicture}[baseline=(uLabel.base)]
            \matrix [rectangle, draw, matrix of math nodes, ampersand replacement=\&, inner sep=2pt, text height=.8em, text depth=.2em] (u) {
              |[minimum width=2cm, align=center]| $u_\ell / v_\ell$ \& |[minimum width=1.5cm, align=center]| $u_r / v_r$ \\
            };
            \draw ($(u.north -| u-1-1.north east)$) -- ($(u.south -| u-1-1.south east)$);

            \path[pattern=north west lines] ($(u.north -| u-1-2.north west)+(0pt,1mm+2pt)$) rectangle ($(u.north east)+(0pt,2pt)$);
            \path[pattern=north west lines] ($(u.south west)+(0pt,-2pt)$) rectangle ($(u.south -| u-1-1.north east)+(0pt,-1mm-2pt)$);
            
            %rankers
            \draw[->] ($(u.north west) + (-0.5cm, 0.85cm+2pt)$) .. controls +(0.5cm, 0cm) and +(0cm,0.75cm) .. node[below, pos=0] {$\rho'$} ($(u.north -| u-1-2.north) + (0pt, 1mm+4pt)$);
            \draw[->] ($(u.south east) + (0.5cm, -0.85cm-2pt)$) .. controls +(-0.5cm, 0cm) and +(0cm,-0.75cm) .. node[above, pos=0] {$\lambda'$} ($(u.south -| u-1-1.south) + (0pt, -1mm-4pt)$);
          \end{tikzpicture}
          \caption{$|u_\ell| < \rho'(u_\ell u_r)$, $\lambda'(u_\ell u_r) \leq |u_\ell|$}\label{sfig:congruenceCases:crossing}
        \end{subfigure}%
        \begin{subfigure}{0.5\textwidth}\centering
          \begin{tikzpicture}[baseline=(uLabel.base)]
            \matrix [rectangle, draw, matrix of math nodes, ampersand replacement=\&, inner sep=2pt, text height=.8em, text depth=.2em] (u) {
              |[minimum width=2cm, align=center]| $u_\ell / v_\ell$ \& |[minimum width=1.5cm, align=center]| $u_r / v_r$ \\
            };
            \draw ($(u.north -| u-1-1.north east)$) -- ($(u.south -| u-1-1.south east)$);

            \path[pattern=north west lines] ($(u.north west)+(0pt,1mm+2pt)$) rectangle ($(u.north -| u-1-1.north east)+(0pt,2pt)$);
            \path[pattern=north west lines] ($(u.south -| u-1-2.north west)+(0pt,-2pt)$) rectangle ($(u.south east)+(0pt,-1mm-2pt)$);
            
            %rankers
            \draw[->] ($(u.north west) + (-0.5cm, 0.85cm+2pt)$) .. controls +(0.5cm, 0cm) and +(0cm,0.75cm) .. node[below, pos=0] {$\rho'$} ($(u.north -| u-1-1.north) + (0pt, 1mm+4pt)$);
            \draw[->] ($(u.south east) + (0.5cm, -0.85cm-2pt)$) .. controls +(-0.5cm, 0cm) and +(0cm,-0.75cm) .. node[above, pos=0] {$\lambda'$} ($(u.south -| u-1-2.north) + (0pt, -1mm-4pt)$);
          \end{tikzpicture}
          \caption{$\rho'(u_\ell u_r) \leq |u_\ell|$, $|u_\ell| < \lambda'(u_\ell u_r)$}\label{sfig:congruenceCases:leftAndRight}
        \end{subfigure}
        \caption{The four cases for the positions of $\rho'$ and $\lambda'$}\label{fig:congruenceCases}
      \end{figure}
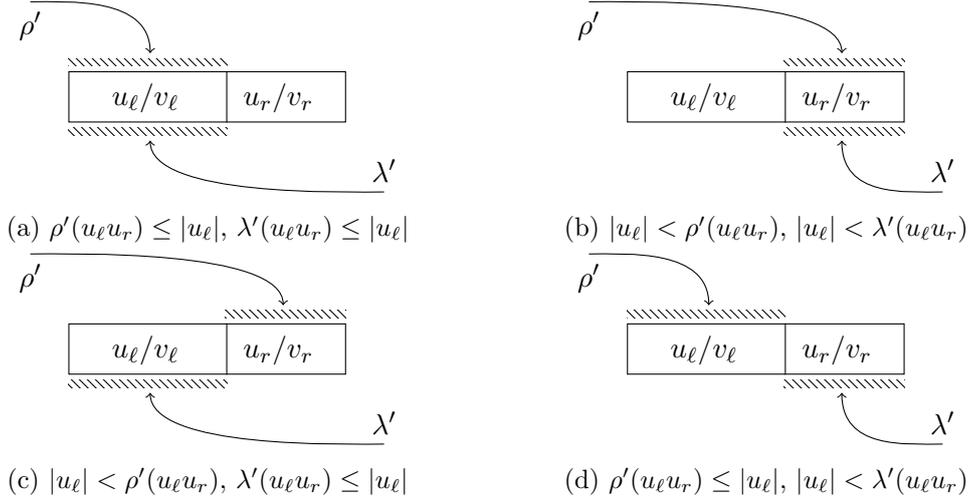
      
      For the main part, we have four cases (see \autoref{fig:congruenceCases}): First, consider the case $\rho'(u_\ell u_r) \leq |u_\ell|$ and $\lambda'(u_\ell u_r) \leq |u_\ell|$, which (by the above) is equivalent to $\rho'(u_\ell) < +\infty$ and $-\infty = \lambda'(u_r)$ (see \autoref{sfig:congruenceCases:bothLeft}). As just stated, we must also have $\rho'(v_\ell v_r) \leq |v_\ell|$, $\lambda'(v_\ell v_r) \leq |v_\ell|$, $\rho'(v_\ell) < +\infty$ and $-\infty = \lambda'(v_r)$. Thus, we have $\rho'(u_\ell u_r) = \rho'(u_\ell)$ and $\rho'(v_\ell v_r) = \rho'(v_\ell)$.
      
      Furthermore, we have $\lambda'(u_\ell u_r) = \lambda'_1(u_\ell)$ where $\lambda'_1$ is given by letting $\lambda' = \lambda'_0 \lambda'_1$ for the longest prefix $\lambda'_0$ of $\lambda'$ such that $-\infty < \lambda'_0(u_r)$. Note that $\lambda'_0$ is a non-empty prefix of $\lambda \Yinfty$ (as it is one of $\lambda'$) and that we, thus, have $-\infty < \lambda'_0(v_r)$ (by using that $u_r$ and $v_r$ are $\rho$-$\lambda$-compatible). Reversing the same argument also shows that $\lambda'_0$ is the longest prefix of $\lambda'$ such that $-\infty < \lambda'_0(v_r)$. Thus, we also have $\lambda'(v_\ell v_r) = \lambda'_1(v_\ell)$.
      
      Finally, we have $|\lambda'_1| \leq |\lambda|$ (since $|\lambda'_0| \geq 1$) and, therefore, $|\rho'| + |\lambda'_1| \leq n$ (as $\rho'$ cannot contain $\Xinfty$), which implies that $u_\ell$ and $v_\ell$ are $\rho'$-$\lambda'_1$-compatible (as we have $u_\ell \approx_{m, n} v_\ell$ and, thus, $u_\ell \sim_{m, \rho', \lambda'_1} v_\ell$). We may conclude:
      \[
        \operatorname{ord}[\rho', \lambda'; u_\ell u_r] = \operatorname{ord}[\rho', \lambda_1'; u_\ell] = \operatorname{ord}[\rho', \lambda_1'; v_\ell] = \operatorname{ord}[\rho', \lambda'; v_\ell v_r] \text{.}
      \]
      
      It remains to show the statement about the $\rho'$-$\lambda'$-splits of $u_\ell u_r$ and $v_\ell v_r$ if we have $\rho'(u_\ell u_r) \leq \lambda'(u_\ell u_r)$ (and, thus, also $\rho'(v_\ell v_r) \leq \lambda'(v_\ell v_r)$). The $\rho'$-$\lambda'$-split of $u_\ell u_r$ is $(u_{\ell, 0}, u_{\ell, 1}, u_{\ell, 2} u_r)$ if the $\rho'$-$\lambda'_1$-spit of $u_\ell$ is denoted by $(u_{\ell, 0}, u_{\ell, 1}, u_{\ell, 2})$. In the same way, the $\rho'$-$\lambda'$-split of $v_\ell v_r$ is $(v_{\ell, 0}, v_{\ell, 1}, v_{\ell, 2} v_r)$ if the $\rho'$-$\lambda'_1$-spit of $v_\ell$ is denoted by $(v_{\ell, 0}, v_{\ell, 1}, v_{\ell, 2})$. Due to $u_\ell \sim_{m, \rho', \lambda'_1} v_\ell$, we have $\alphabet u_{\ell, 1} = \alphabet v_{\ell, 1}$ and $|u_{\ell, 0}|_a \equiv |v_{\ell, 0}|_a \bmod m$ as well as $|u_{\ell, 2}|_a \equiv |v_{\ell, 2}|_a \bmod m$ for all $a \not\in \alphabet u_{\ell, 1} = \alphabet v_{\ell, 1}$. As we also have $|u_r|_a \equiv |v_r|_a \bmod m$ for all $a$ (see \autoref{rmk:simImpliesCongruentNumberOfA}), the latter implies $|u_{\ell, 2} u_r|_a \equiv |v_{\ell, 2} v_r|_a \bmod m$ for all $a \not\in \alphabet u_{\ell, 1} = \alphabet v_{\ell, 1}$, which concludes this case.

      The second case $|u_\ell| < \rho'(u_\ell u_r)$ and $|u_\ell| < \lambda'(u_\ell u_r)$ (and, thus, $|v_\ell| < \rho'(v_\ell v_r)$ and $|v_\ell| < \lambda'(v_\ell v_r)$, see \autoref{sfig:congruenceCases:bothRight}) is left-right symmetrical. In the third case $|u_\ell| < \rho'(u_\ell u_r)$ but $\lambda'(u_\ell u_r) \leq |u_\ell|$ (and, equivalently, $|v_\ell| < \rho'(v_\ell v_r)$ but $\lambda'(v_\ell v_r) \leq |v_\ell|$, see \autoref{sfig:congruenceCases:crossing}), we have
      \[
        \rho'(u_\ell u_r) > |u_\ell| \geq \lambda'(u_\ell u_r) \quad\text{and}\quad \rho'(v_\ell v_r) > |v_\ell| \geq \lambda'(v_\ell v_r)
      \]
      and there is nothing more to show.
      
      It remains the fourth case with $\rho'(u_\ell u_r) \leq |u_\ell|$ but $|u_\ell| < \lambda'(u_\ell u_r)$ (and, equivalently, $\rho'(v_\ell v_r) \leq |v_\ell|$ but $|v_\ell| < \lambda'(v_\ell v_r)$, see \autoref{sfig:congruenceCases:leftAndRight}). Here, we have
      \[
        \rho'(u_\ell u_r) \leq |u_\ell| < \lambda'(u_\ell u_r) \quad\text{and}\quad \rho'(v_\ell v_r) \leq |v_\ell| < \lambda'(v_\ell v_r)
      \]
      but we also need to show the statement about the $\rho'$-$\lambda'$-splits of $u_\ell u_r$ and $v_\ell v_r$. Note that we have $\rho'(u_\ell) < +\infty$ and $-\infty < \lambda'(u_r)$ (and, equivalently, $\rho'(v_\ell) < +\infty$ and $-\infty < \lambda'(v_r)$). Thus, we may consider the $\rho'$-$\varepsilon$-split $(u_{\ell, 0}, u_{\ell, 1}, \varepsilon)$ of $u_\ell$ and the $\varepsilon$-$\lambda'$-split $(\varepsilon, u_{r, 1}, u_{r, 2})$ of $u_r$ as well as the $\rho'$-$\varepsilon$-split $(v_{\ell, 0}, v_{\ell, 1}, \varepsilon)$ of $v_\ell$ and the $\varepsilon$-$\lambda'$-split $(\varepsilon, v_{r, 1}, v_{r, 2})$ of $v_r$, for which we have $\alphabet u_{\ell, 1} = \alphabet v_{\ell, 1}$ (since $|\rho'| + |\varepsilon| \leq |\rho| + |\lambda| \leq n$) and $\alphabet u_{r, 1} = \alphabet v_{r, 1}$ (since $|\varepsilon| + |\lambda'| \leq |\rho| + |\lambda| \leq n$) as well as $|u_{\ell, 0}|_a \equiv |v_{\ell, 0}|_a \bmod m$ for all $a \not\in \alphabet u_{\ell, 1} = \alphabet v_{\ell, 1}$ and $|u_{r, 2}|_a \equiv |v_{r, 2}|_a \bmod m$ for all $a \not\in \alphabet u_{r, 1} = \alphabet v_{r, 1}$.
      
      We may combine these splits and obtain that the $\rho'$-$\lambda'$-split of $u_\ell u_r$ is $(u_{\ell, 0}, u_{\ell, 1} u_{r, 1}, u_{r, 2})$ and that the $\rho'$-$\lambda'$-split of $v_\ell v_r$ is $(v_{\ell, 0}, v_{\ell, 1} v_{r, 1}, v_{r, 2})$ with $\alphabet u_{\ell, 1} u_{r, 1} = \alphabet u_{\ell, 1} \cup \alphabet u_{r, 1} = \alphabet v_{\ell, 1} \cup \alphabet v_{r, 1} = \alphabet v_{\ell, 1} v_{r, 1}$. Together with the above statements on the residual equivalence, this concludes the proof.
    \end{proof}
  
    We will show that a monoid is in $\DAb$ if and only if it is a homomorphic image of $\Sigma^* / {\approx_{m, n}}$ for some suitable $m, n$. The easier one of the two directions (namely that every homomorphic image is in $\DAb$) follows from the following proposition.
    \begin{proposition}\label{prop:congruenceInDAb}
      The monoid $\Sigma^* / {\approx_{m, n}}$ is in $\DAb$ for all $m \in \Np$, $n \in \mathbb{N}$.
    \end{proposition}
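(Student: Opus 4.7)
The plan is to invoke Proposition \ref{lem:DAbEquation} and prove that $M := \Sigma^*/{\approx_{m,n}}$ belongs to $\DAb$ by showing it satisfies $(xy)^{\omega-1} = (yx)^{\omega-1}$; unfolding the definition of $(\cdot)^{\omega-1}$, this amounts to establishing $(uv)^{M!-1} \approx_{m,n} (vu)^{M!-1}$ for all $u, v \in \Sigma^*$. The key intermediate step is the following claim, stated for large enough exponents: for all $u, v \in \Sigma^*$ and all $K \geq 2n+2$, we have $(uv)^K \approx_{m, n} (vu)^K$. Once this claim is established, I transport it back to the exponent $M!-1$ using the finite-monoid structure of $M$: in the monogenic submonoids of $M$ generated by $[uv]$ and $[vu]$, the powers cycle with periods $P_1, P_2$ dividing $M!$, and since $M!-1 \geq |M|-1$ is at least the index of each cycle, $[uv]^{M!-1}$ and $[vu]^{M!-1}$ lie on their respective cycles. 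Choosing $K = M!-1 + k \cdot \operatorname{lcm}(P_1, P_2)$ with $k$ large enough that $K \geq 2n+2$ yields $[uv]^K = [uv]^{M!-1}$ and $[vu]^K = [vu]^{M!-1}$ in $M$, and the claim provides the missing equality.

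To prove the claim, fix rankers $\rho, \lambda$ with $|\rho|+|\lambda| \leq n$ and set $A = \alphabet(uv) = \alphabet(vu)$. The central observation is that on both words, an arbitrary prefix $\rho'$ of $\rho \Xinfty$ evaluates to $+\infty$ if and only if $\rho'$ ends with $\Xinfty$ or contains a letter outside $A$; otherwise it evaluates to a position within the first $|\rho'| \leq n$ copies of $uv$ (respectively $vu$), and symmetrically for prefixes of $\lambda \Yinfty$ and the last $n$ copies. Since $K \geq 2n+2$ keeps the first and last $n$ copies disjoint, the $\rho$-$\lambda$-compatibility of $(uv)^K$ and $(vu)^K$ follows by a routine case distinction on the $\pm\infty$ behaviour of prefix pairs. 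Whenever $\rho'$ and $\lambda'$ both evaluate to finite positions with $\rho' \leq \lambda'$, the middle factor of the $\rho'$-$\lambda'$-split contains at least one full copy of $uv$ (respectively $vu$), hence has alphabet $A$ on both sides; the remaining cases from the remark after Definition \ref{def:rhoLambdaSplit} (where at least one of $\rho', \lambda'$ reaches $\pm\infty$) leave the middle either empty or the whole word, and alphabet equality is then either immediate or vacuous. The count condition is essentially trivial: letters $a \notin A$ do not appear in either word, so all relevant counts are zero, and the boundary cases where the middle is empty reduce to the evident identity $|(uv)^K|_a = K \cdot |uv|_a = K \cdot |vu|_a = |(vu)^K|_a$.

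The principal difficulty is organisational rather than conceptual: one must carefully work through the seven cases from the remark after Definition \ref{def:rhoLambdaSplit} together with the possible $\pm\infty$ behaviours of the prefix pairs, while verifying that the ranker analysis above indeed makes $(uv)^K$ and $(vu)^K$ behave identically from both ends. The modular-arithmetic content of $\sim_{m, \rho, \lambda}$ plays essentially no role, since the letters subject to a modular-count condition are precisely those outside $A$, and neither word contains any such letter.
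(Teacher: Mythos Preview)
Your proof is correct and follows essentially the same strategy as the paper's: both arguments verify the equation $(xy)^{\omega-1}=(yx)^{\omega-1}$ by showing that sufficiently high powers of $uv$ and $vu$ are $\approx_{m,n}$-equivalent, because rankers of total length at most $n$ can only reach positions near the two ends, leaving a middle part whose alphabet is the full $\alphabet(uv)$. The only organisational difference is in the ``transport'' step: the paper picks the single exponent $K=(2n+1)M!-1$ up front, using $(xy)^{M!-1}=_M (xy)^{nM!}(xy)^{M!-1}(xy)^{nM!}$ directly, whereas you first prove the combinatorial claim for all $K\geq 2n+2$ and then separately choose $K\equiv M!-1$ modulo the relevant periods; both achieve the same end, with the paper's version avoiding the small detour through $\operatorname{lcm}(P_1,P_2)$.
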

    \begin{proof}
      By \autoref{lem:DAbEquation}, we have to show that the equation $(xy)^{\omega - 1} = (yx)^{\omega - 1}$ holds in $M = \Sigma^* / {\approx_{m, n}}$. This is equivalent to showing that, for any words $x, y \in \Sigma^*$,
      \begin{align*}
        (xy)^{M! - 1} =_M (xy)^{n \cdot M!} (xy)^{M! - 1} (xy)^{n \cdot M!} \approx_{m, n} (yx)^{n \cdot M!} (yx)^{M! - 1} (yx)^{n \cdot M!} =_M (yx)^{M! - 1}
      \end{align*}
      holds. Let $u = u_0 u_1 u_2$ and $v = v_0 v_1 v_2$ with
      \begin{align*}
        u_0 &= (xy)^{n \cdot M!} \text{,}    & v_0 &= (yx)^{n \cdot M!} \text{,} \\
        u_1 &= (xy)^{M! - 1}     \text{,}    & v_1 &= (yx)^{M! - 1}     \text{,} \\
        u_2 &= (xy)^{n \cdot M!} \text{ and} & v_2 &= (yx)^{n \cdot M!}
      \end{align*}
      and let $\rho$ be an $X$-ranker and $\lambda$ a $Y$-ranker such that $|\rho| + |\lambda| \leq n$ holds.
      
      Let $\rho'$ be an arbitrary prefix of $\rho\Xinfty$. There are only three possible cases: if $\rho'$ is empty, it ends at $-\infty$ in $u$ and in $v$. Otherwise, it can only end within $u_0$ for $u$ and within $v_0$ for $v$ or at $+\infty$ for both words. Similarly, a prefix $\lambda'$ of $\lambda\Yinfty$ can only end at $+\infty$ for both words, within $u_2$ for $u$ and $v_2$ for $v$ or at $-\infty$ for both words (which is the case if and only if $\lambda'$ is empty). This proves that $u$ and $v$ are $\rho$-$\lambda$-compatible.
      
      Furthermore, it also shows that there are only three cases for the $\rho'$-$\lambda'$-splits of $u$ and $v$: we can have 
      \begin{itemize}[noitemsep]
        \item the split $(u, \varepsilon, \varepsilon)$ for $u$ and the split $(v, \varepsilon, \varepsilon)$ for $v$,
        \item the split $(\varepsilon, \varepsilon, u)$ for $u$ and $(\varepsilon, \varepsilon, v)$ for $v$ or
        \item a split of $u$ whose middle part contains both $x$ and $y$ as an infix and a split of $v$ for which the same holds.
      \end{itemize}
      In the first two cases, we have that the alphabets of the middle parts are empty. Since $x$ appears as often in $u$ as it does in $v$ and since the same holds for $y$, both $u$ and $v$ contain each letter the exact same number of times and there is nothing more to show. In the last case, the middle parts contain all the letters that appear in $u$ and $v$ (so their alphabets, in particular, coincide) and there is nothing to show.
    \end{proof}
  
    Proving the other direction (every monoid in $\DAb$ is a quotient of a monoid of the form $\Sigma^* / {\approx_{m, n}}$) is more involved and requires quite a bit of combinatorics.
    \begin{proposition}\label{prop:DAbImpliesQuotient}
      Let $M \in \DAb$ and $\varphi: \Sigma^* \to M$ a homomorphism. There are $m, n \in \Np$ such that $u \approx_{m, n} v \implies \varphi(u) = \varphi(v)$ for all $u, v \in \Sigma^*$.
    \end{proposition}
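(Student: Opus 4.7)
The plan is to choose $m = M!$ (so that $m$ is a common multiple of the orders of all groups in $M$) and $n = 2|M|$. With these parameters, the $X$-ranker $\rho$ realising the $\mathcal{R}$-factorization of $u$ with respect to $\varphi$ and the $Y$-ranker $\lambda$ realising the $\mathcal{L}$-factorization of $u$ (guaranteed by \autoref{lem:rankerVisitingFactorizationPositions} and its left-right dual) each have length at most $|M|$, so $|\rho| + |\lambda| \le n$. The hypothesis $u \approx_{m,n} v$ therefore yields $u \sim_{m,\rho,\lambda} v$, which is the single relation I will exploit.

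From here I work block by block. Writing the $\mathcal{R}$-factorization of $u$ as $u = u_0 a_1 u_1 \dots a_r u_r$, I first transfer the factorization to $v$: the $\rho$-$\lambda$-compatibility forces the $\rho$-factorization of $v$ to have the shape $v = v_0 a_1 v_1 \dots a_r v_r$, and then \autoref{lem:rankerVisitingFactorizationPositions} (applied to $v$, using that the alphabet conditions force $a_i \notin \alphabet v_{i-1}$) identifies this with $v$'s own $\mathcal{R}$-factorization; the dual argument handles the $\mathcal{L}$-side. Next, for each index $i$, I choose the prefix $\rho' = X_{a_1}\dots X_{a_i}$ and a suitable prefix $\lambda'$ of $\lambda\Yinfty$ whose position lies strictly to the right of $\rho'$ in both words, so that the corresponding $\rho'$-$\lambda'$-split isolates the single block $u_i$ (respectively $v_i$) in the middle part. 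The conditions of \autoref{def:singleRankerPair} then give $\alphabet u_i = \alphabet v_i$ and $|u_i|_a \equiv |v_i|_a \bmod m$ for every $a \notin \alphabet u_i$, and symmetrically for the $\mathcal{L}$-blocks and the outer blocks.

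I then collapse the contribution of each block to a normal form using the results of Section~2. At each $\mathcal{R}$-descent we have $\varphi(u_0 a_1 \dots a_i) \R \varphi(u_0 a_1 \dots a_i u_i)$, so by \autoref{fct:permuteR} the letters of $u_i$ may be permuted arbitrarily without changing $\varphi$, and by \autoref{fct:insertAM} any $a^m$-factor inside $u_i$ may be dropped (for $a \in \alphabet u_i$). Hence the contribution of $u_i$ depends only on $\alphabet u_i$ and the residues $(|u_i|_a \bmod m)_{a \in \alphabet u_i}$, and these invariants coincide with those of $v_i$. Applying this left-to-right along the $\mathcal{R}$-factorization, right-to-left along the $\mathcal{L}$-factorization, and handling the ``middle block'' (sandwiched between the rightmost $\mathcal{R}$-descent and the leftmost $\mathcal{L}$-descent, where the relevant element is $\H$-equivalent to an idempotent) via \autoref{fct:permuteInHClass}, I obtain $\varphi(u) = \varphi(v)$.

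The main obstacle is the second step: the relation $u \sim_{m,\rho,\lambda} v$ gives only one alphabet/count condition per prefix pair $(\rho',\lambda')$, so extracting the block-by-block equalities $\alphabet u_i = \alphabet v_i$ and $|u_i|_a \equiv |v_i|_a \bmod m$ requires a careful enumeration of prefix pairs whose splits isolate each individual inter-descent block. The middle region, where the tails of the $\mathcal{R}$- and $\mathcal{L}$-factorizations overlap, is the subtlest point and is precisely where the Abelian structure of the regular $\mathcal{D}$-class (through \autoref{fct:permuteInHClass}) must be invoked to stitch the local equalities together into a global one.
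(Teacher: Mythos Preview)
There is a genuine gap in the central step. Your plan hinges on ``isolating the single block $u_i$ (respectively $v_i$) in the middle part'' by choosing a prefix $\lambda'$ of $\lambda\Yinfty$ that lands at the position of $a_{i+1}$. But the positions reached by prefixes of $\lambda$ are the $\mathcal{L}$-descent positions of $u$ (in your setup), and there is no reason these should coincide with the $\mathcal{R}$-descent positions $a_1,\dots,a_r$. In general no prefix of $\lambda\Yinfty$ ends at $a_{i+1}$, so the middle part of a $\rho'$-$\lambda'$-split is never a single $u_i$: it is always bounded by one $\rho$-position on the left and one $\lambda$-position on the right, and these two families of positions interleave in an uncontrolled way.

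Even granting the isolation, the relation $u \sim_{m,\rho,\lambda} v$ would not yield what you need. \autoref{def:singleRankerPair} gives $\alphabet u_1 = \alphabet v_1$ for the middle parts and count congruences $|u_0|_a \equiv |v_0|_a \bmod m$ only for letters $a$ \emph{outside} that common alphabet. So for $a \in \alphabet u_i$ you obtain no information about $|u_i|_a$ versus $|v_i|_a$ from that split, yet the residues $(|u_i|_a \bmod m)_{a \in \alphabet u_i}$ are exactly the data your normalisation via \autoref{fct:permuteR} and \autoref{fct:insertAM} requires. No ``careful enumeration of prefix pairs'' recovers these residues block by block: every split whose left part is $u_0 a_1 \dots u_{i-1}$ has a middle part beginning with $u_i$ and therefore containing $a$.

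The paper's proof is organised quite differently. It takes $\rho$ from the $\mathcal{R}$-factorization of $u$ but $\lambda$ from the $\mathcal{L}$-factorization of~$v$ (so that the $\{\rho,\lambda\}$-factorization of $u$ refines its $\mathcal{R}$-factorization while that of $v$ refines its $\mathcal{L}$-factorization), and it works with this common refinement $u = x_0 c_1 x_1 \dots c_d x_d$, $v = y_0 c_1 y_1 \dots c_d y_d$. It never establishes that $|x_k|_a \equiv |y_k|_a \bmod m$ for all $a$; indeed this can fail. Instead it introduces ``$u$-like'' and ``$v$-like'' words and \emph{transforms} the pair $(u,v)$: at the first index $\mu$ where the letter counts differ, it either deletes or inserts an $a^m$-block (when the discrepancy is a multiple of $m$) or moves the surplus $a^\ell$ into an adjacent $\{\rho,\lambda\}$-block, checking in each case that the move stays within a single $\mathcal{R}$-block on the $u$-side or a single $\mathcal{L}$-block on the $v$-side so that \autoref{fct:permuteR} and \autoref{fct:insertAM} apply. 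The lengthy case analysis (whether the neighbouring position is a $\rho$- or a $\lambda$-position, whether $a$ occurs in the middle part of the relevant split, which side carries the surplus) is precisely the combinatorial work that your sketch leaves to ``a careful enumeration'', and it does not reduce to a block-by-block equality of invariants.
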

    \begin{proof}
      Let $R$ be the number of $\mathcal{R}$-classes of $M$ and let $L$ be the number of its $\mathcal{L}$-classes. Define $n = R + L$ and let $m$ be the least common multiple of the orders of the groups appearing in the regular $\mathcal{D}$-classes of $M$ (we could also chose $m = |M|!$).

      Now, assume that we have $u, v \in \Sigma^*$ with $u \approx_{m, n} v$. By \autoref{lem:rankerVisitingFactorizationPositions}, there is an $X$-ranker $\rho$ that visits exactly the positions of the $\mathcal{R}$-factorization of $u$ and a $Y$-ranker $\lambda$ that visits exactly the positions of the $\mathcal{L}$-factorizations of $v$. Note that we have $|\rho| + 1 \leq R$ and $|\lambda| + 1 \leq L$ and, thus, in particular, $|\rho| + |\lambda| \leq R + L = n$. Therefore, $u \approx_{m, n} v$, implies $u \sim_{m, \rho, \lambda} v$.
      
      We say a word $u' \in \Sigma^*$ is \emph{$u$-like} if we have
      \begin{enumerate}[label=(u-\Alph*), noitemsep, leftmargin=*]
        \item\label{itm:rhoLambdaSim} $u' \sim_{m, \rho, \lambda} u$ (and, thus, $u' \sim_{m, \rho, \lambda} v$) and,
        \item\label{itm:rhoRDescent} if $\rho'$ is a prefix of $\rho \Xinfty$ and $(u_0', u_1')$ is the $\rho$-split of $u'$ and $(u_0, u_1)$ is the $\rho$-split of $u$, we have $\varphi(u_0') = \varphi(u_0)$.
      \end{enumerate}
      Note that, by using $\rho \Xinfty$ as the prefix $\rho'$, we obtain, in particular, $\varphi(u') = \varphi(u)$. In fact, we obtain from \autoref{itm:rhoRDescent} that $\rho$ visits exactly the points of the $\mathcal{R}$-factorization of $u'$ and that, therefore, the $\{ \rho, \lambda \}$-factorization of $u'$ is a sub-factorization of its $\mathcal{R}$-factorization. Clearly, $u$ itself is $u$-like and we could replace $u$ in the above definition by any other $u$-like word while still obtaining the same class of words.
      
      We also introduce a symmetric version for $v$: a word $v' \in \Sigma^*$ is \emph{$v$-like} if we have
      \begin{enumerate}[label=(v-\Alph*), noitemsep, leftmargin=*]
        \item\label{itm:vLike:rhoLambdaSim} $v' \sim_{m, \rho, \lambda} v$ (and, thus, $v' \sim_{m, \rho, \lambda} u$) and,
        \item\label{itm:vLike:rhoRDescent} if $\lambda'$ is a prefix of $\lambda \Yinfty$ and $(u_0', u_1')$ is the $\lambda$-split of $v'$ and $(u_0, u_1)$ is the $\lambda$-split of $v$, we have $\varphi(u_1') = \varphi(u_1)$.
      \end{enumerate}
    
      The overall idea of the remaining proof is to start with $u$ and $v$ as a pair of a $u$-like and a $v$-like word and to successively apply transformations to them in order to obtain new pairs of $u$-like and $v$-like words where, in a certain sense, the $u$-like word and the $v$-like word become more and more similar. In the end, we want to obtain a word that is both $u$-like and $v$-like. Thus, it has in particular the same image as $u$ and the same image as $v$ under $\varphi$, which shows $\varphi(u) = \varphi(v)$.
      
      To describe the transformations we will use in this process, let $u'$ be a $u$-like word and let $v'$ be a $v$-like word (in the first step, we use $u' = u$ and $v' = v$). Furthermore, let
      \begin{equation}
        u' = x_0' c_1 x_1' \dots c_d x_d' \quad\text{and}\quad v' = y_0' c_1 y_1' \dots c_d y_d'\tag{$\ast$}\label{eqn:rhoLambdaFactorization1}
      \end{equation}
      be the $\{ \rho, \lambda \}$-factorizations of $u'$ and $v'$. Note here that $u'$ and $v'$ are both $\rho$-$\lambda$-compatible to $u$ as well as to $v$. Note also that the given factorization for $u'$ is a sub-factorization of its $\mathcal{R}$-factorization and that the given factorization for $v'$ is a sub-factorization of its $\mathcal{L}$-factorization.
      
      With this in mind, it is not surprising that we may permute the factors of these factorizations. In fact, this is the first transformation that we will use below.
      \begin{lemma}\label{fct:permutationRemainsULike}
        Let $\tilde{x}_k'$ be a permutation of $x_k'$. Then
        \[
          u'' = x_0' c_1 x_1' \dots c_{k - 1} x_{k - 1}' \  c_k \  \tilde{x}_k' \  c_{k + 1} x_{k + 1}' \dots c_d x_d'
        \]
        remains $u$-like.
        
        In the same way, if $\tilde{y}_k'$ is a permutation of $y_k'$, then
        \[
          v'' = y_0' c_1 y_1' \dots c_{k - 1} y_{k - 1}' \  c_k \  \tilde{y}_k' \  c_{k + 1} y_{k + 1}' \dots c_d y_d'
        \]
        remains $v$-like
      \end{lemma}
      \begin{proof}
        The two statements are left-right symmetric and we only show the one for $u''$.
        
        We prove \autoref{itm:rhoLambdaSim} by showing $u'' \sim_{m, \rho, \lambda} u'$ (which suffices since $u'$ is $u$-like and we, thus, have $u' \sim_{m, \rho, \lambda} u$). From \autoref{fct:compatibleIfSmallerAlphabet}, we obtain that $u'$ and $u''$ are $\rho$-$\lambda$-compatible (and that $u''$ is already stated in its $\{ \rho, \lambda \}$-factorization above).
        
        Next, let $\rho'$ be a prefix of $\rho\Xinfty$ and note that $\rho'$ must end at the position of some $c_i$ in $u'$ and in $u''$ (or at $\pm \infty$ in both words). Furthermore, let $\lambda'$ be a prefix of $\lambda\Yinfty$ with $\rho'(u') \leq \lambda'(u')$ (and, thus, $\rho'(u'') \leq \lambda(u'')$). Again, $\lambda'$ must end at the position of some $c_{j + 1}$ in $u'$ and in $u''$ (or at $\pm \infty$; but, otherwise, we are in a situation similar to the one depicted in \autoref{fig:rhoLambdaSplits}). Since passing from $x_k'$ to $\tilde{x}_k'$ neither changed the alphabet nor the number of $a$s (and since the other factors of the $\{ \rho, \lambda \}$-factorization have not changed at all), we have $\alphabet u_1'' = \alphabet u_1'$ and
        \[
          \forall a \not\in \alphabet u_1'' = \alphabet u_1': |u_0''|_a \equiv |u_0'|_a \bmod m \text{ and } |u_2''|_a \equiv |u_2'|_a \bmod m
        \]
        for the $\rho'$-$\lambda'$-split $(u_0'', u_1'', u_2'')$ of $u''$ and the $\rho'$-$\lambda'$-split $(u_0', u_1', u_2')$ of $u'$. This concludes the proof for \autoref{itm:rhoLambdaSim}.
        
        It remains to show that the left part of the $\rho'$-factorization of $u''$ has the same image under $\varphi$ as the left part of the $\rho'$-factorization of $u'$ (as it then also has the same image as the left part of the $\rho'$-factorization of $u$). This is obvious if $\rho'$ ends at some $c_i$ in $u''$ up to $c_k$ (or at $-\infty$). Thus, suppose that $\rho'$ ends at $c_i$ in $u''$ for some $i > k$ (or at $+\infty$).
        
        Since $u'$ is $u$-like, its $\{ \rho, \lambda \}$-factorization is a sub-factorization of its $\mathcal{R}$-factorization. Thus, we have
        \[
          \varphi(x_0' c_1 x_1' \dots c_{k - 1} x_{k - 1}' \  c_k) \R \varphi(x_0' c_1 x_1' \dots c_{k - 1} x_{k - 1}' \  c_k \  x_k') \text{.}
        \]
        By \autoref{fct:permuteR}, this implies
        \[
          \varphi(x_0' c_1 x_1' \dots c_{k - 1} x_{k - 1}' \  c_k \  x_k') = \varphi(x_0' c_1 x_1' \dots c_{k - 1} x_{k - 1}' \  c_k \  \tilde{x}_k')
        \]
        and, thus,
        \[
          \varphi(x_0' c_1 x_1' \dots c_{i - 1} x_{i - 1}') = \varphi(x_0' c_1 x_1' \dots c_{k - 1} x_{k - 1}' \  c_k \  \tilde{x}_k' c_{k + 1} x_{k + 1}' \dots c_{i - 1} x_{i - 1}') \text{.}\qedhere
        \]
      \end{proof}
      \noindent{}In the following, we will often use \autoref{fct:permutationRemainsULike} implicitly.
      
      By \autoref{fct:permutationRemainsULike}, we are done as soon as we have $|x_k'|_a = |y_k'|_a$ for all $0 \leq k \leq d$ and all $a \in \Sigma$. Therefore, let $\mu$ be minimal\footnote{So, in particular, we have $|x_k'|_b = |y_k'|_b$ for all $b \in \Sigma$ and $0 \leq k < \mu$.} such that there is some $a \in \Sigma$ with $|x_k'|_a \neq |y_k'|_a$.
      
      In this situation, we can state the following fact, which we will later use in multiple cases.
      \begin{lemma}\label{fct:droppingAMRemainsULike}
        If we have $x_\mu' = x_\mu'' a^{\beta m}$ (up to permutation) with $|x_\mu''|_a = |y_\mu'|_a$ for some $\beta > 0$ (i.\,e.\ if we have more $a$s in $x_\mu'$ than in $y_\mu'$ and the difference is a multiple of $m$), then
        \[
          u'' = x_0' c_1 x_1' \dots c_{\mu - 1} x_{\mu - 1}' \ c_\mu \ x_\mu'' \ c_{\mu + 1} x_{\mu + 1}' \dots c_d x_d'
        \]
        remains $u$-like.
        
        In the same way, if we have $y_\mu' = y_\mu'' a^{\beta m}$ (up to permutation) with $|y_\mu''|_a = |x_\mu'|_a$ for some $\beta > 0$ (i.\,e.\ if we have more $a$s in $y_\mu'$ than in $x_\mu'$ and the difference is a multiple of $m$), then
        \[
          v'' = y_0' c_1 y_1' \dots c_{\mu - 1} y_{\mu - 1}' \ c_\mu \ y_\mu'' \ c_{\mu + 1} y_{\mu + 1}' \dots c_d y_d'
        \]
        remains $v$-like.
      \end{lemma}
      \begin{proof}
        The statement for $u''$ and the one for $v''$ are not entirely symmetric this time. However, we still only show the statement for $u''$ as the other one can be shown using the same arguments.
        
        We show \autoref{itm:rhoLambdaSim} by showing $u'' \sim_{m, \rho, \lambda} u'$ (using $u' \sim_{m, \rho, \lambda} v'$). Since we have $\alphabet x_\mu'' \subseteq \alphabet x_\mu'$, we obtain that $u''$ is $\rho$-$\lambda$-compatible to $u'$ (and, thus, also to $v'$) and that $u''$ is stated above in its $\{ \rho, \lambda \}$-factorization by \autoref{fct:compatibleIfSmallerAlphabet}. In particular, we have that a prefix $\rho'$ of $\rho \Xinfty$ ends at some $c_i$ (or at $\pm \infty$) in the above factorization of $u''$, it also ends at the same $c_i$ (or also at $\pm \infty$) in the $\{ \rho, \lambda \}$-factorizations (\ref{eqn:rhoLambdaFactorization1}) of $u'$ and $v'$ and the same holds for prefixes $\lambda'$ of $\lambda \Yinfty$.
        
        Now, consider a pair $\rho'$ and $\lambda'$ of such prefixes with $\rho'(u') \leq \lambda'(u')$ (and, thus, also $\rho'(u'') \leq \lambda'(u'')$ and $\rho'(v') \leq \lambda'(v')$). Let $(u_0'', u_1'', u_2'')$ be the $\rho'$-$\lambda'$-split of $u''$, let $(u_0', u_1', u_2')$ be the $\rho'$-$\lambda'$-split of $u'$ and let $(v_0', v_1', v_2')$ be the $\rho'$-$\lambda'$-split of $v'$.
        
        First, consider the case that $\lambda'$ (and, thus, also $\rho'$) ends to the left of $x_\mu''$ in $u''$ (see \autoref{sfig:dropAMLeft}). Note that, in this case, $\lambda'$ also ends to the left of $x_\mu'$ in $u'$. We have $u_0'' = u_0'$, $u_1'' = u_1'$ and, thus, in particular, $\alphabet u_1'' = \alphabet u_1'$. Furthermore, we have $|u_2''|_a = |u_2'|_a - \beta m \equiv |u_2'|_a \bmod m$ (as well as $|u_2''|_b = |u_2'|_b$ for all $b \neq a$), which immediately proves this case. The case that $\rho'$ (and, thus, also $\lambda'$) ends to the right of $x_\mu''$ in $u''$ is symmetric.
        
        As $x_\mu''$ is a factor of the $\{ \rho, \lambda \}$-factorization of $u''$, it remains the case that $\rho'$ ends to the left of $x_\mu''$ in $u''$ and $\lambda'$ ends to its right. Note that this does not only imply that $\rho'$ ends to the left of $x_\mu'$ in $u'$ and that $\lambda'$ ends to its right but also that $\rho'$ ends to the left of $y_\mu'$ in $v'$ and $\lambda'$ ends to its right. The situation is depicted in \autoref{sfig:dropAMLeftAndRightLambda} and \autoref{sfig:dropAMLeftAndRightRho} but note that $\rho'(u'')$, $\rho'(u')$ and $\rho'(v)$ could also all be $-\infty$ and that $\lambda'(u'')$, $\lambda'(u')$ and $\lambda'(v)$ could all be $+\infty$.
        This means that there are $0 \leq i \leq \mu$ and $\mu \leq j \leq d$ such that $u_0' = x_0' c_1 x_1' \dots c_{i - 1} x_{i - 1}'$, $u_1' = x_{i}' c_{i + 1} x_{i + 1}' \dots c_j x_j'$ and $u_2' = x_{j + 1} c_{j + 2}' x_{j + 2}' \dots c_d x_d'$. The only difference between $u_1''$ and $u_1'$ is that, in $u_1''$, one of the factors has less $a$s (because we dropped $a^{\beta m}$). This means that we only have to show $a \in \alphabet u_1''$ (as this implies $\alphabet u_1'' = \alphabet u_1'$ and the statement about having the same number of letters modulo $m$ in the respective left parts and the respective right parts of the $\rho'$-$\lambda'$-factorizations of $u''$ and $u'$ already holds since we have $u_0'' = u_0'$ and $u_2'' = u_2'$).

        To do this, first consider the case that there is some prefix $\lambda_0$ of $\lambda$ such that $\lambda_0$ ends directly to the right of $x_\mu''$ in $u''$ (possibly at $+\infty$) and that it, thus, also ends directly to the right of $x_\mu'$ in $u'$ and directly to the right of $y_\mu'$ in $v'$ (see \autoref{sfig:dropAMLeftAndRightLambda}).
        Consider the $\rho'$-$\lambda_0$-splits $(\hat{u}_0'', \hat{u}_1'', \hat{u}_2'')$ of $u''$, $(\hat{u}_0', \hat{u}_1', \hat{u}_2')$ of $u'$ and $(\hat{v}_0', \hat{v}_1', \hat{v}_2')$ of $v'$. We have $\hat{u}_1' = x_i' c_{i + 1} x_{i + 1}' \dots c_\mu x_\mu'$ (with $x_\mu' = x_\mu'' a^{\beta m}$) and (thus) $\hat{u}_1' = \hat{u}_1'' a^{\beta m}$. We also have $\hat{v}_1' = y_i' c_{i + 1} y_{i + 1}' \dots c_\mu y_\mu'$. There is an $a$ in $\alphabet x_\mu' \subseteq \alphabet \hat{u}_1' = \alphabet \hat{v}_1'$ (where the last equality is due to $u' \sim_{m, \rho, \lambda} v'$). Since we have $|x_k'|_a = |y_k'|_a$ for all $k < \mu$ (by the choice of $\mu$) and also $|x_\mu''|_a = |y_\mu'|_a$, we obtain $|\hat{u}_1''|_a = |\hat{v}_1'|_a$ and, thus, $a \in \alphabet \hat{u}_1'' \subseteq \alphabet u_1''$.

        The second case is that there is no prefix of $\lambda$ ending directly to the right of $x_\mu''$ in $u''$ (not even the empty one), which means that there must be a prefix $\rho_0$ of $\rho$ which ends directly to the right of $x_\mu''$ in $u''$ (see \autoref{sfig:dropAMLeftAndRightRho}). Thus, $\rho_0$ also ends directly to the right of $x_\mu'$ in $u'$ and directly to the right of $y_\mu'$ in $v'$. Note that, this time, we have $\rho_0(u'') < \lambda'(u'') \leq +\infty$. Thus, $\rho_0$ ends at the position of $c_{\mu + 1}$ (in all words in question). We may factorize $u_1'' = \hat{u}_1'' c_{\mu + 1} \hat{u}_2''$ (where $c_{\mu + 1}$ belongs to the position reached by $\rho_0$) and, similarly, $u_1' = \hat{u}_1' c_{\mu + 1} \hat{u}_2'$ (and, thus, $\hat{u}_1' = \hat{u}_1'' a^{\beta m}$) as well as $v_1' = \hat{v}_1' c_{\mu + 1} \hat{v}_2'$. We are done if there is an $a$ in $\hat{u}_2'' = \hat{u}_2'$. So, assume $a \not\in \alphabet \hat{u}_2'$ and consider the $\rho_0$-$\lambda'$-splits of $u'$ and $v'$, whose middle parts are $\hat{u}_2'$ and $\hat{v}_2'$, respectively. Since we have $u' \sim_{m, \rho, \lambda} v'$, we also have $a \not\in \alphabet \hat{v}_2' = \alphabet \hat{u}_2'$. Furthermore, we have $c_{\mu + 1} \neq a$ since, otherwise, $\rho_0$ would already end at $a^{\beta m}$ in $u'$. However, we have $a \in \alphabet x_\mu'' a^{\beta m} \subseteq \alphabet u_1'$ and $\alphabet u_1' = \alphabet v_1'$ (because of $u' \sim_{m, \rho, \lambda} v')$. Therefore, we have $a \in \alphabet \hat{v}_1'$ and, thus, $a \in \alphabet \hat{u}_1'' \subseteq \alphabet u_1''$ (as we have $|\hat{u}_1''|_a = |\hat{v}_1'|_a$ just like in the case above).
        
        It remains to show \autoref{itm:rhoRDescent} and this part is very similar to the corresponding part of the proof for \autoref{fct:permutationRemainsULike}.
        
        We have
        \[
          \varphi(x_0' c_1 x_1' \dots c_{\mu - 1} x_{\mu - 1}' \  c_k) \R \varphi(x_0' c_1 x_1' \dots c_{\mu - 1} x_{\mu - 1}' \  c_\mu \  x_\mu')
        \]
        and, thus, by \autoref{fct:insertAM}:
        \[
          \varphi(x_0' c_1 x_1' \dots c_{\mu - 1} x_{\mu - 1}' \  c_\mu \  x_\mu') = \varphi(x_0' c_1 x_1' \dots c_{\mu - 1} x_{\mu - 1}' \  c_\mu \  x_\mu'')
        \]
        This shows that for any prefix $\rho'$ of $\rho\Xinfty$ the images of the left parts of the $\rho'$ splits of $u''$ and $u'$ coincide (which implies that they also coincide with the image of the left part of the $\rho'$ split of $u$).
      \end{proof}
      \begin{figure}\centering
        \begin{subfigure}{\linewidth}\centering
          \begin{tikzpicture}[baseline=(uLabel.base)]
            \node (uLabel) {$u'' = {}$};
            \matrix [right=0cm of uLabel, rectangle, draw, matrix of math nodes, ampersand replacement=\&, inner sep=2pt, text height=.8em, text depth=.2em] (u) {
              |[minimum width=3cm, align=center]| \dots \& x_\mu'' \textcolor{gray}{a^{\beta m}} \& |[minimum width=1cm, align=center]| \dots \\
            };
            \node[right=0pt of u, color=gray] {${} = u'$};
            \draw ($(u.north -| u-1-1.north east)$) -- ($(u.south -| u-1-1.south east)$);
            \draw ($(u.north -| u-1-3.north west)$) -- ($(u.south -| u-1-3.south west)$);
            
            \path[pattern=north west lines] ($(u.north west)+(0pt,1mm+2pt)$) rectangle ($(u.north -| u-1-1.north east)+(0pt,2pt)$);

            %u Rankers
            \draw[->] ($(u.north west) + (-0.75cm, 0.85cm+2pt)$) .. controls +(0.5cm, 0cm) and +(0cm,0.75cm) .. node[below, pos=0] {$\rho'$} ($(u.north west) + (1.25cm, 1mm+4pt)$);
            \draw[->] ($(u.north east) + (0.75cm, 0.85cm+2pt)$) .. controls +(-0.5cm, 0cm) and +(0cm,0.75cm) .. node[below, pos=0] {$\lambda'$} ($(u.north west) + (2.5cm, 1mm+4pt)$);
            
            \draw[decorate, decoration={brace}] ($(u.south west) + (1.25cm-2pt, -2pt)$) -- node[below, yshift=-2pt] {$\scriptstyle u_0''\textcolor{gray}{= u_0'}$} ($(u.south west) + (0pt, -2pt)$);
            \draw[decorate, decoration={brace}] ($(u.south west) + (2.5cm-2pt, -2pt)$) -- node[below, yshift=-2pt] {$\scriptstyle u_1''\textcolor{gray}{= u_1'}$} ($(u.south west) + (1.25cm+2pt, -2pt)$);
            \draw[decorate, decoration={brace}] ($(u.south east)+(0pt,-2pt)$) -- node[below, yshift=-2pt] {$\scriptstyle u_2''\textcolor{gray}{/ u_2'}$} ($(u.south west) + (2.5cm+2pt, -2pt)$);

          \end{tikzpicture}
          \caption{$\lambda'(u'')$ is to the left of $x_\mu''$ in $u''$}\label{sfig:dropAMLeft}
        \end{subfigure}\\
        \begin{subfigure}{\linewidth}\centering
          \begin{tikzpicture}[baseline=(uLabel.base)]
            \node (uLabel) {$u'' = {}$};
            \matrix [right=0cm of uLabel, rectangle, draw, matrix of math nodes, ampersand replacement=\&, inner sep=2pt, text height=.8em, text depth=.2em] (u) {
              |[minimum width=1cm, align=center]| \dots \& c_i \& |[minimum width=1cm, align=center]| \dots \& x_\mu'' \textcolor{gray}{a^{\beta m}} \& c_{\mu + 1} \& |[minimum width=2cm, align=center]| \dots \\
            };
            \node[right=0pt of u, color=gray] {${} = u'$};
            \foreach \i in {1,...,5} {
              \draw ($(u.north -| u-1-\i.north east)$) -- ($(u.south -| u-1-\i.south east)$);
            };
            
            \path[pattern=north west lines] ($(u.north west)+(0pt,1mm+2pt)$) rectangle ($(u.north -| u-1-3.north east)+(0pt,2pt)$);
            \path[pattern=north east lines] ($(u.north -| u-1-5.north west)+(0pt,1mm+2pt)$) rectangle ($(u.north east)+(0pt,2pt)$);
            
            %u Rankers
            \draw[->] ($(u.north west) + (-0.75cm, 0.85cm+2pt)$) .. controls +(0.5cm, 0cm) and +(0cm,0.75cm) .. node[below, pos=0] {$\rho'$} ($(u.north west) + (1.25cm, 1mm+4pt)$);
            \draw[->] ($(u.north east) + (0.75cm, 0.85cm+2pt)$) .. controls +(-0.5cm, 0cm) and +(0cm,0.75cm) .. node[below, pos=0] {$\lambda'$} ($(u.north east) + (-1.5cm, 1mm+4pt)$);
            
            \draw[->] ($(u.south east) + (0pt, -0.75cm)$) .. controls +(-0.5cm, 0cm) and +(0cm,-0.75cm) .. node[right, pos=0] {$\lambda_0$} ($(u.south -| u-1-5.south) + (0pt, -2pt)$);
            
            % u braces
            \draw[decorate, decoration={brace}] ($(u.north west) + (0pt, 1mm+4pt)$) -- node[above left, xshift=1ex, yshift=2pt] {$\scriptstyle u_0'' \textcolor{gray}{{}= u_0'}$} ($(u.north -| u-1-2.north west) + (0pt, 1mm+4pt)$);
            \draw[decorate, decoration={brace}] ($(u.north -| u-1-2.north east) + (0pt, 1mm+4pt)$) -- node[above, yshift=2pt] {$\scriptstyle u_1'' \textcolor{gray}{/ u_1'}$} ($(u.north east) + (-1.5cm-1ex, 1mm+4pt)$);
            \draw[decorate, decoration={brace}] ($(u.north east) + (-1.5cm+1ex, 1mm+4pt)$) -- node[above right, xshift=-1ex, yshift=2pt] {$\scriptstyle u_2'' \textcolor{gray}{{}= u_2'}$} ($(u.north east) + (0pt, 1mm+4pt)$);
            
            % \hat{u} braces
            \draw[decorate, decoration={brace}] ($(u.south -| u-1-1.south east) + (0pt, -2pt)$) -- node[below, yshift=-2pt] {$\scriptstyle \hat{u}_0''\textcolor{gray}{= \hat{u}_0'}$} ($(u.south west) + (0pt, -2pt)$);
            \draw[decorate, decoration={brace}] ($(u.south -| u-1-4.south east) + (0pt, -2pt)$) -- node[below, yshift=-2pt] {$\scriptstyle \hat{u}_1''\textcolor{gray}{/ \hat{u}_1'}$} ($(u.south -| u-1-2.south east) + (0pt, -2pt)$);
            \draw[decorate, decoration={brace}] ($(u.south east)+(0pt,-2pt)$) -- node[below, yshift=-2pt] {$\scriptstyle \hat{u}_2''\textcolor{gray}{= \hat{u}_2'}$} ($(u.south -| u-1-5.south east) + (0pt, -2pt)$);
            
            \path[inner sep=2pt]
              let
                \p1=(u-1-4.east),\p2=(u-1-4.west),
                \n1={\x1-\x2-2*\pgfkeysvalueof{/pgf/inner xsep}-\pgflinewidth}
              in
              node[below right=1.5cm and 0cm of u.south west, rectangle, draw, matrix of math nodes, ampersand replacement=\&, text height=.8em, text depth=.2em] (v) {
                |[minimum width=1cm, align=center]| \dots \& c_i \& |[minimum width=1cm, align=center]| \dots \& |[text width=\n1, align=center]| y_\mu' \& c_{\mu + 1} \& |[minimum width=2cm, align=center]| \dots \\
              };
            \node[left=0pt of v] {$v' = {}$};
            \foreach \i in {1,...,5} {
              \draw ($(v.north -| v-1-\i.north east)$) -- ($(v.south -| v-1-\i.south east)$);
            };
            
            \path[pattern=north east lines] ($(v.south west)-(0pt,1mm+2pt)$) rectangle ($(v.south -| v-1-3.south east)-(0pt,2pt)$);

            \path[pattern=north west lines] ($(v.south -| v-1-5.south west)-(0pt,1mm+2pt)$) rectangle ($(v.south east)-(0pt,2pt)$);
            
            %v Rankers
            \draw[->] ($(v.south west) + (-0.75cm, -0.85cm-2pt)$) .. controls +(0.5cm, 0cm) and +(0cm,-0.75cm) .. node[above, pos=0] {$\rho'$} ($(v.south west) + (1.25cm, -1mm-4pt)$);
            \draw[->] ($(v.south east) + (0.75cm, -0.85cm-2pt)$) .. controls +(-0.5cm, 0cm) and +(0cm,-0.75cm) .. node[above, pos=0] {$\lambda'$} ($(v.south east) + (-1.5cm, -1mm-4pt)$);
            
            \draw[->] ($(u.south east) + (0pt, -0.75cm)$) .. controls +(-0.5cm, 0cm) and +(0cm,0.75cm) .. ($(v.north -| v-1-5.north) + (0pt, 2pt)$);
            
            % v braces
            \draw[decorate, decoration={brace}] ($(v.south -| v-1-2.south west) + (0pt, -1mm-4pt)$) -- node[below, yshift=-2pt] {$\scriptstyle v_0'$} ($(v.south west) + (0pt, -1mm-4pt)$);
            \draw[decorate, decoration={brace}] ($(v.south east) + (-1.5cm-1ex, -1mm-4pt)$) -- node[below, yshift=-2pt] {$\scriptstyle v_1'$} ($(v.south -| v-1-2.south east) + (0pt, -1mm-4pt)$);
            \draw[decorate, decoration={brace}] ($(v.south east) + (0pt, -1mm-4pt)$) -- node[below, yshift=-2pt] {$\scriptstyle v_2'$} ($(v.south east) + (-1.5cm+1ex, -1mm-4pt)$);
            
            % \hat{v} braces
            \draw[decorate, decoration={brace}] ($(v.north west) + (0pt, 2pt)$) -- node[above, yshift=2pt] {$\scriptstyle \hat{v}_0'$} ($(v.north -| v-1-2.north west) + (0pt, 2pt)$);
            \draw[decorate, decoration={brace}] ($(v.north -| v-1-2.north east) + (0pt, 2pt)$) -- node[above, yshift=2pt] {$\scriptstyle \hat{v}_1'$} ($(v.north -| v-1-5.north west) + (0pt, 2pt)$);
            \draw[decorate, decoration={brace}] ($(v.north -| v-1-5.north east) + (0pt, 2pt)$) -- node[above, yshift=2pt] {$\scriptstyle \hat{v}_2'$} ($(v.north east)+(0pt,2pt)$);
          \end{tikzpicture}
          \caption{$\rho'(u'')$ is to the left of $x_\mu''$ in $u''$ and $\lambda'(u'')$ is to its right; $\lambda_0$ ends directly to the right of $x_\mu''$}\label{sfig:dropAMLeftAndRightLambda}
        \end{subfigure}\\
        \begin{subfigure}{\linewidth}\centering
          \begin{tikzpicture}[baseline=(uLabel.base)]
            \node (uLabel) {$u'' = {}$};
            \matrix [right=0cm of uLabel, rectangle, draw, matrix of math nodes, ampersand replacement=\&, inner sep=2pt, text height=.8em, text depth=.2em] (u) {
              |[minimum width=1cm, align=center]| \dots \& c_i \& |[minimum width=1cm, align=center]| \dots \& x_\mu'' \textcolor{gray}{a^{\beta m}} \& c_{\mu + 1} \& |[minimum width=1cm, align=center]| \dots \& c_{j + 1} \& |[minimum width=1cm, align=center]| \dots \\
            };
            \node[right=0pt of u, color=gray] {${} = u'$};
            \foreach \i in {1,...,7} {
              \draw ($(u.north -| u-1-\i.north east)$) -- ($(u.south -| u-1-\i.south east)$);
            };
            
            \path[pattern=north west lines] ($(u.north west)+(0pt,1mm+2pt)$) rectangle ($(u.north -| u-1-3.north east)+(0pt,2pt)$);
            \path[pattern=north east lines] ($(u.north -| u-1-6.north west)+(0pt,1mm+2pt)$) rectangle ($(u.north east)+(0pt,2pt)$);
            
            %u Rankers
            \draw[->] ($(u.north west) + (-0.75cm, 0.85cm+2pt)$) .. controls +(0.5cm, 0cm) and +(0cm,0.75cm) .. node[below, pos=0] {$\rho'$} ($(u.north west) + (1.25cm, 1mm+4pt)$);
            \draw[->] ($(u.north east) + (0.75cm, 0.85cm+2pt)$) .. controls +(-0.5cm, 0cm) and +(0cm,0.75cm) .. node[below, pos=0] {$\lambda'$} ($(u.north -| u-1-7.north) + (0pt, 1mm+4pt)$);
            
            \draw[->] ($(u.south west) + (0pt, -0.75cm)$) .. controls +(+0.5cm, 0cm) and +(0cm,-0.75cm) .. node[left, pos=0] {$\rho_0$} ($(u.south -| u-1-5.south) + (0pt, -2pt)$);
            
            % u braces
            \draw[decorate, decoration={brace}] ($(u.north west) + (0pt, 1mm+4pt)$) -- node[above left, xshift=1ex, yshift=2pt] {$\scriptstyle u_0'' \textcolor{gray}{{}= u_0'}$} ($(u.north -| u-1-2.north west) + (0pt, 1mm+4pt)$);
            \draw[decorate, decoration={brace}] ($(u.north -| u-1-2.north east) + (0pt, 1mm+4pt)$) -- node[above, yshift=2pt] {$\scriptstyle u_1'' \textcolor{gray}{/ u_1'}$} ($(u.north -| u-1-7.north west) + (0ex, 1mm+4pt)$);
            \draw[decorate, decoration={brace}] ($(u.north -| u-1-7.north east) + (0ex, 1mm+4pt)$) -- node[above right, xshift=-1ex, yshift=2pt] {$\scriptstyle u_2'' \textcolor{gray}{{}= u_2'}$} ($(u.north east) + (0pt, 1mm+4pt)$);
            
            % \hat{u} braces
            \draw[decorate, decoration={brace}] ($(u.south -| u-1-1.south east) + (0pt, -2pt)$) -- node[below, yshift=-2pt] {$\scriptstyle \hat{u}_0''\textcolor{gray}{= \hat{u}_0'}$} ($(u.south west) + (0pt, -2pt)$);
            \draw[decorate, decoration={brace}] ($(u.south -| u-1-4.south east) + (0pt, -2pt)$) -- node[below, yshift=-2pt] {$\scriptstyle \hat{u}_1''\textcolor{gray}{/ \hat{u}_1'}$} ($(u.south -| u-1-2.south east) + (0pt, -2pt)$);
            \draw[decorate, decoration={brace}] ($(u.south -| u-1-6.south east) + (0pt, -2pt)$) -- node[below, yshift=-2pt] {$\scriptstyle \hat{u}_2''\textcolor{gray}{= \hat{u}_2'}$} ($(u.south -| u-1-6.south west) + (0pt, -2pt)$);
            \draw[decorate, decoration={brace}] ($(u.south east)+(0pt,-2pt)$) -- node[below, yshift=-2pt] {$\scriptstyle \hat{u}_3''\textcolor{gray}{= \hat{u}_3'}$} ($(u.south -| u-1-7.south east) + (0pt, -2pt)$);
            
            \path[inner sep=2pt]
              let
                \p1=(u-1-4.east),\p2=(u-1-4.west),
                \n1={\x1-\x2-2*\pgfkeysvalueof{/pgf/inner xsep}-\pgflinewidth}
              in
                node[below right=1.5cm and 0cm of u.south west, rectangle, draw, matrix of math nodes, ampersand replacement=\&, text height=.8em, text depth=.2em] (v) {
                |[minimum width=1cm, align=center]| \dots \& c_i \& |[minimum width=1cm, align=center]| \dots \& |[text width=\n1, align=center]| y_\mu' \& c_{\mu + 1} \& |[minimum width=1cm, align=center]| \dots \& c_{j + 1} \& |[minimum width=1cm, align=center]| \dots \\
              };
            \node[left=0pt of v] {$v' = {}$};
            \foreach \i in {1,...,7} {
              \draw ($(v.north -| v-1-\i.north east)$) -- ($(v.south -| v-1-\i.south east)$);
            };
            
            \path[pattern=north east lines] ($(v.south west)-(0pt,1mm+2pt)$) rectangle ($(v.south -| v-1-3.south east)-(0pt,2pt)$);
            
            \path[pattern=north west lines] ($(v.south -| v-1-6.south west)-(0pt,1mm+2pt)$) rectangle ($(v.south east)-(0pt,2pt)$);
            
            %v Rankers
            \draw[->] ($(v.south west) + (-0.75cm, -0.85cm-2pt)$) .. controls +(0.5cm, 0cm) and +(0cm,-0.75cm) .. node[above, pos=0] {$\rho'$} ($(v.south west) + (1.25cm, -1mm-4pt)$);
            \draw[->] ($(v.south east) + (0.75cm, -0.85cm-2pt)$) .. controls +(-0.5cm, 0cm) and +(0cm,-0.75cm) .. node[above, pos=0] {$\lambda'$} ($(v.south -| v-1-7.south) + (0pt, -1mm-4pt)$);
            
            \draw[->] ($(u.south west) + (0pt, -0.75cm)$) .. controls +(+0.5cm, 0cm) and +(0cm,0.75cm) .. ($(v.north -| v-1-5.north) + (0pt, 2pt)$);
            
            % v braces
            \draw[decorate, decoration={brace}] ($(v.south -| v-1-2.south west) + (0pt, -1mm-4pt)$) -- node[below, yshift=-2pt] {$\scriptstyle v_0'$} ($(v.south west) + (0pt, -1mm-4pt)$);
            \draw[decorate, decoration={brace}] ($(v.south -| v-1-7.south west) + (0ex, -1mm-4pt)$) -- node[below, yshift=-2pt] {$\scriptstyle v_1'$} ($(v.south -| v-1-2.south east) + (0pt, -1mm-4pt)$);
            \draw[decorate, decoration={brace}] ($(v.south east) + (0pt, -1mm-4pt)$) -- node[below, yshift=-2pt] {$\scriptstyle v_2'$} ($(v.south -| v-1-7.south east) + (0ex, -1mm-4pt)$);
            
            % \hat{v} braces
            \draw[decorate, decoration={brace}] ($(v.north west) + (0pt, 2pt)$) -- node[above, yshift=2pt] {$\scriptstyle \hat{v}_0'$} ($(v.north -| v-1-2.north west) + (0pt, 2pt)$);
            \draw[decorate, decoration={brace}] ($(v.north -| v-1-2.north east) + (0pt, 2pt)$) -- node[above, yshift=2pt] {$\scriptstyle \hat{v}_1'$} ($(v.north -| v-1-5.north west) + (0pt, 2pt)$);
            \draw[decorate, decoration={brace}] ($(v.north -| v-1-6.north west) + (0pt, 2pt)$) -- node[above, yshift=2pt] {$\scriptstyle \hat{v}_2'$} ($(v.north -| v-1-6.north east) + (0pt, 2pt)$);
            \draw[decorate, decoration={brace}] ($(v.north -| v-1-7.north east) + (0pt, 2pt)$) -- node[above, yshift=2pt] {$\scriptstyle \hat{v}_3'$} ($(v.north east)+(0pt,2pt)$);
          \end{tikzpicture}
          \caption{$\rho'(u'')$ is to the left of $x_\mu''$ in $u''$ and $\lambda'(u'')$ is to its right; $\rho_0$ ends directly to the right of $x_\mu''$}\label{sfig:dropAMLeftAndRightRho}
        \end{subfigure}
        \caption{Multiple cases for showing $u'' \sim_{m, \rho, \lambda} u'$ in the proof of \autoref{fct:droppingAMRemainsULike}}
      \end{figure}
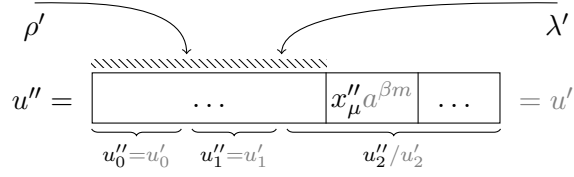
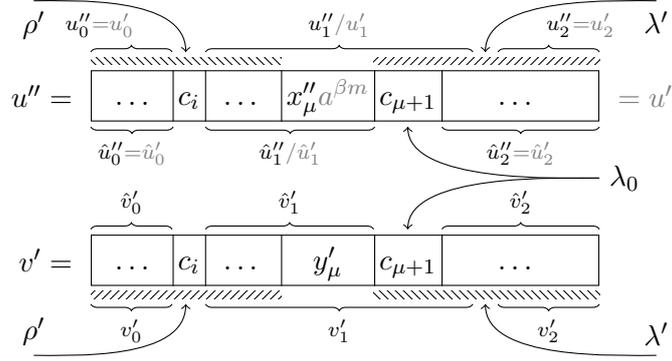
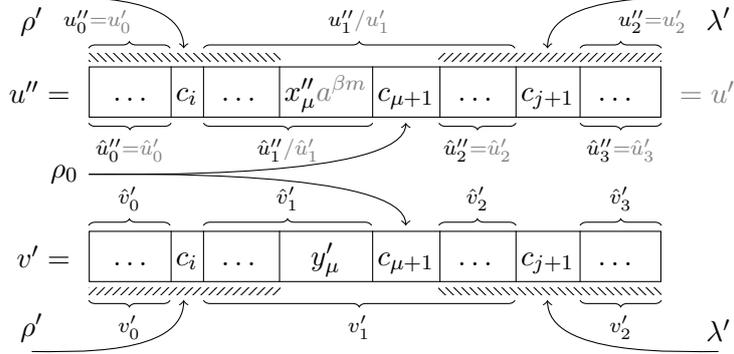
    
      With \autoref{fct:droppingAMRemainsULike} at hand, our goal is to find a $u$-like work $u''$ and a $v$-like word $v''$ with the $\{ \rho, \lambda \}$-factorizations
      \begin{equation*}
        u'' = x_0'' c_1 x_1'' \dots c_d x_d'' \quad\text{and}\quad 
        v'' = y_0'' c_1 y_1'' \dots c_d y_d''\tag{$\ast\ast$}\label{eqn:rhoLambdaFactorization2}
      \end{equation*}
      where we have $|x_k''|_a = |y_k''|_a$ not only for all $k < \mu$ but also for $k = \mu$. In fact, we let $x_k'' = x_k'$ and $y_k'' = y_k'$ for all $0 \leq k < \mu$ and only manipulate $x_k'$ and $y_k'$ for $k \geq \mu$. This way we continue to have $|x_k''|_b = |y_k''|_b$ for all $b \in \Sigma$ and $k < \mu$. Then, using an outer induction on $\mu$ and an inner induction of $\Sigma$, we immediately have a procedure to obtain the sought word that is both $u$-like and $v$-like.
      
      The rest of the proof heavily relies on case distinctions.
      \paragraph*{$\rho$-Position.}
      First, suppose that there is some prefix $\rho_0$ of $\rho\Xinfty$ such that $\rho_0$ ends directly to the right of $x_\mu'$ in $u'$ (either at the position of $c_{\mu + 1}$ or at $+\infty$ for $\mu = d$). Since $u'$ and $v'$ are $\rho$-$\lambda$-compatible, $\rho_0$ must end directly to the right of $y_k'$ in $v'$. Let $\lambda_0$ be the maximal prefix of $\lambda$ such that $\rho_0(u') \leq \lambda_0(u')$ (the prefix is possibly empty), let $(u_0', u_1', u_2')$ be the $\rho_0$-$\lambda_0$-split of $u'$ and let $(v_0', v_1', v_2')$ be the $\rho_0$-$\lambda_0$-split of $v'$. As we have $u' \sim_{m, \rho, \lambda} v'$, we obtain $\alphabet u_1' = \alphabet v_1'$. The situation is depicted in \autoref{fig:nextPosIsRhoPos}. Note, however, that one or both of $\rho_0$ and $\lambda_0$ may end at $+\infty$.
      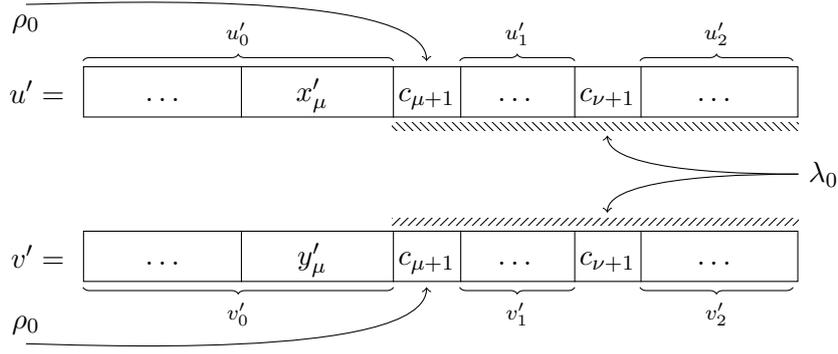
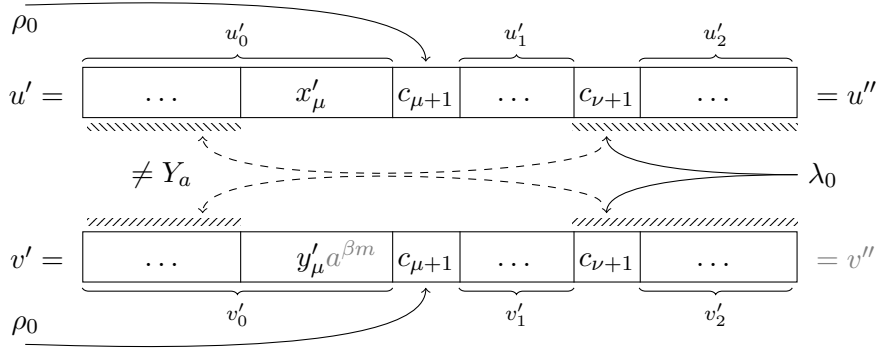
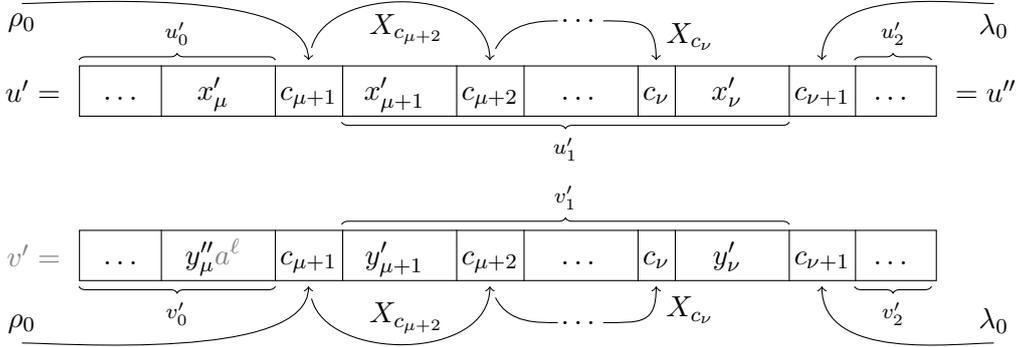
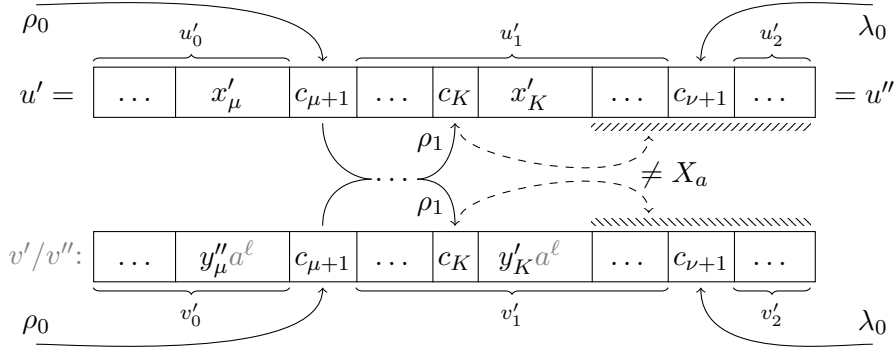
\begin{figure}\centering
        \thisfloatpagestyle{empty}
        \begin{subfigure}{\linewidth}\centering
          \begin{tikzpicture}[baseline=(uLabel.base)]
            \node (uLabel) {$u' = {}$};
            \matrix [right=0cm of uLabel, rectangle, draw, matrix of math nodes, ampersand replacement=\&, inner sep=2pt, text height=.8em, text depth=.2em] (u) {
              |[minimum width=2cm, align=center]| \dots \& |[minimum width=2cm, align=center]| $x_\mu'$ \& c_{\mu + 1} \& |[minimum width=1.5cm, align=center]| \dots \& c_{\nu + 1} \& |[minimum width=2cm, align=center]| \dots \\
            };
          
            \matrix [below right=1.5cm and 0cm of u.south west, rectangle, draw, matrix of math nodes, ampersand replacement=\&, inner sep=2pt, text height=.8em, text depth=.2em] (v) {
              |[minimum width=2cm, align=center]| \dots \& |[minimum width=2cm, align=center]| $y_\mu'$ \& c_{\mu + 1} \& |[minimum width=1.5cm, align=center]| \dots \& c_{\nu + 1} \& |[minimum width=2cm, align=center]| \dots \\
            };
            \node[left=0pt of v] {$v' = {}$};
            \node[right=0pt of v, color=gray] {\phantom{${} = v''$}};
  
            \foreach \i in {1,...,5} {
              \draw ($(u.north -| u-1-\i.north east)$) -- ($(u.south -| u-1-\i.south east)$);
              \draw ($(v.north -| v-1-\i.north east)$) -- ($(v.south -| v-1-\i.south east)$);
            };
          
            % \rho_0
            \draw[->] ($(u.north west) + (-0.75cm, 0.75cm+2pt)$) .. controls +(0.5cm, 0cm) and +(0cm,1cm) .. node[below, pos=0] {$\rho_0$} ($(u.north -| u-1-3.north)+(0pt,2pt)$);
            \draw[->] ($(v.south west) + (-0.75cm, -0.75cm-2pt)$) .. controls +(0.5cm, 0cm) and +(0cm,-1cm) .. node[above, pos=0] {$\rho_0$} ($(v.south -| v-1-3.south)+(0pt,-2pt)$);
           
            % \lambda_0
            \path[pattern=north west lines] ($(u.south -| u-1-3.south west)+(0pt,-2pt)$) rectangle ($(u.south east)+(0pt,-1mm-2pt)$);
            \path[pattern=north east lines] ($(v.north -| v-1-3.north west)+(0pt,1mm+2pt)$) rectangle ($(v.north east)+(0pt,2pt)$);
            
            \draw[->] ($(u.south east) + (0pt, -0.75cm)$) .. controls +(-0.5cm, 0cm) and +(0cm,-0.5cm) .. node[right, pos=0] {$\lambda_0$} ($(u.south -| u-1-5.south) + (0pt, -1mm-4pt)$);
            \draw[->] ($(v.north east) + (0pt, 0.75cm)$) .. controls +(-0.5cm, 0cm) and +(0cm,0.5cm) .. ($(v.north -| v-1-5.north) + (0pt, 1mm+4pt)$);
  
            % u braces
            \draw[decorate, decoration={brace}] ($(u.north west) + (0pt, 2pt)$) -- node[above, yshift=2pt] {$\scriptstyle u_0'$} ($(u.north -| u-1-2.north east) + (0pt, 2pt)$);
            \draw[decorate, decoration={brace}] ($(u.north -| u-1-4.north west) + (0pt, 2pt)$) -- node[above, yshift=2pt] {$\scriptstyle u_1'$} ($(u.north -| u-1-4.north east) + (0pt, 2pt)$);
            \draw[decorate, decoration={brace}] ($(u.north -| u-1-6.north west) + (0pt, 2pt)$) -- node[above, yshift=2pt] {$\scriptstyle u_2'$} ($(u.north -| u-1-6.north east) + (0pt, 2pt)$);
            
            % v braces
            \draw[decorate, decoration={brace}] ($(v.south -| v-1-2.south east) + (0pt, -2pt)$) -- node[below, yshift=-2pt] {$\scriptstyle v_0'$} ($(v.south west) + (0pt, -2pt)$);
            \draw[decorate, decoration={brace}] ($(v.south -| v-1-4.south east) + (0pt, -2pt)$) -- node[below, yshift=-2pt] {$\scriptstyle v_1'$} ($(v.south -| v-1-4.south west) + (0pt, -2pt)$);
            \draw[decorate, decoration={brace}] ($(v.south -| v-1-6.south east) + (0pt, -2pt)$) -- node[below, yshift=-2pt] {$\scriptstyle v_2'$} ($(v.south -| v-1-6.south west) + (0pt, -2pt)$);
          \end{tikzpicture}
          \caption{General case}\label{sfig:nextPosIsRhoPos:general}
        \end{subfigure}\\
        \begin{subfigure}{\linewidth}\centering
          \begin{tikzpicture}[baseline=(uLabel.base)]
            \node (uLabel) {$u' = {}$};
            \matrix [right=0cm of uLabel, rectangle, draw, matrix of math nodes, ampersand replacement=\&, inner sep=2pt, text height=.8em, text depth=.2em] (u) {
                |[minimum width=2cm, align=center]| \dots \& |[minimum width=2cm, align=center]| $x_\mu'$ \& c_{\mu + 1} \& |[minimum width=1.5cm, align=center]| \dots \& c_{\nu + 1} \& |[minimum width=2cm, align=center]| \dots \\
              };
            \node[right=0pt of u] {${} = u''$};
          
            \matrix [below right=1.5cm and 0cm of u.south west, rectangle, draw, matrix of math nodes, ampersand replacement=\&, inner sep=2pt, text height=.8em, text depth=.2em] (v) {
                |[minimum width=2cm, align=center]| \dots \& |[minimum width=2cm, align=center]| $y_\mu'$\makebox[0pt][l]{\textcolor{gray}{$a^{\beta m}$}} \& c_{\mu + 1} \& |[minimum width=1.5cm, align=center]| \dots \& c_{\nu + 1} \& |[minimum width=2cm, align=center]| \dots \\
              };
            \node[left=0pt of v] {$v' = {}$};
            \node[right=0pt of v, color=gray] {${} = v''$};
  
            \foreach \i in {1,...,5} {
                \draw ($(u.north -| u-1-\i.north east)$) -- ($(u.south -| u-1-\i.south east)$);
                \draw ($(v.north -| v-1-\i.north east)$) -- ($(v.south -| v-1-\i.south east)$);
              };
          
            % \rho_0
            \draw[->] ($(u.north west) + (-0.75cm, 0.75cm+2pt)$) .. controls +(0.5cm, 0cm) and +(0cm,1cm) .. node[below, pos=0] {$\rho_0$} ($(u.north -| u-1-3.north)+(0pt,2pt)$);
            \draw[->] ($(v.south west) + (-0.75cm, -0.75cm-2pt)$) .. controls +(0.5cm, 0cm) and +(0cm,-1cm) .. node[above, pos=0] {$\rho_0$} ($(v.south -| v-1-3.south)+(0pt,-2pt)$);
           
            % \lambda_0
            \path[pattern=north west lines] ($(u.south -| u-1-5.south west)+(0pt,-2pt)$) rectangle ($(u.south east)+(0pt,-1mm-2pt)$);
            \path[pattern=north east lines] ($(v.north -| v-1-5.north west)+(0pt,1mm+2pt)$) rectangle ($(v.north east)+(0pt,2pt)$);
            
            \draw[->] ($(u.south east) + (0pt, -0.75cm)$) .. controls +(-0.5cm, 0cm) and +(0cm,-0.5cm) .. node[right, pos=0] {$\lambda_0$} ($(u.south -| u-1-5.south) + (0pt, -1mm-4pt)$);
            \draw[->] ($(v.north east) + (0pt, 0.75cm)$) .. controls +(-0.5cm, 0cm) and +(0cm,0.5cm) .. ($(v.north -| v-1-5.north) + (0pt, 1mm+4pt)$);

            % \neq Y_a
            \path[pattern=north west lines] ($(u.south -| u-1-1.south west)+(0pt,-2pt)$) rectangle ($(u.south -| u-1-1.south east)+(0pt,-1mm-2pt)$);
            \path[pattern=north east lines] ($(v.north -| v-1-1.north west)+(0pt,1mm+2pt)$) rectangle ($(v.north -| v-1-1.north east)+(0pt,2pt)$);
            \draw[->, shorten <= 1ex, dashed] ($(u.south -| u-1-5.south) + (0pt, -1mm-4pt)$) .. controls +(-0.25cm, -0.5cm) and +(0cm,-0.75cm) .. ($(u.south -| u-1-1.south east)+(-0.5cm,-1mm-4pt)$);
            \draw[->, shorten <= 1ex, dashed] ($(v.north -| v-1-5.north) + (0pt, 1mm+4pt)$) .. controls +(-0.25cm, 0.5cm) and +(0cm,0.75cm) .. ($(v.north -| v-1-1.north east)+(-0.5cm,1mm+4pt)$);
            \node[left] at ($(u-1-1.south east)!0.5!(v-1-1.north east)+(-0.5cm,0pt)$) {$\neq Y_a$};

            % u braces
            \draw[decorate, decoration={brace}] ($(u.north west) + (0pt, 2pt)$) -- node[above, yshift=2pt] {$\scriptstyle u_0'$} ($(u.north -| u-1-2.north east) + (0pt, 2pt)$);
            \draw[decorate, decoration={brace}] ($(u.north -| u-1-4.north west) + (0pt, 2pt)$) -- node[above, yshift=2pt] {$\scriptstyle u_1'$} ($(u.north -| u-1-4.north east) + (0pt, 2pt)$);
            \draw[decorate, decoration={brace}] ($(u.north -| u-1-6.north west) + (0pt, 2pt)$) -- node[above, yshift=2pt] {$\scriptstyle u_2'$} ($(u.north -| u-1-6.north east) + (0pt, 2pt)$);
            
            % v braces
            \draw[decorate, decoration={brace}] ($(v.south -| v-1-2.south east) + (0pt, -2pt)$) -- node[below, yshift=-2pt] {$\scriptstyle v_0'$} ($(v.south west) + (0pt, -2pt)$);
            \draw[decorate, decoration={brace}] ($(v.south -| v-1-4.south east) + (0pt, -2pt)$) -- node[below, yshift=-2pt] {$\scriptstyle v_1'$} ($(v.south -| v-1-4.south west) + (0pt, -2pt)$);
            \draw[decorate, decoration={brace}] ($(v.south -| v-1-6.south east) + (0pt, -2pt)$) -- node[below, yshift=-2pt] {$\scriptstyle v_2'$} ($(v.south -| v-1-6.south west) + (0pt, -2pt)$);
          \end{tikzpicture}
          \caption{$a \in \alphabet u_1' = \alphabet v_1'$}\label{sfig:nextPosIsRhoPos:aInMiddle}
        \end{subfigure}\\
        \begin{subfigure}{\linewidth}\centering
          \begin{tikzpicture}[baseline=(uLabel.base)]
            \node (uLabel) {$u' = {}$};
            \matrix [right=0cm of uLabel, rectangle, draw, matrix of math nodes, ampersand replacement=\&, inner sep=2pt, text height=.8em, text depth=.2em] (u) {
                |[minimum width=1cm, align=center]| \dots \& |[minimum width=1.5cm, align=center]| $x_\mu'$ \& c_{\mu + 1} \& |[minimum width=1.5cm, align=center]| $x_{\mu + 1}'$ \& c_{\mu + 2} \& |[minimum width=1.5cm, align=center]| \dots \& c_\nu \& |[minimum width=1.5cm, align=center]| $x_{\nu}'$ \& c_{\nu + 1} \& |[minimum width=1cm, align=center]| \dots \\
              };
            \node[right=0pt of u] {${} = u''$};
          
            \matrix [below right=1.5cm and 0cm of u.south west, rectangle, draw, matrix of math nodes, ampersand replacement=\&, inner sep=2pt, text height=.8em, text depth=.2em] (v) {
                |[minimum width=1cm, align=center]| \dots \& |[minimum width=1.5cm, align=center]| $y_\mu''$\textcolor{gray}{$a^\ell$} \& c_{\mu + 1} \& |[minimum width=1.5cm, align=center]| $y_{\mu + 1}'$ \& c_{\mu + 2} \& |[minimum width=1.5cm, align=center]| \dots \& c_\nu \& |[minimum width=1.5cm, align=center]| $y_{\nu}'$ \& c_{\nu + 1} \& |[minimum width=1cm, align=center]| \dots \\
              };
            \node[left=0pt of v, color=gray] {$v' = {}$};
%            \node[right=0pt of v, color=gray] {${} = v''$};
  
            \foreach \i in {1,...,9} {
                \draw ($(u.north -| u-1-\i.north east)$) -- ($(u.south -| u-1-\i.south east)$);
                \draw ($(v.north -| v-1-\i.north east)$) -- ($(v.south -| v-1-\i.south east)$);
              };
          
            % \rho_0
            \draw[->] ($(u.north west) + (-0.75cm, 0.75cm+2pt)$) .. controls +(0.5cm, 0cm) and +(0cm,1cm) .. node[below, pos=0] {$\rho_0$} ($(u.north -| u-1-3.north)+(0pt,2pt)$);
            \draw[->] ($(v.south west) + (-0.75cm, -0.75cm-2pt)$) .. controls +(0.5cm, 0cm) and +(0cm,-1cm) .. node[above, pos=0] {$\rho_0$} ($(v.south -| v-1-3.south)+(0pt,-2pt)$);
            
            % Additional \rho rankers
            \draw[->, shorten <= 1ex] ($(u.north -| u-1-3.north)+(0pt,2pt)$) .. controls +(0.25cm, 1cm) and +(0cm,1cm) .. node[below] {$X_{c_{\mu + 2}}$} ($(u.north -| u-1-5.north)+(0pt,2pt)$);
            \node[anchor=base, inner sep=1pt] at ($(u.north -| u-1-6.north)+(0pt,2pt+0.5cm)$) (dotsU) {$\dots$};
            \draw[-, shorten <= 1ex] ($(u.north -| u-1-5.north)+(0pt,2pt)$) .. controls +(0.25cm, 0.5cm) and +(-0.25cm,0cm) .. (dotsU.west);
            \draw[->] (dotsU.east) .. controls +(0.5cm, 0pt) and +(-0cm,0.5cm) .. node[above right, pos=1] {$X_{c_\nu}$} ($(u.north -| u-1-7.north)+(0pt,2pt)$);
            
            \draw[->, shorten <= 1ex] ($(v.south -| v-1-3.south)+(0pt,-2pt)$) .. controls +(0.25cm, -1cm) and +(0cm,-1cm) .. node[above] {$X_{c_{\mu + 2}}$} ($(v.south -| v-1-5.south)+(0pt,-2pt)$);
            \node[anchor=base, inner sep=1pt] at ($(v.south -| v-1-6.south)+(0pt,-2pt-0.5cm)$) (dotsV) {$\dots$};
            \draw[-, shorten <= 1ex] ($(v.south -| v-1-5.south)+(0pt,-2pt)$) .. controls +(0.25cm, -0.5cm) and +(-0.25cm,0cm) .. (dotsV.west);
            \draw[->] (dotsV.east) .. controls +(0.5cm, 0pt) and +(-0cm,-0.5cm) .. node[below right, pos=1] {$X_{c_\nu}$} ($(v.south -| v-1-7.south)+(0pt,-2pt)$);
           
            % \lambda_0
            \draw[->] ($(u.north east) + (0.75cm, 0.75cm+2pt)$) .. controls +(-0.5cm, 0cm) and +(0cm,1cm) .. node[below, pos=0] {$\lambda_0$} ($(u.north -| u-1-9.north) + (0pt, +2pt)$);
            \draw[->] ($(v.south east) + (0.75cm, -0.75cm-2pt)$) .. controls +(-0.5cm, 0cm) and +(0cm,-1cm) .. node[above, pos=0] {$\lambda_0$} ($(v.south -| v-1-9.south) + (0pt, -2pt)$);
            
            % u braces
            \draw[decorate, decoration={brace}] ($(u.north west) + (0pt, 2pt)$) -- node[above, yshift=2pt] {$\scriptstyle u_0'$} ($(u.north -| u-1-2.north east) + (0pt, 2pt)$);
            \draw[decorate, decoration={brace}] ($(u.south -| u-1-8.south east) + (0pt, -2pt)$) -- node[below, yshift=-2pt] {$\scriptstyle u_1'$} ($(u.south -| u-1-4.south west) + (0pt, -2pt)$);
            \draw[decorate, decoration={brace}] ($(u.north -| u-1-10.north west) + (0pt, 2pt)$) -- node[above, yshift=2pt] {$\scriptstyle u_2'$} ($(u.north -| u-1-10.north east) + (0pt, 2pt)$);
            
            % v braces
            \draw[decorate, decoration={brace}] ($(v.south -| v-1-2.south east) + (0pt, -2pt)$) -- node[below, yshift=-2pt] {$\scriptstyle v_0'$} ($(v.south west) + (0pt, -2pt)$);
            \draw[decorate, decoration={brace}] ($(v.north -| v-1-4.north west) + (0pt, 2pt)$) -- node[above, yshift=2pt] {$\scriptstyle v_1'$} ($(v.north -| v-1-8.north east) + (0pt, 2pt)$);
            \draw[decorate, decoration={brace}] ($(v.south -| v-1-10.south east) + (0pt, -2pt)$) -- node[below, yshift=-2pt] {$\scriptstyle v_2'$} ($(v.south -| v-1-10.south west) + (0pt, -2pt)$);
          \end{tikzpicture}
          \caption{$a \in \alphabet u_1' = \alphabet v_1', |x_\mu'|_a < |y_\mu'|_a$, more positions reached by prefixes of $\rho$ drawn}\label{sfig:nextPosIsRhoPos:moreRhoPositions}
        \end{subfigure}\\
        \begin{subfigure}{\linewidth}\centering
          \begin{tikzpicture}[baseline=(uLabel.base)]
            \node (uLabel) {$u' = {}$};
            \matrix [right=0cm of uLabel, rectangle, draw, matrix of math nodes, ampersand replacement=\&, inner sep=2pt, text height=.8em, text depth=.2em] (u) {
                |[minimum width=1cm, align=center]| \dots \& |[minimum width=1.5cm, align=center]| $x_\mu'$ \& c_{\mu + 1} \& |[minimum width=1cm, align=center]| \dots \& c_{K} \& |[minimum width=1.5cm, align=center]| $x_K'$ \& |[minimum width=1cm, align=center]| \dots \& c_{\nu + 1} \& |[minimum width=1cm, align=center]| \dots \\
              };
            \node[right=0pt of u] {${} = u''$};
          
            \matrix [below right=1.5cm and 0cm of u.south west, rectangle, draw, matrix of math nodes, ampersand replacement=\&, inner sep=2pt, text height=.8em, text depth=.2em] (v) {
                |[minimum width=1cm, align=center]| \dots \& |[minimum width=1.5cm, align=center]| $y_\mu'' \textcolor{gray}{a^\ell}$ \& c_{\mu + 1} \& |[minimum width=1cm, align=center]| \dots \& c_{K} \& |[minimum width=1.5cm, align=center]| $y_K' \textcolor{gray}{a^\ell}$ \& |[minimum width=1cm, align=center]| \dots \& c_{\nu + 1} \& |[minimum width=1cm, align=center]| \dots \\
              };
            \node[left=0pt of v, color=gray] {$v'/v''$:~};
  
            \foreach \i in {1,...,8} {
                \draw ($(u.north -| u-1-\i.north east)$) -- ($(u.south -| u-1-\i.south east)$);
                \draw ($(v.north -| v-1-\i.north east)$) -- ($(v.south -| v-1-\i.south east)$);
              };
          
            % \rho_0
            \draw[->] ($(u.north west) + (-0.75cm, 0.75cm+2pt)$) .. controls +(0.5cm, 0cm) and +(0cm,1cm) .. node[below, pos=0] {$\rho_0$} ($(u.north -| u-1-3.north)+(0pt,2pt)$);
            \draw[->] ($(v.south west) + (-0.75cm, -0.75cm-2pt)$) .. controls +(0.5cm, 0cm) and +(0cm,-1cm) .. node[above, pos=0] {$\rho_0$} ($(v.south -| v-1-3.south)+(0pt,-2pt)$);

            % \rho_1
            \node[anchor=base, inner sep=1pt] at ($(u-1-4.south)!0.55!(v-1-4.north)$) (rho1dots) {$\dots$};
            \draw[-] ($(u.south -| u-1-3.south)+(0pt,-2pt)$) .. controls +(0cm, -0.5cm) and +(-0.25cm,0cm) .. (rho1dots.west);
            \draw[-] ($(v.north -| v-1-3.north)+(0pt,2pt)$) .. controls +(0cm, 0.5cm) and +(-0.25cm,0cm) .. (rho1dots.west);
            \draw[->] (rho1dots.east) .. controls +(0.25cm, 0pt) and +(0cm,-0.5cm) .. node[below left, pos=1] {$\rho_1$} ($(u.south -| u-1-5.south)+(0pt,-2pt)$);
            \draw[->] (rho1dots.east) .. controls +(0.25cm, 0pt) and +(0cm,0.5cm) .. node[above left, pos=1] {$\rho_1$} ($(v.north -| v-1-5.north)+(0pt,2pt)$);
            
            % \neq X_a
            \path[pattern=north east lines] ($(u.south -| u-1-7.south west)+(0pt,-2pt)$) rectangle ($(u.south -| u-1-9.south east)+(0pt,-1mm-2pt)$);
            \path[pattern=north west lines] ($(v.north -| v-1-7.north west)+(0pt,+2pt)$) rectangle ($(v.north -| v-1-9.north east)+(0pt,1mm+2pt)$);
            \draw[->, shorten <= 1ex, dashed] ($(u.south -| u-1-5.south) + (0pt, -2pt)$) .. controls +(0.25cm, -0.5cm) and +(0cm,-0.75cm) .. ($(u.south -| u-1-7.south)+(0.25cm,-1mm-4pt)$);
            \draw[->, shorten <= 1ex, dashed] ($(v.north -| v-1-5.north) + (0pt, 2pt)$) .. controls +(0.25cm, 0.5cm) and +(0cm,0.75cm) .. ($(v.north -| v-1-7.north)+(0.25cm,1mm+4pt)$);
            \node[right] at ($(u-1-7.south east)!0.5!(v-1-7.north east)+(-0.5cm,0pt)$) {$\neq X_a$};
           
            % \lambda_0
            \draw[->] ($(u.north east) + (0.75cm, 0.75cm+2pt)$) .. controls +(-0.5cm, 0cm) and +(0cm,1cm) .. node[below, pos=0] {$\lambda_0$} ($(u.north -| u-1-8.north) + (0pt, +2pt)$);
            \draw[->] ($(v.south east) + (0.75cm, -0.75cm-2pt)$) .. controls +(-0.5cm, 0cm) and +(0cm,-1cm) .. node[above, pos=0] {$\lambda_0$} ($(v.south -| v-1-8.south) + (0pt, -2pt)$);
            
            % u braces
            \draw[decorate, decoration={brace}] ($(u.north west) + (0pt, 2pt)$) -- node[above, yshift=2pt] {$\scriptstyle u_0'$} ($(u.north -| u-1-2.north east) + (0pt, 2pt)$);
            \draw[decorate, decoration={brace}] ($(u.north -| u-1-4.north west) + (0pt, 2pt)$) -- node[above, yshift=2pt] {$\scriptstyle u_1'$} ($(u.north -| u-1-7.north east) + (0pt, 2pt)$);
            \draw[decorate, decoration={brace}] ($(u.north -| u-1-9.north west) + (0pt, 2pt)$) -- node[above, yshift=2pt] {$\scriptstyle u_2'$} ($(u.north -| u-1-9.north east) + (0pt, 2pt)$);

            % v braces
            \draw[decorate, decoration={brace}] ($(v.south -| v-1-2.south east) + (0pt, -2pt)$) -- node[below, yshift=-2pt] {$\scriptstyle v_0'$} ($(v.south west) + (0pt, -2pt)$);
            \draw[decorate, decoration={brace}] ($(v.south -| v-1-7.south east) + (0pt, -2pt)$) -- node[below, yshift=-2pt] {$\scriptstyle v_1'$} ($(v.south -| v-1-4.south west) + (0pt, -2pt)$);
            \draw[decorate, decoration={brace}] ($(v.south -| v-1-9.south east) + (0pt, -2pt)$) -- node[below, yshift=-2pt] {$\scriptstyle v_2'$} ($(v.south -| v-1-9.south west) + (0pt, -2pt)$);
          \end{tikzpicture}
          \caption{$a \in \alphabet u_1' = \alphabet v_1', |x_\mu'|_a < |y_\mu'|_a, a \in \alphabet x_K' \cup \alphabet y_K'$}\label{sfig:nextPosIsRhoPos:aInAlphabetXYK}
        \end{subfigure}
        \caption{A prefix of $\rho\Xinfty$ ends directly to the right of $x_\mu'$/$y_\mu'$ in $u'$ and $v'$.}\label{fig:nextPosIsRhoPos}
      \end{figure}

      First, consider the case $a \not\in \alphabet u_1' = \alphabet v_1'$. This, in particular, includes the case $\lambda_0(u') = \rho_0(u')$ (and $\lambda_0(v') = \rho_0(v')$) and, thus, the case $\lambda_0(u') = \rho_0(u') = +\infty$ (and $\lambda_0(v') = \rho_0(v') = +\infty$). We obtain
      \begin{align}
        |x_0' c_1 \dots x_{\mu - 1}' c_\mu|_a + |x_\mu'|_a = |u'_0|_a \equiv |v'_0|_a
        &= |y_0' c_1 \dots y_{\mu - 1}' c_\mu|_a + |y_\mu'|_a \nonumber\\
        &= |x_0' c_1 \dots x_{\mu - 1}' c_\mu|_a + |y_\mu'|_a \mod m\tag{$\dagger$}\label{eqn:numberOfAs}
      \end{align}
      and, thus, $|x_\mu'|_a \equiv |y_\mu'|_a \bmod m$ by the definition of $\sim_{m, \rho, \lambda}$ and as we have $|x_k'|_a = |y_k'|_a$ for all $0 \leq k < \mu$. In other words, we either have $x_\mu' = x_\mu'' a^{\beta m}$ (up to permutation) for some $\beta > 0$ and $x_\mu''$ with $|x_\mu''|_a = |y_\mu'|_a$ (i.\,e.\ we have more $a$s in $x_\mu'$ than in $y_\mu'$) or $y_\mu' = y_\mu'' a^{\beta m}$ with $|y_\mu''|_a = |x_\mu'|_a$ (i.\,e.\ we have more $a$s in $y_\mu'$). In the former case, we already have defined $x_\mu''$ and let $x_k'' = x_k'$ for $k > \mu$. The resulting word is $u$-like by \autoref{fct:droppingAMRemainsULike} and we can let $v'' = v'$. Note that we now have $|x_\mu''|_a = |y_\mu''|_a$. The latter case (we have more $a$s in $y_\mu'$) can be handled in the same way.
      
      From now on, we may assume $a \in \alphabet u_1' = \alphabet v_1'$ without loss of generality. This implies, in particular, that $u_1'$ and $v_1'$ are not empty and, thus, $\rho_0(u') < \lambda_0(u')$ and $\rho_0(v') < \lambda_0(v')$.
      
      Here, we first handle the case that we have $|x_\mu'|_a > |y_\mu'|_a$. The idea is that we add $a^m$ blocks to $y_\mu'$ until we have more $a$s in $y_\mu''$ than in $x_\mu'$. More precisely, we let $y_\mu'' = y_\mu' a^{\beta m}$ for some $\beta$ such that $\beta m \geq |x_\mu'|_a - |y_\mu'|_a$, $y_k'' = y_k'$ for $\mu < k \leq d$ and $u'' = u'$ (which means that $u''$ remains $u$-like). This situation is depicted in \autoref{sfig:nextPosIsRhoPos:aInMiddle}. Note, however, that $\lambda_0$ may also end at $+\infty$ (in all words).
      
      We have to show that $v''$ is $v$-like. First, we show that $v''$ is $\rho$-$\lambda$-compatible to $v'$ (and, thus, also to $u'$, $u$ and $v$). Consider a prefix $\rho'$ of $\rho \Xinfty$ which ends at some $c_k$ in the $\{ \rho, \lambda \}$-factorization (\ref{eqn:rhoLambdaFactorization1}) of $v'$. We show that it ends at the same $c_k$ in factorization (\ref{eqn:rhoLambdaFactorization2}) of $v''$ and then we do the same for an arbitrary prefix $\lambda'$ of $\lambda \Yinfty$. The reader may observe that this shows that $v''$ and $v'$ are $\rho$-$\lambda$-compatible and that, in fact, the factorization in (\ref{eqn:rhoLambdaFactorization2}) is the $\{ \rho, \lambda \}$-factorization of $v''$ (the only other case is that $\rho'$ or $\lambda'$ ends at $\pm \infty$ both in $v'$ and in $v''$).
      
      This obviously is true if $\rho'$ ends to the left of $y_\mu'$ in $v'$ as this part is the same as the part to the left of $y_\mu'' = y_\mu' a^{\beta m}$ in $v''$. Note that there is an $a$ in $x_\mu'$ (since we are in the case $|x_\mu'|_a > |y_\mu'|_a \geq 0$) and that, therefore, $c_{\mu + 1} \neq a$ or, in other words, the next instruction after $\rho_0$ in $\rho \Xinfty$ cannot be an $X_a$. Thus, $\rho'$ also ends at the position of the same $c_k$ in $v''$ if it ends to the right of $y_\mu'$ in $v'$. The argument for $\lambda'$ is very similar: here it suffices to observe that the next instruction after $\lambda_0$ in $\lambda \Yinfty$ cannot be a $Y_a$ (as we must be to the left of $x_\mu'$ in $u'$, afterwards, by the definition of $\lambda_0$ and there again is an $a$ in $x_\mu'$).
      
      Having established that $v'$ and $v''$ are $\rho$-$\lambda$-compatible, we continue with showing $v'' \sim_{m, \rho, \lambda} v'$ (which shows \autoref{itm:vLike:rhoLambdaSim} of the definition of $v$-like words). For this, consider a prefix $\rho'$ of $\rho \Xinfty$ and a prefix $\lambda'$ of $\lambda \Yinfty$ with $\rho'(v') \leq \lambda'(v')$ (and, thus, $\rho'(v'') \leq \lambda'(v'')$). If $\lambda'$ ends to the left of $y_\mu'$ in $v'$ (and, thus, also to the left of $y_\mu'' = y_\mu' a^{\beta m}$ in $v''$), there is nothing to show as the left and the middle parts of the $\rho'$-$\lambda'$-splits of $v'$ and $v''$ are the same, respectively, and the right parts only differ because the one belonging to $v''$ has $\beta m$ many more $a$s (which is obviously a multiple of $m$, so the number of $a$s is the same modulo $m$). Symmetrically, we are also done if $\rho'$ ends to the right of $y_\mu'$ in $v'$ (and, thus, also to the right of $y_\mu'' = y_\mu a^{\beta m}$ in $v''$).
      
      Thus, it remains the case that $\rho'$ ends to the left of $y_\mu'$ in $v'$ (and of $y_\mu'' = y_\mu' a^{\beta m}$ in $v''$) and $\lambda'$ ends to its right. By the choice of $\lambda_0$, we then must have $\lambda_0(v') \leq \lambda'(v')$ (and, thus, also $\lambda_0(v'') \leq \lambda'(v'')$). In particular, the left parts of the $\rho'$-$\lambda'$-splits of $v'$ and $v''$ are the same word and the same is true for the right parts. The only difference is in the middle parts as the one belonging to $v''$ contains the additional $a^{\beta m}$ block. However, $v_1'$ is an infix of both middle parts and contains an $a$, so the alphabet of the middle part has not changed by adding the $a^{\beta m}$ block and we are done.
      
      We continue by showing \autoref{itm:vLike:rhoRDescent} of the definition of $v$-like words for $v''$. Let $\lambda'$ again be an arbitrary prefix of $\lambda \Yinfty$. If it ends to the right of $y_\mu'' = y_\mu' a^{\beta m}$ in $v''$, there is nothing to show as we have not changed that part; the right parts of the $\lambda'$-splits of $v'$ and $v''$ coincide and, thus, have the same image under $\varphi$. For the case that $\lambda'$ ends to the left of $y_\mu'' = y_\mu' a^{\beta m}$ in $v''$ and, thus, also to the left of $y_\mu'$ in $v'$, observe that we have
      \[
        y_\mu' c_{\mu + 1} v_1' \, c_{j + 1} y_{j + 1}' \dots c_d y_d' \L c_{j + 1} y_{j + 1}' \dots c_d y_d'
      \]
      (i.\,e.\ the suffix of $v'$ starting from $y_\mu'$ is $\mathcal{L}$-equivalent to the suffix starting from $c_{j + 1}$, where $\lambda_0$ ends -- or, if $\lambda_0$ is empty, we have $\mathcal{L}$-equivalence to $1$)
      because the $\{ \rho, \lambda \}$-factorization (\ref{eqn:rhoLambdaFactorization1}) of $v'$ is a sub-factorization of its $\mathcal{L}$-factorization. By (the left-right symmetric dual of) \autoref{fct:insertAM}, we obtain
      \[
        \varphi(y_\mu' c_{\mu + 1} v_1' \, c_{j + 1} y_{j + 1}' \dots c_d y_d') =
        \varphi(y_\mu' \, \textcolor{gray}{a^{\beta m}} \, c_{\mu + 1} v_1' \, c_{j + 1} y_{j + 1}' \dots c_d y_d')
      \]
      (i.\,e.\ adding the $a^{\beta m}$ block has not changed the image under $\varphi$ of the right hand side of any $\lambda'$-split of $v''$)
      and this establishes \autoref{itm:vLike:rhoRDescent}.
      
      From now on, we may assume $|x_\mu'|_a < |y_\mu'|_a$ (in addition to $a \in \alphabet u_1' = \alphabet v_1'$). In other words, we may write $y_\mu' = y_\mu'' a^\ell$ for some $y_\mu''$ and $\ell > 0$ with $|y_\mu''|_a = |x_\mu'|_a$ (up to permutation). This already defines $y_\mu''$. The other $y_k''$ will be defined below and we again let $u'' = u'$.
      
      As we have $a \in \alphabet u_1'$, the middle part $u_1'$ cannot be empty and there is some $\nu$ (with $\mu < \nu$) such that $u_1' = x_{\mu + 1}' c_{\mu + 2} x_{\mu + 2}' \dots c_\nu x_\nu'$ and $v_1' = y_{\mu + 1}' c_{\mu + 2} y_{\mu + 2}' \dots c_\nu y_\nu'$. Note that no prefix of $\lambda \Yinfty$ can end at the position of any $c_{\mu + 2}, \dots, c_\nu$ (by the choice of $\lambda_0$). Thus, for each of them, there is a prefix of $\rho \Xinfty$ ending at their position.
      The situation is depicted in \autoref{sfig:nextPosIsRhoPos:moreRhoPositions} but note that $\lambda_0$ may still also end at $+\infty$.
      
      If there is some $K$ with $\mu < K \leq \nu$ and $a \in \alphabet x_K' \cup \alphabet y_K'$, we move the excessive $a^\ell$ block from $y_\mu' = y_\mu'' a^\ell$ there, i.\,e.\ we let $y_K'' = y_K' a^\ell$ and keep $y_k'' = y_k'$ for $k \not \in \{ \mu, K \}$ (and $x_k'' = x_k'$ for all $k$ as stated above). Let $\rho_1$ be the prefix of $\rho \Xinfty$ ending at the position of $c_K$ (in $u'$ and $v'$). Note that $\rho_0$ is a prefix of $\rho_1$. Since there is an $a$ in $x_K'$ or in $y_K'$ (or both), we observe that the next instruction in $\rho \Xinfty$ after $\rho_1$ cannot be an $X_a$. The situation is depicted in \autoref{sfig:nextPosIsRhoPos:aInAlphabetXYK}. Note, however, that $\lambda_0$ still may end at $+\infty$ and that the next instruction after $\rho_1$ may end anywhere to the right of $x_K'$ in $u'$ and $y_K'$ in $v'$.
      
      We have to show that $v''$ remains $v$-like. By the above observation (on the next instruction in $\rho \Xinfty$ after $\rho_1$), we obtain that any prefix $\rho'$ of $\rho \Xinfty$ ends at the position of $c_k$ in $v'$ if and only if it does in $v''$ (and the same is true if it ends at $\pm\infty$). Using a very similar argumentation for the next instruction of $\lambda \Yinfty$ after $\lambda_0$ (we have an $a$ in $y_\mu' = y_\mu'' a^\ell$; compare to \autoref{sfig:nextPosIsRhoPos:aInMiddle} and the corresponding argument above), we also obtain that any prefix $\lambda'$ of $\lambda \Yinfty$ ends at the position of some $c_k$ in $v'$ if and only if it does in $v''$. This shows that $v'$ and $v''$ are $\rho$-$\lambda$-compatible and that the $\{ \rho, \lambda \}$-factorization of $v''$ is given by (\ref{eqn:rhoLambdaFactorization2}).
      
      This already shows the first half of $v'' \sim_{m, \rho, \lambda} v'$ (and, thus, \autoref{itm:vLike:rhoLambdaSim} in the definition of $v$-like words). To show the second part, let $\rho'$ be an arbitrary prefix of $\rho \Xinfty$ and let $\lambda'$ be an arbitrary prefix of $\lambda \Yinfty$ with $\rho'(v') \leq \lambda'(v')$ (and, thus, $\rho'(v'') \leq \lambda'(v'')$). Similar to the corresponding cases above, we are done if $\rho'$ ends to the right of $y_K'$ in $v'$ and, therefore, also to the right of $y_K''$ in $v''$ or if $\lambda'$ ends to the left of $y_\mu'$ in $v'$ and, therefore, also to the left of $y_\mu''$ in $v''$. In these cases, it suffices to observe that moving the $a^\ell$ block to the right happens entirely within the right or the left part of the $\rho'$-$\lambda'$-split of $v'$ and that the other parts remain unchanged.
      
      Thus, assume that $\rho'$ ends to the left of $y_K'$ in $v'$ and, therefore, to the left of $y_K'' = y_K' a^\ell$ in $v''$. By the choice of $\lambda_0$, this implies that $\lambda'$ must not end to the left of the position $\lambda_0$ ends at. In particular, $y_K'$ is an infix of the middle parts of the $\rho'$-$\lambda'$-splits of $v'$ and $v''$. If we have $a \in \alphabet y_K'$, we continue to have an $a$ in the alphabet of the middle parts of the $\rho'$-$\lambda'$-splits of $v'$ and $v''$ and there is nothing more to show. If we have $a \not\in \alphabet y_K'$, there must be an $a$ in $\alphabet x_K'$. Since $x_K'$ is an infix of the middle part of the $\rho'$-$\lambda'$-split of $u'$ (and since we have $v' \sim_{m, \rho, \lambda} u'$), there must also be an $a$ in the middle part of the $\rho'$-$\lambda'$-split of $v'$. Thus, moving the $a^\ell$ block to the right in $v''$ has not increased the alphabet of the middle part of its $\rho'$-$\lambda'$-split and, again, there is nothing more to show.
      
      This proves that $v''$ satisfies \autoref{itm:vLike:rhoLambdaSim} in the definition of $v$-like words and it remains to show \autoref{itm:vLike:rhoRDescent}. For this let $\lambda'$ be an arbitrary prefix of $\lambda \Yinfty$. If it ends to the right of $y_K'$ in $v'$ and, therefore, to the right of $y_K'' = y_K' a^\ell$ in $v''$, the right parts of the $\lambda'$-splits of $v'$ and $v''$ coincide and have, thus, the same image under $\varphi$. By the choice of $\lambda_0$, the only other case is that $\lambda'$ ends to the left of $y_\mu' = y_\mu'' a^\ell$ in $v'$ of $y_\mu''$ in $v'$. This also shows that $y_\mu' c_{\mu + 1} y_{\mu + 1} \dots c_K y_K'$ is an infix of a factor of the $\mathcal{L}$-factorization of $v'$. Since we may permute the letters in such a factor arbitrary without changing the image under $\varphi$ by (the left-right dual of) \autoref{fct:permuteR}, moving the $a^\ell$ block to the right has not changed the image under $\varphi$ and we are done.
      
      The remaining case is that we have $a \not\in \alphabet x_k' \cup \alphabet y_k'$ for all $\mu < k \leq \nu$. In this case, let $\rho_1 = \rho_0 X_{c_{\mu + 2}} \dots X_{c_\nu}$ and consider the $\rho_1$-$\lambda_0$-split of $u'$ and $v'$ (refer to \autoref{sfig:nextPosIsRhoPos:moreRhoPositions}). The middle parts of these splits are $x_\nu'$ and $y_\nu'$, respectively. In our case, we have, in particular, $a \not\in \alphabet x_\nu' = \alphabet y_\nu'$ (since we have $u' \sim_{m, \rho, \lambda} v'$) and obtain
      \begin{align*}
        |u_0' \, c_{\mu + 1} x_{\mu + 1}' \dots c_{\nu - 1} x_{\nu - 1}' c_\nu|_a \equiv
        |v_0' \, c_{\mu + 1} y_{\mu + 1}' \dots c_{\nu - 1} y_{\nu - 1}' c_\nu|_a \mod m
      \end{align*}
      by considering the left parts of the $\rho_1$-$\lambda_0$-splits, which implies $|u_0'|_a \equiv |v_0'|_a \bmod m$. Using the same calculation as in (\ref{eqn:numberOfAs}), we obtain $|x_\mu|_a \equiv |y_\mu'|_a = |y_\mu'' a^\ell|_a = |x_\mu'|_a + \ell \bmod m$ and, thus, that $\ell$ is a multiple of $m$. This allows us to simply drop the excessive $a^\ell$ block and obtain a $v$-like word $v''$ as stated in \autoref{fct:droppingAMRemainsULike} (very similar to what we did in the case with $a \not\in \alphabet u_1' = \alphabet v_1'$ above).

      \paragraph*{$\lambda$-Position.}
      This covers all cases where there is a prefix of $\rho \Xinfty$ which ends directly to the right of $x_\mu'$ in $u'$. So, assume from now on that there is no such prefix. Instead, there must be a prefix $\lambda_0$ of $\lambda \Yinfty$ which ends directly to the right of $x_\mu'$ in $u'$ and, thus, also directly to the right of $y_\mu'$ in $v'$. In fact, since there is indeed a prefix of $\rho \Xinfty$ ending at $+\infty$, we have that $\lambda_0$ is a non-empty, proper prefix of $\lambda \Yinfty$ and that we have $\mu < d$.
      
      The case that we have $|x_\mu'|_a < |y_\mu'|_a$ is very similar to the above case where we have $|x_\mu'|_a > |y_\mu'|_a$: we will just add $a^m$ blocks to $x_\mu'$ until we have more $a$s there, i.\,e.\ we let $x_\mu'' = x_\mu' a^{\beta m}$ for some $\beta$ such that $\beta m \geq |y_\mu'|_a - |x_\mu'|_a$ (and let $x_k'' = x_k'$ for all $k \neq \mu$ as well as $v'' = v'$). The situation is depicted in \autoref{sfig:nextPosIsLambdaPos:moreAsInV}.
      
      We have to show that $u''$ remains $u$-like. Let $\rho_0$ be the maximal prefix of $\rho \Xinfty$ such that $\rho_0$ ends to the left of $x_\mu'$ in $u'$ and, thus, to the left of $y_\mu'$ in $v'$ (in fact, $\rho_0$ must already be a prefix of $\rho$). As we have $0 \leq |x_\mu'|_a < |y_\mu'|_a$, there is an $a$ in $y_\mu'$ and, thus, in the middle part $u_1'$ of the $\rho_0$-$\lambda_0$-split of $v'$. In particular, the next instruction after $\rho_0$ in $\rho \Xinfty$ cannot be an $X_a$ and the next instruction after $\lambda_0$ in $\lambda_0 \Yinfty$ cannot be a $Y_a$. This shows that $u''$ remains $\rho$-$\lambda$-compatible to $u'$.
      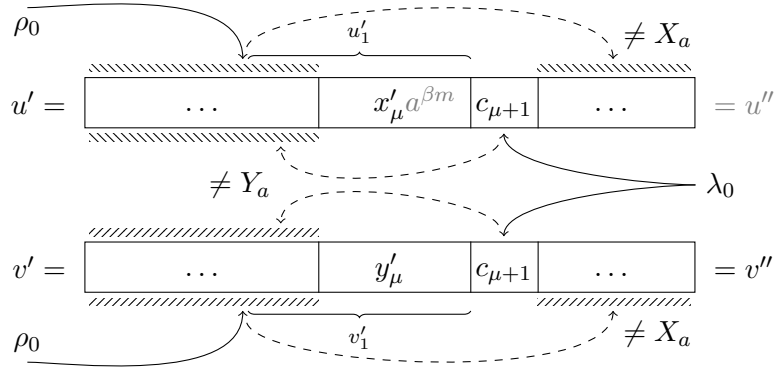
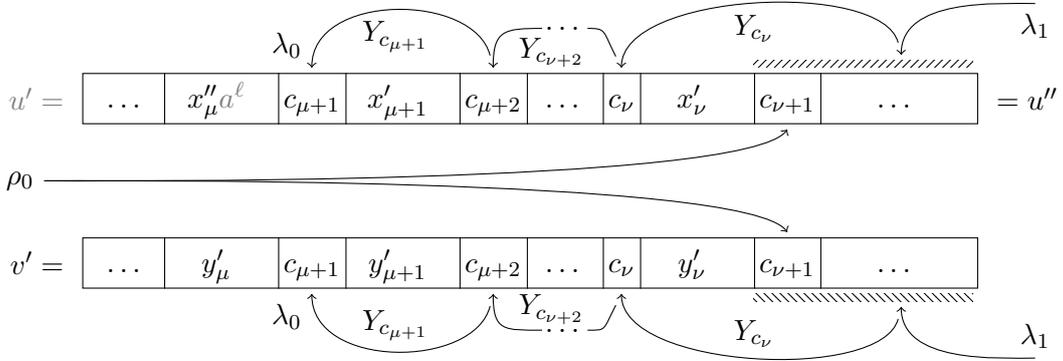
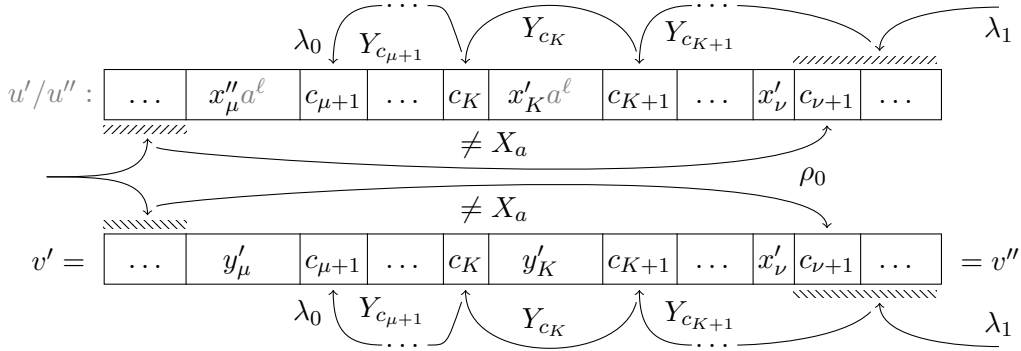
\begin{figure}\centering
        \begin{subfigure}{\linewidth}\centering
          \begin{tikzpicture}[baseline=(uLabel.base)]
            \node (uLabel) {$u' = {}$};
            \matrix [right=0cm of uLabel, rectangle, draw, matrix of math nodes, ampersand replacement=\&, inner sep=2pt, text height=.8em, text depth=.2em] (u) {
              |[minimum width=3cm, align=center]| \dots \& |[minimum width=2cm, align=center]| $x_\mu'$\makebox[0pt][l]{\textcolor{gray}{$a^{\beta m}$}} \& c_{\mu + 1} \& |[minimum width=2cm, align=center]| \dots \\
            };
            \node[right=0pt of u, color=gray] {${} = u''$};
            
            \matrix [below right=1.5cm and 0cm of u.south west, rectangle, draw, matrix of math nodes, ampersand replacement=\&, inner sep=2pt, text height=.8em, text depth=.2em] (v) {
              |[minimum width=3cm, align=center]| \dots \& |[minimum width=2cm, align=center]| $y_\mu'$ \& c_{\mu + 1} \& |[minimum width=2cm, align=center]| \dots \\
            };
            \node[left=0pt of v] {$v' = {}$};
            \node[right=0pt of v] {${} = v''$};
            
            \foreach \i in {1,...,3} {
              \draw ($(u.north -| u-1-\i.north east)$) -- ($(u.south -| u-1-\i.south east)$);
              \draw ($(v.north -| v-1-\i.north east)$) -- ($(v.south -| v-1-\i.south east)$);
            };
            
            % \rho_0
            \path[pattern=north west lines] ($(u.north -| u-1-1.north west)+(0pt,1mm+2pt)$) rectangle ($(u.north -| u-1-1.north east)+(0pt,2pt)$);
            \path[pattern=north east lines] ($(v.south -| v-1-1.south west)+(0pt,-1mm-2pt)$) rectangle ($(v.south -| v-1-1.south east)+(0pt,-2pt)$);
            
            \draw[->] ($(u.north west) + (-0.75cm, 0.85cm+2pt)$) .. controls +(0.5cm, 0cm) and +(0cm,1cm) .. node[below, pos=0] {$\rho_0$} ($(u.north -| u-1-1.north east)+(-1cm,1mm+4pt)$);
            \draw[->] ($(v.south west) + (-0.75cm, -0.85cm-2pt)$) .. controls +(0.5cm, 0cm) and +(0cm,-1cm) .. node[above, pos=0] {$\rho_0$} ($(v.south -| v-1-1.south east)+(-1cm,-1mm-4pt)$);
            
            % \neq X_a
            \path[pattern=north west lines] ($(u.north -| u-1-4.north west)+(0pt,1mm+2pt)$) rectangle ($(u.north -| u-1-4.north east)+(0pt,2pt)$);
            \path[pattern=north east lines] ($(v.south -| v-1-4.south west)+(0pt,-1mm-2pt)$) rectangle ($(v.south -| v-1-4.south east)+(0pt,-2pt)$);
            
            \draw[->, shorten <= 1ex, dashed] ($(u.north -| u-1-1.north east)+(-1cm,1mm+4pt)$) .. controls +(0.25cm, 0.75cm) and +(0cm,1cm) .. node[pos=1, above right] {$\neq X_a$} ($(u.north -| u-1-4.north west)+(1cm,1mm+4pt)$);
            \draw[->, shorten <= 1ex, dashed] ($(v.south -| v-1-1.south east)+(-1cm,-1mm-4pt)$) .. controls +(0.25cm, -0.75cm) and +(0cm,-1cm) .. node[pos=1, below right] {$\neq X_a$} ($(v.south -| v-1-4.south west)+(1cm,-1mm-4pt)$);
            
            % \lambda_0
            \draw[->] ($(u.south east) + (0pt, -0.75cm)$) .. controls +(-0.5cm, 0cm) and +(0cm,-0.5cm) .. node[right, pos=0] {$\lambda_0$} ($(u.south -| u-1-3.south) + (0pt, -2pt)$);
            \draw[->] ($(v.north east) + (0pt, 0.75cm)$) .. controls +(-0.5cm, 0cm) and +(0cm,0.5cm) .. ($(v.north -| v-1-3.north) + (0pt, 2pt)$);
            
            % \neq Y_a
            \path[pattern=north west lines] ($(u.south -| u-1-1.south west)+(0pt,-2pt)$) rectangle ($(u.south -| u-1-1.south east)+(0pt,-1mm-2pt)$);
            \path[pattern=north east lines] ($(v.north -| v-1-1.north west)+(0pt,1mm+2pt)$) rectangle ($(v.north -| v-1-1.north east)+(0pt,2pt)$);
            \draw[->, shorten <= 1ex, dashed] ($(u.south -| u-1-3.south) + (0pt, -2pt)$) .. controls +(-0.25cm, -0.5cm) and +(0cm,-0.75cm) .. ($(u.south -| u-1-1.south east)+(-0.5cm,-1mm-4pt)$);
            \draw[->, shorten <= 1ex, dashed] ($(v.north -| v-1-3.north) + (0pt, 2pt)$) .. controls +(-0.25cm, 0.5cm) and +(0cm,0.75cm) .. ($(v.north -| v-1-1.north east)+(-0.5cm,1mm+4pt)$);
            \node[left] at ($(u-1-1.south east)!0.5!(v-1-1.north east)+(-0.5cm,0pt)$) {$\neq Y_a$};
            
            % u braces
%            \draw[decorate, decoration={brace}] ($(u.north west) + (0pt, 1mm+4pt)$) -- node[above, yshift=2pt] {$\scriptstyle u_0'$} ($(u.north -| u-1-1.north east) + (-1cm-2pt, 1mm+4pt)$);
            \draw[decorate, decoration={brace}] ($(u.north -| u-1-1.north east) + (-1cm+2pt, 1mm+4pt)$) -- node[above, yshift=2pt] {$\scriptstyle u_1'$} ($(u.north -| u-1-2.north east) + (0pt, 1mm+4pt)$);
            
            % v braces
%            \draw[decorate, decoration={brace}] ($(v.south -| v-1-1.south east) + (-1cm-2pt, -1mm-4pt)$) -- node[below, yshift=-2pt] {$\scriptstyle v_0'$} ($(v.south west) + (0pt, -1mm-4pt)$);
            \draw[decorate, decoration={brace}] ($(v.south -| v-1-2.south east) + (0pt, -1mm-4pt)$) -- node[below, yshift=-2pt] {$\scriptstyle v_1'$} ($(v.south -| v-1-1.south east) + (-1cm+2pt, -1mm-4pt)$);
          \end{tikzpicture}
          \caption{$|x_\mu'|_a < |y_\mu'|_a$}\label{sfig:nextPosIsLambdaPos:moreAsInV}
        \end{subfigure}\\
        \begin{subfigure}{\linewidth}\centering
          \begin{tikzpicture}[baseline=(uLabel.base)]
            \node[color=gray] (uLabel) {$u' = {}$};
            \matrix [right=0cm of uLabel, rectangle, draw, matrix of math nodes, ampersand replacement=\&, inner sep=2pt, text height=.8em, text depth=.2em] (u) {
                |[minimum width=1cm, align=center]| \dots \& |[minimum width=1.5cm, align=center]| $x_\mu''$\textcolor{gray}{$a^\ell$} \& c_{\mu + 1} \& |[minimum width=1.5cm, align=center]| $x_{\mu + 1}'$ \& c_{\mu + 2} \& |[minimum width=1cm, align=center]| \dots \& c_\nu \& |[minimum width=1.5cm, align=center]| $x_{\nu}'$ \& c_{\nu + 1} \& |[minimum width=2cm, align=center]| \dots \\
              };
            \node[right=0pt of u] {${} = u''$};
          
            \matrix [below right=1.5cm and 0cm of u.south west, rectangle, draw, matrix of math nodes, ampersand replacement=\&, inner sep=2pt, text height=.8em, text depth=.2em] (v) {
                |[minimum width=1cm, align=center]| \dots \& |[minimum width=1.5cm, align=center]| $y_\mu'$ \& c_{\mu + 1} \& |[minimum width=1.5cm, align=center]| $y_{\mu + 1}'$ \& c_{\mu + 2} \& |[minimum width=1cm, align=center]| \dots \& c_\nu \& |[minimum width=1.5cm, align=center]| $y_{\nu}'$ \& c_{\nu + 1} \& |[minimum width=2cm, align=center]| \dots \\
              };
            \node[left=0pt of v] {$v' = {}$};
%            \node[right=0pt of v, color=gray] {${} = v''$};
  
            \foreach \i in {1,...,9} {
                \draw ($(u.north -| u-1-\i.north east)$) -- ($(u.south -| u-1-\i.south east)$);
                \draw ($(v.north -| v-1-\i.north east)$) -- ($(v.south -| v-1-\i.south east)$);
              };
          
            % \rho_0
            \draw[->] ($(u.south west) + (-0.5cm, -0.75cm)$) .. controls +(0.5cm, 0cm) and +(-0.5cm,-0.75cm) .. node[left, pos=0] {$\rho_0$} ($(u.south -| u-1-9.south)+(0pt,-2pt)$);
            \draw[->] ($(v.north west) + (-0.5cm, 0.75cm)$) .. controls +(0.5cm, 0cm) and +(-0.5cm,0.75cm) .. ($(v.north -| v-1-9.north)+(0pt,2pt)$);
            
            % \lambda_0
            \path[pattern=north east lines] ($(u.north -| u-1-9.north west)+(0pt,1mm+2pt)$) rectangle ($(u.north -| u-1-10.north east)+(0pt,2pt)$);
            \draw[->] ($(u.north east) + (0.75cm, 0.85cm+2pt)$) .. controls +(-0.5cm, 0cm) and +(0cm,0.75cm) .. node[below, pos=0] {$\lambda_1$} ($(u.north -| u-1-9.north) + (1.5cm, 1mm+4pt)$);
            \draw[->, shorten <= 1ex] ($(u.north -| u-1-9.north)+(1.5cm,1mm+4pt)$) .. controls +(-0.25cm, 1cm) and +(0cm,1cm) .. node[below] {$Y_{c_{\nu}}$} ($(u.north -| u-1-7.north)+(0pt,2pt)$);
            \node[anchor=base, inner sep=1pt] at ($(u.north -| u-1-6.north)+(0pt,2pt+0.5cm)$) (dotsU) {$\dots$};
            \draw[-, shorten <= 1ex] ($(u.north -| u-1-7.north)+(0pt,2pt)$) .. controls +(-0.25cm, 0.5cm) and +(0.25cm,0cm) .. (dotsU.east);
            \draw[->] (dotsU.west) .. controls +(-0.5cm, 0pt) and +(-0cm,0.5cm) .. node[above right, pos=1, xshift=2ex, inner sep=0pt] {$Y_{c_{\nu + 2}}$} ($(u.north -| u-1-5.north)+(0pt,2pt)$);
            \draw[->, shorten <= 1ex] ($(u.north -| u-1-5.north)+(0pt,2pt)$) .. controls +(-0.25cm, 1cm) and +(0cm,1cm) .. node[below] {$Y_{c_{\mu + 1}}$} node[above left, pos=1] {$\lambda_0$} ($(u.north -| u-1-3.north)+(0pt,2pt)$);

            \path[pattern=north west lines] ($(v.south -| v-1-9.south west)+(0pt,-1mm-2pt)$) rectangle ($(v.south -| v-1-10.south east)+(0pt,-2pt)$);
            \draw[->] ($(v.south east) + (0.75cm, -0.85cm-2pt)$) .. controls +(-0.5cm, 0cm) and +(0cm,-0.75cm) .. node[above, pos=0] {$\lambda_1$} ($(v.south -| v-1-9.south) + (1.5cm, -1mm-4pt)$);
            \draw[->, shorten <= 1ex] ($(v.south -| v-1-9.south)+(1.5cm,-1mm-4pt)$) .. controls +(-0.25cm, -1cm) and +(0cm,-1cm) .. node[above] {$Y_{c_{\nu}}$} ($(v.south -| v-1-7.south)+(0pt,-2pt)$);
            \node[anchor=base, inner sep=1pt] at ($(v.south -| v-1-6.south)+(0pt,-2pt-0.5cm)$) (dotsV) {$\dots$};
            \draw[-, shorten <= 1ex] ($(v.south -| u-1-7.south)+(0pt,-2pt)$) .. controls +(-0.25cm, -0.5cm) and +(0.25cm,0cm) .. (dotsV.east);
            \draw[->] (dotsV.west) .. controls +(-0.5cm, 0pt) and +(-0cm,-0.5cm) .. node[below right, pos=1, xshift=2ex, inner sep=0pt] {$Y_{c_{\nu + 2}}$} ($(v.south -| v-1-5.south)+(0pt,-2pt)$);
            \draw[->, shorten <= 1ex] ($(v.south -| v-1-5.south)+(0pt,-2pt)$) .. controls +(-0.25cm, -1cm) and +(0cm,-1cm) .. node[above] {$Y_{c_{\mu + 1}}$} node[below left, pos=1] {$\lambda_0$} ($(v.south -| v-1-3.south)+(0pt,-2pt)$);
%            
%            % u braces
%            \draw[decorate, decoration={brace}] ($(u.north west) + (0pt, 2pt)$) -- node[above, yshift=2pt] {$\scriptstyle u_0'$} ($(u.north -| u-1-2.north east) + (0pt, 2pt)$);
%            \draw[decorate, decoration={brace}] ($(u.south -| u-1-8.south east) + (0pt, -2pt)$) -- node[below, yshift=-2pt] {$\scriptstyle u_1'$} ($(u.south -| u-1-4.south west) + (0pt, -2pt)$);
%            \draw[decorate, decoration={brace}] ($(u.north -| u-1-10.north west) + (0pt, 2pt)$) -- node[above, yshift=2pt] {$\scriptstyle u_2'$} ($(u.north -| u-1-10.north east) + (0pt, 2pt)$);
%            
%            % v braces
%            \draw[decorate, decoration={brace}] ($(v.south -| v-1-2.south east) + (0pt, -2pt)$) -- node[below, yshift=-2pt] {$\scriptstyle v_0'$} ($(v.south west) + (0pt, -2pt)$);
%            \draw[decorate, decoration={brace}] ($(v.north -| v-1-4.north west) + (0pt, 2pt)$) -- node[above, yshift=2pt] {$\scriptstyle v_1'$} ($(v.north -| v-1-8.north east) + (0pt, 2pt)$);
%            \draw[decorate, decoration={brace}] ($(v.south -| v-1-10.south east) + (0pt, -2pt)$) -- node[below, yshift=-2pt] {$\scriptstyle v_2'$} ($(v.south -| v-1-10.south west) + (0pt, -2pt)$);
          \end{tikzpicture}
        \caption{$|x_\mu'|_a > |y_\mu'|_a$}\label{sfig:nextPosIsLambdaPos:moreAsInU}
        \end{subfigure}\\
        \begin{subfigure}{\linewidth}\centering
          \begin{tikzpicture}[baseline=(uLabel.base)]
            \node[color=gray] (uLabel) {$u'/u'':$~};
            \matrix [right=0cm of uLabel, rectangle, draw, matrix of math nodes, ampersand replacement=\&, inner sep=2pt, text height=.8em, text depth=.2em] (u) {
                |[minimum width=1cm, align=center]| \dots \& |[minimum width=1.5cm, align=center]| $x_\mu'' \textcolor{gray}{a^\ell}$ \& c_{\mu + 1} \& |[minimum width=1cm, align=center]| \dots \& c_{K} \& |[minimum width=1.5cm, align=center]| $x_K' \textcolor{gray}{a^\ell}$ \& c_{K + 1} \& |[minimum width=1cm, align=center]| \dots \& x_\nu' \& c_{\nu + 1} \& |[minimum width=1cm, align=center]| \dots \\
              };

            \matrix [below right=1.5cm and 0cm of u.south west, rectangle, draw, matrix of math nodes, ampersand replacement=\&, inner sep=2pt, text height=.8em, text depth=.2em] (v) {
                |[minimum width=1cm, align=center]| \dots \& |[minimum width=1.5cm, align=center]| $y_\mu'$ \& c_{\mu + 1} \& |[minimum width=1cm, align=center]| \dots \& c_{K} \& |[minimum width=1.5cm, align=center]| $y_K'$ \& c_{K + 1} \& |[minimum width=1cm, align=center]| \dots \& x_\nu' \& c_{\nu + 1} \& |[minimum width=1cm, align=center]| \dots \\
              };
            \node[left=0pt of v] {$v' = {}$};
            \node[right=0pt of v] {${} = v''$};
  
            \foreach \i in {1,...,10} {
                \draw ($(u.north -| u-1-\i.north east)$) -- ($(u.south -| u-1-\i.south east)$);
                \draw ($(v.north -| v-1-\i.north east)$) -- ($(v.south -| v-1-\i.south east)$);
              };
          
            % \rho_0 + \neq X_a
            \path[pattern=north east lines] ($(u.south west)+(0pt,-2pt)$) rectangle ($(u.south -| v-1-1.south east)+(0pt,-1mm-2pt)$);
            \draw[->] ($(u.south west) + (-0.75cm, -0.75cm)$) .. controls +(0.75cm, 0cm) and +(0cm,-0.5cm) .. ($(u.south -| u-1-1.south)+(0pt,-1mm-4pt)$);

            \path[pattern=north west lines] ($(v.north west)+(0pt,2pt)$) rectangle ($(v.north -| v-1-1.north east)+(0pt,1mm+2pt)$);
            \draw[->] ($(v.north west) + (-0.75cm, 0.75cm)$) .. controls +(0.75cm, 0cm) and +(0cm,0.5cm) .. ($(v.north -| v-1-1.north)+(0pt,1mm+4pt)$);

            \draw[->, shorten <= 1ex] ($(u.south -| u-1-1.south)+(0pt,-1mm-4pt)$) .. controls +(0.25cm, -0.25cm) and +(0cm,-1cm) .. node[above] {$\neq X_a$} ($(u.south -| u-1-10.south)+(0cm,-2pt)$);
            \draw[->, shorten <= 1ex] ($(v.north -| v-1-1.north)+(0pt,1mm+4pt)$) .. controls +(0.25cm, 0.25cm) and +(0cm,1cm) .. node[below] {$\neq X_a$} ($(v.north -| v-1-10.north)+(0cm,2pt)$);
            \node[right] at ($(u-1-10.south)!0.5!(v-1-10.north)+(-0.5cm,0pt)$) {$\rho_0$};
           
            % \lambda_1
            \path[pattern=north east lines] ($(u.north -| u-1-10.north west)+(0pt,2pt)$) rectangle ($(u.north -| u-1-11.north east)+(0pt,1mm+2pt)$);
            \path[pattern=north west lines] ($(v.south -| v-1-10.south west)+(0pt,-2pt)$) rectangle ($(v.south -| v-1-11.south east)+(0pt,-1mm-2pt)$);
            \draw[->] ($(u.north east) + (0.75cm, 0.75cm+2pt)$) .. controls +(-0.5cm, 0cm) and +(0cm,0.5cm) .. node[below, pos=0] {$\lambda_1$} ($(u.north -| u-1-10.north east) + (0.25cm, 1mm+4pt)$);
            \draw[->] ($(v.south east) + (0.75cm, -0.75cm-2pt)$) .. controls +(-0.5cm, 0cm) and +(0cm,-0.5cm) .. node[above, pos=0] {$\lambda_1$} ($(v.south -| v-1-10.south east) + (0.25cm, -1mm-4pt)$);
            
            % \lambda_0
            \node[anchor=base, inner sep=1pt] at ($(u.north -| u-1-8.north)+(0pt,2pt+0.75cm)$) (dotsUr) {$\dots$};
            \draw[-, shorten <= 1ex] ($(u.north -| u-1-10.north east) + (0.25cm, 1mm+4pt)$) .. controls +(-0.5cm, 0.5cm) and +(0.25cm,0cm) .. (dotsUr.east);
            \draw[->] (dotsUr.west) .. controls +(-0.5cm, 0pt) and +(-0cm,0.5cm) .. node[below right, xshift=1ex, pos=0.5, inner sep=0pt] {$Y_{c_{K + 1}}$} ($(u.north -| u-1-7.north)+(0pt,2pt)$);
            \draw[->, shorten <= 1ex] ($(u.north -| u-1-7.north)+(0pt,2pt)$) .. controls +(-0.25cm, 1cm) and +(0cm,1cm) .. node[below] {$Y_{c_{K}}$} ($(u.north -| u-1-5.north)+(0pt,2pt)$);
            \node[anchor=base, inner sep=1pt] at ($(u.north -| u-1-4.north)+(0pt,2pt+0.75cm)$) (dotsUl) {$\dots$};
            \draw[-, shorten <= 1ex] ($(u.north -| u-1-5.north)+(0pt,2pt)$) .. controls +(-0.25cm, 0.75cm) and +(0.25cm,0cm) .. (dotsUl.east);
            \draw[->] (dotsUl.west) .. controls +(-0.5cm, 0pt) and +(-0cm,0.5cm) .. node[above right, pos=1, xshift=2ex, inner sep=0pt] {$Y_{c_{\mu + 1}}$} node[above left, pos=1] {$\lambda_0$} ($(u.north -| u-1-3.north)+(0pt,2pt)$);
            
            \node[anchor=base, inner sep=1pt] at ($(v.south -| v-1-8.south)+(0pt,-2pt-0.75cm)$) (dotsVr) {$\dots$};
            \draw[-, shorten <= 1ex] ($(v.south -| v-1-10.south east) + (0.25cm, -1mm-4pt)$) .. controls +(-0.5cm, -0.5cm) and +(0.25cm,0cm) .. (dotsVr.east);
            \draw[->] (dotsVr.west) .. controls +(-0.5cm, 0pt) and +(-0cm,-0.5cm) .. node[above right, xshift=1ex, pos=0.5, inner sep=0pt] {$Y_{c_{K + 1}}$} ($(v.south -| v-1-7.south)+(0pt,-2pt)$);
            \draw[->, shorten <= 1ex] ($(v.south -| v-1-7.south)+(0pt,-2pt)$) .. controls +(-0.25cm, -1cm) and +(0cm,-1cm) .. node[above] {$Y_{c_{K}}$} ($(v.south -| v-1-5.south)+(0pt,-2pt)$);
            \node[anchor=base, inner sep=1pt] at ($(v.south -| v-1-4.south)+(0pt,-2pt-0.75cm)$) (dotsVl) {$\dots$};
            \draw[-, shorten <= 1ex] ($(v.south -| v-1-5.south)+(0pt,-2pt)$) .. controls +(-0.25cm, -0.75cm) and +(0.25cm,0cm) .. (dotsVl.east);
            \draw[->] (dotsVl.west) .. controls +(-0.5cm, 0pt) and +(-0cm,-0.5cm) .. node[below right, pos=1, xshift=2ex, yshift=-0.5ex, inner sep=0pt] {$Y_{c_{\mu + 1}}$} node[below left, pos=1] {$\lambda_0$} ($(v.south -| v-1-3.south)+(0pt,-2pt)$);
          \end{tikzpicture}
          \caption{$|x_\mu'|_a > |y_\mu'|_a, c_K \neq a$}\label{sfig:nextPosIsLambdaPos:cKIsA}
        \end{subfigure}
        \caption{Only a prefix of $\lambda\Yinfty$ ends directly to the right if $x_\mu'$/$y_\mu'$ in $u'$ and $v'$.}
      \end{figure}
      
      In fact, since we have $u' \sim_{m, \rho, \lambda} v'$, there must already have been an $a$ in the middle part $u_1'$ of the $\rho_0$-$\lambda_0$-split of $u'$. Thus, adding the $a^{\beta m}$ to $u''$ cannot change the alphabet of the middle part of any $\rho'$-$\lambda'$-split of $u''$ (where $\rho'$ is a prefix of $\rho \Xinfty$ and $\lambda'$ is a prefix of $\lambda \Yinfty$ with $\rho'(u'') \leq \lambda'(u'')$). Together with the fact that we have only added a number of $a$s which is a multiple of $m$, this shows $u'' \sim_{m, \rho, \lambda} u'$ and, thus, \autoref{itm:rhoLambdaSim} of the definition of $u$-like words.
      
      To show \autoref{itm:rhoRDescent}, it suffices to observe that $u_1'$ is a prefix of some block in the $\mathcal{R}$-factorization of $u'$ containing an $a$. Thus, by \autoref{fct:insertAM}, we have not changed the image under $\varphi$.

      From now on, we may assume that we have $|x_\mu'|_a > |y_\mu'|_a$, i.\,e.\ there is some $\ell > 0$ such that $x_\mu' = x_\mu'' a^\ell$ (up to permutation) with $|x_\mu''|_a = |y_\mu'|_a$. Let $\rho_0$ be the minimal prefix of $\rho \Xinfty$ such that $\rho_0$ ends to the right of $x_\mu' = x_\mu'' a^\ell$. Because $\rho_0$ cannot end at the position of $c_{\mu + 1}$, there is some $\nu > \mu + 1$ such that $\rho_0$ ends directly to the right of $x_\nu'$ in $u'$ and, therefore, $y_\nu'$ in $v'$ (this position may well be $+\infty$). Let $\lambda_1$ be the maximal prefix of $\lambda \Yinfty$ with $\rho_0(u') \leq \lambda_1(u')$ (both possibly $+\infty$). This implies $\lambda_0 = \lambda_1 Y_{c_\nu} \dots Y_{c_{\mu + 1}}$. The situation is depicted in \autoref{sfig:nextPosIsLambdaPos:moreAsInU} (note, however, that $\rho_0$ and $\lambda_1$ may end at $+\infty$).
      
      First, consider the case $c_k = a$ for all $\mu < k \leq \nu$. In particular, we have that all instruction after $\lambda_1$ in $\lambda \Yinfty$ up to $\lambda_0$ are $Y_a$ and that, thus, no $x_k'$ or $y_k'$ with $\mu < k \leq \nu$ may contain an $a$. In addition, the middle parts of the $\rho_0$-$\lambda_1$-splits of $u'$ and $v'$ do not contain an $a$ (this includes the case that they are both empty) and, for the left parts, we obtain:
      \begin{align*}
        &|x_0' \dots x_{\mu - 1}'|_a + |c_1 \dots c_\mu|_a + |x_\mu'|_a + |c_{\mu + 1} \dots c_\nu|_a + |x_{\mu + 1}' \dots x_\nu'|_a\\
        ={}&|x_0' c_1 x_1' \dots c_\mu x_\mu' \, c_{\mu + 1} x_{\mu + 1}' \dots c_\nu x_\nu'|_a\\
        \equiv{}&
        |y_0' c_1 y_1' \dots c_\mu y_\mu' \, c_{\mu + 1} y_{\mu + 1}' \dots c_\nu y_\nu'|_a \mod m\\
        ={}& |y_0' \dots y_{\mu - 1}'|_a + |c_1 \dots c_\mu|_a + |y_\mu'|_a + |c_{\mu + 1} \dots c_\nu|_a + |y_{\mu + 1}' \dots y_\nu'|_a
      \end{align*}
      Since we have $|x_0' \dots x_{\mu - 1}'|_a = |y_0' \dots y_{\mu - 1}'|_a$ (by the choice of $\mu$) and $|x_{\mu + 1}' \dots x_\nu'|_a = |y_{\mu + 1}' \dots y_\nu'|_a = 0$, we obtain $|x_\mu'|_a \equiv |y_\mu'|_a \bmod m$ and, thus, that $\ell$ is a multiple of $m$. Thus, we may apply \autoref{fct:droppingAMRemainsULike} to drop the $a^\ell$ block and to obtain $u''$ as a $u$-like word.
                        
      It remains the case that there is some $K$ with $\mu < K \leq \nu$ and $c_K \neq a$. The idea is that we may move the excessive $a^\ell$ block to $x_K'$ without any disruptions in that case, i.\,e.\ we let $x_K'' = x_K' a^\ell$ and $x_k'' = x_k'$ for all other $k \not\in \{ \mu, K \}$ (as well as $v'' = v'$).
      
      As always, we have to show that $u''$ remains $u$-like. Since we have $c_K \neq a$ (and, thus, $Y_{c_K} \neq Y_a$), all prefixes of $\lambda\Yinfty$ visit the same $c_k$ in the factorization (\ref{eqn:rhoLambdaFactorization1}) of $u'$ as they do in the factorization (\ref{eqn:rhoLambdaFactorization2}) of $u''$. With regard to $\rho \Xinfty$, we observe that a prefix $\rho'$ of $\rho \Xinfty$ may only end either to the left of $x_\mu' = x_\mu'' a^\ell$ in $u'$ or to the right of $x_\nu'$ (by the choice of $\rho_0$ and, thus, $\nu$). The latter is the case if and only if $\rho_0$ is prefix of $\rho'$. Since there is an $a$ in $x_\mu' = x_\mu'' a^\ell$, the last instruction of $\rho_0$ cannot be an $X_a$. Thus, the additional $a^\ell$ block in $x_K''$ does not interfere and $\rho_0$ also ends directly to the right of $x_\nu''$ in $u''$. This shows that $u''$ remains $\rho$-$\lambda$-compatible to $u'$ (and, thus, $u$). The situation is depicted in \autoref{sfig:nextPosIsLambdaPos:cKIsA}. Note, however, that $\rho_0$ and $\lambda_1$ may also end at $+\infty$, however.
      
      For \autoref{itm:rhoLambdaSim} of the definition of $u$-like words, we show $u' \sim_{m, \rho, \lambda} u''$ (by using $u' \sim_{m, \rho, \lambda} v'$). For this, consider an arbitrary prefix $\rho'$ of $\rho \Xinfty$ and an arbitrary prefix $\lambda'$ of $\lambda \Yinfty$ with $\rho'(u') \leq \lambda'(u')$ (and, thus, also $\rho'(u'') \leq \lambda'(u'')$). There is nothing to show if $\rho_0(u') \leq \rho'(u')$ since the numbers of $a$s in the left part of the $\rho'$-$\lambda'$-split of $u''$ is the same as in the left part of the $\rho'$-$\lambda'$-split of $u'$ (we have only moved the $a^\ell$ block). The only alternative is, however, that $\rho'$ ends to the left of $x_\mu' = x_\mu'' a^\ell$ in $u'$. In this case, consider the $\rho'$-$\lambda_0$-splits of $u'$, $u''$ and $v'$. Note that there is an $a$ in the middle part for the one belonging to $u'$ (as $x_\mu' = x_\mu'' a^\ell$ is one of its suffixes). Since we have $u' \sim_{m, \rho, \lambda} v'$, there must also be one in the middle part of the split of $v'$. Since we have $|x_k'|_a = |y_k|_a$ for all $k < \mu$ and $|x_\mu''|_a = |y_\mu'|_a$, it follows that there still must be an $a$ in the middle part of the split of $u''$ (even after moving the $a^\ell$ block to the right) and there is nothing more to show. This proves \autoref{itm:rhoLambdaSim}.
      
      Finally, we have to show \autoref{itm:rhoRDescent}. However, since there is no position where any prefix of $\rho \Xinfty$ ends between them, $x_\mu' = x_\mu'' a^\ell$ and $x_K'$ belong to the same block of the $\mathcal{R}$-factorization of $u'$. Thus, moving the $a^\ell$ block is just a permutation within the same block of the $\mathcal{R}$-factorization, which does not change the image under $\varphi$ by \autoref{fct:permuteR}.
    \end{proof}

    We may combine \autoref{prop:congruenceInDAb} and \autoref{prop:DAbImpliesQuotient} into the following theorem.
    \begin{theorem}\label{thm:mainResult}
      Let $M$ be a monoid generated by $\Sigma$. Then $M$ is in $\DAb$ if and only if there are $m, n \in \Np$ and a surjective homomorphism $\Sigma^* / {\approx_{m, n}} \to M$.
    \end{theorem}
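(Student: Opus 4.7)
The plan is to derive this directly from the two propositions we have just proved; no new combinatorial work is needed.

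For the ``if'' direction, I would start from a surjective homomorphism $\pi: \Sigma^* / {\approx_{m, n}} \to M$. By \autoref{prop:congruenceInDAb}, the domain $\Sigma^*/{\approx_{m,n}}$ is itself a member of $\DAb$. Since $\DAb$ is a variety and hence closed under homomorphic images, $M \in \DAb$.

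For the ``only if'' direction, I would use that $M$ is generated by $\Sigma$, so the inclusion of $\Sigma$ into $M$ extends to a surjective monoid homomorphism $\varphi: \Sigma^* \to M$. By \autoref{prop:DAbImpliesQuotient}, there are $m, n \in \Np$ such that $u \approx_{m,n} v$ implies $\varphi(u) = \varphi(v)$ for all $u, v \in \Sigma^*$. Consequently, $\varphi$ factors through the quotient $\Sigma^* / {\approx_{m,n}}$, yielding a well-defined homomorphism $\bar{\varphi}: \Sigma^* / {\approx_{m,n}} \to M$ via $\bar{\varphi}([u]_{\approx_{m,n}}) = \varphi(u)$. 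Since $\varphi$ is surjective, so is $\bar{\varphi}$.

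There is no genuine obstacle here; the whole argument is a direct packaging of the two preceding propositions together with the elementary fact that varieties are closed under homomorphic images and that a congruence refining the kernel of a homomorphism induces a factorisation through the quotient. The only minor caveat is the hypothesis that $M$ is generated by $\Sigma$, which is needed to produce the surjective $\varphi: \Sigma^* \to M$ in the first place (and which is standard for this kind of combinatorial characterisation of a variety via alphabet-indexed congruences).
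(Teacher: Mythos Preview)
Your proposal is correct and matches the paper's own proof essentially line for line: both directions are derived immediately from \autoref{prop:congruenceInDAb} and \autoref{prop:DAbImpliesQuotient}, using closure of $\DAb$ under homomorphic images for the ``if'' direction and factoring the canonical surjection $\varphi:\Sigma^*\to M$ through $\Sigma^*/{\approx_{m,n}}$ for the ``only if'' direction.
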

    \begin{proof}
      Let $\varphi: \Sigma^* \to M$ be a surjective homomorphism.
      For $M \in \DAb$, we obtain that there are $m, n \in \Np$ such that $u \approx_{m, n} v$ implies $\varphi(u) = \varphi(v)$ by \autoref{prop:DAbImpliesQuotient}. In other words, $[w]_{\approx_{m, n}} \mapsto \varphi(w)$ is a well-defined homomorphism, which is surjective because $\varphi$ is.
      
      Now, let $M$ be a homomorphic image of $\Sigma^* / {\approx_{m, n}}$ for some $m, n \in \Np$. Since $\Sigma^* / {\approx_{m, n}}$ is in $\DAb$ by \autoref{prop:congruenceInDAb} and since $\DAb$ is closed under taking homomorphic images as a variety, we obtain $M \in \DAb$.
    \end{proof}
  
    \paragraph*{Semigroups.}
    We also obtain the same result for semigroups. For this, let $S^1 = S \cup \{ 1 \}$ be the monoid with $1 s = s 1 = s$ for all $s \in S^1$. We have:
    \begin{fact}\label{fct:isomorphismSemigroupMonoid}
      Let $m \in \Np$, $n \in \mathbb{N}$. Then $\Sigma^* / {\approx_{m, n}}$ is isomorphic to $\left( \Sigma^+ / {\approx_{m, n}} \right)^1$.
    \end{fact}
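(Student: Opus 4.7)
The plan is to prove that the $\approx_{m,n}$-class of the empty word is the singleton $\{\varepsilon\}$; once this is established, the class of $\varepsilon$ in $\Sigma^*/{\approx_{m,n}}$ automatically plays the role of the neutral element and can be identified with the formally adjoined identity of $(\Sigma^+/{\approx_{m,n}})^1$, while every other class consists purely of non-empty words and so corresponds to an element of $\Sigma^+/{\approx_{m,n}}$.

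The heart of the argument is therefore the claim that $\varepsilon \approx_{m,n} v$ forces $v = \varepsilon$. If $n \geq 1$ and $v \neq \varepsilon$, I would pick any letter $a \in \alphabet v$ and use the one-instruction rankers $\rho = X_a$ and $\lambda = \varepsilon$, so that $|\rho| + |\lambda| = 1 \leq n$: on $\varepsilon$ we have $\rho(\varepsilon) = X_a(\varepsilon; -\infty) = +\infty = \lambda(\varepsilon)$, while on $v$ the ranker $\rho$ lands at the finite first $a$-position, strictly below $\lambda(v) = +\infty$; thus $\operatorname{ord}[\rho, \lambda; \varepsilon]$ and $\operatorname{ord}[\rho, \lambda; v]$ disagree, contradicting $\rho$-$\lambda$-compatibility and hence $\varepsilon \sim_{m, \rho, \lambda} v$. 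For the corner case $n = 0$ the ranker trick degenerates (only $\rho = \lambda = \varepsilon$ is admissible), so I would instead read the defining condition directly off \autoref{def:singleRankerPair}: the prefix pair $\rho' = \varepsilon$, $\lambda' = \varepsilon$ of $\Xinfty$ and $\Yinfty$ yields the split $(\varepsilon, w, \varepsilon)$ for every $w$, and the requirement $\alphabet u_1 = \alphabet v_1$ forces $\alphabet v = \alphabet \varepsilon = \emptyset$, whence $v = \varepsilon$.

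With this established, the isomorphism is straightforward to set up. First, $\approx_{m,n}$ restricts to a semigroup congruence on $\Sigma^+$ since $\Sigma^+$ is closed under concatenation, so $\Sigma^+/{\approx_{m,n}}$ is a well-defined semigroup. Then the map
\[
  \Phi : \Sigma^*/{\approx_{m,n}} \longrightarrow \left(\Sigma^+/{\approx_{m,n}}\right)^1,
\]
sending $[\varepsilon]$ to the adjoined identity $1$ and $[w]$ to $[w]$ for $w \in \Sigma^+$, is well defined: by the previous step the class $[w]$ taken in $\Sigma^*$ and the class $[w]$ taken in $\Sigma^+$ coincide as subsets of $\Sigma^+$ whenever $w \neq \varepsilon$. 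Bijectivity is then immediate, and homomorphicity reduces to observing that $[\varepsilon]$ is the identity of $\Sigma^*/{\approx_{m,n}}$ (sent to $1$) and that $u, v \in \Sigma^+$ implies $uv \in \Sigma^+$, so $\Phi([u][v]) = [uv] = [u][v] = \Phi([u])\Phi([v])$. The only delicate point in the whole argument is the $n = 0$ boundary case, which has to be dispatched separately from the clean ranker-based distinction that works for every $n \geq 1$; everything else is routine bookkeeping.
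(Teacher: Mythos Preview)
Your proof is correct and follows essentially the same approach as the paper: define the obvious map sending $[\varepsilon]$ to the adjoined $1$ and every other class to itself, and justify well-definedness by showing $[\varepsilon]_{\approx_{m,n}} = \{\varepsilon\}$. The only difference is that the paper dispenses with your case split: it uses your ``$n=0$'' argument (the empty-ranker split $(\varepsilon, w, \varepsilon)$ and the alphabet condition $\alphabet \varepsilon = \emptyset \neq \alphabet w$) uniformly for all $n \in \mathbb{N}$, since $|\varepsilon| + |\varepsilon| = 0 \le n$ always holds, making your separate $n \ge 1$ ranker argument unnecessary.
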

    \begin{proof}
      The isomorphism is given by mapping the class of $\varepsilon$ to $1$ and every other class to itself. This is a well-defined isomorphism since we have $\varepsilon \not\approx_{m, n} w$ for all $\Sigma^* \ni w \neq \varepsilon$.
      Indeed, for the empty $X$-ranker $\rho$ and the empty $Y$-ranker $\lambda$, the $\rho$-$\lambda$-split of $\varepsilon$ is $(\varepsilon, \varepsilon, \varepsilon)$ and that of $w$ is $(\varepsilon, w, \varepsilon)$. Since we have $\alphabet \varepsilon = \emptyset \neq \alphabet w$, we have $\varepsilon \not\sim_{m, \varepsilon, \varepsilon} w$.
    \end{proof}
    
    Let $\DAb_S$ be the variety of finite semigroups whose regular $\mathcal{D}$-classes form Abelian groups.
    \begin{corollary}
      Let $S$ be a semigroup generated by $\Sigma$. Then $S$ is in $\DAb_S$ if and only if there are $n, m \in \Np$ and a surjective homomorphism $\Sigma^+ / {\approx_{m, n}} \to S$.
    \end{corollary}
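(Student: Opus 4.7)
The plan is to reduce the semigroup corollary to the monoid case already proved in \autoref{thm:mainResult} via the standard identity-adjunction trick, leveraging \autoref{fct:isomorphismSemigroupMonoid} to translate the congruence quotients between the two settings.

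First, I would establish the key bridging observation that a finite semigroup $S$ belongs to $\DAb_S$ if and only if $S^1 \in \DAb$. Since Green's relations on $S$ are defined using $S^1$ in the first place, the regular $\mathcal{D}$-classes of $S$ coincide with those regular $\mathcal{D}$-classes of $S^1$ that are contained in $S$; the only extra regular class in $S^1$ is the trivial singleton $\{1\}$, which is an Abelian group. Hence the Abelian-group property transfers cleanly in both directions.

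For the \emph{only if} direction, suppose $S \in \DAb_S$ is generated as a semigroup by $\Sigma$, with canonical surjective homomorphism $\varphi : \Sigma^+ \to S$. Extend it to a monoid homomorphism $\hat\varphi : \Sigma^* \to S^1$ by sending $\varepsilon$ to the adjoined identity; then $\hat\varphi$ is surjective and, by the bridging observation, $S^1 \in \DAb$. By \autoref{thm:mainResult}, there exist $m, n \in \Np$ and a surjective monoid homomorphism $\Sigma^* / {\approx_{m,n}} \to S^1$ factoring $\hat\varphi$. Composing with the isomorphism from \autoref{fct:isomorphismSemigroupMonoid} yields a surjective monoid homomorphism $(\Sigma^+ / {\approx_{m,n}})^1 \to S^1$. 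Since $\Sigma$ generates $S$ and no letter is mapped to the adjoined identity, the restriction to $\Sigma^+ / {\approx_{m,n}}$ lands in $S$ and is surjective onto $S$.

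For the \emph{if} direction, given a surjective semigroup homomorphism $\Sigma^+ / {\approx_{m,n}} \to S$, extending by $1 \mapsto 1$ produces a surjective monoid homomorphism $(\Sigma^+ / {\approx_{m,n}})^1 \to S^1$. Applying \autoref{fct:isomorphismSemigroupMonoid} in the other direction gives a surjective monoid homomorphism $\Sigma^* / {\approx_{m,n}} \to S^1$, so \autoref{thm:mainResult} yields $S^1 \in \DAb$, and by the bridging observation $S \in \DAb_S$.

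The only genuine content beyond bookkeeping is the bridging observation about $S$ vs.\ $S^1$; everything else is a matter of restricting or extending by the adjoined identity and invoking \autoref{thm:mainResult} and \autoref{fct:isomorphismSemigroupMonoid}. I expect no real obstacle, only care in verifying that the adjoined identity does not interfere with surjectivity when passing from the monoid setting back to the semigroup one.
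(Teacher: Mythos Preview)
Your proposal is correct and follows essentially the same route as the paper's proof: both directions are handled by adjoining an identity and reducing to the monoid result via \autoref{fct:isomorphismSemigroupMonoid}. The only cosmetic difference is that the paper invokes \autoref{prop:DAbImpliesQuotient} and \autoref{prop:congruenceInDAb} directly (which makes the factoring through $\hat\varphi$ explicit), whereas you go through \autoref{thm:mainResult}; when you claim the resulting homomorphism factors $\hat\varphi$, you are implicitly relying on how that theorem is proved rather than on its statement alone, so citing \autoref{prop:DAbImpliesQuotient} at that point would be cleaner.
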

    \begin{proof}
      Since $S$ is generated by $\Sigma$, there is a surjective homomorphism $\varphi: \Sigma^+ \to S$. Suppose $S$ is in $\DAb_S$. Then, $S^1$ is in $\DAb$ and we may extend $\varphi$ into a surjective homomorphism of monoids $\Sigma^* \to S^1$ by mapping $\varepsilon$ to $1$. By \autoref{prop:DAbImpliesQuotient}, there are $m, n \in \Np$ such that $u \approx_{m, n} v$ implies $\varphi(u) = \varphi(v)$. This shows that mapping $[w]_{\approx_{m, n}} \mapsto \varphi(w)$ defines a well-defined monoid homomorphism $\varphi': \Sigma^* / {\approx_{m, n}} \to S^1$, which is surjective since $\varphi$ is. Also note that we have $\varphi'([w]_{\approx_{m, n}}) = \varphi(w) \in S = S^1 \setminus \{ 1 \}$ for all $w \in \Sigma^+$. Thus, we may restrict $\varphi'$ into a surjective homomorphism $\Sigma^+ / {\approx_{m, n}} \to S$ of semigroups.
      
      For the other direction, suppose that $S$ is a homomorphic image of $\Sigma^+ / {\approx_{m, n}}$ for some $m, n \in \Np$. Thus, it is a homomorphic image of a subsemigroup of $(\Sigma^+ / {\approx_{m, n}})^1$, which is isomorphic to $\Sigma^* / {\approx_{m, n}}$ (by \autoref{fct:isomorphismSemigroupMonoid}) and, thus, in $\DAb \subseteq \DAb_S$ by \autoref{prop:congruenceInDAb}.
    \end{proof}
  \end{section}

  \begin{section}{A Normal Form for Pseudowords over $\DAb$}
    \begin{subsection}{Definition of the Normal Form}
      In this section, we will show that there is a normal form for pseudowords with respect to $\DAb$. For this, we first need to define when a pseudoword is in normal form, then we need to show that, for every pseudoword $\alpha$, there is one $\alpha'$ in normal form with $\DAb \models \alpha = \alpha'$. Finally, we need to show that, for two distinct pseudowords $\alpha$ and $\beta$ in normal form, there is a finite monoid $M \in \DAb$ and a continuous homomorphism $\varphi: \freeProf \to M$ with $\varphi(\alpha) \neq \varphi(\beta)$ (i.\,e.\ that $\DAb \not\models \alpha = \beta$).
      
      \paragraph*{Profinite Natural Numbers and Integers.}
      Similar to the profinite metric from above, let $d'(n, n) = 0$ for $n \in \mathbb{N}$ and $d'(m, n) = 2^{-|M|}$ for $n, m \in \mathbb{N}$ with $n \neq m$ where $M$ is the smallest monogenic monoid generated by some $a$ with $a^n \neq a^m$. Just like the profinite metric, $d'$ is an ultrametric and we may consider the completion $\hat{\mathbb{N}}$ of $\mathbb{N}$ with respect to this metric. This completion is the set of \emph{profinite natural numbers}. By continuing the addition and multiplication of natural numbers (i.\,e.\ we let $\lim (n_k)_k + \lim (m_k)_k = \lim (n_k + m_k)_k$ and $\lim (n_k)_k \cdot \lim (m_k)_k = \lim (n_k \cdot m_k)_k$), we obtain a natural, commutative semiring structure. While we need a few results on $\hat{\mathbb{N}}$, we will not give any proofs but refer the reader for example to \cite[Section~2.3]{alm20:short} for details.
      
      In $\hat{\mathbb{N}}$, there are two idempotents: the first one is $0$ and the second idempotent is $\omega = \lim (k!)_k$. It turns out that $\omega$ plays an important role in $\hat{\mathbb{N}}$. In fact, we have $\hat{\mathbb{N}} = \mathbb{N} \cup K$ for $K = \{ \omega + \hat{n} \mid \hat{n} \in \hat{\mathbb{N}} \}$, where $\mathbb{N}$ and $K$ are disjoint \cite[Proposition~2.3.3]{alm20:short}. We call the elements of $K$ \emph{infinite} profinite natural numbers.
      
      To understand the structure of $K = \hat{\mathbb{N}} \setminus \mathbb{N}$ better, we also need to introduce the ring of \emph{profinite integers} $\hat{\mathbb{Z}}$. It is defined analogously to the semiring of profinite natural numbers but the metric is defined using cyclic groups instead of monogenic monoids (see \cite[Section~2.2]{alm20:short} for details). It turns out that $\hat{\mathbb{N}} \setminus \mathbb{N}$ is isomorphic to $\hat{\mathbb{Z}}$:
      \begin{proposition}\label{prop:isomorphismProfiniteNumbers}
        The map
        \begin{align*}
          \hat{\mathbb{N}} \setminus \mathbb{N} &\to \hat{\mathbb{Z}}\\
          \lim (n_k)_k &\mapsto \lim (n_k)_k
        \end{align*}
        is a well-defined isomorphism of rings.
      \end{proposition}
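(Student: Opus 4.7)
The plan is to verify the four properties one by one: well-definedness, the ring-homomorphism property, injectivity, and surjectivity. The guiding observation is that every finite cyclic group is, in particular, a finite monogenic monoid; consequently the pseudometric defining $\hat{\mathbb{Z}}$ is coarser than $d'$ on $\mathbb{N}$, so the inclusion $\mathbb{N} \hookrightarrow \hat{\mathbb{Z}}$ is uniformly continuous and extends, by density of $\mathbb{N}$ in $\hat{\mathbb{N}}$, to a continuous map $\hat{\mathbb{N}} \to \hat{\mathbb{Z}}$ sending $\lim(n_k)_k$ to $\lim(n_k)_k$. Well-definedness follows at once. Continuity of the addition and multiplication on both sides, together with the fact that the extension agrees with the natural inclusion on $\mathbb{N}$, implies that the map is a ring homomorphism (it suffices to check compatibility with $+$ and $\cdot$ on $\mathbb{N}$, where both reduce to the usual operations).

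For injectivity, suppose $\hat{n} = \lim(n_k)_k$ and $\hat{m} = \lim(m_k)_k$ lie in $\hat{\mathbb{N}} \setminus \mathbb{N}$ and map to the same element of $\hat{\mathbb{Z}}$. I fix any finite monogenic monoid $M = \langle a \mid a^I = a^{I+P} \rangle_M$; because $\hat{n}$ and $\hat{m}$ are infinite, both sequences eventually exceed $I$. Hence $a^{n_k}$ and $a^{m_k}$ are eventually determined by $n_k \bmod P$ and $m_k \bmod P$, respectively. The $\hat{\mathbb{Z}}$-equality ensures $n_k \equiv m_k \bmod P$ for all sufficiently large $k$ (by considering the cyclic group $\mathbb{Z}/P\mathbb{Z}$, whose metric contribution is part of the definition of $\hat{\mathbb{Z}}$), so $a^{n_k} = a^{m_k}$ eventually. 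Since $M$ was arbitrary, $\hat{n} = \hat{m}$ in $\hat{\mathbb{N}}$.

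For surjectivity, I would pick any representative $\hat{z} = \lim(z_k)_k$ with $z_k \in \mathbb{Z}$ and construct a preimage in $\hat{\mathbb{N}} \setminus \mathbb{N}$ by taking $n_k$ to be the smallest non-negative integer congruent to $z_k$ modulo $k!$ and at least $k$. Then $n_k \to \infty$, so the limit (once shown to exist in $\hat{\mathbb{N}}$) sits outside $\mathbb{N}$. To check that $(n_k)_k$ is Cauchy in $\hat{\mathbb{N}}$, for any monogenic monoid $\langle a \mid a^I = a^{I+P}\rangle_M$ one notes that, for $k \geq \max(I, P)$, $n_k \geq I$ and $n_k \equiv z_k \bmod P$ (since $P \mid k!$); since $(z_k)_k$ is Cauchy in $\hat{\mathbb{Z}}$, $z_k \bmod P$ stabilizes, so $a^{n_k}$ stabilizes. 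The image of $\hat{n} = \lim(n_k)_k$ in $\hat{\mathbb{Z}}$ equals $\lim(z_k)_k = \hat{z}$ because $n_k - z_k$ is a multiple of $k!$ and thus tends to $0$ in $\hat{\mathbb{Z}}$.

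The only mild subtlety is the surjectivity step, where one must exhibit natural-number representatives that simultaneously grow to infinity (to land outside $\mathbb{N}$) and track $\hat{z}$ modulo every period (to realize the chosen element of $\hat{\mathbb{Z}}$); the $k!$-trick above accomplishes both at once. Injectivity hinges on the crucial use of the hypothesis $\hat{n}, \hat{m} \notin \mathbb{N}$, without which sequences that differ by staying below the index $I$ of some monogenic monoid could collide in $\hat{\mathbb{Z}}$ while remaining distinct in $\hat{\mathbb{N}}$.
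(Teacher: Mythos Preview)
Your argument is correct. The paper does not actually supply its own proof of this proposition: it explicitly states that results on $\hat{\mathbb{N}}$ are quoted without proof, with a reference to \cite[Section~2.3]{alm20:short}. So there is no ``paper's approach'' to compare against; your write-up stands on its own.

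Two small points worth tightening. First, in the injectivity step you assert that for $\hat{n} \in \hat{\mathbb{N}} \setminus \mathbb{N}$ the representing sequence $(n_k)_k$ eventually exceeds any fixed $I$. This is true, but deserves one line of justification: if $(n_k)_k$ did not tend to infinity, a bounded subsequence would hit some fixed $n_0 \in \mathbb{N}$ infinitely often, forcing $\hat{n} = n_0 \in \mathbb{N}$ by the Cauchy property. Second, your phrasing ``it suffices to check compatibility with $+$ and $\cdot$ on $\mathbb{N}$'' is slightly awkward since the domain $\hat{\mathbb{N}} \setminus \mathbb{N}$ is disjoint from $\mathbb{N}$; what you are really doing (and what your first paragraph says correctly) is extending the inclusion to a continuous semiring homomorphism on all of $\hat{\mathbb{N}}$ and then restricting. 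With those clarifications the proof is complete.
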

      
      This connection allows us to define profinite natural numbers of the form $\omega + \hat{z}$ for all $\hat{z} \in \hat{\mathbb{Z}}$: it is the pre-image of $z$ under the isomorphism from \autoref{prop:isomorphismProfiniteNumbers}. Note that this is compatible with the above definition of adding two profinite natural numbers for $\omega + \hat{n}$ where $\hat{n} \in \hat{\mathbb{N}}$. This definition shows $(\omega + \hat{x}) \cdot (\omega + \hat{y}) = (\omega + \hat{x}\hat{y})$. Alternatively, one can also show $\omega + \hat{z} = \lim (|z_k| \cdot (k!) + z_k)_k$ for all $\hat{z} = \lim (z_k)_k \in \hat{\mathbb{Z}} \setminus \{ 0 \}$ (where $|z_k|$ denotes the absolute value of $z_k$). With this, we also obtain, for example, $(\omega - n) + n = (\omega - n) + (\omega + n) = \omega$ and $n \cdot (\omega + \hat{z}) = \omega + n\hat{z}$ for all $n \in \mathbb{N}$ and $\hat{z} \in \hat{\mathbb{Z}}$.
      
      If $(n_k)_k$ is a Cauchy sequence, one can show that so is $(n_k \bmod m)_k$ for all $m \in \Np$ where $n \bmod m$ denotes the smallest non-negative representative of the residue class of $n$ modulo $m$. Thus, we may define $\hat{n} \bmod m = \lim (n_k \bmod m)_k$ for all profinite natural numbers $\hat{n} = \lim (n_k)_k$. In fact, $\hat{n} \mapsto \hat{n} \bmod m$ is a continuous morphism and we obtain $\omega \bmod m = 0$ (as the idempotent $\omega$ may only be mapped to the only idempotent $0$ in the cyclic group of order $m$). This allows us also to extend the notation $\hat{n} \equiv \hat{m} \bmod m$ to profinite natural numbers $\hat{n}, \hat{m} \in \hat{\mathbb{N}}$.
      
      For a Cauchy sequence $(n_k)_k$ (with respect to $d'$) of elements from $\mathbb{N}$ and a Cauchy sequence $(u_k)_k$ (with respect to the profinite metric $d$) of elements from $\Sigma^*$, it is straight-forward to verify that $(u_k^{n_k})_k$ is again a Cauchy sequence (with respect to $d$). Thus, for every profinite word $\freeProf \ni \alpha = \lim (u_k)_k$ and every profinite natural number $\hat{\mathbb{N}} \ni \hat{n} = \lim (n_k)_k$, the profinite word $\alpha^{\hat{n}} = \lim (u_k^{n_k})_k$ is well-defined. Please note that we now have two definitions for $\alpha^\omega$ and for $\alpha^{\omega - 1}$, respectively, but that they coincide in both cases. For infinite profinite natural numbers, we have the following evaluation in a finite monoid and we can distinguish any pair of distinct infinite profinite natural numbers in a finite cyclic group.
      \begin{fact}\label{fct:profiniteMod}
        Let $\hat{n}$ be an infinite profinite natural number and let $M$ be a finite monoid. Then we have
        \[
          \varphi(a^{\hat{n}}) = \varphi(a^{M! + (\hat{n} \bmod M!)})
        \]
        for every homomorphism $\varphi: \freeProf \to M$ and every $a \in \Sigma$.
      \end{fact}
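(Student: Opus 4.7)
The plan is to reduce the equality to a computation inside the monogenic submonoid $\langle s \rangle$ of $M$ generated by $s = \varphi(a)$, using that continuous maps into a finite discrete space are eventually constant on Cauchy sequences. Let $I \geq 0$ and $P > 0$ be the index and period of $\langle s \rangle$. As recalled in the preliminaries, $P$ divides $M!$ and $I \leq |M| \leq M!$; in particular, for every $r \geq 0$ the element $s^{M! + r}$ lies on the cycle of $\langle s \rangle$, and two cycle elements $s^x$, $s^y$ (with $x, y \geq I$) coincide iff $x \equiv y \bmod P$.

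I would then fix any Cauchy sequence $(n_k)_k$ of natural numbers with $\hat{n} = \lim_k n_k$, so that $\varphi(a^{\hat{n}}) = \lim_k \varphi(a^{n_k})$ and this limit is eventually attained in $M$. Two observations determine its eventual value: (i) since reduction modulo $M!$ is continuous and its target $\{0, \dots, M!-1\}$ is finite, the sequence $(n_k \bmod M!)_k$ is eventually constant equal to $r := \hat{n} \bmod M!$; and (ii) since $\hat{n}$ is infinite, only finitely many $n_k$ can lie in the finite set $\{0, \dots, I-1\}$, so $n_k \geq I$ eventually. Combining (i) and (ii), for $k$ large enough we have $n_k \geq I$ and $n_k \equiv r \equiv M! + r \bmod P$ (using $P \mid M!$ and $M! + r \geq I$), which forces $\varphi(a^{n_k}) = s^{n_k} = s^{M! + r} = \varphi(a^{M! + (\hat{n} \bmod M!)})$; passing to the limit yields the claim.

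The only point that needs genuine care is (ii): if infinitely many $n_k$ belonged to the finite set $\{0, \dots, I-1\}$, a pigeonhole argument would produce a constant subsequence, and since a Cauchy sequence sharing a convergent subsequence converges to the same limit, this would place $\hat{n}$ in $\mathbb{N}$, contradicting infiniteness. This is the only step where the hypothesis on $\hat{n}$ is actually used; all remaining arguments are direct computations in a finite cyclic group.
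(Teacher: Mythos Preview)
Your proof is correct but proceeds differently from the paper. You work directly inside the monogenic submonoid $\langle s\rangle$, using its index--period structure: once the Cauchy sequence is eventually on the cycle and has stabilised modulo $M!$, the period divides $M!$ and the desired equality drops out. The paper instead exploits the algebraic identity $\hat{n} = \omega + \hat{n}$ for infinite $\hat{n}$ (from the decomposition $\hat{\mathbb{N}} = \mathbb{N} \cup (\omega + \hat{\mathbb{N}})$) to write $\varphi(a^{\hat{n}}) = \varphi(a^\omega)\varphi(a^{\hat{n}})$, and then uses idempotency of $\varphi(a)^{M!}$ to absorb the multiple-of-$M!$ part of a single representative $n_k$. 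The paper's route is shorter because the structural fact about $\hat{\mathbb{N}}$ has already been recorded; your route is more self-contained, relying only on the elementary $\rho$-shape description of finite monogenic monoids and a pigeonhole argument, and it makes explicit exactly where the infiniteness hypothesis enters.
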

      \begin{proof}
        Let $\hat{n} = \lim (n_k)_k$. There is some $k$ such that $\varphi(a^{\hat{n}}) = \varphi(a^{n_k})$ and $\hat{n} \bmod M! = n_k \bmod M!$. For $n = n_k$, we may write $n = M! \cdot \ell + (n \bmod M!)$. Together with the fact that $\hat{n}$ is infinite and that we, thus, have $\hat{n} = \omega + \hat{n}$, we obtain
        \[
          \varphi(a^{\hat{n}}) = \varphi(a^\omega a^{\hat{n}}) = \varphi(a^{M!} a^{M! \cdot \ell} a^{n \bmod M!}) = \varphi(a^{M! + \hat{n} \bmod M!}) \textbf{.}\qedhere
        \]
      \end{proof}
    
      \begin{fact}\label{fct:distinguishInfiniteNumbers}
        Let $\hat{m}, \hat{n}$ be infinite profinite natural numbers with $\hat{m} \neq \hat{n}$. Then there is some $m$ with $\hat{m} \not\equiv \hat{n} \bmod m$.
      \end{fact}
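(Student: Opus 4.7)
The plan is to reduce the question to a separation statement in $\hat{\mathbb{Z}}$ via the isomorphism from \autoref{prop:isomorphismProfiniteNumbers}. Let $\iota: \hat{\mathbb{N}} \setminus \mathbb{N} \to \hat{\mathbb{Z}}$ be that isomorphism, and write $\hat{m} = \lim (m_k)_k$, $\hat{n} = \lim (n_k)_k$ with $m_k, n_k \in \mathbb{N}$. Since $\iota$ is (in particular) continuous and injective, $\hat{m} \neq \hat{n}$ gives $\iota(\hat{m}) \neq \iota(\hat{n})$ in $\hat{\mathbb{Z}}$.

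The next step is to compatibilize the $\bmod m$ operation on $\hat{\mathbb{N}}$ with the canonical projection $\pi_m: \hat{\mathbb{Z}} \to \mathbb{Z}/m\mathbb{Z}$. By the continuity of the reduction map $\mathbb{Z} \to \mathbb{Z}/m\mathbb{Z}$, we have $\pi_m(\iota(\hat{m})) = \lim (m_k \bmod m)_k$ and similarly for $\hat{n}$, where both limits are computed in the finite discrete group $\mathbb{Z}/m\mathbb{Z}$. But this is exactly the definition of $\hat{m} \bmod m$ and $\hat{n} \bmod m$ once one identifies $\{0, \dots, m-1\}$ with $\mathbb{Z}/m\mathbb{Z}$. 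So showing that some $m$ separates $\hat{m}$ and $\hat{n}$ modulo $m$ is equivalent to showing that some $\pi_m$ separates $\iota(\hat{m})$ and $\iota(\hat{n})$.

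The final step is to invoke the fact that the family of projections $\{\pi_m\}_{m \in \Np}$ separates points in $\hat{\mathbb{Z}}$. This is immediate from the definition of the profinite metric on $\mathbb{Z}$ used to build $\hat{\mathbb{Z}}$: by construction, $d'(\iota(\hat{m}), \iota(\hat{n})) = 2^{-m}$ for the smallest $m$ such that $\iota(\hat{m})$ and $\iota(\hat{n})$ have different images in the cyclic group of order $m$, so if the two profinite integers disagree, then this smallest $m$ exists and witnesses $\pi_m(\iota(\hat{m})) \neq \pi_m(\iota(\hat{n}))$, i.e.\ $\hat{m} \not\equiv \hat{n} \bmod m$.

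There is no real obstacle here; the only point requiring a bit of care is the bookkeeping that the $\bmod m$ map defined on $\hat{\mathbb{N}}$ agrees, via $\iota$, with the standard projection $\hat{\mathbb{Z}} \to \mathbb{Z}/m\mathbb{Z}$. Once that identification is made, the statement is just the separation property that is built into the definition of $\hat{\mathbb{Z}}$ as a profinite completion.
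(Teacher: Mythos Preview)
Your proposal is correct and takes essentially the same approach as the paper: map through the isomorphism of \autoref{prop:isomorphismProfiniteNumbers} and then use that distinct elements of $\hat{\mathbb{Z}}$ are separated by some cyclic quotient. The paper's proof is just a two-sentence version of yours; your added paragraph verifying that the $\bmod\ m$ map on $\hat{\mathbb{N}}$ agrees via $\iota$ with the projection $\hat{\mathbb{Z}}\to\mathbb{Z}/m\mathbb{Z}$ is extra care the paper leaves implicit.
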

      \begin{proof}
        The distinct infinite profinite natural numbers $\hat{m}$ and $\hat{n}$ get mapped to distinct profinite integers by the isomorphism from \autoref{prop:isomorphismProfiniteNumbers}. Thus, there is a finite cyclic group in which they can be distinguished. The order of this group is $m$.
      \end{proof}
%      \begin{proof}
%        Since $\hat{m}$ and $\hat{n}$ are distinct, there is some monoid $M$ generated by a single generator $a$ such that $a^{\hat{m}} \neq a^{\hat{n}}$. Since $\hat{m}$ and $\hat{n}$ are both infinite, we have
%        \[
%          a^{M! + (\hat{m} \bmod M!)} = a^{\hat{m}} \neq a^{\hat{n}} = a^{M! + (\hat{n} \bmod M!)}
%        \]
%        by \autoref{fct:profiniteMod}, which is only possible for $\hat{m} \bmod M! \neq \hat{n} \bmod M!$ and we may choose $m = M!$.
%      \end{proof}
      
      On the other hand, if $(u_k)_k$ is a Cauchy sequence (with respect to $d$), it is also easy to see that $(|u_k|_a)_k$ is a Cauchy sequence (with respect to $d'$) for all $a \in \Sigma$. This means that we may continue $|\cdot|_a$ into a homomorphism $|\cdot|_a: \freeProf \to \hat{\mathbb{N}}$ by letting $|\lim (u_k)_k|_a = \lim (|u_k|_a)_k$. Furthermore, we have the following compatibility, which in particular holds for $\DAb$.
      \begin{fact}\label{fct:numberOfAsCompatibility}
        Let $\V$ be a class of monoids containing all finite monogenic monoids. Then, for $\alpha, \beta \in \freeProf$, we have that
        $\V \models \alpha = \beta$ implies $|\alpha|_a = |\beta|_a$ for all $a \in \Sigma$.
      \end{fact}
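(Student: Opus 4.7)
The plan is to argue by contrapositive: assuming $|\alpha|_a \neq |\beta|_a$ for some $a \in \Sigma$, I will construct a finite monogenic monoid $M$ and a homomorphism $\varphi\colon \Sigma^* \to M$ whose continuous extension separates $\alpha$ and $\beta$. Since $\V$ contains all finite monogenic monoids by hypothesis, this will witness $\V \not\models \alpha = \beta$.

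The key step is to realise $|\cdot|_a\colon \freeProf \to \hat{\mathbb{N}}$ via a family of homomorphisms into finite monogenic monoids. Concretely, for any finite monogenic monoid $M$ with generator $s$, I define $\varphi_{M,a}\colon \Sigma^* \to M$ by $\varphi_{M,a}(a) = s$ and $\varphi_{M,a}(b) = 1$ for $b \neq a$. For a finite word $w$, clearly $\varphi_{M,a}(w) = s^{|w|_a}$. Since $\varphi_{M,a}$ and $|\cdot|_a$ are both continuous (using the discrete topology on $M$ and the profinite topology on $\hat{\mathbb{N}}$) and agree on the dense subset $\Sigma^*$ modulo the map $n \mapsto s^n\colon \hat{\mathbb{N}} \to M$, the continuous extensions satisfy $\varphi_{M,a}(\alpha) = s^{|\alpha|_a}$ for every $\alpha \in \freeProf$.

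Now assume $|\alpha|_a \neq |\beta|_a$ in $\hat{\mathbb{N}}$. By the definition of the ultrametric $d'$ on $\hat{\mathbb{N}}$, there exists a finite monogenic monoid $M$ generated by some $s$ such that $s^{|\alpha|_a} \neq s^{|\beta|_a}$ (where these powers are interpreted via the continuous extension of $n \mapsto s^n$ to $\hat{\mathbb{N}}$, or equivalently by picking a sufficiently late term of Cauchy sequences converging to $|\alpha|_a$ and $|\beta|_a$). Applying the observation of the previous paragraph to this $M$, we get $\varphi_{M,a}(\alpha) \neq \varphi_{M,a}(\beta)$, and since $M \in \V$ this contradicts $\V \models \alpha = \beta$.

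There is no serious obstacle here; the only point requiring a touch of care is checking that the continuous extension of $\varphi_{M,a}$ really sends $\alpha$ to $s^{|\alpha|_a}$, which is immediate from the fact that every continuous homomorphism from $\freeProf$ to a finite monoid is determined by its values on letters and commutes with the limit operations defining both $\varphi_{M,a}(\alpha)$ and $|\alpha|_a$.
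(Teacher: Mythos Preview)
Your proof is correct and follows essentially the same approach as the paper's: both argue by contrapositive, define the homomorphism sending $a$ to the generator $s$ and all other letters to $1$, and use the fact that two distinct profinite natural numbers can be separated in some finite monogenic monoid. You spell out the continuity argument for $\varphi_{M,a}(\alpha) = s^{|\alpha|_a}$ in slightly more detail than the paper does, but the underlying argument is identical.
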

      \begin{proof}
        We show the statement by contraposition. Suppose we have $\hat{m} = |\alpha|_a \neq |\beta|_a = \hat{n}$ for some $\alpha, \beta \in \freeProf$ and $a \in \Sigma$. Then there is a finite monogenic monoid generated by $s$ with $s^{\hat{m}} \neq s^{\hat{n}}$. This implies $M \not\models \alpha = \beta$ (by considering the homomorphism $\freeProf \to M$ induced by $a \mapsto s$ and $b \mapsto 1$ for all $b \neq a$) and, thus, $\V \not\models \alpha = \beta$.
      \end{proof}
      
      \paragraph*{Infinite Blocks and the Normal Form.}
      In order to define the normal form, we fix an arbitrary linear ordering $<$ for $\Sigma$.
      \begin{definition}\label{def:normalForm}
        An \emph{infinite block} is a pseudoword $\alpha \in \freeProf$ of the form
        \[
          \alpha = (a_1 \dots a_\ell)^\omega a_1^{\hat{n}_{1}} \dots a_\ell^{\hat{n}_\ell}
        \]
        with $\ell > 0$ where $a_1, \dots, a_\ell \in \Sigma$ with $a_1 < \dots < a_\ell$ and where $\hat{n}_{1}, \dots \hat{n}_{\ell} \in \widehat{\mathbb{N}} \setminus \mathbb{N}$ are infinite proinfinite natural numbers.
        
        A pseudoword
        \[
          \freeProf \ni \alpha = \alpha_1 \dots \alpha_L
        \]
        is in \emph{normal form} (for $\DAb$) if each \emph{block} $\alpha_i$ (with $1 \leq i \leq L$) is either
        \begin{itemize}
          \item a letter (i.\,e.\ $\alpha_i \in \Sigma$) or
          \item an \emph{infinite block} and we have
            \begin{enumerate}[label=(c-\arabic*), leftmargin=1.5cm]
              \item\label{itm:normal:incomparable}
                $\alphabet \alpha_{i - 1} \setminus \alphabet \alpha_i \neq \emptyset$ (if $1 < i$) and $\alphabet \alpha_{i + 1} \setminus \alphabet \alpha_i \neq \emptyset$ (if $i < L$) and
              \item\label{itm:normal:omega}
                if $a \in \alphabet \alpha_{i - 1}$ for some $a \in A$ (if $1 < i$), then $|\alpha_i|_a = \omega \in \widehat{\mathbb{N}}$.
            \end{enumerate}
        \end{itemize}
      \end{definition}
      \begin{remark*}
        Instead of allowing only single letters as blocks in the normal form we could also allow entire finite words. That we chose the former over the latter is only a technicality to simplify some proofs later on.
      \end{remark*}
      \begin{remark*}
        If $\alpha = \alpha_1 \dots \alpha_L$ is in normal form and we have $\alpha_{i - 1} = a \in A$ and that $\alpha_i$ is an infinite block, then \ref{itm:normal:incomparable} implies $a \not\in \alphabet \alpha_i$ and there is no restriction on the $\hat{n}_j$ of $\alpha_i$ (apart from $\hat{n}_j \in \widehat{\mathbb{N}} \setminus \mathbb{N}$). Symmetrically, if $\alpha_{i + 1} = a \in A$ is a letter and $\alpha_i$ is again an infinite block, we also have $a \not\in \alphabet \alpha_i$.
        
        On the other hand, \ref{itm:normal:incomparable} also states that two adjacent infinite blocks must have incomparable alphabets (neither one is a subset of the other).
      \end{remark*}
    
      The infinite profinite natural numbers from the definition of an infinite block indicate the number of occurrences of the letter they belong to. We will often make use of this fact implicitly in the following.
      \begin{fact}
        For an infinite block
        \[
          \alpha = (a_1 \dots a_\ell)^\omega a_1^{\hat{n}_{1}} \dots a_\ell^{\hat{n}_\ell} \textbf{,}
        \]
        we have $|\alpha|_{a_i} = \hat{n}_i$.
      \end{fact}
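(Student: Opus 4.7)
The proof should be a short, direct computation using that $|\cdot|_{a_i}\colon \freeProf \to \widehat{\mathbb{N}}$ is a continuous homomorphism (mapping concatenation to addition in $\widehat{\mathbb{N}}$). The plan is to apply this homomorphism to the given factorisation of $\alpha$ and to simplify using the arithmetic of $\widehat{\mathbb{N}}$.

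First, I would apply $|\cdot|_{a_i}$ to the product defining the infinite block, distributing over concatenation, to obtain
\[
  |\alpha|_{a_i} \;=\; |(a_1 \dots a_\ell)^\omega|_{a_i} \;+\; \sum_{j=1}^{\ell} |a_j^{\hat{n}_j}|_{a_i} \text{.}
\]
Since $|\cdot|_{a_i}$ is continuous and coincides with the ordinary letter-counting homomorphism on $\Sigma^*$, exponentiation by a profinite natural number becomes multiplication in $\widehat{\mathbb{N}}$, i.e.\ $|\beta^{\hat{m}}|_{a_i} = \hat{m} \cdot |\beta|_{a_i}$. Applied to the first factor this yields $|(a_1 \dots a_\ell)^\omega|_{a_i} = \omega \cdot |a_1 \dots a_\ell|_{a_i} = \omega \cdot 1 = \omega$, because $a_i$ appears exactly once in $a_1 \dots a_\ell$ (the letters $a_1 < \dots < a_\ell$ being pairwise distinct by definition of an infinite block). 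For the summands $|a_j^{\hat{n}_j}|_{a_i} = \hat{n}_j \cdot |a_j|_{a_i}$, only the term with $j = i$ survives (contributing $\hat{n}_i$), all others being $0$.

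Combining these computations gives $|\alpha|_{a_i} = \omega + \hat{n}_i$, so it remains to show $\omega + \hat{n}_i = \hat{n}_i$. This is where the hypothesis $\hat{n}_i \in \widehat{\mathbb{N}} \setminus \mathbb{N}$ enters: by the structure statement cited in the paper, any infinite profinite natural number lies in $K = \{\omega + \hat{z} \mid \hat{z} \in \widehat{\mathbb{N}}\}$, so $\hat{n}_i = \omega + \hat{m}$ for some $\hat{m} \in \widehat{\mathbb{N}}$. Using idempotency of $\omega$ in $(\widehat{\mathbb{N}}, +)$ we then get $\omega + \hat{n}_i = \omega + \omega + \hat{m} = \omega + \hat{m} = \hat{n}_i$, which finishes the proof.

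There is no real obstacle here: the only subtlety worth flagging is that one must invoke $\hat{n}_i$ being infinite in the very last step, since for a finite exponent the formula would give $\omega + n$, which is distinct from $n$. Everything else is a bookkeeping application of the homomorphism $|\cdot|_{a_i}$ and the arithmetic identity $\omega + \omega = \omega$.
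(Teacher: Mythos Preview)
Your proof is correct and follows essentially the same approach as the paper's own proof, which is the one-line computation $|\alpha|_{a_i} = \omega + \hat{n}_i = \hat{n}_i$ justified by $\hat{n}_i \in \hat{\mathbb{N}} \setminus \mathbb{N}$ being of the form $\omega + \hat{n}'$. You have simply unpacked this computation in more detail.
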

      \begin{proof}
        We have $|\alpha|_{a_i} = \omega + \hat{n}_i = \hat{n}_i$ as $\hat{n}_i$ is from $\hat{\mathbb{N}} \setminus \mathbb{N}$ and, thus, of the form $\omega + \hat{n}'$.
      \end{proof}
    \end{subsection}
    \begin{subsection}{Every Element has a Normal Form}
      We need to derive a pseudoword $\alpha'$ in normal form an arbitrary pseudoword $\alpha$ with $\DAb \models \alpha = \alpha'$. For this, we will use a more general result that every pseudoword can be factorized into finite words and factors which are regular with respect to the variety $\V[DS]$, which consists of those finite monoids whose regular $\mathcal{D}$-classes form semigroups.
      
      \paragraph*{Regular Pseudowords.}
      A pseudoword $\alpha \in \hat{\Sigma}^*$ is \emph{$\V$-regular} for a class $\V$ of finite monoids if $\varphi(\alpha)$ is regular for all $M \in \V$ and all continuous homomorphisms $\varphi: \hat{\Sigma}^* \to M$.
      
      From \cite[Theorem~8.1.11]{alm94:short}, we obtain that every pseudoword can be factorized into $\DAb$-regular factors and finite words (since $\DAb \subseteq \V[DS]$).
      \begin{corollary}\label{cor:regularFiniteFactorization}
        Every pseudoword $\alpha \in \freeProf$ admits a factorization $\alpha = \alpha_1 \dots \alpha_L$ where each $\alpha_i$ (with $1 \leq i \leq L$) is either a finite word or $\DAb$-regular.
      \end{corollary}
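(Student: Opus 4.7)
The plan is to reduce the statement to the known factorization result of Almeida, \cite[Theorem~8.1.11]{alm94:short}, which asserts that for the variety $\V[DS]$ every pseudoword over a finite alphabet admits a factorization into finite words and $\V[DS]$-regular pseudowords. What is left is to transfer regularity from $\V[DS]$ down to $\DAb$.

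First, I would recall the direction of inference for regularity under inclusion of varieties. If $\V_1 \subseteq \V_2$ are two classes of finite monoids, then every $\V_2$-regular pseudoword is automatically $\V_1$-regular, because the defining condition for $\V_1$-regularity only quantifies over the strictly smaller family of monoids in $\V_1$ and their continuous homomorphisms from $\freeProf$. Thus the property of being regular grows as the class shrinks.

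Next, I would verify the inclusion $\DAb \subseteq \V[DS]$. A monoid $M \in \DAb$ has, by definition, the property that each regular $\mathcal{D}$-class is an Abelian group; in particular, every regular $\mathcal{D}$-class is a semigroup, which is precisely the condition defining $\V[DS]$. Hence the inclusion holds.

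Combining these two observations yields the corollary directly: apply \cite[Theorem~8.1.11]{alm94:short} to obtain a factorization $\alpha = \alpha_1 \dots \alpha_L$ in which each factor is either a finite word or $\V[DS]$-regular, and then replace \enquote{$\V[DS]$-regular} by \enquote{$\DAb$-regular} by the monotonicity argument above. There is no real obstacle here; the corollary is essentially a translation, and the only thing that might deserve a brief remark is that Almeida's theorem is stated for $\V[DS]$ while the present paper works throughout with the sub-variety $\DAb$, which is why the mild monotonicity step is needed.
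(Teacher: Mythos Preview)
Your proposal is correct and matches the paper's own justification essentially verbatim: the paper derives the corollary directly from \cite[Theorem~8.1.11]{alm94:short} together with the inclusion $\DAb \subseteq \V[DS]$, exactly as you do. The monotonicity remark you spell out (that $\V[DS]$-regular implies $\DAb$-regular because the quantification ranges over fewer monoids) is the implicit step the paper leaves to the reader.
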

    
      The finite factors correspond to blocks of letters in our normal form and the $\DAb$-regular factors will turn out to belong to infinite blocks. To see this, we first need to prove some identities for $\DAb$ though.
      
      \begin{fact}\label{fct:DAbEquations}
        Let $\alpha, \beta \in \freeProf$ with $\alphabet \beta \subseteq \alphabet \alpha$. Then $\DAb$ satisfies the equations
        \begin{enumerate}[label=(\arabic*)]
          \item\label{itm:DAbEquations:moveOmega}
            $\displaystyle \alpha^\omega \beta = \alpha^\omega \beta \alpha^\omega = \beta \alpha^\omega$ and
          \item\label{itm:DAbEquations:subsetOmega}
            $\displaystyle \alpha^\omega \beta^\omega = \alpha^\omega = \beta^\omega \alpha^\omega$.
        \end{enumerate}
        In particular, if $\alphabet \alpha = \alphabet \beta$, $\DAb$ satisfies
        \begin{enumerate}[resume, label=(\arabic*)]
          \item\label{itm:DAbEquations:sameAlphabetOmega}
            $\displaystyle \alpha^\omega = \beta^\omega$.
        \end{enumerate}
      \end{fact}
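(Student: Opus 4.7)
The plan is to extend \autoref{lem:groupByAlphabet} from finite words to pseudowords by a standard continuity/approximation argument, and then deduce \ref{itm:DAbEquations:subsetOmega} and \ref{itm:DAbEquations:sameAlphabetOmega} from \ref{itm:DAbEquations:moveOmega} using the fact that an $\mathcal{H}$-class contains at most one idempotent.

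First I would fix $M \in \DAb$ and a continuous homomorphism $\varphi: \freeProf \to M$, and write $\alpha = \lim (u_k)_k$, $\beta = \lim (v_k)_k$ with $u_k, v_k \in \Sigma^*$. Since $\alphabet$ is a continuous homomorphism, $\alphabet \beta \subseteq \alphabet \alpha$ implies $\alphabet v_k \subseteq \alphabet u_k$ for all sufficiently large $k$. For those $k$, \autoref{lem:groupByAlphabet} gives
\[
  \varphi(u_k^{k!}) \H \varphi(u_k^{k!} v_k) = \varphi(u_k^{k!} v_k u_k^{k!}) = \varphi(v_k u_k^{k!})
\]
in $M$. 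Passing to the limit (and using that $\varphi$ is continuous and $M$ is finite, so the sequences stabilise), this yields
\[
  \varphi(\alpha^\omega) \H \varphi(\alpha^\omega \beta) = \varphi(\alpha^\omega \beta \alpha^\omega) = \varphi(\beta \alpha^\omega)\text{,}
\]
which establishes \ref{itm:DAbEquations:moveOmega} and, as a bonus, the $\mathcal{H}$-equivalence $\varphi(\alpha^\omega) \H \varphi(\alpha^\omega \beta)$ that we shall reuse.

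For \ref{itm:DAbEquations:subsetOmega}, the hypothesis $\alphabet \beta \subseteq \alphabet \alpha$ gives $\alphabet \beta^\omega = \alphabet \beta \subseteq \alphabet \alpha$, so \ref{itm:DAbEquations:moveOmega} (applied with $\beta^\omega$ in place of $\beta$) yields $\varphi(\alpha^\omega \beta^\omega) = \varphi(\beta^\omega \alpha^\omega)$ and also $\varphi(\alpha^\omega) \H \varphi(\alpha^\omega \beta^\omega)$. The element $\varphi(\alpha^\omega \beta^\omega)$ is idempotent, since $\varphi(\alpha^\omega \beta^\omega \alpha^\omega \beta^\omega) = \varphi(\alpha^\omega (\beta^\omega \alpha^\omega) \beta^\omega) = \varphi(\alpha^\omega \alpha^\omega \beta^\omega \beta^\omega) = \varphi(\alpha^\omega \beta^\omega)$ using the commutation just derived and the idempotency of $\alpha^\omega$ and $\beta^\omega$. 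Since $\varphi(\alpha^\omega)$ and $\varphi(\alpha^\omega \beta^\omega)$ are both idempotents lying in the same $\mathcal{H}$-class, and every $\mathcal{H}$-class contains at most one idempotent, they are equal; this gives $\varphi(\alpha^\omega \beta^\omega) = \varphi(\alpha^\omega) = \varphi(\beta^\omega \alpha^\omega)$ as required.

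Finally, \ref{itm:DAbEquations:sameAlphabetOmega} follows immediately: when $\alphabet \alpha = \alphabet \beta$, we may apply \ref{itm:DAbEquations:subsetOmega} both ways to get $\varphi(\alpha^\omega) = \varphi(\alpha^\omega \beta^\omega)$ and $\varphi(\beta^\omega) = \varphi(\beta^\omega \alpha^\omega)$, and then combine with the commutation $\varphi(\alpha^\omega \beta^\omega) = \varphi(\beta^\omega \alpha^\omega)$ from \ref{itm:DAbEquations:moveOmega}. Since these equalities hold for every $M \in \DAb$ and every continuous $\varphi$, the equations are satisfied by $\DAb$. The only mildly delicate step is the lift of \autoref{lem:groupByAlphabet} to pseudowords, and this is handled entirely by the observation that in a finite target monoid the approximating sequences eventually stabilise, so no further care is needed.
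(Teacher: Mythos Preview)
Your proof is correct and follows essentially the same strategy as the paper: reduce to finite words via approximating sequences, invoke \autoref{lem:groupByAlphabet}, and then use uniqueness of idempotents in an $\mathcal{H}$-class to obtain \ref{itm:DAbEquations:subsetOmega}. The only cosmetic differences are that the paper fixes a single index $k$ (choosing $u = u_k$, $v = v_k$ with the correct alphabets and images) rather than passing to the limit, and that it exhibits the idempotency of $\varphi(\alpha^\omega\beta^\omega)$ by writing it as $(u^\omega v)^\omega$ rather than via the commutation you derived; neither difference is substantive.
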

      \begin{proof}
        Let $M \in \DAb$ and consider a homomorphism $\varphi: \freeProf \to M$. We may write $\alpha = \lim (u_k)_k$ and $\beta = \lim(v_k)_k$ for some sequences $(u_k)_k$ and $(v_k)_k$ of elements from $\Sigma^*$. There is some $k$ with $\alphabet u_k = \alphabet \alpha$ and $\varphi(\alpha) = \varphi(u_k)$ and we let $u = u_k$. In the same way, there is some $k$ with $\alphabet v_k = \alphabet \beta$ and $\varphi(\beta) = \varphi(v_k)$ and we let $v = v_k$.
        To simplify things further, we identify every $\gamma \in \freeProf$ with its image $\varphi(\gamma)$. 
        
        For equation \ref{itm:DAbEquations:moveOmega}, we need to show $u^\omega v = u^\omega v u^\omega = v u^\omega$. We only prove the part on the left as the one on the right can be obtained symmetrically. Let $v = a_1 \dots a_n$ for $a_1, \dots, a_n \in \Sigma$. By \autoref{lem:groupByAlphabet}, we have
        \begin{align*}
          u^\omega v &= u^\omega (a_1 \dots a_n) = u^\omega a_1 u^\omega (a_2 \dots a_n) = \dots = (u^\omega a_1) \dots (u^\omega a_n)\\
          &= (a_1 u^\omega) \dots (a_n u^\omega) = u^\omega (a_1 u^\omega) \dots (a_n u^\omega) = \dots = u^\omega v u^\omega \text{.}
        \end{align*}
      
        Regarding equation \ref{itm:DAbEquations:subsetOmega}, we note that the first and the last term are equal by equation~\ref{itm:DAbEquations:moveOmega}. We need to show $u^\omega v^\omega = u^\omega$ and, by \autoref{lem:groupByAlphabet}, we have
        \[
          u^\omega \H u^\omega v^\omega = (u^\omega v)^\omega \textbf{.}
        \]
        Since the left hand side and the right hand side of the above equation are both idempotents in the same $\mathcal{H}$-class, they must be equal.
        
        Finally, equation \ref{itm:DAbEquations:sameAlphabetOmega} follows directly from equation~\ref{itm:DAbEquations:subsetOmega}.
      \end{proof}
      \begin{fact}\label{fct:regularNeutralElement}
        Consider a monoid $M \in \DAb$ and a homomorphism $\varphi: \freeProf \to M$. For every $\alpha \in \freeProf$ such that $\varphi(\alpha)$ is regular, we have:
        \[
          \varphi(\alpha) = \varphi(\alpha^\omega \alpha)
        \]
        Thus, if $\alpha \in \freeProf$ is $\DAb$-regular, we have:
        \[
          \DAb \models \alpha = \alpha^\omega \alpha
        \]
      \end{fact}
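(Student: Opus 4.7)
The plan is to use the defining property of $\DAb$ directly: regular $\mathcal{D}$-classes are Abelian groups. So I first observe that if $\varphi(\alpha)$ is regular, it lies in some regular $\mathcal{D}$-class $D$ of $M$, and since $M \in \DAb$, the class $D$ is an Abelian group. Let $e$ denote the identity of this group.

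Next, I show that $\varphi(\alpha^\omega) = e$. On the one hand, $\varphi(\alpha^\omega) = \varphi(\alpha)^{M!}$, which is an element of the monogenic submonoid generated by $\varphi(\alpha)$. Since $\varphi(\alpha) \in D$ (a group), every positive power of $\varphi(\alpha)$ lies in $D$ too. The order of $\varphi(\alpha)$ as a group element divides $|D| \leq |M|$, hence divides $M!$, so $\varphi(\alpha)^{M!} = e$.

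Finally, I conclude
\[
  \varphi(\alpha^\omega \alpha) = \varphi(\alpha^\omega)\,\varphi(\alpha) = e \cdot \varphi(\alpha) = \varphi(\alpha),
\]
where the last equality uses that $e$ is the identity of the group $D$ in which $\varphi(\alpha)$ lives. This proves the first statement. The second statement is then immediate: if $\alpha$ is $\DAb$-regular, then $\varphi(\alpha)$ is regular for every continuous homomorphism $\varphi \colon \freeProf \to M$ with $M \in \DAb$, and the first part gives $\varphi(\alpha) = \varphi(\alpha^\omega \alpha)$ for all such $\varphi$, which is exactly $\DAb \models \alpha = \alpha^\omega \alpha$.

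There is no real obstacle here; the only subtlety is to remember that regularity of $\varphi(\alpha)$ together with membership in $\DAb$ automatically places $\varphi(\alpha)$ inside a group (not just inside a regular $\mathcal{D}$-class), which is what lets us invoke the fact that $g^{M!}$ equals the group identity for any group element $g$ in $M$.
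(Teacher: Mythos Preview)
Your proof is correct and follows essentially the same approach as the paper: both argue that $\varphi(\alpha)$ lies in a regular $\mathcal{D}$-class which, by membership in $\DAb$, is a group, so that $\varphi(\alpha)^{M!}$ is the group identity and hence $\varphi(\alpha^\omega)\varphi(\alpha) = \varphi(\alpha)$. The only cosmetic difference is that the paper first replaces $\alpha$ by a finite word $u$ with $\varphi(u) = \varphi(\alpha)$, whereas you work directly with $\varphi(\alpha)$ using $\varphi(\alpha^\omega) = \varphi(\alpha)^{M!}$.
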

      \begin{proof}
        We may write $\alpha = \lim (u_k)_k$ and there is some $k$ with $\varphi(\alpha) = \varphi(u_k)$. Let $u = u_k$.
        By definition $\varphi(\alpha) = \varphi(u)$ is contained in a regular $\mathcal{D}$-class, which is a group (since $M \in \DAb$). Thus, every power of $\varphi(u)$ is in the same class and $\varphi(u)^{M!}$ is the neutral element of the group. This shows $\varphi(u) = \varphi(u)^{M!} \varphi(u)$ and, thus, the statement.
      \end{proof}
    
      If we combine \autoref{fct:regularNeutralElement} with \autoref{fct:numberOfAsCompatibility}, we immediately obtain that $|\alpha|_a$ is infinite (i.\,e.\ $|\alpha|_a \in \hat{\mathbb{N}} \setminus \mathbb{N}$) for all $a \in \alphabet \alpha$ if $\alpha \in \freeProf$ is $\DAb$-regular.
      Together with the following result, this shows that every $\DAb$-regular pseudoword is equivalent to an infinite block over $\DAb$.
      \begin{proposition}\label{prop:regularElements}
        Consider a monoid $M \in \DAb$ and a homomorphism $\varphi: \freeProf \to M$. If $\varphi(\alpha)$ is regular for some $\alpha \in \freeProf$ with $\alphabet \alpha = \{ a_1, \dots, a_n \}$, we have:
        \[
          \varphi(\alpha) = \varphi \left( (a_1 \dots a_n)^{\omega} a_1^{|\alpha|_{a_1}} \dots a_n^{|\alpha|_{a_n}} \right)
        \]
        
        Thus, if $\alpha \in \freeProf$ is $\DAb$-regular, we have:
        \[
          \DAb \models \alpha = (a_1 \dots a_\ell)^\omega a_1^{|\alpha|_{a_1}} \dots a_\ell^{|\alpha|_{a_\ell}}
        \]
      \end{proposition}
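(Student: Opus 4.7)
The plan is to reduce the statement to a manipulation of a finite word representative of $\alpha$, using the three key tools already available: Fact~\ref{fct:regularNeutralElement} (which lets us absorb the identity of the surrounding group), Fact~\ref{fct:DAbEquations}\ref{itm:DAbEquations:sameAlphabetOmega} (which says the $\omega$-power only depends on the alphabet), and Fact~\ref{fct:permuteInHClass} (which lets us sort letters freely once prefixed by a suitable $\omega$-power). The second half of the statement (the $\DAb$-equation) then follows immediately from the first half by definition of $\DAb$-regular.

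First I would write $\alpha = \lim (u_k)_k$ and use continuity of $\varphi$, of $\alphabet$, and of each $|\cdot|_{a_i}$ to pick a single index $k_0$ such that, simultaneously, $\varphi(u_{k_0}) = \varphi(\alpha)$, $\alphabet u_{k_0} = \alphabet \alpha = \{a_1, \dots, a_n\}$, and $\varphi(a_i^{|u_{k_0}|_{a_i}}) = \varphi(a_i^{|\alpha|_{a_i}})$ for every $i$. Set $u = u_{k_0}$ and $k_i = |u|_{a_i}$. Fact~\ref{fct:regularNeutralElement} applied to $\varphi(\alpha)$ gives $\varphi(\alpha) = \varphi(\alpha^\omega \alpha) = \varphi(u^\omega u)$, and Fact~\ref{fct:DAbEquations}\ref{itm:DAbEquations:sameAlphabetOmega} (using $\alphabet u = \alphabet(a_1 \cdots a_n)$) gives $\varphi(u^\omega) = \varphi\bigl((a_1 \cdots a_n)^\omega\bigr)$.

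Next, I apply Fact~\ref{fct:permuteInHClass} with $v = u$ and $\tilde v = a_1^{k_1} \cdots a_n^{k_n}$, which is a permutation of $u$ whose alphabet is contained in $\alphabet u$. This yields $\varphi(u^\omega u) = \varphi(u^\omega \, a_1^{k_1} \cdots a_n^{k_n})$. Splicing the previous step in and replacing each $\varphi(a_i^{k_i})$ by $\varphi(a_i^{|\alpha|_{a_i}})$ (which is legitimate by the choice of $u$), we obtain
\[
  \varphi(\alpha) \;=\; \varphi\bigl(u^\omega u\bigr) \;=\; \varphi\bigl(u^\omega\, a_1^{k_1} \cdots a_n^{k_n}\bigr) \;=\; \varphi\Bigl((a_1 \cdots a_n)^\omega\, a_1^{|\alpha|_{a_1}} \cdots a_n^{|\alpha|_{a_n}}\Bigr),
\]
which is the desired identity in $M$. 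If $\alpha$ is additionally $\DAb$-regular then the hypothesis holds for every $M \in \DAb$ and every $\varphi$, so the equation is satisfied by the whole variety.

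The only subtlety I expect is the simultaneous-convergence step at the start: matching $\varphi$ on $u$ itself \emph{and} on each power $a_i^{|\alpha|_{a_i}}$ at a common index. This is routine (it amounts to taking the maximum of finitely many indices, one per Cauchy sequence), but it is where one must remember that the profinite exponent $|\alpha|_{a_i}$ may be infinite, so $a_i^{|\alpha|_{a_i}}$ is in general a pseudoword rather than a finite power. Once this is set up, the rest of the argument is just the three applications of the previously established facts combined with the Abelian-group structure of the regular $\mathcal{D}$-class of $\varphi(\alpha)$.
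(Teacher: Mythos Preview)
Your proof is correct and follows essentially the same approach as the paper: pick a finite representative $u$ matching $\varphi$, the alphabet, and the letter-count exponents, then apply \autoref{fct:regularNeutralElement}, \autoref{fct:permuteInHClass}, and \autoref{fct:DAbEquations}\ref{itm:DAbEquations:sameAlphabetOmega} in turn. The only cosmetic difference is that the paper swaps the order of the last two applications (sorting first, then replacing $u^\omega$ by $(a_1\cdots a_n)^\omega$), which is immaterial.
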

      \begin{proof}
        Let $\alpha = \lim (u_k)_k$. We may choose $k$ large enough such that
        \[
          \varphi(\alpha) = \varphi(u_k), \quad \alphabet \alpha = \alphabet u_k \quad \text{and} \quad \varphi \left( a_i^{|u_k|_{a_i}} \right) = \varphi \left( a_i^{|\alpha|_{a_i}} \right)
        \]
        for all $i = 1, \dots, \ell$. Let $u = u_k$ and identify $\gamma$ with $\varphi(\gamma)$ for all $\gamma \in \freeProf$.
        We have in $M$:
        \begin{flalign*}
          && \alpha = u &= u^\omega u & \text{(by \autoref{fct:regularNeutralElement})}\\
          && &= u^\omega a_1^{|u|_{a_1}} \dots a_n^{|u|_{a_n}} & \text{(by \autoref{fct:permuteInHClass})}\\
          && &= (a_1 \dots a_n)^\omega a_1^{|u|_{a_1}} \dots a_n^{|u|_{a_n}} & \text{(by \autoref{fct:DAbEquations}, \ref{itm:DAbEquations:sameAlphabetOmega})}\\
          && &= (a_1 \dots a_n)^{\omega} a_1^{|\alpha|_{a_1}} \dots a_n^{|\alpha|_{a_n}} &\qedhere
        \end{flalign*}
      \end{proof}

      Combining \autoref{cor:regularFiniteFactorization} and \autoref{prop:regularElements}, we obtain that, for every pseudoword $\alpha$, there is one $\alpha'$ equivalent over $\DAb$ (i.\,e.\ with $\DAb \models \alpha = \alpha'$) such that $\alpha'$ is a concatenation of letters and infinite blocks. Thus, the only remaining part to finally obtain the normal form is to satisfy the conditions \ref{itm:normal:incomparable} and \ref{itm:normal:omega} from \autoref{def:normalForm}.
      
      We first consider the case that there is a letter contained in the alphabet of an adjacent infinite block. In this case, we may simply move the letter to the corresponding $a_i^{\hat{n}_i}$ block.
      \begin{fact}\label{fct:normalForm:reduceLetterRegular}
        For an infinite block
        \[
          \alpha = (a_1 \dots a_\ell)^\omega a_1^{\hat{n}_1} \dots a_\ell^{\hat{n}_\ell} \textbf{,}
        \]
        we have
        \[
          \DAb \models a_i \alpha = (a_1 \dots a_\ell)^\omega a_1^{\hat{n}_1} \dots a_{i - 1}^{\hat{n}_{i - 1}} a_i^{\hat{n}_i + 1} a_{i + 1}^{\hat{n}_{i + 1}} \dots a_\ell^{\hat{n}_\ell} = \alpha a_i
        \]
        for all $1 \leq i \leq \ell$.
      \end{fact}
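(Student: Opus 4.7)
The plan is to prove the equation by evaluating both sides in an arbitrary finite monoid $M \in \DAb$ under a continuous homomorphism $\varphi$, reducing everything to a single application of \autoref{fct:permuteInHClass}. The key observation is that, although $\alpha$ contains infinite profinite powers $a_j^{\hat{n}_j}$, by \autoref{fct:profiniteMod} each such power equals $a_j^{N_j}$ in $M$ for the natural number $N_j = M! + (\hat{n}_j \bmod M!)$; similarly $a_i^{\hat{n}_i + 1} = a_i^{N_i + 1}$ in $M$. Thus both $a_i \alpha$ and the middle expression of the equation become images under $\varphi$ of words of the form $(a_1 \dots a_\ell)^\omega \cdot w$, where $w$ is a finite word.

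Concretely, setting $u = a_1 \dots a_\ell$, we have
\[
  \varphi(a_i \alpha) = \varphi\bigl(u^\omega \cdot a_i \, a_1^{N_1} \dots a_\ell^{N_\ell}\bigr)
\]
and
\[
  \varphi\bigl((a_1 \dots a_\ell)^\omega a_1^{\hat{n}_1} \dots a_{i-1}^{\hat{n}_{i-1}} a_i^{\hat{n}_i + 1} a_{i+1}^{\hat{n}_{i+1}} \dots a_\ell^{\hat{n}_\ell}\bigr) = \varphi\bigl(u^\omega \cdot a_1^{N_1} \dots a_{i-1}^{N_{i-1}} a_i^{N_i + 1} a_{i+1}^{N_{i+1}} \dots a_\ell^{N_\ell}\bigr),
\]
and the two finite words $a_i \, a_1^{N_1} \dots a_\ell^{N_\ell}$ and $a_1^{N_1} \dots a_{i-1}^{N_{i-1}} a_i^{N_i + 1} a_{i+1}^{N_{i+1}} \dots a_\ell^{N_\ell}$ are permutations of one another, both having alphabet contained in $\alphabet u = \{a_1, \dots, a_\ell\}$. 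Therefore \autoref{fct:permuteInHClass} applies directly and yields equality of the two $\varphi$-images. Since $M$ and $\varphi$ were arbitrary, this proves $\DAb \models a_i \alpha = (a_1 \dots a_\ell)^\omega a_1^{\hat{n}_1} \dots a_i^{\hat{n}_i + 1} \dots a_\ell^{\hat{n}_\ell}$.

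The same argument, with the trailing letter instead of the leading one, handles $\alpha a_i$: the finite word $a_1^{N_1} \dots a_\ell^{N_\ell} a_i$ is again a permutation of $a_1^{N_1} \dots a_{i-1}^{N_{i-1}} a_i^{N_i + 1} a_{i+1}^{N_{i+1}} \dots a_\ell^{N_\ell}$ with the same alphabet condition, so a second application of \autoref{fct:permuteInHClass} gives $\DAb \models \alpha a_i = (a_1 \dots a_\ell)^\omega a_1^{\hat{n}_1} \dots a_i^{\hat{n}_i + 1} \dots a_\ell^{\hat{n}_\ell}$. There is no real obstacle; the only subtlety is the clean reduction from profinite to finite exponents via \autoref{fct:profiniteMod}, which is what lets \autoref{fct:permuteInHClass} (a statement about finite words) settle the claim.
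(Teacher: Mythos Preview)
Your approach is essentially the paper's (which also cites \autoref{fct:permuteInHClass}), and your reduction to finite exponents via \autoref{fct:profiniteMod} is the right way to make that application rigorous. However, there is a small gap in the $a_i\alpha$ case. You write
\[
  \varphi(a_i\alpha)=\varphi\bigl(u^\omega\cdot a_i\,a_1^{N_1}\dots a_\ell^{N_\ell}\bigr),
\]
but $a_i\alpha = a_i\cdot u^\omega\cdot a_1^{\hat n_1}\dots a_\ell^{\hat n_\ell}$, so after applying \autoref{fct:profiniteMod} you only have $\varphi\bigl(a_i\cdot u^\omega\cdot a_1^{N_1}\dots a_\ell^{N_\ell}\bigr)$. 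To move the leading $a_i$ past $u^\omega$ you need $\DAb\models a_i\,u^\omega = u^\omega\,a_i$, which is exactly \autoref{fct:DAbEquations}\,\ref{itm:DAbEquations:moveOmega} (equivalently \autoref{lem:groupByAlphabet}); \autoref{fct:permuteInHClass} alone does not give this, since it only permutes the factor to the \emph{right} of $u^\omega$. This is precisely the second ingredient the paper invokes. Your argument for $\alpha a_i$ is complete as written, since there the extra letter already sits to the right of $u^\omega$.
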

      \begin{proof}
        We can move the letter $a_i$ to the new position by \autoref{fct:permuteInHClass} and \autoref{fct:DAbEquations}, \ref{itm:DAbEquations:moveOmega}.
      \end{proof}
    
      It remains to handle two adjacent infinite blocks. If they share a common letter, we may move copies of it to the left one of the two blocks until there are exactly $\omega$ many occurrences in the right block (we need this to satisfy \ref{itm:normal:omega} from \autoref{def:normalForm}). Although this is the direction we need to get a pseudoword into normal form, we also have a symmetric version where we move the excessive occurrences of the shared letter to the right. We will use this shortly.
      \begin{fact}\label{fct:moveBetweenInfiniteBlocks}
        For two infinite blocks
        \[
          \alpha = (a_1 \dots a_k)^\omega a_1^{\hat{m}_1} \dots a_k^{\hat{m}_k}
          \quad \text{and} \quad
          \beta = (b_1 \dots b_\ell)^\omega b_1^{\hat{n}_1} \dots b_\ell^{\hat{n}_\ell}
        \]
        with $a_i = b_j$ (for some $1 \leq i \leq k$ and $1 \leq j \leq \ell$), $\DAb$ satisfies
        \[
          \alpha \beta = (a_1 \dots a_k)^\omega a_1^{\hat{m}_1} \dots a_{i - 1}^{\hat{m}_{i - 1}} a_i^{\hat{m}_i + \hat{n}_j} a_{i + 1}^{\hat{m}_{i + 1}} \dots a_k^{\hat{m}_k} \,
          (b_1 \dots b_\ell)^\omega b_1^{\hat{n}_1} \dots b_{j - 1}^{\hat{n}_{j - 1}} b_j^{\omega} b_{j + 1}^{\hat{n}_{j + 1}} \dots a_\ell^{\hat{n}_\ell} \textbf{.}
        \]
        and
        \[
          \alpha \beta = (a_1 \dots a_k)^\omega a_1^{\hat{m}_1} \dots a_{i - 1}^{\hat{m}_{i - 1}} a_i^{\omega} a_{i + 1}^{\hat{m}_{i + 1}} \dots a_k^{\hat{m}_k} \,
          (b_1 \dots b_\ell)^\omega b_1^{\hat{n}_1} \dots b_{j - 1}^{\hat{n}_{j - 1}} b_j^{\hat{m}_i + \hat{n}_j} b_{j + 1}^{\hat{n}_{j + 1}} \dots a_\ell^{\hat{n}_\ell} \textbf{.}
        \]
      \end{fact}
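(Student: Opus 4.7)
The plan is to prove both equalities by the same mechanism: move a single ``exchange factor'' equal to $b_j^{\hat n_j} = a_i^{\hat n_j}$ (respectively $a_i^{\hat m_i} = b_j^{\hat m_i}$) out of one infinite block and into the other, using the Abelian group structure of regular $\mathcal D$-classes inside $\DAb$.

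First I would establish a pseudoword-level permutation principle: for any $\gamma, \delta_1, \dots, \delta_p \in \freeProf$ with $\alphabet \delta_s \subseteq \alphabet \gamma$, one has $\DAb \models \gamma^\omega \delta_1 \cdots \delta_p = \gamma^\omega \delta_{\sigma(1)} \cdots \delta_{\sigma(p)}$ for every permutation $\sigma$. Evaluating in $M \in \DAb$ and using \autoref{lem:groupByAlphabet} (extended to pseudowords by continuity), each $g_s := e \cdot \varphi(\delta_s)$ with $e := \varphi(\gamma^\omega)$ lies in the Abelian group $G$ whose identity is $e$; by induction $\varphi(\gamma^\omega \delta_1 \cdots \delta_p) = g_1 g_2 \cdots g_p$ in $G$, and the Abelianness of $G$ does the rest.

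For the first equation I would then exploit $\hat n_j = \omega + \hat n_j$ (since $\hat n_j$ is infinite in $\hat{\mathbb N}$), which gives $b_j^{\hat n_j} = b_j^\omega \, b_j^{\hat n_j}$ already in $\freeProf$. Substituting this into $\beta$ inserts an extra factor $b_j^\omega$ just before the split-off copy of $b_j^{\hat n_j}$. The permutation principle with $\gamma = b_1 \cdots b_\ell$, applied to the factors $b_s^{\hat n_s}$, $b_j^\omega$ and $b_j^{\hat n_j}$, together with \autoref{fct:DAbEquations}\,\ref{itm:DAbEquations:moveOmega}, lets me slide the split-off $b_j^{\hat n_j}$ all the way past the leading $(b_1 \cdots b_\ell)^\omega$, yielding $\beta = b_j^{\hat n_j}\, \beta'$ with $\beta'$ as in the statement. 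Then, using $a_i = b_j$, I apply the permutation principle once more, now with $\gamma = a_1 \cdots a_k$, to absorb the loose $a_i^{\hat n_j} = b_j^{\hat n_j}$ into the $a_i^{\hat m_i}$-factor of $\alpha$, merging the two $a_i$-powers into $a_i^{\hat m_i + \hat n_j}$; this gives $\alpha \, b_j^{\hat n_j} = \alpha'$, and multiplication gives $\alpha \beta = \alpha' \beta'$. The second equation is obtained by the left-right symmetric version: split $a_i^{\hat m_i} = a_i^{\hat m_i} a_i^\omega$, permute $a_i^{\hat m_i}$ out past the tail of $\alpha$ into $\beta$, and merge it with the $b_j^{\hat n_j}$-factor of $\beta$.

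The main obstacle is the pseudoword permutation principle itself: \autoref{fct:permuteInHClass} is only stated for finite words, and a single application of \autoref{fct:DAbEquations}\,\ref{itm:DAbEquations:moveOmega} does not directly commute two arbitrary pseudoword factors of alphabet contained in $\alphabet \gamma$. Once one sets up the Abelian-group reformulation inside the regular $\mathcal D$-class of $\varphi(\gamma^\omega)$, all the desired rearrangements become routine and both equations reduce to matching exponents of $a_i = b_j$.
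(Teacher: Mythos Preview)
Your proposal is correct and follows essentially the same route as the paper: split off an extra $\omega$-copy using $\hat n_j = \omega + \hat n_j$, then shuffle the loose $b_j^{\hat n_j}$ factor across the boundary using commutativity inside the Abelian group under the relevant $\omega$-power together with \autoref{fct:DAbEquations}\,\ref{itm:DAbEquations:moveOmega}. The only difference is that the paper simply invokes \autoref{fct:permuteInHClass} directly (its proof of \autoref{fct:DAbEquations} already shows how to pass from finite words to pseudowords by picking representatives), whereas you explicitly reprove the pseudoword-level permutation principle via the Abelian-group reformulation; this is a harmless elaboration rather than a different argument.
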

      \begin{proof}
        We have $\hat{m}_j = \omega + \hat{m}_j$ since $\hat{m}_j \in \hat{\mathbb{N}} \setminus \mathbb{N} = \{ \omega + \hat{n} \mid \hat{n} \in \hat{\mathbb{N}} \}$ (by the definition of an infinite block) and, thus, can replace $b_j^{\hat{m}_j}$ by $b_j^{\omega + \hat{m}_j}$. Afterwards, we may move the $b_j^{\hat{m}_j}$ block to the left (right next to the $a_i^{\hat{n}_i} = b_j^{\hat{n}_i}$ block) by \autoref{fct:permuteInHClass} and \autoref{fct:DAbEquations}, \ref{itm:DAbEquations:moveOmega}. The symmetric result follows from the same arguments.
      \end{proof}
    
      Finally, we have to make sure that two adjacent infinite blocks have incomparable alphabets. For this, we consider the converse situation where the alphabet of an infinite block is contained in that of an adjacent infinite block. We may use \autoref{fct:moveBetweenInfiniteBlocks} to ensure that the infinite block with the smaller alphabet (or an arbitrary one of the two if they have the same alphabet) has only $\omega$ exponents. Then we may absorb the entire infinite block into the other one.
      \begin{fact}\label{fct:normalForm:reduceRegularSubsetAlphabet}
        Consider an infinite block
        \begin{align*}
          \alpha &= (a_1 \dots a_k)^\omega a_1^{\hat{m}_1} \dots a_k^{\hat{m}_k}
        \shortintertext{and an infinite block}
          \beta &= (b_1 \dots b_\ell)^\omega b_1^\omega \dots b_\ell^\omega
        \end{align*}
        with $\alphabet \beta \subseteq \alphabet \alpha$ (and $|\beta|_b = \omega$ for all $b \in \alphabet \beta$). Then we have:
        \[
          \DAb \models \alpha \beta = \alpha = \beta \alpha
        \]
      \end{fact}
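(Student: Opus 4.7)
The plan is to reduce $\beta$ to a single $\omega$-power and then manipulate $\alpha\beta$ and $\beta\alpha$ using the identities from \autoref{fct:DAbEquations}. Write $\delta = a_1\cdots a_k$, $\eta = a_1^{\hat{m}_1}\cdots a_k^{\hat{m}_k}$, so that $\alpha = \delta^\omega \eta$, and write $\gamma = b_1\cdots b_\ell$, so that $\beta = \gamma^\omega b_1^\omega\cdots b_\ell^\omega$. Note $\alphabet\eta = \alphabet\delta = \{a_1,\dots,a_k\}$ and $\alphabet\gamma = \alphabet\beta \subseteq \alphabet\alpha = \alphabet\delta$.

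First, I would simplify $\beta$. Since $\alphabet b_i = \{b_i\} \subseteq \alphabet\gamma$ for each $i$, applying \autoref{fct:DAbEquations}, \ref{itm:DAbEquations:subsetOmega} repeatedly gives $\gamma^\omega b_1^\omega\cdots b_\ell^\omega = \gamma^\omega$ in $\DAb$. Hence $\DAb \models \beta = \gamma^\omega$.

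The equation $\beta\alpha = \alpha$ is then immediate: $\beta\alpha = \gamma^\omega \delta^\omega \eta = \delta^\omega \eta = \alpha$, where the middle step uses \autoref{fct:DAbEquations}, \ref{itm:DAbEquations:subsetOmega} (with $\alphabet\gamma \subseteq \alphabet\delta$). For $\alpha\beta = \alpha$, I would chain three applications of \autoref{fct:DAbEquations}:
\[
  \alpha\beta = \delta^\omega \eta \gamma^\omega = \eta\delta^\omega\gamma^\omega = \eta\delta^\omega = \delta^\omega\eta = \alpha,
\]
using \ref{itm:DAbEquations:moveOmega} (with $\alphabet\eta = \alphabet\delta$) to commute $\eta$ past $\delta^\omega$ in the first and third steps, and \ref{itm:DAbEquations:subsetOmega} (with $\alphabet\gamma \subseteq \alphabet\delta$) to absorb $\gamma^\omega$ in the second step.

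No step here looks like a real obstacle; the only mild subtlety is making sure the alphabet inclusions needed to invoke parts \ref{itm:DAbEquations:moveOmega} and \ref{itm:DAbEquations:subsetOmega} of \autoref{fct:DAbEquations} are in the correct direction each time, which they are once $\beta$ has been collapsed to $\gamma^\omega$.
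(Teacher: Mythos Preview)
Your proof is correct and takes essentially the same approach as the paper: both arguments are short chains of applications of the identities in \autoref{fct:DAbEquations}. The paper first rewrites $\alpha$ as $\alpha(a_1\cdots a_k)^\omega$ (via \ref{itm:DAbEquations:moveOmega}) and then absorbs the $\omega$-blocks of $\beta$, whereas you first collapse $\beta$ to $(b_1\cdots b_\ell)^\omega$ and then commute and absorb; these are the same manipulations in a slightly different order, and your version is, if anything, a little more explicit about which of \ref{itm:DAbEquations:moveOmega} and \ref{itm:DAbEquations:subsetOmega} is being used at each step.
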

      \begin{proof}
        This holds by \autoref{fct:DAbEquations}, \ref{itm:DAbEquations:moveOmega}. First, we have $\DAb \models \alpha (a_1 \dots a_k)^\omega$ (to make the situation symmetric) and then we may absorb the $\omega$ blocks (again by \autoref{fct:DAbEquations}, \ref{itm:DAbEquations:moveOmega}).
      \end{proof}
    
      We have now all the necessary operations to re-write a pseudoword in normal form.
      \begin{proposition}\label{prop:normalFormExists}
        For every pseudoword $\alpha$, there is a pseudoword $\alpha'$ in normal form with $\DAb \models \alpha = \alpha'$.
      \end{proposition}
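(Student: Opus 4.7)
The plan is to produce the normal form in three stages. First, apply \autoref{cor:regularFiniteFactorization} to obtain a factorization $\alpha = \beta_1 \dots \beta_L$ in which each $\beta_i$ is either a finite word or $\DAb$-regular. Split each finite factor into its individual letters, and use \autoref{prop:regularElements} to replace each $\DAb$-regular factor $\beta_i$ (which has alphabet $\{a_1 < \dots < a_k\}$ and exponents $|\beta_i|_{a_j} \in \hat{\mathbb{N}} \setminus \mathbb{N}$ by \autoref{fct:regularNeutralElement} and \autoref{fct:numberOfAsCompatibility}) by the corresponding infinite block $(a_1 \dots a_k)^\omega a_1^{|\beta_i|_{a_1}} \dots a_k^{|\beta_i|_{a_k}}$. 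After this preprocessing, $\alpha$ is equivalent modulo $\DAb$ to a word whose blocks are all either single letters or infinite blocks.

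The second stage is a reduction phase designed to achieve condition \ref{itm:normal:incomparable}. Repeatedly apply the following rules whenever possible: if a letter block $a$ is adjacent to an infinite block $\gamma$ with $a \in \alphabet \gamma$, absorb the letter into $\gamma$ using \autoref{fct:normalForm:reduceLetterRegular}; and if two adjacent infinite blocks $\gamma$ and $\delta$ satisfy $\alphabet \delta \subseteq \alphabet \gamma$ (or symmetrically), then use \autoref{fct:moveBetweenInfiniteBlocks} repeatedly to raise every exponent in $\delta$ to $\omega$ (which is possible since all its exponents are already infinite) and then merge the two blocks via \autoref{fct:normalForm:reduceRegularSubsetAlphabet}. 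Crucially, the alphabet of each (surviving) block remains unchanged during these operations, so applying a rule at one position can never create a new violation elsewhere. Since each application strictly reduces the total number of blocks, this phase terminates, leaving a decomposition in which every pair of adjacent blocks satisfies \ref{itm:normal:incomparable}.

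The third stage enforces condition \ref{itm:normal:omega}. Scan the block decomposition from left to right. For each pair of adjacent infinite blocks $(\alpha_i, \alpha_{i+1})$ and each letter $a \in \alphabet \alpha_i \cap \alphabet \alpha_{i+1}$, apply the first identity of \autoref{fct:moveBetweenInfiniteBlocks} to transfer the occurrences of $a$ from $\alpha_{i+1}$ into $\alpha_i$, which leaves the exponent of $a$ in $\alpha_{i+1}$ equal to $\omega$. Since the alphabets of both blocks are preserved, this transformation affects neither the already-established condition \ref{itm:normal:incomparable} nor the condition \ref{itm:normal:omega} for pairs to the left of $(\alpha_i, \alpha_{i+1})$ that have already been processed. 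The result is a pseudoword $\alpha'$ in normal form with $\DAb \models \alpha = \alpha'$.

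The main obstacle I anticipate is bookkeeping rather than any deep mathematical step: one has to verify carefully that the two phases really are confluent in the sense that the reductions never cycle and never reintroduce violations. This is handled by the key observation that none of the operations (\autoref{fct:normalForm:reduceLetterRegular}, \autoref{fct:moveBetweenInfiniteBlocks}, \autoref{fct:normalForm:reduceRegularSubsetAlphabet}) enlarges the alphabet of any surviving block, which means violations of \ref{itm:normal:incomparable} and of \ref{itm:normal:omega} can only be eliminated, never created, by the reductions applied subsequently.
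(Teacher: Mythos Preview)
Your three-stage plan is exactly the argument the paper sketches (the paper states the proposition without proof after assembling \autoref{cor:regularFiniteFactorization}, \autoref{prop:regularElements}, and \autoref{fct:normalForm:reduceLetterRegular}--\autoref{fct:normalForm:reduceRegularSubsetAlphabet}), and stages~1 and~2 are fine. In stage~2 your remark that ``applying a rule at one position can never create a new violation elsewhere'' is not quite true---merging two blocks creates a new adjacency that may itself violate \ref{itm:normal:incomparable}---but this is harmless since your actual termination argument (the block count strictly decreases) does not rely on it.

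Stage~3, however, has a real bug. When you process the pair $(\alpha_i,\alpha_{i+1})$ via the first identity of \autoref{fct:moveBetweenInfiniteBlocks}, you set $|\alpha_{i+1}|_a=\omega$ \emph{and} replace $|\alpha_i|_a$ by $|\alpha_i|_a+|\alpha_{i+1}|_a$. If $a\in\alphabet\alpha_{i-1}\cap\alphabet\alpha_i\cap\alphabet\alpha_{i+1}$, then the previous step had set $|\alpha_i|_a=\omega$, and now it becomes $\omega+\hat{n}_j=\hat{n}_j$ (since $\hat{n}_j$ is infinite), which need not equal $\omega$. So the left-to-right scan can destroy \ref{itm:normal:omega} for the pair $(\alpha_{i-1},\alpha_i)$ you just fixed. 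The remedy is simply to scan right to left: processing $(\alpha_{i-1},\alpha_i)$ after $(\alpha_i,\alpha_{i+1})$ modifies only $\alpha_{i-1}$ and $\alpha_i$, leaving the already-fixed exponents in $\alpha_{i+1}$ untouched, so each block's \ref{itm:normal:omega} condition is final once set.
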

    \end{subsection}
    \begin{subsection}{The Normal Form is Unique}
      To show that the normal form is indeed a normal form, we have to show that two distinct pseudowords in normal form can be distinguished in $\DAb$. For this, we fix two distinct (i.\,e.\ $\alpha \neq \beta$) pseudowords
      \begin{align*}
        \alpha = \alpha_1 \dots \alpha_K \quad \text{and} \quad \beta = \beta_1 \dots \beta_L
      \end{align*}
      in normal form. We need to show that there is a monoid $M \in \DAb$ and a homomorphism $\varphi: \freeProf \to M$ with $\varphi(\alpha) \neq \varphi(\beta)$.
      
      \paragraph*{Distinguishing Normal Forms in $\V[J]$.}
      If we have $K \neq L$ or there is some $i$ with $\alpha_i \in \Sigma$ but $\beta_i \not\in \Sigma$ (or vice verse) or with $\alphabet \alpha_i \neq \alphabet \beta_i$, we can already distinguish $\alpha$ and $\beta$ over $\V[J]$, the variety of finite monoids whose $\mathcal{J}$-classes are trivial. Note that we have $\V[J] \subseteq \DAb$ since, in finite monoids, $\mathcal{J}$ and $\mathcal{D}$ coincide and, thus, every regular $\mathcal{D}$-class is trivial (it consists of a single idempotent) and is, therefore, a trivial (and, in particular, Abelian) group.
      
      For this approach, we first repeat the normal form for pseudowords over $\V[J]$ (see \cite[Section~8.2]{alm94:short}).
      \begin{definition}
        A pseudoword
        \[
          \freeProf \ni \gamma = \gamma_1 \dots \gamma_N
        \]
        is in $\V[J]$-normal form if each $\alpha_i$ (with $1 \leq i \leq N$) is either
        \begin{itemize}
          \item a letter (i.\,e.\ $\gamma_i \in \Sigma$) or
          \item $\gamma_i = (c_1 \dots c_n)^\omega$ for $c_1 < \dots < c_n$ and we have
          \begin{enumerate}[label=(c'-\arabic*), leftmargin=1.5cm]
            \item\label{itm:Jnormal:incomparable}
             $\alphabet \gamma_{i - 1} \setminus \alphabet \gamma_i \neq \emptyset$ (if $1 < i$) and $\alphabet \gamma_{i + 1} \setminus \alphabet \gamma_i \neq \emptyset$ (if $i < N$).
          \end{enumerate}
        \end{itemize}
      \end{definition}
    
      This already matches closely our definition for normal forms (for $\DAb$). In fact, to get a normal form (for $\DAb$) into $\V[J]$-normal form, we only have to reduce the infinite blocks:
      \begin{fact}\label{fct:JNormalOfInfiniteBlock}
        For an infinite block
        \[
          \gamma = (c_1 \dots c_m)^\omega c_1^{\hat{n}_1} \dots c_m^{\hat{n}_m} \text{,}
        \]
        we have $\V[J] \models \gamma = (c_1 \dots c_m)^\omega$.
      \end{fact}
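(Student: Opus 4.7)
The plan is to fix an arbitrary monoid $M \in \V[J]$ together with a continuous homomorphism $\varphi: \freeProf \to M$ and show $\varphi(\gamma) = \varphi((c_1 \dots c_m)^\omega)$. First I would note that $\V[J] \subseteq \DAb$: in a finite monoid $\mathcal{J}$ and $\mathcal{D}$ coincide, so a $\mathcal{J}$-trivial monoid has only trivial (hence Abelian) groups as its regular $\mathcal{D}$-classes. Consequently every lemma of the previous section (in particular \autoref{lem:groupByAlphabet}) is available in $M$.

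Set $u = c_1 \dots c_m$, so that $\alphabet u = \{c_1, \dots, c_m\}$. The main technical step is to establish that $\varphi(u^\omega w) = \varphi(u^\omega)$ for every finite word $w \in (\alphabet u)^*$. For a single letter $c_i \in \alphabet u$, \autoref{lem:groupByAlphabet} (applied in the monoid $M \in \DAb$) gives
\[
  \varphi(u^\omega) \H \varphi(u^\omega c_i) \text{.}
\]
Since $M$ is $\mathcal{J}$-trivial, every $\mathcal{H}$-class is a singleton, so this $\mathcal{H}$-equivalence is an equality. A straightforward induction on $|w|$ (each step multiplying by one more letter of $\alphabet u$ and reapplying the same argument) yields $\varphi(u^\omega w) = \varphi(u^\omega)$ for all $w \in (\alphabet u)^*$.

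Now I would extend this from finite powers to profinite exponents by continuity. For each $i$, write $c_i^{\hat{n}_i} = \lim_k c_i^{k_j}$; because $M$ is finite, $\varphi(c_i^{\hat{n}_i}) = \varphi(c_i^{k_j})$ for all sufficiently large $j$, and hence $\varphi(u^\omega c_i^{\hat{n}_i}) = \varphi(u^\omega c_i^{k_j}) = \varphi(u^\omega)$ by the previous paragraph. Iterating this argument for $i = 1, 2, \dots, m$, absorbing one block $c_i^{\hat{n}_i}$ at a time into the $u^\omega$ prefix, we obtain
\[
  \varphi(\gamma) = \varphi(u^\omega c_1^{\hat{n}_1} \dots c_m^{\hat{n}_m}) = \varphi(u^\omega) = \varphi((c_1 \dots c_m)^\omega) \text{,}
\]
which is exactly what was to be shown. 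The proof is essentially routine; the only mild subtlety is the appeal to continuity when passing from finite to infinite profinite exponents, but this is immediate from the fact that $\varphi$ takes values in a finite discrete monoid and from the definition of $c_i^{\hat{n}_i}$ as a limit.
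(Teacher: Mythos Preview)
Your argument is correct. The paper's own proof is a single sentence citing the reduction rule rr.3) from \cite[Section~8.2]{alm94:short} for $\V[J]$-normal forms, so it defers entirely to the established theory of $\omega$-terms over $\V[J]$. Your route is genuinely different: rather than invoking the external $\V[J]$ normal-form machinery, you stay inside the paper by combining \autoref{lem:groupByAlphabet} (available since $\V[J] \subseteq \DAb$) with the elementary fact that $\mathcal{H}$-classes in a $\mathcal{J}$-trivial monoid are singletons, and then pass to profinite exponents by continuity. This buys self-containedness and makes explicit why the statement holds, at the cost of a few extra lines; the paper's citation is terser but opaque to a reader without \cite{alm94:short} at hand. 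One small simplification: \autoref{lem:groupByAlphabet} already gives $\varphi(u^\omega) \H \varphi(u^\omega w)$ for arbitrary $w \in (\alphabet u)^*$ directly, so the induction on $|w|$ is unnecessary.
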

      \begin{proof}
        This follows from the reduction rules for normal forms over $\V[J]$ \cite[p.~228, Section~8.2, rule rr.3)]{alm94:short}.
      \end{proof}
      
      \begin{proposition}
        Consider the pseudowords $\alpha = \alpha_1 \dots \alpha_K$ and $\beta = \beta_1 \dots \beta_L$ in normal form with $\alpha \neq \beta$. If we have
        \begin{itemize}
          \item $K \neq L$ or
          \item $K = L$ and there is some $i$ with $1 \leq i \leq K = L$ with \begin{itemize}
            \item $\alpha_i \in \Sigma$ but $\beta_i \not\in \Sigma$ (i.\,e.\ $\alpha_i$ is a letter but $\beta_i$ is an infinite block),
            \item $\alpha_i \not\in \Sigma$ but $\beta_i \in \Sigma$ (symmetric situation) or
            \item $\alphabet \alpha_i \neq \alphabet \beta_i$,
          \end{itemize}
        \end{itemize}
        then we have $\DAb \not\models \alpha = \beta$.
      \end{proposition}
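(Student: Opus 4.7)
The plan is to reduce the problem to distinguishing the two pseudowords over the subvariety $\V[J] \subseteq \DAb$, where a normal form theory is already available. Since $\V[J] \subseteq \DAb$, any equation satisfied by $\DAb$ is also satisfied by $\V[J]$; contrapositively, it suffices to show $\V[J] \not\models \alpha = \beta$ in each of the listed cases.

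First I would apply \autoref{fct:JNormalOfInfiniteBlock} blockwise to compute $\V[J]$-normal forms of $\alpha$ and $\beta$: each letter block of $\alpha$ or $\beta$ stays as it is, and each infinite block $(c_1 \dots c_m)^\omega c_1^{\hat{n}_1} \dots c_m^{\hat{n}_m}$ collapses to $(c_1 \dots c_m)^\omega$ over $\V[J]$. Writing $\tilde \alpha = \tilde\alpha_1 \dots \tilde\alpha_K$ and $\tilde\beta = \tilde\beta_1 \dots \tilde\beta_L$ for the resulting expressions, I would observe that condition \ref{itm:normal:incomparable} of \autoref{def:normalForm} translates directly into condition \ref{itm:Jnormal:incomparable}, so that $\tilde\alpha$ and $\tilde\beta$ are bona fide $\V[J]$-normal forms of $\alpha$ and $\beta$, respectively, and $\V[J] \models \alpha = \tilde\alpha$ as well as $\V[J] \models \beta = \tilde\beta$.

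Next I would observe that the three structural discrepancies listed in the statement are preserved under this reduction. If $K \neq L$, then $\tilde\alpha$ and $\tilde\beta$ have different numbers of blocks. If at some position $i$ one of $\alpha_i, \beta_i$ is a letter and the other is an infinite block, then $\tilde\alpha_i \in \Sigma$ while $\tilde\beta_i$ is of the form $(c_1\dots c_m)^\omega$ (or vice versa), which are two structurally different kinds of blocks. If $\alphabet \alpha_i \neq \alphabet \beta_i$, then either both blocks reduce to group blocks with distinct alphabets, or one is a letter $a$ and the other a letter $b \neq a$, or (the remaining subcase) one is a letter and the other an infinite block, already handled above. In every case $\tilde\alpha$ and $\tilde\beta$ are \emph{distinct} pseudowords in $\V[J]$-normal form.

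Finally, I would invoke the uniqueness of the $\V[J]$-normal form \cite[Section~8.2]{alm94:short}: two distinct $\V[J]$-normal forms define distinct pseudowords over $\V[J]$, hence $\V[J] \not\models \tilde\alpha = \tilde\beta$. Combined with $\V[J] \models \alpha = \tilde\alpha$ and $\V[J] \models \beta = \tilde\beta$, this yields $\V[J] \not\models \alpha = \beta$, and therefore $\DAb \not\models \alpha = \beta$ as $\V[J] \subseteq \DAb$. The only non-routine ingredient here is the uniqueness of the $\V[J]$-normal form, which is not re-proved but cited; everything else is a direct check that the $\DAb$-normal form structure maps cleanly onto the $\V[J]$-normal form structure under the block-wise collapse $\gamma \mapsto (c_1 \dots c_m)^\omega$.
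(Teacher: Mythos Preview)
Your proposal is correct and follows essentially the same approach as the paper: collapse each infinite block to its $\omega$-core via \autoref{fct:JNormalOfInfiniteBlock}, observe that the resulting expressions are distinct $\V[J]$-normal forms under any of the listed hypotheses, and invoke the cited uniqueness of the $\V[J]$-normal form together with $\V[J] \subseteq \DAb$. The paper compresses the case analysis you spell out into the single clause ``By hypothesis, we have $\alpha' \neq \beta'$,'' but the argument is the same.
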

      \begin{proof}
        If we reduce every infinite block $\alpha_i$ with $\alphabet \alpha_i = \{ a_1, \dots, a_k \}$ (for $a_1 < \dots < a_k$) to $(a_1 \dots a_k)^\omega$ (according to \autoref{fct:JNormalOfInfiniteBlock}), we obtain a pseudoword $\alpha'$ in $\V[J]$-normal form with $\V[J] \models \alpha = \alpha'$. In the same way, we can also obtain an equivalent pseudoword $\beta'$ in $\V[J]$-normal form from $\beta$. By hypothesis, we have $\alpha' \neq \beta'$ and, thus, there is a monoid $M \in \V[J]$ and a homomorphism $\varphi: \freeProf \to M$ with $\varphi(\alpha) = \varphi(\alpha') \neq \varphi(\beta') = \varphi(\beta)$ \cite[Section~8.2]{alm94:short}. Since we have $M \in \V[J] \subseteq \DAb$, this shows $\DAb \not\models \alpha = \beta$.
      \end{proof}
    
      \paragraph*{The Remaining Case.}
      From now on, we may assume $K = L$, $\alpha_i \in \Sigma \iff \beta_i \in \Sigma$ and $\alphabet \alpha_i = \alphabet \beta_i$ for all $1 \leq i \leq K = L$.
      
      We will use the ranker characterization from above to construct a monoid $M \in \DAb$ and a homomorphism $\varphi: \freeProf \to M$ with $\varphi(\alpha) \neq \varphi(\beta)$. For this, we define finite words that are equivalent to the individual blocks of our normal form.
      \begin{definition}
        For an infinite block
        \[
          \freeProf \ni \gamma = (c_1 \dots c_m)^\omega c_1^{\hat{n}_{1}} \dots c_m^{\hat{n}_m}
        \]
        and a monoid $M \in \DAb$, let
        \begin{align*}
          r_M(\gamma) &= (c_1 \dots c_m)^{M!} c_1^{\hat{n}_1 \bmod M!} \dots c_m^{\hat{n}_m \bmod M!}
        \shortintertext{%
          Furthermore, for $c \in \Sigma$, we let}
          r_M(c) &= c \text{.}
        \end{align*}
      \end{definition}
      \begin{fact}\label{fct:infiniteBlockAsFiniteWord}
        Let $M \in \DAb$ and $\varphi: \freeProf \to M$ be a homomorphism. For any infinite block $\gamma$ or $\gamma \in \Sigma$, we have:
        \[
          \varphi(\gamma) = \varphi \left( r_M(\gamma) \right)
        \]
      \end{fact}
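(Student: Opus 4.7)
The case $\gamma \in \Sigma$ is immediate from the definition $r_M(c) = c$, so the work is to handle an infinite block
\[
  \gamma = u^\omega c_1^{\hat{n}_1} \dots c_m^{\hat{n}_m}, \qquad u = c_1 \dots c_m.
\]
The plan is to transform $\gamma$ step by step inside $M$ (using $\varphi$) into $r_M(\gamma) = u^{M!} c_1^{\hat{n}_1 \bmod M!} \dots c_m^{\hat{n}_m \bmod M!}$, relying on three ingredients already established: the evaluation rule for infinite profinite exponents (\autoref{fct:profiniteMod}), permutability of elements whose alphabet is contained in that of an $\omega$-power (\autoref{fct:permuteInHClass}), and the absorption identity $\alpha^\omega \beta^\omega = \alpha^\omega$ when $\alphabet \beta \subseteq \alphabet \alpha$ (\autoref{fct:DAbEquations}, item~\ref{itm:DAbEquations:subsetOmega}). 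Note also that $\varphi(u^\omega) = \varphi(u^{M!})$ since $\omega$ maps to $M!$ in the finite monoid $M$.

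First, I would apply \autoref{fct:profiniteMod} to each factor $c_i^{\hat{n}_i}$, which gives $\varphi(c_i^{\hat{n}_i}) = \varphi(c_i^{M! + k_i})$ where $k_i = \hat{n}_i \bmod M!$. Writing $c_i^{M! + k_i} = c_i^{M!} c_i^{k_i}$ and stringing these factors together, this yields
\[
  \varphi(\gamma) \;=\; \varphi\bigl( u^{M!} \cdot c_1^{M!} c_1^{k_1} \, c_2^{M!} c_2^{k_2} \cdots c_m^{M!} c_m^{k_m} \bigr).
\]

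Next, since each $c_i^{M!}$ and each $c_i^{k_i}$ has alphabet contained in $\alphabet u$, \autoref{fct:permuteInHClass} allows us to reorder the suffix following $u^{M!}$ arbitrarily without changing the image under $\varphi$. I would collect all the $c_i^{M!}$ to the left, obtaining
\[
  \varphi(\gamma) \;=\; \varphi\bigl( u^{M!} \cdot c_1^{M!} c_2^{M!} \cdots c_m^{M!} \cdot c_1^{k_1} c_2^{k_2} \cdots c_m^{k_m} \bigr).
\]
Finally, a repeated application of the absorption identity from \autoref{fct:DAbEquations}\ref{itm:DAbEquations:subsetOmega} (with $\alpha = u$ and $\beta = c_i$, valid since $c_i \in \alphabet u$) removes the $c_i^{M!}$ one by one:
\[
  \varphi(u^{M!} c_i^{M!} \cdots) \;=\; \varphi(u^{M!} \cdots).
\]
After stripping all $m$ such factors, what remains is exactly $\varphi(u^{M!} c_1^{k_1} \cdots c_m^{k_m}) = \varphi(r_M(\gamma))$.

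There is no real obstacle here; the argument is a direct bookkeeping exercise once \autoref{fct:profiniteMod}, \autoref{fct:permuteInHClass}, and the $\omega$-absorption identity are available. The only subtle point worth flagging is the use of $\hat{n}_i \in \hat{\mathbb{N}} \setminus \mathbb{N}$ (built into the definition of an infinite block), which is what legitimizes the appeal to \autoref{fct:profiniteMod} for each exponent.
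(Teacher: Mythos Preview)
Your proof is correct and follows essentially the same route as the paper's own argument: both apply \autoref{fct:profiniteMod} to each exponent, then use \autoref{fct:permuteInHClass} together with the absorption identity from \autoref{fct:DAbEquations}\ref{itm:DAbEquations:subsetOmega} to strip out the extra $c_i^{M!}$ factors. Your version merely makes the intermediate rearrangement step (collecting the $c_i^{M!}$ together before absorbing them) more explicit than the paper does.
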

      \begin{proof}
        There is nothing to show for $\gamma \in \Sigma$. Therefore, let $\gamma = (c_1 \dots c_m)^\omega c_1^{\hat{n}_{1}} \dots c_m^{\hat{n}_m}$. We have:
        \begin{flalign*}\allowdisplaybreaks
          && \varphi(\gamma) &= \varphi \left( (c_1 \dots c_m)^\omega c_1^{\hat{n}_{1}} \dots c_m^{\hat{n}_m} \right) \\
          && &= \varphi \Big( (c_1 \dots c_m)^{M!} \\
          && &\phantom{= \varphi \Big(} \qquad c_1^{M! + (\hat{n}_{1} \bmod M!)} \dots c_m^{M! + (\hat{n}_m \bmod M!)} \Big) & \text{(by \autoref{fct:profiniteMod})}\\
          && &= \varphi \Big( (c_1 \dots c_m)^{M!} & \text{} \quad\\
          && &\phantom{= \varphi \Big(} \qquad c_1^{\hat{n}_{1} \bmod M!} \dots c_m^{\hat{n}_m \bmod M!} \Big) & \text{\makebox[0pt][r]{(by \autoref{fct:permuteInHClass} \& \autoref{fct:DAbEquations}, \ref{itm:DAbEquations:subsetOmega})}} \\
          && &=\varphi \left( r_M(\gamma) \right) &\qedhere
        \end{flalign*}
      \end{proof}

      \begin{fact}\label{fct:numberOfAGammaRM}
        For any infinite block $\gamma$ or $\gamma \in \Sigma$ and all $a \in \Sigma$, we have
        \[
          |\gamma|_a \equiv |r_M(\gamma)|_a \mod M!
        \]
        for all $M \in \DAb$.
      \end{fact}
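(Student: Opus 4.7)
The statement is essentially a routine bookkeeping fact once we unpack the definition of $r_M(\gamma)$ and use the elementary identities for $\omega$ and $\bmod$ on $\hat{\mathbb{N}}$ recalled earlier in the section. I would split the proof into the obvious case analysis.

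First, the case $\gamma \in \Sigma$ is trivial: here $r_M(\gamma) = \gamma$, so $|\gamma|_a = |r_M(\gamma)|_a$ as (ordinary) natural numbers, and the congruence holds on the nose. For the remainder, assume $\gamma = (c_1 \dots c_m)^\omega c_1^{\hat{n}_1} \dots c_m^{\hat{n}_m}$ is an infinite block, and fix a letter $a \in \Sigma$. If $a \notin \{c_1, \dots, c_m\}$, then $|\gamma|_a = 0 = |r_M(\gamma)|_a$, so there is nothing to show.

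The only substantive case is $a = c_i$ for some $i$. On the one hand, using that $|\cdot|_a \colon \freeProf \to \hat{\mathbb{N}}$ is a (continuous) homomorphism together with $|(c_1 \dots c_m)^\omega|_{c_i} = \omega$, we get $|\gamma|_{c_i} = \omega + \hat{n}_i$, and since $\hat{n}_i$ is infinite (and $\omega$ is the idempotent $\omega + \omega = \omega$), this simplifies to $\hat{n}_i$. On the other hand, by definition,
\[
  |r_M(\gamma)|_{c_i} = M! + (\hat{n}_i \bmod M!),
\]
which is an ordinary natural number. Reducing both quantities modulo $M!$ and using that $\hat{n} \mapsto \hat{n} \bmod M!$ is a continuous homomorphism $\hat{\mathbb{N}} \to \mathbb{Z}/M!\mathbb{Z}$, the left side becomes $\hat{n}_i \bmod M!$ while the right side becomes $0 + (\hat{n}_i \bmod M!) = \hat{n}_i \bmod M!$, so they agree.

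I do not expect any obstacle here; the only thing to be slightly careful about is keeping track of the different roles of $\omega$ (the idempotent in $\hat{\mathbb{N}}$) and of $M!$ (a finite exponent in $M$), and invoking the absorption identity $\omega + \hat{n} = \hat{n}$ for infinite $\hat{n}$ precisely once, when reducing $|\gamma|_{c_i}$.
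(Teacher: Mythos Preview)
Your proof is correct and follows essentially the same approach as the paper: the same three-case split ($\gamma\in\Sigma$, $a\notin\alphabet\gamma$, $a=c_i$) and, in the interesting case, the same chain $|\gamma|_a = \omega+\hat{n}_i = \hat{n}_i \equiv \hat{n}_i \bmod M! \equiv M! + (\hat{n}_i\bmod M!) = |r_M(\gamma)|_a$. The paper simply writes this as a single displayed line; your version merely spells out the justifications a bit more.
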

      \begin{proof}
        If we have $a \not\in \alphabet \gamma = \alphabet r_M(\gamma)$, there is nothing to show. Similarly, there is nothing to show for $\gamma \in \Sigma$. For $\gamma = (c_1 \dots c_m)^\omega c_1^{\hat{n}_{1}} \dots c_m^{\hat{n}_m}$ and $a = c_i$ for some $1 \leq i \leq m$, we have:
        \[
          |\gamma|_a = \omega + \hat{n}_i = \hat{n}_i \equiv \hat{n}_i \bmod M! \equiv M! + (\hat{n}_i \bmod M!) = |r_M(\gamma)|_a \mod M! \qedhere
        \]
      \end{proof}

      \begin{proposition}\label{prop:normalFormIsUnique}
        There is a monoid $M \in \DAb$ and a continuous homomorphism $\varphi: \freeProf \to M$ with $\varphi(\alpha) \neq \varphi(\beta)$.
      \end{proposition}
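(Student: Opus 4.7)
In the remaining case we have $K = L$, each pair $(\alpha_t,\beta_t)$ is of the same type (letter or infinite block), and $A_t := \alphabet \alpha_t = \alphabet \beta_t$ for every $t$. Fix the smallest $i$ with $\alpha_i \neq \beta_i$; since two letters with the same alphabet agree, both $\alpha_i$ and $\beta_i$ must be infinite blocks. Write $\alpha_i = (a_1 \cdots a_k)^\omega a_1^{\hat n_1} \cdots a_k^{\hat n_k}$ and $\beta_i = (a_1 \cdots a_k)^\omega a_1^{\hat m_1} \cdots a_k^{\hat m_k}$, pick $j$ with $\hat n_j \neq \hat m_j$, and use \autoref{fct:distinguishInfiniteNumbers} to select $m_0 \in \Np$ with $\hat n_j \not\equiv \hat m_j \bmod m_0$. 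Condition \ref{itm:normal:omega} applied to block $i$ forces $a_j \notin A_{i-1}$, for otherwise both $\hat n_j$ and $\hat m_j$ would equal $\omega$. Combining $|\alpha|_{a_j} = |\beta|_{a_j}$ from \autoref{fct:numberOfAsCompatibility} with the identity $\omega + \hat x = \hat x$ valid for infinite $\hat x$, there exists an index $t' > i$ with $a_j \notin A_{t'}$: otherwise every block past $i$ would contain $a_j$ with exponent $\omega$ (by \ref{itm:normal:omega}), and the totals would collapse to $\hat n_j$ versus $\hat m_j$, a contradiction. Fix $t'$ minimal.

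Pick representatives $c_s \in B_s := A_s \setminus A_{s-1}$ for $s = 1, \ldots, t'$ and $c'_s \in A_s \setminus A_{s+1}$ for $s = t', \ldots, K$ (with $A_0 = A_{K+1} = \emptyset$); all are nonempty by \ref{itm:normal:incomparable}. Set
\[
  \rho := X_{c_1} X_{c_2} \cdots X_{c_{t'}}, \qquad \lambda := Y_{c'_K} Y_{c'_{K-1}} \cdots Y_{c'_{t'}}.
\]
An induction on $s$, using that $c_s \notin A_{s-1}$ prevents an earlier capture, shows that in any word of the form $r_M(\gamma_1) \cdots r_M(\gamma_K)$ with block alphabets $A_s$, the prefix $X_{c_1} \cdots X_{c_s}$ lands at the first occurrence of $c_s$ in block $s$; symmetrically, $\lambda$ lands at the last $c'_{t'}$ in block $t'$. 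Both rankers have length at most $K$, so set $n := |\rho| + |\lambda|$ and $M := \Sigma^*/{\approx_{m_0, n}}$, which lies in $\DAb$ by \autoref{prop:congruenceInDAb}. The classes of $a^0, a^1, \ldots, a^{m_0 - 1}$ are pairwise distinct under $\approx_{m_0, n}$ for any letter $a$, so $|M| \geq m_0$ and hence $m_0 \mid M!$. Let $\varphi : \freeProf \to M$ be the canonical continuous homomorphism; \autoref{fct:infiniteBlockAsFiniteWord} applied blockwise yields $\varphi(\alpha) = \varphi(r_M(\alpha))$ and $\varphi(\beta) = \varphi(r_M(\beta))$, where $r_M(\alpha) := r_M(\alpha_1) \cdots r_M(\alpha_K)$.

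Writing $u := r_M(\alpha)$ and $v := r_M(\beta)$, the $\rho$-$\lambda$-split of either word has middle part spanning from the first $c_{t'}$ to the last $c'_{t'}$ of block $t'$; because the prefix $(a'_1 \cdots a'_{\ell'})^{M!}$ of block $t'$ (with $A_{t'} = \{a'_1 < \cdots < a'_{\ell'}\}$) is traversed in between, both middle parts have full alphabet $A_{t'}$, which does not contain $a_j$. The two words are $\rho$-$\lambda$-compatible because the structural block index of each ranker-prefix landing position is determined purely by the shared alphabet sequence $A_1, \ldots, A_K$. Computing the left-part $a_j$-count, the contributions from blocks $t < i$ agree by minimality of $i$; those from $i < t < t'$ each equal $M!$ on both sides because \ref{itm:normal:omega} forces the $a_j$-exponent there to be $\omega$ and $\omega \bmod M! = 0$; only the $t = i$ contribution differs, by exactly $(\hat n_j - \hat m_j) \bmod M!$, which is nonzero modulo $m_0$ since $m_0 \mid M!$ and $\hat n_j \not\equiv \hat m_j \bmod m_0$. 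Hence $u \not\sim_{m_0, \rho, \lambda} v$, so $r_M(\alpha) \not\approx_{m_0, n} r_M(\beta)$ and $\varphi(\alpha) \neq \varphi(\beta)$, concluding the proof.

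\textbf{Main obstacle.} The technically demanding step is the bookkeeping to verify that both rankers land at their claimed positions and produce the asserted middle alphabets in $r_M(\alpha)$ and $r_M(\beta)$ simultaneously, given that we do not know $\alpha_t = \beta_t$ for $t > i$. This goes through because the navigation uses only the shared alphabets $A_s$ and exploits $c_s \notin A_{s-1}$. A handful of corner cases ($t' = K$, so one selects $c'_K \in A_K$; $t' = i + 1$; $\ell' = 1$ in block $t'$; or modular exponents vanishing so that certain suffix segments of block $t'$ are empty) require extra attention but are handled uniformly by noting that first and last positions of a letter within a block are controlled by the sorted order of $A_s$ and the $M!$-fold cyclic prefix, rather than by the varying modular exponents.
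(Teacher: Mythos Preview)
Your argument follows the paper's proof almost exactly: pick the first differing block, choose the letter $a_j$ whose exponent differs, take $M = \Sigma^*/{\approx_{m_0,n}}$, represent each block by $r_M(\cdot)$, and use an $X$-ranker/$Y$-ranker pair stepping through the blocks to isolate a split whose left parts disagree in their $a_j$-count modulo $m_0$. The computation of the left-part count (blocks $t<i$ agree by minimality, blocks $i<t<t'$ contribute $M!$ via condition~\ref{itm:normal:omega}, block $i$ gives the mismatch) is the same as in the paper.

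There is one genuine slip. You write ``Combining $|\alpha|_{a_j} = |\beta|_{a_j}$ from \autoref{fct:numberOfAsCompatibility}\ldots'' to argue that some $t'>i$ with $a_j\notin A_{t'}$ exists. But \autoref{fct:numberOfAsCompatibility} says that $\V\models\alpha=\beta$ \emph{implies} $|\alpha|_a=|\beta|_a$; you are trying to prove $\DAb\not\models\alpha=\beta$, so you cannot invoke it in this direction. What you presumably intend is the case split: if $|\alpha|_{a_j}\neq|\beta|_{a_j}$ then a finite monogenic monoid (which lies in $\DAb$) already separates $\alpha$ and $\beta$ and we are done; otherwise $|\alpha|_{a_j}=|\beta|_{a_j}$, and then your cancellation argument in $\hat{\mathbb{Z}}$ forces $\hat n_j=\hat m_j$ unless some $t'$ exists. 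This repair is easy, but as written the step is circular.

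The paper sidesteps this entirely: rather than proving that $t'\le K$ exists, it allows $J=K+1$ as a fallback, using $\rho_{K+1}=\Xinfty$ and the empty $Y$-ranker, so that the left part of the split is the whole word. This is slightly cleaner since no separate case for $|\alpha|_{a_j}\neq|\beta|_{a_j}$ is needed. Apart from this, the two proofs are the same.
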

      \begin{proof}
        We are in the case $K = L$, $\alpha_i \in \Sigma \iff \beta_i \in \Sigma$ and $\alphabet \alpha_i = \alphabet \beta_i$ for all $1 \leq i \leq K = L$. The only remaining possibility for $\alpha \neq \beta$ is that there is some $I$ such that
        \begin{align*}
          \alpha_I = (a_1 \dots a_k)^\omega a_1^{\hat{m}_1} \dots a_k^{\hat{m}_k}
          \quad \text{and} \quad
          \beta_I = (a_1 \dots a_k)^\omega a_1^{\hat{n}_1} \dots a_k^{\hat{n}_k}
        \end{align*}
        with $\hat{m}_j \neq \hat{n}_j$ for some $1 \leq j \leq k$. Let $a = a_j$. By \autoref{fct:distinguishInfiniteNumbers}, there is some $m$ such that $|\alpha_I|_a = \hat{m}_j \not\equiv \hat{n}_j = |\beta_I|_a \bmod m$. Without loss of generality, we may assume $I$ to be minimal with this property. In particular, we have $|\alpha_i|_a = |\beta_i|_a$ for all $1 \leq i < I$.
        
        We choose $M = \Sigma^* / {\approx_{m, K + 1}}$ and $\varphi$ as the continuation of the natural projection sending $w \in \Sigma^*$ to the congruence class of $w$ with respect to $\approx_{m, K + 1}$. Note that $a^k$ is in a different congruence class for each $0 \leq k < m$ and that we, thus, have $m \leq |M|$ and $m$ divides $M!$. In particular, we also obtain $|\alpha_I|_a \not\equiv |\beta_I|_a \bmod M!$ but $|\alpha_i|_a \equiv |\beta_i|_a \bmod M!$ for all $1 \leq i < I$.
        
        We need to show $\varphi(\alpha) \neq \varphi(\beta)$. For this, we define some $u \in \Sigma^*$ with $\varphi(u) = \varphi(\alpha)$ and some $v \in \Sigma^*$ with $\varphi(v) = \varphi(\beta)$. Showing $\varphi(\alpha) \neq \varphi(\beta)$ then boils down to showing $\varphi(u) \neq \varphi(v)$ or, in other words, $u \not\approx_{m, K + 1} v$.
        
        In order to define $u$ and $v$, let $u_i = r_M(\alpha_i)$ and $v_i = r_M(\beta_i)$ again for all $1 \leq i \leq K$. Note that we now have $\varphi(\alpha_i) = \varphi(u_i)$ and $\varphi(\beta_i) = \varphi(v_i)$ by \autoref{fct:infiniteBlockAsFiniteWord} for all $1 \leq i \leq K$. Thus, we have $\varphi(\alpha) = \varphi(u)$ for $u = u_1 \dots u_K$ and $\varphi(\beta) = \varphi(v)$ for $v = v_1 \dots v_K$.
        
        Now, let $c_i \in \alphabet \alpha_i \setminus \alphabet \alpha_{i - 1} = \alphabet \beta_i \setminus \alphabet \beta_{i - 1}$ and $d_i \in \alphabet \alpha_i \setminus \alphabet \alpha_{i + 1} = \alphabet \beta_i \setminus \alphabet \beta_{i + 1}$ be arbitrary\footnote{We may choose $c_i$ and $d_i$ minimal with respect to the linear ordering of $\Sigma$, for example.} for all $1 \leq i \leq K$ (where we let $\alpha_0 = \alpha_{K + 1} = \beta_0 = \beta_{K + 1} = \varepsilon$ from now on to handle edge cases). With these, let $\rho_i = X_{c_1} \dots X_{c_i}$ and $\lambda_i = Y_{d_K} \dots Y_{d_i}$ for $1 \leq i \leq K$. Using a simple induction, it is easy to see that $\rho_i$ ends in $u$ at a position belonging to $u_i$ (in both cases: if $\alpha_i$ is an infinite block and if $\alpha_i \in \Sigma$). In fact, if $u_i = r_M(\alpha_i) = (b_1 \dots b_\ell)^{M!} b_1^{p_1} \dots b_\ell^{p_\ell}$ belongs to an infinite block, $\rho_i$ will end at a position belonging to the first copy of $(b_1 \dots b_\ell)$. Using similar arguments, we may observe that $\lambda_i$ also ends in $u$ at a position belonging to $u_i$. In fact, for an infinite block, $\lambda_i$ will either end at a position belonging to some $b_j^{p_j}$ block or at the last copy of $(b_1 \dots b_\ell)$. This shows that we have $\rho_i(u) < \lambda_i(u)$ if $\alpha_i$ is an infinite block and that we have $\rho_i(u) = \lambda_i(u)$ if $\alpha_i \in \Sigma$. In any case, we have $\rho_i(u) \leq \lambda_i(u)$. Finally, all these statements also hold analogously for $v$. An example is given in \autoref{fig:exampleUWithRankers}.

        Note that we have $|\rho_i| = i$ and $|\lambda_i| = K - i + 1$ for all $1 \leq i \leq K$. Thus, we have $|\rho_i| + |\lambda|_i = K + 1$. To handle the edge cases, we let $\rho_{K + 1} = \Xinfty$ and $\lambda_{K + 1} = 0$. Here, we have $|\rho_{K + 1}| + |\lambda_{K + 1}| = 1 \leq K + 1$.
        
        Let $J$ be minimal with $I < J$ and $a \not\in \alphabet \alpha_J$ ($J = K + 1$ for $\alpha_{K + 1} = \varepsilon$ is possible). We will show $u \not\sim_{m, \rho, \lambda} v$ for $\rho = \rho_J$ and $\lambda = \lambda_J$. This, in turn, shows $u \not\approx_{m, K + 1} v$ (as we have $|\rho| + |\lambda| \leq K + 1$) and we are done.
        
        As we have $\rho(u) \leq \lambda(u)$ (by the above argumentation), we may consider the $\rho$-$\lambda$-split of $u$. Its middle part is a (possibly empty) factor or $u_J$ and, thus, does not contain an $a$ (as we have $\alphabet u_J = \alphabet r_M(\alpha_J) = \alphabet \alpha_J \not\ni a$). The same also holds for the middle part of the $\rho$-$\lambda$-split of $v$.
        
        The left part of the $\rho$-$\lambda$-split of $u$ is
        \begin{align*}
          u' &= u_1 \dots u_{I - 1} u_I u_{I + 1} \dots u_{J - 1} u_J'
        \intertext{%
        where $u_J'$ is some prefix of $u_J$. Please note that this prefix cannot contain an $a$ (as $u_J$ itself does not contain any $a$). In the same way, the left part of the $\rho$-$\lambda$-split of $v$ is}
          v' &= v_1 \dots v_{I - 1} v_I v_{I + 1} \dots v_{J - 1} v_J'
        \end{align*}
        for some prefix $v_J'$ with $a \not\in \alphabet v_J' \subseteq \alphabet v_J = \alphabet \beta_J = \alphabet \alpha_J$.
        
        By the choice of $I$ and by \autoref{fct:numberOfAGammaRM}, we have
        \[
          |u_i|_a \equiv |\alpha_i|_a = |\beta_i|_a \equiv |v_i|_a \mod M!
        \]
        for all $1 \leq i < I$.
        
        For $I < i < J$, we observe that $\alpha_i$ must be an infinite block, which (by the choice of $J$) contains an $a$. Indeed assume $\alpha_i \in \Sigma$ for one such $i$. Without loss of generality, we may consider the smallest $i$ with this property. By the choice of $J$, we have $\alpha_i = a$. However, the block $\alpha_{i - 1}$ to the left of $\alpha_i$ must be an infinite block (as $i$ is minimal) with $a \in \alphabet \alpha_{i - 1}$ (by the choice of $J$). However, this contradicts condition \ref{itm:normal:incomparable} of the definition of normal forms.
        
        However, if all of these $\alpha_i$ (and, thus, also $\beta_i$) are infinite blocks with $a \in \alphabet \alpha_i$, we obtain $|\alpha_i|_a = |\beta_i|_a = \omega$ from condition \ref{itm:normal:omega} of the definition of normal forms. This yields $|u_i|_a \equiv |\alpha_i|_a = \omega \equiv 0 \bmod M!$ by \autoref{fct:numberOfAGammaRM} and, analogously, $|v_i|_a \equiv 0 \bmod M!$ for all $I < i < J$.
        
        Combining all of these results, we obtain
        \begin{align*}
          |u'|_a - |v'|_a &= |u_1 \dots u_{I - 1} u_I u_{I + 1} \dots u_{J - 1} u_J'|_a - |v_1 \dots v_{I - 1} v_I v_{I + 1} \dots v_{J - 1} v_J'|_a \\
          &= \left( \sum_{i = 1}^{I - 1} |u_i|_a \right) + |u_I|_a + \left( \sum_{i = I + 1}^{J - 1} \underbrace{|u_i|_a}_{{}\equiv 0} \right) + \underbrace{|u_J'|_a}_{{}=0} \\
          &\qquad -\left( \sum_{i = 1}^{I - 1} \smash[b]{\underbrace{|v_i|_a}_{{}\equiv |u_i|_a}} \right) -|v_I|_a -\left( \sum_{i = I + 1}^{J - 1} \smash[b]{\underbrace{|v_i|_a}_{{}\equiv0}} \right) -\underbrace{|v_J'|_a}_{{}=0} \\
          &\equiv |u_I|_a - |v_I|_a \not\equiv 0 \mod M!
        \end{align*}
        where the last congruence follows from $|u_I|_a \not\equiv |v_i|_a \bmod M!$. This show $|u'|_a \not\equiv |v'|_a \bmod M!$, i.\,e.\ that the left parts of the $\rho$-$\lambda$-splits of $u$ and $v$ do not have the same number of $a$ modulo $m$ although in the middle parts there is no $a$. This shows $u \not\sim_{m, \rho, \lambda}$ as desired.
      \end{proof}
      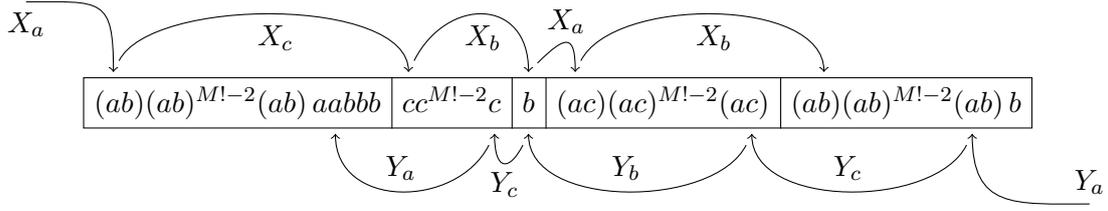
\begin{figure}\centering
        \begin{tikzpicture}[baseline=(ul.base)]
          \matrix [rectangle, draw, matrix of math nodes, ampersand replacement=\&, inner ysep=2pt, inner xsep=0pt, text height=.8em, text depth=.2em] (u) {
            \hspace*{4pt} ( \& a \& b) (ab)^{M! - 2} (ab) \, a \& a \& bbb \hspace*{4pt} \&
            \hspace*{4pt} \& c \& c^{M! - 2} \& c \& \hspace*{4pt} \&
            \hspace*{4pt} \& b \& \hspace*{4pt} \&
            \hspace*{4pt} ( \& a \& c) (ac)^{M! - 2} (a \& c \& ) \hspace*{4pt} \&
            \hspace*{4pt} (a \& b \& ) (ab)^{M! - 2} ( \& a \& b) \, b \hspace*{4pt}\\
          };
          \foreach \i in {5,10,13,18} {
            \draw ($(u.north -| u-1-\i.north east)$) -- ($(u.south -| u-1-\i.south east)$);
          };
          
          % X rankers
          \draw[->] ($(u.north west) + (-0.75cm, 1cm)$) .. controls +(1cm, 0cm) and +(0cm,1cm) .. node[below, pos=0] {$X_a$} ($(u.north -| u-1-2.north)+(0pt,2pt)$);
          \draw[->, shorten <= 1ex] ($(u.north -| u-1-2.north)+(0pt,2pt)$) .. controls +(0.5cm, 1cm) and +(0pt,1cm) .. node[below] {$X_{c}$} ($(u.north -| u-1-7.north)+(0pt,2pt)$);
          \draw[->, shorten <= 1ex] ($(u.north -| u-1-7.north)+(0pt,2pt)$) .. controls +(0.5cm, 1cm) and +(0pt,1cm) .. node[below] {$X_{b}$} ($(u.north -| u-1-12.north)+(0pt,2pt)$);
          \draw[->, shorten <= 1ex] ($(u.north -| u-1-12.north)+(0pt,2pt)$) .. controls +(0.5cm, 0.5cm) and +(0pt,0.5cm) .. node[above] {$X_{a}$} ($(u.north -| u-1-15.north)+(0pt,2pt)$);
          \draw[->, shorten <= 1ex] ($(u.north -| u-1-15.north)+(0pt,2pt)$) .. controls +(0.5cm, 1cm) and +(0pt,1cm) .. node[below] {$X_{b}$} ($(u.north -| u-1-20.north)+(0pt,2pt)$);
          
          % Y rankers
          \draw[->] ($(u.south east) + (0.75cm, -1cm)$) .. controls +(-1cm, 0cm) and +(0cm,-1cm) .. node[above, pos=0] {$Y_a$} ($(u.south -| u-1-22.south)+(0pt,-2pt)$);
          \draw[->, shorten <= 1ex] ($(u.south -| u-1-22.south)+(0pt,-2pt)$) .. controls +(-0.5cm, -1cm) and +(0pt,-1cm) .. node[above] {$Y_{c}$} ($(u.south -| u-1-17.south)+(0pt,-2pt)$);
          \draw[->, shorten <= 1ex] ($(u.south -| u-1-17.south)+(0pt,-2pt)$) .. controls +(-0.5cm, -1cm) and +(0pt,-1cm) .. node[above] {$Y_{b}$} ($(u.south -| u-1-12.south)+(0pt,-2pt)$);
          \draw[->, shorten <= 1ex] ($(u.south -| u-1-12.south)+(0pt,-2pt)$) .. controls +(-0.25cm, -0.5cm) and +(0pt,-0.5cm) .. node[below] {$Y_{c}$} ($(u.south -| u-1-9.north)+(0pt,-2pt)$);
          \draw[->, shorten <= 1ex] ($(u.south -| u-1-9.north)+(0pt,-2pt)$) .. controls +(-0.5cm, -1cm) and +(0pt,-1cm) .. node[above] {$Y_{a}$} ($(u.south -| u-1-4.south)+(0pt,-2pt)$);
        \end{tikzpicture}
        \caption{An example word $u = u_1 \dots u_5$ for $u_1 = (ab)^{M!} a^2 b^3$, $u_2 = c^{M!} c^0$, $u_3 = b$, $u_4 = (ac)^{M!}a^0 c^0$ and $u_5 = (ab)^{M!} a^0 b^1$ with the rankers $\rho_i$ and $\lambda_i$}\label{fig:exampleUWithRankers}
      \end{figure}
    
      Together with \autoref{prop:normalFormExists}, we now have proved that the form from \autoref{def:normalForm} is indeed a normal form for $\DAb$.
      \begin{theorem}\label{thm:normalForm}
        For every pseudoword $\alpha$, there is a unique word $\alpha'$ in normal form with $\DAb \models \alpha = \alpha'$.
      \end{theorem}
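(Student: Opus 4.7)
The plan is to deduce \autoref{thm:normalForm} directly from the two propositions that have been established just before it, namely \autoref{prop:normalFormExists} (existence) and \autoref{prop:normalFormIsUnique} together with the preceding case analysis on $\V[J]$ (uniqueness). Since both halves are already stated in the excerpt, the proof is essentially a two-line assembly, but I want to be explicit about how the two ingredients fit together.

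First, I would invoke \autoref{prop:normalFormExists} to obtain, for any given pseudoword $\alpha \in \freeProf$, a pseudoword $\alpha'$ in normal form such that $\DAb \models \alpha = \alpha'$. This handles existence with no further work.

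For uniqueness, I would argue by contradiction: suppose $\alpha', \alpha'' \in \freeProf$ are both in normal form and satisfy $\DAb \models \alpha = \alpha' = \alpha''$, yet $\alpha' \neq \alpha''$. Then by transitivity of the satisfaction relation we have $\DAb \models \alpha' = \alpha''$. Now the preceding analysis splits into two cases, matching exactly the structure of the uniqueness proof: either the normal forms differ in their block structure (lengths, letter-vs-infinite-block pattern, or block alphabets), in which case the argument via the $\V[J]$-normal form produces a monoid in $\V[J] \subseteq \DAb$ separating them; or they agree on the block structure and the only discrepancy is in the infinite exponents $\hat{n}_j$ inside some infinite block, in which case \autoref{prop:normalFormIsUnique} supplies a monoid $M = \Sigma^* / {\approx_{m, K+1}} \in \DAb$ and a continuous homomorphism $\varphi$ with $\varphi(\alpha') \neq \varphi(\alpha'')$. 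Either alternative contradicts $\DAb \models \alpha' = \alpha''$, so $\alpha' = \alpha''$.

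There is no real obstacle at this point: all the difficult combinatorics has already been absorbed into \autoref{prop:normalFormIsUnique} and into the $\V[J]$-based separation argument. The only minor care needed is in stating that the two cases of the uniqueness analysis jointly exhaust all ways in which two distinct normal forms can differ, which is immediate from \autoref{def:normalForm} (the only data in a normal form are the block count, the type and alphabet of each block, and, for infinite blocks, the infinite profinite exponents $\hat{n}_j$ attached to each letter in the canonical order $a_1 < \dots < a_\ell$).
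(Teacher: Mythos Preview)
Your proposal is correct and matches the paper's approach exactly: the paper simply states that \autoref{prop:normalFormExists} and \autoref{prop:normalFormIsUnique} (together with the preceding $\V[J]$-based case) jointly establish the theorem, without writing out a separate proof. Your explicit assembly of existence plus the two-case uniqueness argument is precisely the intended reasoning.
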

    
      \paragraph*{Semigroups.}
      Again, we also obtain a version of the previous result for semigroups.
      A semigroup $S$ \emph{satisfies} an equation $\alpha = \beta$ of two pseudowords $\alpha, \beta \in \hat{\Sigma}^+ = \hat{\Sigma}^* \setminus \{ \varepsilon \}$ if we have $\varphi(\alpha) = \varphi(\beta)$ for all continuous homomorphisms $\varphi: \hat{\Sigma}^+ \to S$. A class $\mathcal{C}$ of semigroups \emph{satisfies} the equation $\alpha = \beta$ if every semigroup in $\mathcal{C}$ does. Note that a monoid satisfies an equation (as a monoid) if and only if it satisfies it as a semigroup.
      
      \begin{fact}\label{fct:semigroupEquationsVsMonoidEquations}
        Let $\alpha, \beta \in \hat{\Sigma}^+$ be non-empty pseudowords. We have:
        \[
          \DAb_S \models \alpha = \beta \iff \DAb \models \alpha = \beta
        \]
      \end{fact}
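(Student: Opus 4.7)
The plan is to prove both implications directly by relating $\hat{\Sigma}^+$ to $\hat{\Sigma}^*$ via the standard embedding of the free profinite semigroup into the free profinite monoid as the closed subsemigroup of non-empty pseudowords, and by relating a semigroup $S$ to its monoidization $S^1$ (in analogy to \autoref{fct:isomorphismSemigroupMonoid}). Both directions rely on the fact that Green's relations on a semigroup are computed using $S^1$ in the first place (this is the convention of Howie~\cite{Howie95}, which the paper follows), so the regular $\mathcal{D}$-classes of $S$ and of $S^1$ agree on $S$.

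For the direction $\Leftarrow$, suppose $\DAb \models \alpha = \beta$, take a semigroup $S \in \DAb_S$ and a continuous homomorphism $\varphi: \hat{\Sigma}^+ \to S$. First I would check that $S^1 \in \DAb$: the regular $\mathcal{D}$-classes of $S^1$ are those of $S$ together with (at most) the singleton class $\{1\}$. The former are Abelian groups because $S \in \DAb_S$, and $\{1\}$ is a trivial group, hence also Abelian. Next, extend $\varphi$ to a continuous monoid homomorphism $\varphi^1: \hat{\Sigma}^* \to S^1$ by sending the empty word to $1$. Since $\DAb \models \alpha = \beta$ implies $\varphi^1(\alpha) = \varphi^1(\beta)$ in $S^1$, and since $\alpha, \beta \in \hat{\Sigma}^+$ are non-empty, we have $\varphi^1(\alpha) = \varphi(\alpha) \in S$ and $\varphi^1(\beta) = \varphi(\beta) \in S$, hence $\varphi(\alpha) = \varphi(\beta)$.

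For the direction $\Rightarrow$, suppose $\DAb_S \models \alpha = \beta$, take a monoid $M \in \DAb$ and a continuous monoid homomorphism $\varphi: \hat{\Sigma}^* \to M$. Viewing $M$ as a semigroup, we have $M \in \DAb_S$ because the conditions defining $\DAb$ and $\DAb_S$ on regular $\mathcal{D}$-classes are the same. Restricting $\varphi$ to $\hat{\Sigma}^+$ yields a continuous semigroup homomorphism, and $\DAb_S \models \alpha = \beta$ gives $\varphi(\alpha) = \varphi(\beta)$ directly.

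I do not expect any real obstacle here: the argument is entirely routine once one has the identification of $\hat{\Sigma}^+$ with the non-empty pseudowords of $\hat{\Sigma}^*$ and the observation that adjoining an identity to a semigroup in $\DAb_S$ yields a monoid in $\DAb$. The only place where the non-emptiness hypothesis on $\alpha$ and $\beta$ is genuinely used is in the $\Leftarrow$ direction, to ensure that $\varphi^1(\alpha)$ and $\varphi^1(\beta)$ lie in $S$ rather than accidentally equalling the adjoined identity $1 \in S^1 \setminus S$.
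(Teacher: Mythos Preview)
Your proposal is correct and follows essentially the same approach as the paper: the paper also reduces the $\Leftarrow$ direction to showing $S^1 \in \DAb$ and extending $\varphi$ to $\hat{\Sigma}^* \to S^1$, and handles the $\Rightarrow$ direction via the inclusion $\DAb \subseteq \DAb_S$ (i.e., viewing a monoid in $\DAb$ as a semigroup in $\DAb_S$). Your version is in fact slightly more explicit than the paper's about why $S^1 \in \DAb$ and about the restriction step.
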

      \begin{proof}
        If $\DAb_S \models \alpha = \beta$, we have in particular $\DAb \models \alpha = \beta$ since $\DAb \subseteq \DAb_S$.
        
        For the other direction, consider $S \in \DAb_S$ and a continuous homomorphism $\varphi: \hat{\Sigma}^+ \to S$ and let $\DAb \models \alpha = \beta$. We have to show $\varphi(\alpha) = \varphi(\beta)$. We may extend $\varphi$ into a continuous homomorphism $\varphi': \hat{\Sigma}^* \to S^1$ by letting $\varphi'(\varepsilon) = 1$ and $\varphi'(\gamma) = \varphi(\gamma) \in S$ for all $\gamma \in \hat{\Sigma}^+$. Since $S^1$ is a monoid and in $\DAb$, we have $\varphi(\alpha) = \varphi'(\alpha) = \varphi'(\beta) = \varphi(\beta)$.
      \end{proof}
      
      \autoref{fct:semigroupEquationsVsMonoidEquations} immediately yields the following corollary of \autoref{thm:normalForm}.
      \begin{corollary}
        For every pseudoword $\alpha \in \hat{\Sigma}^+$, there is a unique word $\alpha' \in \hat{\Sigma}^+$ in normal form with $\DAb_S \models \alpha = \alpha'$.
      \end{corollary}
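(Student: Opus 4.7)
The plan is to deduce this corollary directly from \autoref{thm:normalForm} using \autoref{fct:semigroupEquationsVsMonoidEquations}. Given $\alpha \in \hat{\Sigma}^+$, I would first invoke \autoref{thm:normalForm} to obtain the unique pseudoword $\alpha' \in \hat{\Sigma}^*$ in normal form satisfying $\DAb \models \alpha = \alpha'$. Provided we can show $\alpha' \neq \varepsilon$, \autoref{fct:semigroupEquationsVsMonoidEquations} immediately converts $\DAb \models \alpha = \alpha'$ into $\DAb_S \models \alpha = \alpha'$, giving existence. Uniqueness follows symmetrically: any two candidates $\alpha', \alpha'' \in \hat{\Sigma}^+$ in normal form with $\DAb_S \models \alpha = \alpha' = \alpha''$ also satisfy $\DAb \models \alpha = \alpha' = \alpha''$ by \autoref{fct:semigroupEquationsVsMonoidEquations}, so the uniqueness part of \autoref{thm:normalForm} yields $\alpha' = \alpha''$.

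The only genuine (but minor) issue is verifying that the normal form produced by \autoref{thm:normalForm} lies in $\hat{\Sigma}^+$, i.e. is not the empty pseudoword. Note that $\varepsilon$ is itself in normal form (the empty concatenation $L = 0$), so one cannot exclude $\alpha' = \varepsilon$ by inspection of the normal form alone. Instead, I would argue by contradiction: if $\alpha' = \varepsilon$, then $\DAb \models \alpha = \varepsilon$. Since $\DAb$ contains all finite monogenic monoids, \autoref{fct:numberOfAsCompatibility} forces $|\alpha|_a = |\varepsilon|_a = 0$ in $\hat{\mathbb{N}}$ for every $a \in \Sigma$. Writing $\alpha = \lim (u_k)_k$, this means that for each letter $a$ the integer sequence $(|u_k|_a)_k$ converges to $0$ in the profinite metric on $\hat{\mathbb{N}}$, which happens only if $|u_k|_a = 0$ from some index on. Since $\Sigma$ is finite, this forces $u_k = \varepsilon$ eventually, hence $\alpha = \varepsilon$, contradicting $\alpha \in \hat{\Sigma}^+$.

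I do not expect any serious obstacle here; the content of the corollary is entirely packaged in the monoid version (\autoref{thm:normalForm}) and the monoid/semigroup bridge (\autoref{fct:semigroupEquationsVsMonoidEquations}). The proof should amount to four or five lines of prose assembling these two ingredients together with the short non-emptiness check above.
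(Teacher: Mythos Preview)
Your approach is correct and matches the paper's, which simply states that the corollary follows immediately from \autoref{thm:normalForm} via \autoref{fct:semigroupEquationsVsMonoidEquations}. Your non-emptiness check for $\alpha'$ is a small detail the paper leaves implicit, and your argument for it is sound.
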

    \end{subsection}
  
    \begin{subsection}{The Normal Form of \texorpdfstring{$(\omega - 1)$}{(ω - 1)}-Terms can be Computed}
      As an $(\omega - 1)$-term arises from letters by using concatenation and $(\omega - 1)$-powers, we may consider them as finite words over an alphabet containing $\Sigma$, opening parenthesis $($ and closing parenthesis $)^{\omega - 1}$ with an $(\omega - 1)$-power. Equivalently, we can also compute a tree-like representation where the leaves are labeled by letters, we have inner nodes with two or more children labeled as concatenation nodes and inner nodes with a single child labeled by $(\omega - 1)$.
      
      These object can naturally be used as inputs or outputs for algorithms and we will show that we may compute the normal form of a given $(\omega - 1)$-term:
      \begin{theorem}
        The problem
        \compProblem{
          an $(\omega - 1)$-term $\alpha$
        }{
          the normal form of $\alpha$
        }\noindent
        is computable.
      \end{theorem}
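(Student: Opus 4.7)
The plan is to proceed by structural induction on the $(\omega - 1)$-term. I would represent normal forms concretely as finite lists of blocks, where each block is either a letter of $\Sigma$ or an infinite block encoded as $((a_1, \dots, a_\ell), (z_1, \dots, z_\ell))$ with $a_1 < \dots < a_\ell$ in $\Sigma$ and integer exponents $z_i \in \mathbb{Z}$ representing $\omega + z_i \in \widehat{\mathbb{N}} \setminus \mathbb{N}$ via the isomorphism of \autoref{prop:isomorphismProfiniteNumbers}. A crucial invariant is that every operation I perform keeps the exponents in $\mathbb{Z}$; this is maintained as part of the induction.

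The base case (a single letter) is immediate. For concatenation of two normal forms $\alpha = \alpha_1 \cdots \alpha_K$ and $\beta = \beta_1 \cdots \beta_L$, I would concatenate the block lists and then repeatedly repair the junction using the three local reductions already available: \autoref{fct:normalForm:reduceLetterRegular} absorbs a letter block into an adjacent infinite block containing that letter; \autoref{fct:moveBetweenInfiniteBlocks} redistributes occurrences of a shared letter between two adjacent infinite blocks so that the exponent in one of them becomes $\omega$; and \autoref{fct:normalForm:reduceRegularSubsetAlphabet} absorbs an infinite block whose alphabet is contained in that of a neighbour. Each reduction is effective on the integer representation, both \autoref{fct:normalForm:reduceLetterRegular} and \autoref{fct:normalForm:reduceRegularSubsetAlphabet} strictly decrease the number of blocks, and \autoref{fct:moveBetweenInfiniteBlocks} is only needed as a preparatory step enabling a subsequent application of \autoref{fct:normalForm:reduceRegularSubsetAlphabet}. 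Hence the process terminates after $O(K + L)$ steps.

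For the $(\omega - 1)$-power, given a normal form $\alpha = \alpha_1 \cdots \alpha_K$ with $\alphabet \alpha = \{c_1 < \dots < c_n\}$, I would output the single infinite block
\[
  (c_1 \dots c_n)^{\omega} \, c_1^{\omega - |\alpha|_{c_1}} \cdots c_n^{\omega - |\alpha|_{c_n}},
\]
where each $|\alpha|_{c_j}$ is computed as the sum of the contributions of the individual blocks of $\alpha$, directly in the integer representation (a positive integer for a letter block, an encoded $\omega + z$ for an infinite block containing $c_j$). The verification that this equals $\alpha^{\omega - 1}$ in $\DAb$ passes through the Abelian group structure of the regular $\mathcal{D}$-class of $\alpha^\omega$: identity (3) of \autoref{fct:DAbEquations} gives $\alpha^\omega = (c_1 \dots c_n)^\omega$, and multiplying the proposed expression by $\alpha$ and rearranging using identities (1) and (2) of \autoref{fct:DAbEquations} together with \autoref{fct:permuteInHClass} collapses the product to $\alpha^\omega$, identifying the expression as the group inverse of $\alpha$, which coincides with $\alpha^{\omega - 1}$. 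The resulting block may afterwards have to be merged with the surrounding context of the enclosing term, but this is just another instance of the concatenation step.

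I expect the main obstacle to be the verification of the $(\omega - 1)$-power formula, as it combines the Abelian group inverse inside a regular $\mathcal{D}$-class with profinite arithmetic on exponents of the form $\omega - \hat{n}$; in particular, one has to handle the case where $|\alpha|_{c_j}$ is a sum mixing finite positive integers with terms $\omega + z$, and to confirm that $\omega - |\alpha|_{c_j}$ always stays in $\widehat{\mathbb{N}} \setminus \mathbb{N}$ and is faithfully encoded by an integer. The remainder amounts to routine bookkeeping on top of the structural identities for $\DAb$ already established.
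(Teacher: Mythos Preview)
Your approach is correct but is organised differently from the paper's. The paper does \emph{not} compute the normal form of every sub-term. Instead it decomposes the input only at the top level into letters and factors of the shape $(\beta)^{\omega-1}$, observes that each such factor is $\DAb$-regular, and therefore obtains the corresponding infinite block directly from \autoref{prop:regularElements}: the exponents are $|(\beta)^{\omega-1}|_a = (\omega-1)\,|\beta|_a$, and the only recursion is the scalar computation of $|\beta|_a$ in $\hat{\mathbb N}$ (shown in the final proposition to always yield an element of $\mathbb N$ or $\omega+\mathbb Z$). Your route instead carries full normal forms through the recursion; this is more work but yields the same answer, and your formula $\omega - |\alpha|_{c_j}$ indeed coincides with $(\omega-1)\,|\alpha|_{c_j}$ once read through the isomorphism of \autoref{prop:isomorphismProfiniteNumbers}. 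What the paper's approach buys is that the inductive invariant is a single integer per letter rather than a whole list of blocks, so the combinatorics of the reduction rules only enters once, at the very end.

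Two small points in your write-up deserve tightening. First, in the concatenation step the repair is not purely local to the junction: applying \autoref{fct:moveBetweenInfiniteBlocks} to enforce \ref{itm:normal:omega} at the boundary $\alpha_K\mid\beta_1$ may change an exponent in $\alpha_K$ and thereby violate \ref{itm:normal:omega} at $\alpha_{K-1}\mid\alpha_K$, forcing a leftward cascade. This still terminates (bounded by the number of blocks times the number of shared letters), but your stated $O(K+L)$ bound is optimistic. Second, the phrase ``group inverse of $\alpha$'' is slightly off, since $\alpha$ need not lie in the regular $\mathcal D$-class; what your calculation actually shows is that your candidate lies in the $\mathcal H$-class of $\alpha^\omega$ and is inverse to $\alpha^\omega\alpha$ there, which is precisely $\alpha^{\omega-1}$.
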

    
      By computing the normal forms of the two input $(\omega - 1)$-terms and comparing them, we immediately also obtain an algorithm to decide whether two given $(\omega - 1)$-terms form an equation that is satisfied by $\DAb$:
      \begin{corollary}\label{cor:WPdecidable}
        The word problem for $(\omega - 1)$-terms over $\DAb$
        \decProblem{
          two $(\omega - 1)$-terms $\alpha$ and $\beta$
        }{
          does $\DAb \models \alpha = \beta$ hold?
        }\noindent
        is decidable.
      \end{corollary}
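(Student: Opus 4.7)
The plan is to derive the corollary as a straightforward consequence of the preceding computability theorem together with \autoref{thm:normalForm}. Given two $(\omega - 1)$-terms $\alpha$ and $\beta$ as input, I would first invoke the algorithm guaranteed by the preceding theorem to compute normal forms $\alpha'$ and $\beta'$ of $\alpha$ and $\beta$, respectively. By \autoref{thm:normalForm}, the normal form of a pseudoword is uniquely determined; hence $\DAb \models \alpha = \beta$ holds if and only if $\alpha' = \beta'$ holds as pseudowords. Indeed, one direction follows from $\DAb \models \alpha = \alpha' = \beta' = \beta$, and the other direction is forced by uniqueness of the normal form.

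The remaining task is to decide equality of two pseudowords that are both in normal form. A normal form is a finite sequence of blocks $\gamma_1 \dots \gamma_N$, where each block is either a letter from $\Sigma$ or an infinite block $(a_1 \dots a_\ell)^\omega a_1^{\hat{n}_1} \dots a_\ell^{\hat{n}_\ell}$. Hence the comparison decomposes into (i) matching the number of blocks, (ii) matching the block types (letter versus infinite block), (iii) matching the alphabets and their linear order, and (iv) matching the exponents $\hat{n}_i \in \hat{\mathbb{N}} \setminus \mathbb{N}$ that occur in corresponding infinite blocks. Parts (i)--(iii) involve only finitely much data and are trivially decidable; the only substantive part is (iv).

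For (iv), I would use the fact that for an $(\omega - 1)$-term $\alpha$ the numbers $|\alpha|_a \in \hat{\mathbb{N}}$ can be computed symbolically by induction on the term structure, using $|\alpha\beta|_a = |\alpha|_a + |\beta|_a$ and $|\alpha^{\omega - 1}|_a = (\omega - 1)|\alpha|_a$ in the semiring $\hat{\mathbb{N}}$. Since the algorithm from the preceding theorem produces the exponents in the normal form as such symbolic expressions, and since under the isomorphism of \autoref{prop:isomorphismProfiniteNumbers} every infinite profinite natural number built from integers, $\omega$, addition, and multiplication by $\omega - 1$ descends to a concrete element of $\hat{\mathbb{Z}}$ (with $\omega$ corresponding to $0$), equality of two such exponents reduces to an effective comparison. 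Alternatively, using \autoref{fct:distinguishInfiniteNumbers}, inequality is witnessed by some modulus $m$, and equality of the symbolic expressions can be verified directly from their polynomial shape in $\omega - 1$.

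The only ingredient that is not already established inside the excerpt is the effective comparability of the symbolic exponents, but this is built into the output format of the normal-form algorithm from the preceding theorem. Once that is in place, the overall decision procedure is: run the normal-form algorithm twice, then perform the shallow syntactic comparison described above, and answer \enquote{yes} iff every check succeeds. I do not expect any genuine obstacle here; the corollary is an immediate packaging of the computability of the normal form together with its uniqueness.
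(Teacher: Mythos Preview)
Your proposal is correct and follows essentially the same approach as the paper: compute the normal forms of both inputs using the preceding theorem, invoke uniqueness (\autoref{thm:normalForm}) to reduce the question to equality of normal forms, and then compare. The paper's own justification is the single sentence preceding the corollary, and the effective comparability of the exponents that you worry about in part~(iv) is handled by the Proposition immediately following the corollary, which shows that every exponent arising from an $(\omega-1)$-term lies in $\mathbb{N} \cup (\omega + \mathbb{Z})$ and can be computed explicitly---so your concern is resolved exactly as you anticipated.
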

    
      To compute the normal form of a given $(\omega - 1)$-term, we first split the input term $\alpha$ into blocks of single letters and blocks of the form $(\beta)^{\omega - 1}$, i.\,e.\ we split $\alpha = \alpha_1 \dots \alpha_L$ where each $\alpha_i$ is either a letter or $\alpha_i = (\beta_i)^{\omega - 1}$ for some $(\omega - 1)$-term $\beta_i$.
      
      The letters are already in normal form and, for the blocks of the form $(\beta)^{\omega - 1}$, we compute an equivalent term in normal form. Afterwards we may use the rules given by \autoref{fct:normalForm:reduceLetterRegular}, \autoref{fct:moveBetweenInfiniteBlocks} and \autoref{fct:normalForm:reduceRegularSubsetAlphabet} to compute the normal form for the entire term.
      
      Note that a block of the form $(\beta)^{\omega - 1}$ is $\DAb$-regular (the image of $(\beta)^{\omega - 1}$ is $\mathcal{J}$-equivalent to the idempotent given by $(\beta)^\omega$ in any semigroup $S$ under any continuous morphism $\freeProf \to S$). Thus, we may use the connection given in \autoref{prop:regularElements} to compute the normal form. For this, we only need to compute the alphabet of $\beta$ (which clearly can be obtained for any given $(\omega - 1)$-term $\beta$) and $|(\beta)^{\omega - 1}|_a$ for all $a$ in this alphabet. This latter number can be computed as well (as the following result states), which completes the algorithm for computing the normal form.
      \begin{proposition}
        For an $(\omega - 1)$-term $\alpha$ and $a \in \Sigma$ we either have $|\alpha|_a = n \in \mathbb{N}$ or $|\alpha|_a = \omega + z$ for $z \in \mathbb{Z}$. Furthermore, which case applies and the values of $n$ and $z$ can be computed.
      \end{proposition}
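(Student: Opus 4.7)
The plan is to prove both the structural claim (that $|\alpha|_a$ lies in $\mathbb{N}$ or is of the form $\omega + z$ for $z \in \mathbb{Z}$) and the computability claim simultaneously, by structural induction on the $(\omega - 1)$-term $\alpha$. The underlying tool is that $|\cdot|_a \colon \freeProf \to \hat{\mathbb{N}}$ is a continuous semiring homomorphism: it sends concatenation to addition in $\hat{\mathbb{N}}$ and the $(\omega-1)$-power to multiplication by $\omega - 1 \in \hat{\mathbb{N}}$. The base case $\alpha = b \in \Sigma$ is immediate, giving $|\alpha|_a = 1$ if $a = b$ and $|\alpha|_a = 0$ otherwise.

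For the concatenation step $\alpha = \beta\gamma$, the algorithm computes $|\alpha|_a = |\beta|_a + |\gamma|_a$ by a case split on the shapes given by the inductive hypothesis: $n + m$ remains in $\mathbb{N}$; $n + (\omega + z) = \omega + (n + z)$ (with $n$ viewed in $\mathbb{Z}$); and $(\omega + z_1) + (\omega + z_2) = \omega + (z_1 + z_2)$, since $\omega$ is idempotent. For the $(\omega-1)$-power step $\alpha = \beta^{\omega - 1}$, write $|\alpha|_a = (\omega - 1) \cdot |\beta|_a$. If $|\beta|_a = 0$, then $|\alpha|_a = 0$; if $|\beta|_a = n \geq 1$, the identity $n \cdot (\omega + \hat z) = \omega + n\hat z$ recalled in the preliminaries (applied with $\hat z = -1$) gives $|\alpha|_a = \omega - n$; and if $|\beta|_a = \omega + z$ with $z \in \mathbb{Z}$, the identity $(\omega + \hat x)(\omega + \hat y) = \omega + \hat x \hat y$ gives $|\alpha|_a = (\omega + (-1))(\omega + z) = \omega - z$. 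In every case the resulting value has the required form, and the witness arithmetic (adding integers, negating, testing a natural number for zero) is plainly effective.

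There is no substantive obstacle; the content of the proposition is essentially the verification that $\mathbb{N} \cup (\omega + \mathbb{Z})$ is a subsemiring of $\hat{\mathbb{N}}$ that contains $\omega - 1$ and is closed under the operations available to $(\omega - 1)$-terms, together with the observation that integer arithmetic suffices to propagate the witnesses through the induction. The only point needing care is the third subcase of the $(\omega-1)$-power step, where one must remember that multiplication by $\omega - 1 = \omega + (-1)$ acts on the $\hat{\mathbb{Z}}$-component of an element $\omega + \hat z$ by multiplication by $-1$, so the $\mathbb{Z}$-witness survives the step.
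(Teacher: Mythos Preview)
Your proposal is correct and follows essentially the same structural induction as the paper's proof, with the same case analysis for concatenation and the $(\omega-1)$-power. You are in fact slightly more careful than the paper in separating out the subcase $|\beta|_a = 0$ in the $(\omega-1)$-power step (where the result stays in $\mathbb{N}$ rather than moving to $\omega + \mathbb{Z}$); the paper's blanket claim $(\omega-1)\cdot n = \omega - n$ for $n \in \mathbb{N}$ is only literally correct for $n \geq 1$.
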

      \begin{proof}
        The statement can be shown by an induction on the structure of $(\omega - 1)$-term $\alpha$. This induction immediately also yields an algorithm.
        
        For a single letter $b \in \Sigma$, we have $|b|_a = 1\in \mathbb{N}$ if $a = b$ and $|b|_a = 0 \in \mathbb{N}$ otherwise. If we have $\alpha = \alpha_1 \cdot \alpha_2$, we have $|\alpha|_a = |\alpha_1|_a + |\alpha_2|_a$. The rest of this case follows from the following observations:
        \begin{itemize}
          \item For $n, m \in \mathbb{N} \subseteq \hat{\mathbb{N}}$, we have $n + m \in \mathbb{N}$.
          \item For $n \in \mathbb{N}$ and $z \in \mathbb{Z}$, we have $n + (\omega + z) = (\omega + z) + n = \omega + (z + n) \in \omega + \mathbb{Z}$.
          \item For $x, y \in \mathbb{Z}$, we have $(\omega + x) + (\omega + y) = \omega + (x + y) \in \omega + \mathbb{Z}$ (as $\omega$ is idempotent).
        \end{itemize}
        Finally, for $\alpha = ( \beta )^{\omega - 1}$, we have $|\alpha|_a = (\omega - 1) \cdot |\beta|_a$. Since we have $(\omega - 1) \cdot n = \omega - n \in \omega + \mathbb{Z}$ for $n \in \mathbb{N}$ and $(\omega - 1) \cdot (\omega + z) = \omega - z \in \omega + \mathbb{Z}$, the statement follows.
      \end{proof}
    \end{subsection}
  \end{section}
  
  \section*{Acknowledgments}
The first author acknowledges partial support by CMUP (Centro de
Matemática da Universidade do Porto), member of LASI (Intelligent
Systems Associate Laboratory), which is financed by Portuguese funds
through FCT (Fundação para a Ciência e a Tecnologia, I. P.)
under the projects UIDB/00144/2020 and UIDP/00144/2020.

  Large parts of this work were developed while the third author was at CMUP (which is financed by national funds through FCT -- Fundação para a Ciência e Tecnologia, I.P., under the project with reference UIDB/00144/2020). Other parts were written while he was affiliated with the Politecnico di Milano where he was funded by the Deutsche Forschungsgemeinschaft (DFG, German Research Foundation) -- 492814705.
  Some parts were written under his current affiliation listed above, where this work was supported by the Engineering and Physical Sciences Research Council [grant number EP/Y008626/1].

  \bibliography{references}
\end{document}